\newcommand{\robotDiameter}[1]{\ensuremath{\mathrm{diam}_{#1}(t)}}
\newcommand{\globalDiameter}{\ensuremath{\mathrm{diam}(t)}}
\newcommand{\chull}[1]{\ensuremath{\mathrm{hull}_{#1}^{t}}}
\def\lambdaContracting/{$\lambda$-contracting}
\def\lambdaPrimeContracting/{$\lambda'$-contracting}
\def\lambdaPrimePrimeContracting/{$\lambda''$-contracting}
\def\lambdaGathering/{$\lambda$-contracting gathering protocol}
\def\alphaBetaContracting/{$\left(\alpha,\beta\right)$-contracting}
\def\alphaBetaPrimeContracting/{$\left(\alpha',\beta'\right)$-contracting}
\def\alphaBetaPrimePrimeContracting/{$\left(\alpha'',\beta''\right)$-contracting}
\def\alphaBetaGathering/{$\left(\alpha,\beta\right)$-contracting gathering protocol}
\def\alphaBetaPrimeGathering/{$\left(\alpha',\beta'\right)$-contracting gathering protocol}
\def\lambdaPrimeGathering/{$\lambda'$-contracting gathering protocol}
\def\alphaCentered/{$\alpha$-centered}
\def\lambdaCentered/{$\lambda$-centered}
\def\clLambdaContracting/{collisionless $\lambda$-contracting}
\def\clLambdaPrimeContracting/{collisionless $\lambda'$-contracting}
\def\lambdaNearGathering/{$\lambda$-contracting near-gathering protocol}
\def\pCL{\ensuremath{\calP^{cl}}}
\def\pTau{\ensuremath{\calP_\tau}}
\def\pCLtauEpsilon{\ensuremath{\pCL(\calP, \tau, \varepsilon)}}
\def\pVTau{\ensuremath{\calP^{V+\tau/2}}}
\newcommand{\fp}[1]{\ensuremath{\mathrm{target}_{#1}^{\calP}(t)}}
\newcommand{\fpTau}[1]{\ensuremath{\mathrm{target}_{#1}^{\pTau}(t)}}
\newcommand{\fpCLtauEpsilon}[1]{\ensuremath{\mathrm{target}_{#1}^{\pCLtauEpsilon}(t)}}
\newcommand{\fpCL}[1]{\ensuremath{\mathrm{target}_{#1}^{\pCL}(t)}}
\newcommand{\fpVTau}[1]{\ensuremath{\mathrm{target}_{#1}^{\pVTau}(t)}}
\newcommand{\ellp}[1]{\ensuremath{\mathrm{collvec}_{#1}^{\calP}(t)}}
\newcommand{\ellpTau}[1]{\ensuremath{\mathrm{collvec}_{#1}^{\pTau}(t)}}
\newcommand{\alphaCenterP}[1]{\ensuremath{\alpha\text{-}\mathrm{center}_{#1}^{\calP}(t)}}
\newcommand{\lambdaCenterP}[1]{\ensuremath{\lambda\text{-}\mathrm{center}_{#1}^{\calP}(t)}}
\newcommand{\targetPointDist}[1]{\ensuremath{d_{#1} \cdot \varepsilon \cdot \nicefrac{2}{\tau} \cdot \big|\ellpTau{#1}\big| }}
\def\lambdaPrimeValuePTau{\ensuremath{\lambda \cdot \frac{\tau}{4 \cdot (V + \tau)}}}
\def\lambdaPrimeValuePcl{\ensuremath{\lambda \cdot \frac{\tau}{4 \cdot (V + \tau)} \cdot (1-\varepsilon)}}
\def\runningTimeLemmaCL{\ensuremath{\frac{32 \cdot \pi \cdot \Delta^2}{\lambda^2 \cdot \tau}}}
\def\runningTimeLemmaCLPrime{\ensuremath{\frac{32 \cdot \pi \cdot \Delta^2}{\lambda'^2 \cdot \tau}}}
\newcommand{\collisionPoints}[2]{\ensuremath{\mathrm{collisionPoints}^{#1}_{#2}(R_{#2}, t)}}
\def\baseSegment/{\ensuremath{S_{\alpha}}}
\def\betaHalfSegment/{\ensuremath{S_{\alpha} \bigl(\frac{\beta}{2}\bigr)}}
\def\mainSegment/{\ensuremath{S_{\alpha} \bigl(\frac{\alpha \cdot \beta^2}{4}\bigr)}}
\def\intermediateSegment/{\ensuremath{S_{\alpha} \bigl(\frac{\alpha \cdot \beta}{4}\bigr)}}
\def\baseSegmentC/{\ensuremath{S_{\lambda \cdot \tau}}}
\def\baseSegmentLambda/{\ensuremath{S_{\lambda}}}
\def\betaHalfSegmentLambda/{\ensuremath{S_{\lambda} \bigl(\nicefrac{1}{2}\bigr)}}
\def\mainSegmentLambda/{\ensuremath{S_{\lambda} \bigl(\nicefrac{\lambda}{4}\bigr)}}
\def\intermediateSegmentLambda/{\ensuremath{S_{\lambda} \bigl(\nicefrac{\lambda}{4}\bigr)}}
\def\betaCSegmentLamda/{\ensuremath{S_{\alpha \cdot \tau} \bigl(\frac{\beta}{2}\bigr)}}
\def\baseSegmentCLambda/{\ensuremath{S_{\alpha \cdot \tau}}}
\def\baseCap/{\ensuremath{\mathrm{HSC}_{\alpha}}}
\def\betaHalfCap/{\ensuremath{\mathrm{HSC}_{\alpha} \bigl(\frac{\beta}{2}\bigr)}}
\def\mainCap/{\ensuremath{\mathrm{HSC}_{\alpha} \bigl(\frac{\alpha \cdot \beta^2}{4}\bigr)}}
\def\intermediateCap/{\ensuremath{\mathrm{HSC}_{\alpha} \bigl(\frac{\alpha \cdot \beta}{4}\bigr)}}
\def\baseCapLambda/{\ensuremath{\mathrm{HSC}_{\lambda}}}
\def\betaHalfCapLambda/{\ensuremath{\mathrm{HSC}_{\lambda} \bigl(\frac{1}{2}\bigr)}}
\def\mainCapLambda/{\ensuremath{\mathrm{HSC}_{\lambda} \bigl(\frac{\lambda}{4}\bigr)}}
\def\intermediateCapLambda/{\ensuremath{\mathrm{HSC}_{\lambda} \bigl(\frac{\lambda}{4}\bigr)}}
\def\gtc/{\textsc{Go-To-The-Center}}
\def\gtcShort/{\textsc{GtC}}
\def\gtmd/{\textsc{Go-To-The-Middle-Of-The-Diameter}}
\def\gtmdShort/{\textsc{GtMD}}
\def\gtcdmb/{\textsc{Go-To-The-Center-Of-The-Diameter-MinBox}}
\def\gtcdmbShort/{\textsc{GtCDMB}}
\def\oblot/{\ensuremath{\mathcal{OBLOT}}}
\def\lumi/{\ensuremath{\mathcal{LUMI}}}
\newcommand{\fsync}{\textsc{$\mathcal{F}$sync}}
\newcommand{\ssync}{\textsc{$\mathcal{S}$sync}}
\newcommand{\async}{\textsc{$\mathcal{A}$sync}}
\def\LCM/{\textsc{LCM}}
\def\Look/{\textsc{Look}}
\def\Compute/{\textsc{Compute}}
\def\Move/{\textsc{Move}}
\def\gathering/{\textsc{Gathering}}
\def\nearGathering/{\textsc{Near-Gathering}}
\def\clContractingStrat/{\emph{collisionless-contracting-strategy}}
\def\nonclContractingStrat/{\emph{arbitrary-contracting-strategy}}
\def\calO{\mathcal{O}}
\def\calP{\mathcal{P}}
\def\cNG/{\ensuremath{c_{\mathrm{ng}}}}
\newcommand{\vubg}[1]{\ensuremath{\mathrm{UBG}^{V}(#1)}}
\newcommand{\tauUbg}[1]{\ensuremath{\mathrm{UBG}^{V+\tau}(#1)}}
\newenvironment{customthm}[1]
{\innercustomthm}
{\endinnercustomthm}
\title{A Unifying Approach to Efficient (Near)-Gathering of Disoriented Robots with Limited Visibility} %TODO Please add
\titlerunning{Efficient (Near)-Gathering of Disoriented Robots with Limited Visibility} %TODO optional, please use if title is longer than one line
\author{Jannik {Castenow}}{Heinz Nixdorf Institute \& Computer Science Deparment, Paderborn University, Fürstenallee 11, 33102 Paderborn, Germany  }{jannik.castenow@upb.de}{https://orcid.org/0000-0002-8585-4181}{}
\author{Jonas {Harbig}}{Heinz Nixdorf Institute \& Computer Science Deparment, Paderborn University, Fürstenallee 11, 33102 Paderborn, Germany }{jonas.harbig@upb.de}{}{}
\author{Daniel {Jung}}{Heinz Nixdorf Institute \& Computer Science Deparment, Paderborn University, Fürstenallee 11, 33102 Paderborn, Germany  }{jungd@hni.upb.de}{https://orcid.org/0000-0001-8270-8130}{}
\author{Peter {Kling}}{Department of Informatics, Universität Hamburg, Vogt-Kölln-Str. 30, 22527 Hamburg, Germany}{ peter.kling@uni-hamburg.de}{https://orcid.org/0000-0003-0000-8689}{}
\author{Till {Knollmann}}{Heinz Nixdorf Institute \& Computer Science Deparment, Paderborn University, Fürstenallee 11, 33102 Paderborn, Germany  }{tillk@mail.upb.de}{https://orcid.org/0000-0003-2014-4696}{}
\author{Friedhelm {Meyer auf der Heide}}{Heinz Nixdorf Institute \& Computer Science Deparment, Paderborn University, Fürstenallee 11, 33102 Paderborn, Germany }{fmadh@upb.de}{}{}
\authorrunning{J. Castenow et al.} %TODO mandatory. First: Use abbreviated first/middle names. Second (only in severe cases): Use first author plus 'et al.'
\keywords{mobile robots, gathering, limited visibility, runtime} %TODO mandatory; please add comma-separated list of keywords
\begin{document}

\maketitle

\begin{abstract}
We consider a swarm of $n$ robots in a $d$-dimensional Euclidean space.
The robots are oblivious (no persistent memory), disoriented (no common coordinate system/compass), and have limited visibility (observe other robots up to a constant distance).
The basic formation task \gathering/ requires that all robots reach the same, not predefined position.
In the related \nearGathering/ task, they must reach distinct positions in close proximity such that every robot sees the entire swarm.
In the considered setting, \gathering/ can be solved in $\calO(n + \Delta^2)$ synchronous rounds both in two and three dimensions, where $\Delta$ denotes the initial maximal distance of two robots \cite{DBLP:journals/trob/AndoOSY99,DBLP:conf/sirocco/BraunCH20,DBLP:conf/spaa/DegenerKLHPW11}.

In this work, we formalize a key property of efficient \gathering/ protocols and use it to define \emph{\lambdaContracting/ protocols}.
Any such protocol gathers $n$ robots in the $d$-dimensional space in $\calO(\Delta^2)$ synchronous rounds.
Moreover, we prove a corresponding lower bound stating that any protocol in which robots move to target points inside the local convex hulls of their neighborhoods -- $\lambda$-contracting protocols have this property -- requires $\Omega(\Delta^2)$ rounds to gather all robots.
Among others, we prove that the $d$-dimensional generalization of the \gtcShort/-protocol \cite{DBLP:journals/trob/AndoOSY99} is \lambdaContracting/.
Remarkably, our improved and generalized runtime bound is independent of $n$ and~$d$.
%The independence of $d$ answers an open research question posed in \cite{DBLP:conf/sirocco/BraunCH20}.

We also introduce an approach to make any \lambdaContracting/ protocol collision-free (robots never occupy the same position) to solve \nearGathering/.
The resulting protocols maintain the runtime of $\Theta (\Delta^2)$ and work even in the semi-synchronous model. This yields the first \nearGathering/ protocols for disoriented robots and the first proven runtime bound.
In particular, combined with results from \cite{DBLP:journals/dc/FlocchiniPSV17} for robots with global visibility, we obtain the first protocol to solve \textsc{Uniform Circle Formation} (arrange the robots on the vertices of a regular $n$-gon) for oblivious, disoriented robots with limited visibility.
\end{abstract}

\section{Introduction}%
\label{sec:introduction}

Envision a huge swarm of $n$ robots spread in a $d$-dimensional Euclidean space that must solve a \emph{formation task} like \gathering/ (moving all robots to a single, not pre-determined point) or \textsc{Uniform-Circle} (distributing the robots over the vertices of a regular $n$-gon).
Whether and how efficiently a given task is solvable varies largely with the robots' capabilities (local vs.~global visibility, memory vs.~memory-less, communication capabilities, common orientation vs.~disorientation).
While classical results study what capabilities the robots need \emph{at least} to solve a given task, our focus lies on \emph{how fast} a given formation task can be solved assuming simple robots.
Specifically, we consider the \gathering/ problem and the related \nearGathering/ problem for oblivious, disoriented robots with a limited viewing range.

\gathering/ is the most basic formation task and a standard benchmark to compare robot models~\cite{DBLP:series/lncs/FlocchiniPS19}.
The robots must gather at the same, not predefined, position.
Whether or not \gathering/ is solvable depends on various robot capabilities.
It is easy to see that robots can solve \gathering/ in case they have unlimited visibility (can observe all other robots) and operate fully synchronously \cite{cohenConvergencePropertiesGravitational2005}.
However, as soon as the robots operate asynchronously, have only limited visibility, or do not agree on common coordinate systems, the problem gets much harder or even impossible to solve (see \Cref{section:relatedWork} for a comprehensive discussion).
A well-known protocol to solve \gathering/ of robots with limited visibility is the \gtc/ (\gtcShort/) protocol that moves each robot towards the center of the smallest enclosing circle of all observable robots~\cite{DBLP:journals/trob/AndoOSY99}.
\gtcShort/ gathers all robots in $\calO\left(n+ \Delta^2\right)$ synchronous rounds, where the \emph{diameter} $\Delta$ denotes the initial maximal distance of two robots~\cite{DBLP:conf/spaa/DegenerKLHPW11}.
The term $n$ upper bounds the number of rounds in which robots \emph{collide} (move to the same position), while $\Delta^2$ results from how quickly the global smallest enclosing circle shrinks.
Hence, \gtcShort/ not only forces the robots to collide in the final configuration but also incurs several collisions during \gathering/.
Such collisions are fine for point robots in theoretical models but a serious problem for physical robots that cannot occupy the same position.
This leads us to \nearGathering/, which requires the robots to move \emph{collision-free} to \emph{distinct} locations such that every robot can observe the entire swarm despite its limited visibility~\cite{DBLP:journals/dc/PagliPV15}.
Requiring additionally that, eventually, robots \emph{simultaneously} (within one round/epoch) \emph{terminate}, turns \nearGathering/ into a powerful subroutine for more complex formation tasks like \textsc{Uniform Circle}.
Once all robots see the entire swarm and are simultaneously aware of that, they can switch to the protocol of \cite{DBLP:journals/dc/FlocchiniPSV17} to build a uniform circle.
Although that protocol is designed for robots with a global view, we can use it here since solving \nearGathering/ grants the robots de facto a global view.
Note the importance of \emph{simultaneous} termination, as otherwise, some robots might build the new formation while others are still gathering, possibly disconnecting some robots from the swarm.

\medskip\noindent\textbf{\textsf{Robot Model.\;}}
We assume the standard \oblot/ model~\cite{DBLP:series/lncs/FlocchiniPS19} for oblivious, point-shaped robots in $\mathbb{R}^{d}$.
The robots are \emph{anonymous} (no identifiers), \emph{homogeneous} (all robots execute the same protocol), \emph{identical} (same appearance), \emph{autonomous} (no central control) and \emph{deterministic}.
Moreover, we consider disoriented robots with limited visibility.
Disorientation means that a robot observes itself at the origin of its local coordinate system, which can be arbitrarily rotated and inverted compared to other robots.
The disorientation is \emph{variable}, i.e., the local coordinate system might differ from round to round.
Limited visibility implies that each robot can observe other robots only up to a constant distance.
The robots do not have \emph{multiplicity detection}, i.e., robots observe only a single robot in case multiple robots are located at the same position.
Furthermore, time is divided into discrete \LCM/-cycles (\emph{rounds}) consisting of the operations \Look/, \Compute/ and \Move/.
During its \Look/ operation, a robot takes a snapshot of all visible robots, which is used in the following \Compute/ operation to compute a \emph{target point}, to which the robot moves in the \Move/ operation.
Moves are \emph{rigid} (a robot always reaches its target point) and depend solely on observations from the last \Look/ operation (robots are oblivious).
The time model can be fully synchronous (\fsync{}; all robots are active each round and operations are executed synchronously), semi-synchronous (\ssync{}; a subset of robots is active each round and operations are executed synchronously), or completely asynchronous (\async{}).
The \ssync{} and \async{} schedulings of the robots are \emph{fair}, i.e., each robot is activated infinitely often.
Time is measured in rounds in \fsync{} and in \emph{epochs} (the smallest number of rounds such that all robots finish one \LCM/-cycle) in \ssync{} or \async{}.

\medskip\noindent\textbf{\textsf{Results in a Nutshell.\;}}
For \gathering/ of oblivious, disoriented robots with limited visibility in $\mathbb{R}^d$, we introduce the class of \emph{\lambdaContracting/} protocols for a constant $\lambda \in \intoc{0,1}$.
For instance, the well-known \gtcShort/~\cite{DBLP:journals/trob/AndoOSY99} and several other \gathering/ protocols are \lambdaContracting/.
We prove that every \lambdaContracting/ protocol gathers a swarm of diameter $\Delta$ in $\calO(\Delta^2)$ rounds.
We also prove a matching lower bound for any protocol in which robots always move to points inside the convex hull of their neighbors, including themselves.
While our results for \gathering/ assume the \fsync{} model\footnote{%
	With the considered robot capabilities, \gathering/ is impossible in \ssync{} or \async{}~\cite{prencipeImpossibilityGatheringSet2007}.
}, for \nearGathering/ we also consider \ssync{}.
We show how to transform any \lambdaContracting/ protocol into a \emph{collision-free} \lambdaContracting/ protocol to solve \nearGathering/ while maintaining a runtime of $\calO(\Delta^2)$.

\subsection{Related Work}%
\label{section:relatedWork}

One important topic of the research area of distributed computing by mobile robots is \emph{pattern formation} problems, i.e., the question of which patterns can be formed by a swarm of robots and which capabilities are required.
For instance the \textsc{Arbitrary Pattern Formation} problem requires the robots to form an arbitrary pattern specified in the input
\cite{DBLP:conf/podc/DasFSY10,DBLP:conf/wdag/DieudonnePV10,DBLP:journals/tcs/FlocchiniPSW08,suzukiDistributedAnonymousMobile1999,DBLP:journals/tcs/YamashitaS10,DBLP:conf/podc/YamauchiUY16}.
The patterns \emph{point} and \emph{uniform circle} play an important role since these are the only two patterns that can be formed starting from \emph{any} input configuration due to their high symmetry \cite{suzukiDistributedAnonymousMobile1999}.
In the following, we focus on the pattern \emph{point}, more precisely on the \gathering/, \textsc{Convergence} and \nearGathering/ problems.
While \gathering/ requires that all robots move to a single (not predefined) point in finite time, \textsc{Convergence} demands that for all $\varepsilon > 0$, there is a point in time such that the maximum distance of any pair of robots is at most $\varepsilon$ and this property is maintained (the robots \emph{converge} to a single point).
\nearGathering/ is closely related to the \textsc{Convergence} problem by robots with limited visibility.
Instead of converging to a single point, \nearGathering/ is solved as soon as all robots are located at \emph{distinct} locations within a small area.
For a more comprehensive overview of other patterns and models, we refer to~\cite{DBLP:series/lncs/11340}.

\medskip\noindent\textbf{\textsf{Possibilities \& Impossibilities.\;}}
In the context of robots with \emph{unlimited visibility}, \gathering/ can be solved under the \fsync{} scheduler by disoriented and oblivious robots without multiplicity detection \cite{cohenConvergencePropertiesGravitational2005}.
Under the same assumptions, \gathering/ is impossible under the \ssync{} and \async{} schedulers \cite{prencipeImpossibilityGatheringSet2007}.
Multiplicity detection plays a crucial role: at least $3$ disoriented robots with multiplicity detection can be gathered in \async{} (and thus also \ssync{}) \cite{cieliebakDistributedComputingMobile2012}.
The case of $2$ robots remains impossible \cite{suzukiDistributedAnonymousMobile1999}.
Besides multiplicity detection, an agreement on one axis of the local coordinate systems also allows the robots to solve \gathering/ in \async{} \cite{bhagatFaulttolerantGatheringAsynchronous2016}.
\textsc{Convergence} requires less assumptions than \gathering/.
No multiplicity detection is needed for the \async{} scheduler~\cite{cohenConvergencePropertiesGravitational2005}.

Under the assumption of \emph{limited visibility}, disoriented robots without multiplicity detection can be gathered in \fsync{} \cite{DBLP:journals/trob/AndoOSY99} with the \gtcShort/ protocol that moves every robot towards the center of the smallest circle enclosing its neighborhood.
\gtcShort/ has also been generalized to three dimensions \cite{DBLP:conf/sirocco/BraunCH20}.
In \async{}, current solutions require more capabilities: \gathering/ can be achieved by robots with limited visibility that agree additionally on the axes and orientation of their local coordinate systems \cite{flocchiniGatheringAsynchronousRobots2005}.
It is open whether fewer assumptions are sufficient to solve \gathering/ of robots with limited visibility in \ssync{} or \async{}.
In \ssync{}, \textsc{Convergence}  can be solved even by disoriented robots with limited visibility without multiplicity detection \cite{DBLP:journals/trob/AndoOSY99}.
However, similar to \gathering/, it is still open whether disoriented robots with limited visibility can solve \textsc{Convergence} under the \async{} scheduler.
Recently, it could be shown that multiplicity detection suffices to solve \textsc{Convergence}  under the more restricted $k$-\async{} scheduler.
The constant $k$ bounds how often other robots can be activated within one \LCM/ cycle of a single robot \cite{katreniakConvergenceLimitedVisibility2011,kirkpatrickSeparatingBoundedUnbounded2021}.

The \nearGathering/ problem has been introduced in \cite{DBLP:conf/sirocco/PagliPV12,DBLP:journals/dc/PagliPV15} together with an algorithm to solve \nearGathering/ by
robots with limited visibility and agreement on one axis of their local coordinate systems under the \async{} scheduler.
An important tool to prevent collisions is a well-connected initial configuration, i.e., the initial configuration is connected concerning the \emph{connectivity range} which is by an additive constant smaller than the viewing range \cite{DBLP:conf/sirocco/PagliPV12,DBLP:journals/dc/PagliPV15}.
In an earlier work, \nearGathering/ has been used as a subroutine to solve \textsc{Arbitrary Pattern Formation} by robots with limited visibility \cite{yamauchiPatternFormationMobile2013}.
The solution, however, uses infinite persistent memory at each robot.
Further research directions study \gathering/ and \textsc{Convergence} under crash faults or Byzantine faults \cite{agmonFaultTolerantGatheringAlgorithms2006a,augerCertifiedImpossibilityResults2013,bhagatFaultTolerantGatheringAsynchronous2015,bhagatFaulttolerantGatheringSemisynchronous2017,bouzidGatheringMobileRobots2013,bouzidByzantineConvergenceRobot2009,bouzidOptimalByzantineresilientConvergence2010,bramasWaitFreeGatheringChirality2015,defagoFaultTolerantSelfstabilizingMobile2006,defagoFaultByzantineTolerant2016,izumiBriefAnnouncementBGSimulation2011,pattanayakFaultTolerantGatheringMobile2017} or inaccurate measurement and movement sensors of the robots \cite{cohenConvergenceAutonomousMobile2008a,izumiBGsimulationByzantineMobile2011,izumiGatheringProblemTwo2012,kirkpatrickSeparatingBoundedUnbounded2021}.

\vspace*{-0.04cm}
\medskip\noindent\textbf{\textsf{Runtimes.\;}}
Considering disoriented robots with \emph{unlimited} visibility, it is known that \textsc{Convergence} can be solved in $\calO(n \cdot \log \nicefrac{\Delta}{\varepsilon})$ epochs under the \async{} scheduler, where the diameter $\Delta$ denotes the initial maximum distance of two robots \cite{cord-landwehrNewApproachAnalyzing2011} (initially a bound of  $\calO(n^2 \cdot \log \nicefrac{\Delta}{\varepsilon})$ has been proven in \cite{cohenConvergencePropertiesGravitational2005}).
When considering disoriented robots with \emph{limited} visibility and the \fsync{} scheduler, the \gtcShort/ algorithm solves \gathering/ both in two and three dimensions in $\Theta(n + \Delta^2)$ rounds \cite{DBLP:conf/sirocco/BraunCH20,DBLP:conf/spaa/DegenerKLHPW11}.
It is conjectured that the runtime  is optimal in worst-case instances, where $\Delta \in \Omega(n)$ \cite{DBLP:conf/sirocco/BraunCH20,DBLP:conf/algosensors/CastenowHJKH21}.
There is some work achieving faster runtimes for slightly different models: robots on a grid in combination with the $\mathcal{LUMI}$ model (constant sized local communication via lights) \cite{DBLP:conf/ipps/AbshoffC0JH16,DBLP:conf/spaa/Cord-Landwehr0J16}, predefined neighborhoods in a closed chain \cite{DBLP:conf/ipps/AbshoffC0JH16,DBLP:conf/algosensors/CastenowHJKH21} or agreement on one axis of the local coordinate systems \cite{DBLP:journals/information/PoudelS21}.
Also, a different time model -- the \emph{continuous} time model, where the movement of robots is defined for each \emph{real} point in time by a bounded velocity vector -- leads to faster runtimes:
There are protocols with a runtime of $\calO\left(n\right)$ \cite{DBLP:journals/tcs/BrandesDKH13,DBLP:journals/topc/DegenerKKH15}.
In \cite{DBLP:journals/tcs/LiMHP21}, a more general class of continuous protocols has been introduced, the \emph{contracting} protocols.
Contracting protocols demand that each robot part of the global convex hull of all robots' positions moves with full speed towards the inside.
Any contracting protocol gathers all robots in time $\mathcal{O}\left(n \cdot \Delta\right)$.
One such protocol also needs a runtime of $\Omega \left(n \cdot \Delta\right)$ in a specific configuration.
For instance, the continuous variant of \gtcShort/ is contracting \cite{DBLP:journals/tcs/LiMHP21} but also the protocols of \cite{DBLP:journals/tcs/BrandesDKH13,DBLP:journals/topc/DegenerKKH15}.
The class of contracting protocols also generalizes to three dimensions with an upper time bound of $\calO\bigl(n^{\nicefrac{3}{2}} \cdot \Delta\bigr)$  \cite{DBLP:conf/sirocco/BraunCH20}.

\subsection{Our Contribution \& Outline}%
\label{sec:contribution}

In the following, we provide a detailed discussion of our results and put them into context concerning the related results discussed in \cref{section:relatedWork}.
Our results assume robots located in $\mathbb{R}^d$ and the \oblot/ model for deterministic, disoriented robots with limited visibility.

\medskip\noindent\textbf{\textsf{\gathering/.\;}}
Our first main contribution is introducing a large class of \gathering/ protocols in \fsync{} that contains several natural protocols such as \gtcShort/.
We prove that \emph{every} protocol from this class gathers in $\calO(\Delta^2)$ rounds, where the diameter $\Delta$ denotes the initial maximal distance between two robots.
Note that, the bound of $\calO\left(\Delta^2\right)$ not only reflects how far a given initial swarm is from a gathering but also improves the \gtcShort/ bound from $\calO\left(n+\Delta^2\right)$ to $\calO\left(\Delta^2\right)$.
We call this class \emph{\lambdaContracting/ protocols}.
Such protocols restrict the allowed target points to a specific subset of a robot's local convex hull (formed by the positions of all visible robots, including itself) in the following way.
Let $diam$ denote the diameter of a robot's local convex hull.
Then, a target point $p$ is an allowed target point if it is the center of a line segment of length $\lambda \cdot diam$, completely contained in the local convex hull.
This guarantees that the target point lies far enough inside the local convex hull (at least along one dimension) to decrease the swarm's diameter sufficiently.
See \cref{figure:lambdaCenteredPoints} for an illustration.

\begin{figure}[htb]
	\centering
	\includegraphics[width =0.8\textwidth]{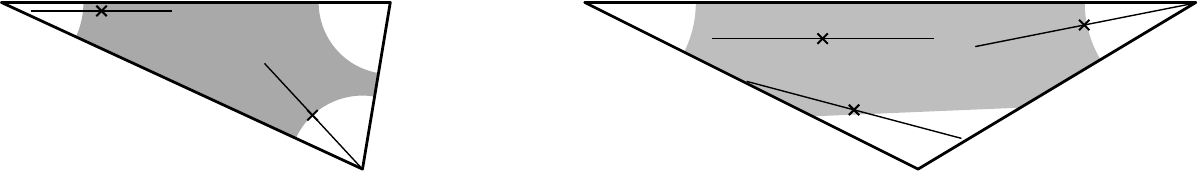}
	\caption{%
		Two local convex hulls, each formed by $3$ robots.
		The gray area marks valid target points of \lambdaContracting/ protocols.
		The exemplary line segments all have length $\lambda \cdot diam$, where $diam$ is the diameter of the respecting convex hull.
		On the left $\lambda = \nicefrac{4}{7}$, on the right $\lambda = \nicefrac{4}{11}$.
	}
	\label{figure:lambdaCenteredPoints}
\end{figure}

We believe these \lambdaContracting/ protocols encapsulate the core property of fast \gathering/ protocols.
Their analysis is comparatively clean, simple, and holds for any dimension $d$.
Thus, by proving that (the generalization of) \gtcShort/ is \lambdaContracting/ for arbitrary dimensions, we give the first protocol that provably gathers in $\calO(\Delta^2)$ rounds for any dimension.
As a strong indicator that our protocol class might be asymptotically optimal, we prove that every \gathering/ protocol for deterministic, disoriented robots whose target points lie \emph{always inside} the robots' local convex hulls requires $\Omega(\Delta^2)$ rounds.
Stay in the convex hull of visible robots is a natural property for any known protocol designed for oblivious, disoriented robots with limited visibility.
Thus, reaching a subquadratic runtime – if at all possible – would require the robots to compute target points outside of their local convex hulls sufficiently often.

\medskip\noindent\textbf{\textsf{\nearGathering/.\;}}
Our second main contribution proves that any \lambdaContracting/ protocol for \gathering/ can be transformed into a collision-free protocol that solves \nearGathering/ in $\calO(\Delta^2)$ rounds (\fsync{}) or epochs (\ssync{}).
As in previous work on the \nearGathering/ problem~\cite{DBLP:conf/sirocco/PagliPV12,DBLP:journals/dc/PagliPV15}, our transformed protocols require that the initial swarm is \emph{well-connected}, i.e., the swarm is connected with respect to the \emph{connectivity range} of $V$ and the robots have a viewing range of $V+\tau$, for a constant $\tau$.
The adapted protocols ensure that the swarm stays connected concerning the connectivity range.

The well-connectedness serves two purposes.
First, it allows a robot to compute its target point under the given \lambdaContracting/ protocol and the target points of nearby robots to prevent collisions.
Its second purpose is to enable termination:
Once there is a robot whose local convex hull has a diameter at most $\tau$, \emph{all} robots must have distance at most $\tau$, as otherwise, the swarm would not be connected concerning the connectivity range $V$.
Thus, all robots can \emph{simultaneously} decide (in the same round in \fsync{} and within one epoch in \ssync{}) whether \nearGathering/ is solved.
If the swarm is not well-connected, it is easy to see that such a simultaneous decision is impossible\footnote{%
	Consider a protocol that solves \nearGathering/ for a swarm of two robots and terminates in the \fsync{} model.
	Fix the last round before termination and add a new robot visible to only one robot (the resulting swarm is not connected concerning $V$).
	One of the original two robots still sees the same situation as before and will terminate, although \nearGathering/ is not solved.
}.
The simultaneous termination also allows us to derive the first protocol to solve \textsc{Uniform-Circle} for disoriented robots with limited visibility.
Once the robots' local diameter (and hence also the global diameter) is less than $\tau$, they essentially have a global view.
As the \textsc{Uniform Circle} protocol from~\cite{DBLP:journals/dc/FlocchiniPSV17} maintains the small diameter, it can be used after the termination of our \nearGathering/ protocol without any modification.

\medskip\noindent\textbf{\textsf{Outline.\;}}
\Cref{sec:model_and_preliminaries} introduces various notations.
\lambdaContracting/ protocols are introduced in \Cref{sec:gathering:class_definition}.
Upper and lower runtime bounds are provided in \Cref{section:alphaBetaProtocolsUpperBound}.
The section is concluded with three exemplary \lambdaContracting/ protocols, including \gtcShort/ (\Cref{section:exampleProtocols}).
\Cref{section:collisionlessProtocols} discusses the general approach to transform any \lambdaContracting/ protocol (in any dimension) into a collision-free protocol to solve \nearGathering/.
Finally, the paper is concluded, and future research questions are addressed in \Cref{section:conclusion}.
Due to space constraints, all proofs and additional information are deferred to the appendix.
\Cref{section:appendixSection3,section:appendixSection4}  contain proofs and additional material to \Cref{section:alphaBetaContractingStrategies,section:collisionlessProtocols}.

\section{Notation}%
\label{sec:model_and_preliminaries}

We consider a swarm of $n$  robots $R= \set{r_1, \dots, r_n}$ moving in a $d$-dimensional Euclidean space $\mathbb{R}^d$.
Initially, the robots are located at pairwise distinct locations.
We denote by $p_i(t)$ the position of robot $r_i$ in a global coordinate system (not known to the robots) in round~$t$.
Robots have a \emph{limited visibility}, i.e., they can observe other robots only up to a constant distance.
We distinguish the terms \emph{viewing} range and \emph{connectivity} range.
In both cases, the initial configuration is connected concerning the connectivity range.
More formally, let $V$ denote the connectivity range and $\vubg{t} = (R, E^{V}(t))$ the Unit Ball Graph with radius $V$, where $\{r_i,r_j\} \in E^{V}(t)$ if and only if $|p_i(t)-p_j(t)| \leq V$, where $|\cdot|$ represents the Euclidean norm.
The initial Unit Ball Graph \vubg{0} is always connected.
The connectivity and viewing ranges are equal when we study the \gathering/ problem.
In the context of \nearGathering/, the viewing range is larger than the connectivity range.
More formally, the viewing range is $V + \tau$, for a constant $0 < \tau \leq \nicefrac{2}{3}V$.
Thus, the robots can observe other robots at a distance of at most $V+\tau$.
The viewing range of $V+\tau$ induces $\tauUbg{t} = (R, E^{V+\tau}(t))$, the Unit Ball Graph with radius $V+\tau$, where $\{r_i,r_j\} \in E^{V+\tau}(t)$ if and only if $|p_i(t)-p_j(t)| \leq V +\tau$.
Two robots are neighbors at round $t$ if their distance is at most the viewing range ($V$ for \gathering/ and $V+\tau$ for \nearGathering/).
The set $N_i(t)$ contains all neighbors of $r_i$ in round $t$, including $r_i$.
Additionally, \chull{i} denotes the \emph{local convex hull} of all neighbors of $r_i$, i.e.,\ the smallest convex polytope that encloses the positions of all robots in $N_i(t)$, including $r_i$.
We define \globalDiameter{} as the maximum distance of any pair of robots at time $t$.
Moreover, $\Delta := \mathrm{diam}\left(0\right)$, i.e., the maximum distance of any pair of robots in the initial configuration.
Lastly, \robotDiameter{i} denotes the maximum distance of any two neighbors of $r_i$ in round $t$.

\medskip\noindent\textbf{\textsf{Discrete Protocols.\;}}
A discrete robot formation protocol $\mathcal{P}$ specifies for every round $t \in \mathbb{N}_0$ how each robot determines its target point, i.e., it is an algorithm that computes the target point \fp{i} of each robot in the \Compute/ operation based upon its snapshot taken during \Look/.
To simplify the notation, \fp{i} might express the target point of $r_i$ either in the local coordinate system of $r_i$ or in a global coordinate system (not known to $r_i$) -- the concrete meaning is always clear based on the context.
Finally, during \Move/, each robot moves to the position computed by $\mathcal{P}$, i.e., $p_i(t+1) = \fp{i}$ for all robots $r_i$.

\medskip\noindent\textbf{\textsf{Problem Statements.\;}}
The \gathering/ problem requires all robots to gather at a single, not predefined point.
More formally, \gathering/ is solved, if there exists a time $t \in \mathbb{N}_0$ such that  $\mathrm{diam}(t) = 0$.
While the \gathering/ problem clearly demands that more than one robot occupies the same position, this is prohibited in the \nearGathering/ problem.
Two robots $r_i$ and $r_j$ \emph{collide} in round $t$ if $p_i(t) = p_j(t)$.
A discrete robot formation protocol  is \emph{collisionless}, if there is no round $t' \in \mathbb{N}_0$ with a collision.
\nearGathering/  requires all robots to maintain distinct locations, become mutually visible, and be aware of this fact in the same round/epoch.
More formally, \nearGathering/ is solved if there is a time $t' \in \mathbb{N}_0$ and a constant $0 \leq \cNG/ \leq \frac{1}{2}$ such that $\mathrm{diam}(t') \leq \cNG/ \cdot V$, $p_i(t'') = p_i(t')$ for all robots $r_i$ and all rounds $t'' \geq t'$ and $p_i(t) \neq p_j(t)$ for all robots $r_i$ and $r_j$ and rounds $t$.
Moreover, all robots terminate simultaneously, i.e., know in the same round or within one epoch that $\globalDiameter \leq c_{\mathrm{ng}}$.

\section{A Class of Gathering Protocols} \label{section:alphaBetaContractingStrategies}

In this section, we describe the class of \lambdaContracting/ (gathering) protocols – a class of protocols which solve \gathering/ in $\Theta \left(\Delta^2\right)$ rounds and serves as a basis for collisionless protocols to solve \nearGathering/ (see \Cref{section:collisionlessProtocols}).
Moreover, we derive a subclass of  \lambdaContracting/ protocols, called \alphaBetaContracting/ protocols.
The class of \alphaBetaContracting/ contracting protocols is a powerful tool to determine whether a given gathering protocol (such as \gtcShort/) fulfills the property of being \lambdaContracting/.

The first intuition to define a class of protocols to solve \gathering/ would be to transfer the class of continuous contracting protocols (cf. \Cref{section:relatedWork}) to the discrete \LCM/ case.
A continuous robot formation protocol is called \emph{contracting} if robots that are part of the global convex hull move with speed $V$ towards the inside of the global convex hull.
A translation to the discrete (\LCM/) case might be to demand that each robot moves a constant distance inwards (away from the boundary) of the global convex hull, cf.\ \Cref{figure:idealProtocol}.

\begin{figure}[htbp]
	\begin{minipage}[t]{0.48\textwidth}
		\centering
		\includegraphics[width = 0.85\textwidth]{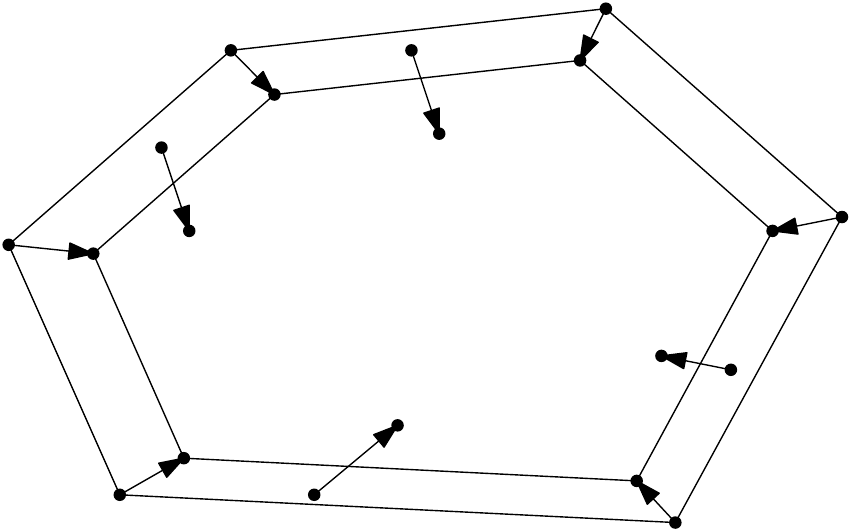}
		\caption{Ideally, every robot that is close to the boundary of the global convex hull (the surrounding convex polygon), would move a constant distance inwards.}
		\label{figure:idealProtocol}
	\end{minipage}
	\,
	\begin{minipage}[t]{0.48\textwidth}
		\centering
		\includegraphics[width =\textwidth]{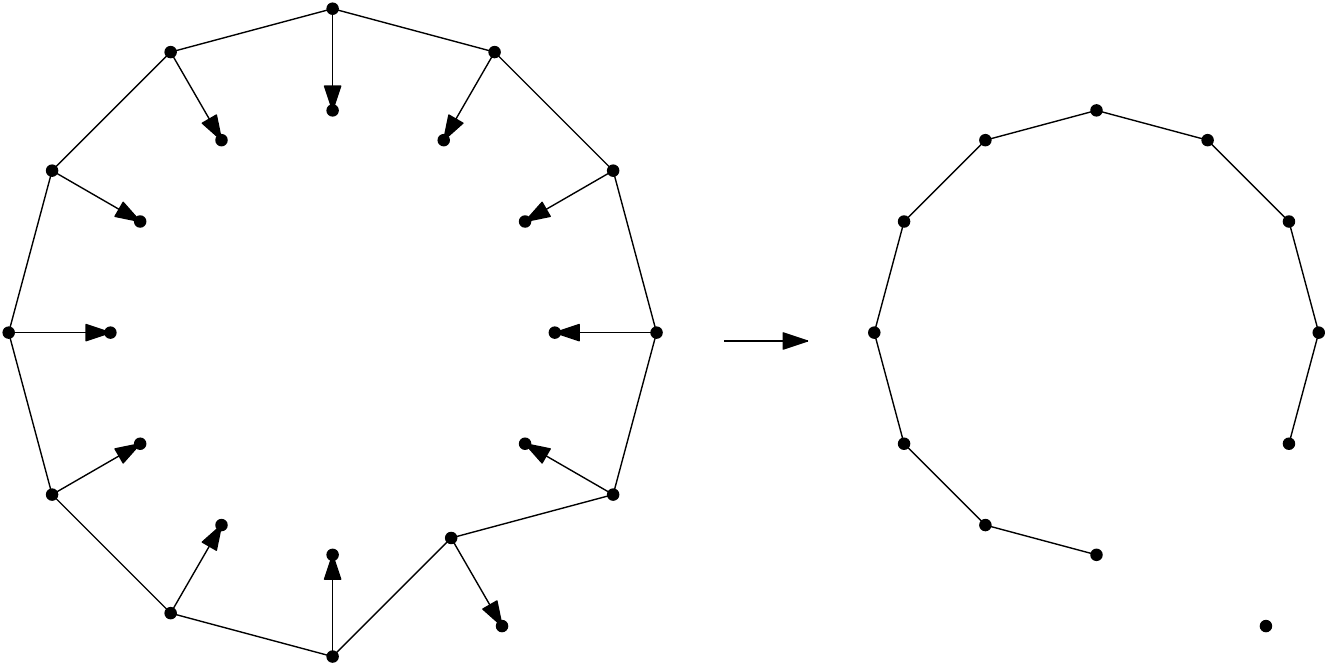}
		\caption{Visualization of the example to emphasize that continuous protocols cannot be directly translated to the \LCM/ case.}
		\label{figure:problemsOfIdealProtocols}
	\end{minipage}
\end{figure}
However, such a protocol cannot exist in the discrete \LCM/ setting.
Consider $n$ robots positioned on the vertices of a regular polygon with side length $V$.
Now take one robot and mirror its position along the line segment connecting its two neighbors (cf. \Cref{figure:problemsOfIdealProtocols}).
Next, assume that all robots would move a constant distance along the angle bisector between their direct neighbors in the given gathering protocol.
Other movements would lead to the same effect since the robots are disoriented.
In the given configuration, $n-1$ robots would move a constant distance inside the global convex hull while one robot even leaves the global convex hull.
Not only that the global convex hull does not decrease as desired, but also the connectivity of  \vubg{t} is not maintained as the robot moving outside loses connectivity to its direct neighbors.
Consequently, discrete gathering protocols have to move the robots more carefully to maintain the connectivity of \vubg{t} and to prevent disadvantageous movements caused by the disorientation of the robots.

\subsection{\lambdaContracting/ Protocols}%
\label{sec:gathering:class_definition}

Initially, we emphasize two core features of the protocols.
A discrete protocol is \emph{connectivity preserving}, if it always maintains connectivity of \vubg{t}.
Due to the limited visibility and disorientation, every protocol to solve \gathering/ and \nearGathering/ must be connectivity preserving since it is deterministically impossible to reconnect lost robots to the remaining swarm.
Moreover, we study protocols that are \emph{invariant}, i.e., the movement of a robot does not change no matter how its local coordinate system is oriented.
This is a natural assumption since the robots have variable disorientation and thus cannot rely on their local coordinate system to synchronize their movement with nearby robots.
Moreover, many known protocols under the given robot capabilities are invariant, e.g., \cite{DBLP:journals/trob/AndoOSY99,DBLP:conf/sirocco/BraunCH20,DBLP:journals/siamco/LinMA07,DBLP:journals/siamco/LinMA07a}.

\begin{definition}
	\label{def:alpha-centered}
	Let $Q$ be a convex polytope with diameter $diam$ and $0 < \lambda \leq 1$ a constant.
	A point $p \in Q$ is called to be \underline{\lambdaCentered/} if it is the midpoint of a line segment that is completely contained in $Q$ and has a length of $\lambda \cdot diam$.
\end{definition}

\begin{definition}
	\label{def:lambda-contracting}
	A connectivity preserving and invariant discrete robot formation protocol $\mathcal{P}$ is called \underline{\lambdaContracting/} if \fp{i} is a \lambdaCentered/ point of \chull{i} for every robot $r_i$ and every $t \in \mathbb{N}_0$.
\end{definition}

Observe that \Cref{def:lambda-contracting} does not necessarily enforce a final gathering of the protocols.
Consider, for instance, two robots.
A protocol that demands the two robots to move halfway towards the midpoint between themselves would be $\nicefrac{1}{4}$-contracting, but the robots would only \emph{converge} towards the same position.
The robots must be guaranteed to compute the same target point eventually to obtain a final gathering.
We demand this by requiring that there is a constant $c < 1$, such that $N_i(t) = N_j(t)$ and $\robotDiameter{i} = \robotDiameter{j} <= c$ implies that the robots compute the same target point.
Protocols that have this property are called \emph{collapsing}.
Observe that being collapsing is reasonable since \lambdaContracting/ demands that robots compute target points inside their local convex hulls and hence, the robots' local diameters are monotonically increasing in case no further robot enters their neighborhood.
%Whenever $N_i(t) = N_j(t) = N_i(t+1) = N_j(t+1)$ (no further robot becomes visible), then $\mathrm{diam}_i(t+1) = \mathrm{diam}_j(t+1) \leq \robotDiameter{i} = \robotDiameter{j}$, i.e., the local diameters of robots with the same neighborhood are monotonically decreasing.
Hence, demanding a threshold to enforce moving to the same point is necessary to ensure a final gathering.
For the ease of description, we fix $c = \nicefrac{1}{2}$ in this work.
However, $c$ could be chosen as an arbitrary constant by scaling the obtained runtime bounds with a factor of $\nicefrac{1}{c}$.

\begin{definition}
	\label{def:lambda-gathering}
	A discrete robot formation protocol $\mathcal{P}$ is called a \underline{\lambdaGathering/} if $\mathcal{P}$ is \lambdaContracting/ and collapsing.\footnote{Being \lambdaContracting/, connectivity preserving and collapsing would be sufficient to ensure \gathering/. However, these protocols are a subroutine for \nearGathering/ where robots must be able to compute nearby robot target points, which can only be done if the target points are invariant. For ease of description, we define the general protocols as invariant.}
\end{definition}

Two examples of all possible target points of \lambdaGathering/s in $\mathbb{R}^2$ are depicted in \Cref{figure:lambdaCenteredPoints} (contained in \Cref{sec:contribution}).

\subsection{Analysis of \lambdaContracting/ Protocols} \label{section:alphaBetaProtocolsUpperBound}

In the following, we state upper and lower bounds about \lambdaContracting/ protocols.
We start with a lower bound that is especially valid for \lambdaGathering/s.
The lower bound holds for all discrete gathering protocols that compute robot target points always inside local convex hulls.

\begin{restatable}{theorem}{TheoremLowerBoundProtocols}\label{theorem:LowerBoundAlphaBetaContracting}
	There exists an initial configuration such that every discrete gathering protocol $\calP$ that ensures $\fp{i} \in \chull{i}$ for all robots $r_i$ and all rounds $t \in \mathbb{N}_0$, requires $\Omega\left(\Delta^2\right)$ rounds to gather $n$ robots.
\end{restatable}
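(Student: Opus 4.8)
The plan is to exhibit a one-dimensional "spread-out chain" configuration and argue that no matter how the protocol moves the robots — as long as every robot stays inside the convex hull of its (limited-visibility) neighborhood — the global diameter can shrink by at most a constant per round, while the diameter starts at $\Delta$. Concretely, I would place $n = \Theta(\Delta)$ robots on a line at positions $0, V, 2V, \dots, (n-1)V$, so that consecutive robots are exactly at connectivity range $V$ and the graph $\vubg{0}$ is a path. The key structural observation: for the two extreme robots (the leftmost $r_\ell$ at the minimum coordinate and the rightmost $r_r$ at the maximum coordinate), their local convex hull $\chull{\ell}$ (resp. $\chull{r}$) is contained in the slab $[\min_i p_i(t), \min_i p_i(t) + V]$ (resp. $[\max_i p_i(t) - V, \max_i p_i(t)]$), simply because a robot only sees robots within distance $V$. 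Since the target point must lie in this local hull, the leftmost robot cannot move right by more than $V$ in one round, and symmetrically for the rightmost robot. Hence in one round the "left end" $\min_i p_i(t)$ increases by at most $V$ and the "right end" $\max_i p_i(t)$ decreases by at most $V$, so $\globalDiameter$ decreases by at most $2V = \calO(1)$ per round.

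**The subtlety** — and the main obstacle — is that after one round the configuration is no longer a nice evenly-spaced path: robots may have bunched up or spread within the slabs, and in principle the *new* leftmost robot might have a much larger local hull (extending far to the right), so the $V$-bound on one round's progress toward the center does not immediately iterate. The cheap fix that I would push on is to not track the two extremes via their hulls round-by-round, but instead to track the quantity $\globalDiameter$ directly together with the invariant that $\vubg{t}$ stays connected (which the protocol guarantees). Connectivity forces that along any path in $\vubg{t}$ from the leftmost to the rightmost robot, consecutive robots are within $V$; projecting onto the line containing the extreme pair, the leftmost robot's neighbors all lie within horizontal distance $V$ of it, so its hull, and thus its next position, stays within $V$ of the current minimum — this argument only uses connectivity and limited visibility, not the evenness of the spacing, so it *does* iterate. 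One must be slightly careful that robots live in $\R^d$ and the configuration could "rotate," but one can either restrict to $d=1$ (a valid initial configuration, and the lower bound for $d=1$ lifts trivially to higher $d$ by embedding), or project everything onto the direction of the diametral pair and note that the projection of a point in a convex hull lies in the projection of that hull; the projected neighbors of the leftmost robot still lie within distance $V$ (projections are $1$-Lipschitz), so the same bound holds. This gives $\globalDiameter \geq \Delta - 2Vt$, hence $\Omega(\Delta/V) = \Omega(\Delta)$ rounds — but that is only linear, not quadratic.

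**To get the quadratic bound** I need a second, stronger idea, and this is where the real work lies: the point is that the swarm cannot *stay* well-spread, because once robots bunch together near the ends, the end robots see only their bunched-up neighbors and can move an amount proportional only to the *local* extent, which is now tiny. I would set this up as a potential argument. Place the robots as an arithmetic progression with spacing $V$; define for each round the "left tail length" $L(t)$ = the extent of the smallest prefix of robots (ordered by coordinate) whose hull-based movement could have let the global minimum increase, and argue that the rate at which $\min_i p_i(t)$ can increase is bounded by the *current density* near the left end, which itself is governed by how much the robots have already contracted. The cleanest incarnation: show that the leftmost robot's next-round displacement is at most the distance to its nearest visible neighbor (since its hull is spanned by visible neighbors and the midpoint-of-a-length-$\lambda\cdot diam$ segment lies inside), and then that the gaps between consecutive robots can only shrink, so the total "progress budget" telescopes. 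If $g_1(t) \le g_1(0) = V$ is the first gap and all gaps are non-increasing, the min can advance by at most $\sum$ of a decreasing sequence of gaps; matching this against the requirement that the whole chain of total length $\Delta$ must collapse yields that the number of rounds is at least $\Omega(\Delta^2/V^2) = \Omega(\Delta^2)$. I would model the gap dynamics after the analysis of the discrete \gtcShort/ lower bound (the $\Omega(\Delta^2)$ part of the $\Theta(n+\Delta^2)$ bound of \cite{DBLP:conf/spaa/DegenerKLHPW11} and the conjecture restated in \cite{DBLP:conf/sirocco/BraunCH20}), which handles exactly this kind of evenly-spaced line instance; the added generality here is that we only assume "target in local hull" rather than the specific \gtcShort/ rule, so I would need to check that every such protocol still forces the gaps to be non-increasing — which follows because the leftmost robot moves inside $[\,p_1(t), p_2(t)\,]$ and the second robot moves inside a hull that starts at $\ge p_1(t)$ wait — more carefully, one uses that robot $i$ moves into $\chull{i} \subseteq [\,p_{i-1}(t), p_{i+1}(t)\,]$ on the line, so neighboring target intervals overlap in a controlled way and the sorted order of robots is preserved, giving non-crossing trajectories and hence a monotone-gaps (or at least "bounded total gap increase") invariant sufficient for the quadratic lower bound.
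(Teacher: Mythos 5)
There is a genuine gap, and it is exactly at the step you flag as ``where the real work lies.'' On the evenly-spaced line, the hypothesis $\fp{i} \in \chull{i}$ by itself permits moves of length $\Theta(V)$ toward the swarm's center in \emph{every} round, and nothing in your argument shows this degrades as robots bunch up. Concretely, consider the hull-respecting rule ``if you observe exactly one other occupied position, jump onto it (move to the midpoint once only two occupied positions remain in view); otherwise run \gtcShort/.'' On your configuration the extremal robot jumps onto its unique neighbour in round one; the resulting co-located blob again sees exactly one occupied position at distance $V$ (no multiplicity detection) and keeps jumping, while every interior robot has a symmetric view and stays put. Thus $\min_i p_i(t)$ increases by $V$ per round, the diameter shrinks at rate $2V$, and the line collapses in $\Theta(\Delta)$ rounds. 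This refutes both pillars of your quadratic step: gaps are \emph{not} non-increasing (a robot may move to the far endpoint of its hull, doubling the gap behind it while order is preserved), and the leftmost robot's displacement is bounded by the distance to its \emph{farthest} visible neighbour ($V$), not its nearest one --- the ``midpoint of a segment of length $\lambda \cdot diam$'' property you invoke belongs to \lambdaContracting/ protocols and is not an assumption of this theorem, which quantifies over all hull-respecting protocols. Appealing to the $\Omega(\Delta^2)$ analysis of \gtcShort/ does not help either, since that bound is specific to one protocol and the task here is a bound against every protocol in the class on a single witness configuration.

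The paper sidesteps this by choosing a configuration in which hull containment \emph{alone} caps per-round progress at $O(1/\Delta)$: the $n$ robots sit on a regular polygon with side length equal to the viewing range, so each robot sees only its two adjacent vertices and its local hull is a very flat triangle whose inward depth is at most $\cos(\gamma/2) \leq \pi/n$ for the interior angle $\gamma = \frac{(n-2)\pi}{n}$. Variable disorientation lets the adversary keep the configuration a regular polygon in every round, so the circumradius, which is $\Theta(\Delta)$, decreases by at most $\pi/n \in O(1/\Delta)$ per round, giving $\Omega(\Delta^2)$. To salvage a one-dimensional witness you would have to prove that no correct hull-respecting gathering protocol can sweep the line at constant speed; the trace above shows the hull constraint does not forbid it, so your claimed invariants cannot follow from the stated hypotheses and the chosen configuration is almost certainly not a valid witness.
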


Next, we state a matching upper bound for \lambdaContracting/ protocols in two dimensions.
We first focus on robots in the Euclidean plane to make the core ideas visualizable.

\begin{restatable}{theorem}{ClassUpperBound} \label{theorem:upperBoundAlphaBetaContracting}
	Consider a swarm of robots in $\mathbb{R}^2$.
	Every \lambdaGathering/  gathers all robots in $ \frac{171 \cdot \pi \cdot \Delta^2}{\lambda^3} + 1\in \mathcal{O}\left(\Delta^2\right)$ rounds.
\end{restatable}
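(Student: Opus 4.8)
The plan is to track a suitable potential function and show it decreases by a constant amount each round, as long as the swarm is not yet gathered. The natural candidate is the \emph{global diameter} $\globalDiameter$, but the diameter alone is too coarse (it can stay constant for several rounds while the swarm genuinely contracts), so instead I would work with a potential that reflects the ``area'' or ``spread'' of the swarm. Concretely, I would use (a constant times) the squared diameter, or the area of the global convex hull $\mathrm{hull}(t)$ of all robot positions, and argue that one of these shrinks by an additive constant $\Omega(\lambda^3)$ per round. Since the global convex hull of a $\Delta$-diameter swarm has area at most $\pi\Delta^2/4$ (it is contained in a disk of radius $\Delta/2$ — here the factor $\pi$ in the claimed bound comes from), and the area is nonnegative, a per-round decrease of $\Omega(\lambda^3)$ yields the $\calO(\Delta^2/\lambda^3)$ bound; the constant $171$ then falls out of chasing the inequalities carefully.

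The heart of the argument is the one-round progress lemma: in any round $t$ with $\globalDiameter > 0$, the swarm contracts measurably. First I would observe that, since the protocol is \lambdaContracting/, every robot $r_i$ moves to the midpoint of a segment of length $\lambda\cdot\robotDiameter{i}$ lying inside $\chull{i}$, and in particular $p_i(t+1)\in\chull{i}$; hence $\mathrm{hull}(t+1)\subseteq\mathrm{hull}(t)$ and the potential is non-increasing regardless. For strict progress, I would look at a diametral pair of the \emph{global} hull, i.e.\ robots $r_a,r_b$ with $|p_a(t)-p_b(t)|=\globalDiameter$. Orient coordinates so this pair is aligned with the $x$-axis. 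The key geometric fact is that being \lambdaCentered/ in $\chull{i}$ forces $p_i(t+1)$ to be at distance at least $\lambda\cdot\robotDiameter{i}/2 \ge \lambda\cdot(\text{something})/2$ \emph{from the boundary of} $\chull{i}$ measured along the segment direction — but more usefully, it forces the projection of the new position onto any direction to move ``inward'' relative to the extreme robots in that direction. I would make this precise by showing: if $r_i$ is a robot realizing the maximum $x$-coordinate among all robots visible to some relevant robot, then after the move every robot's $x$-coordinate has dropped, and the robot(s) attaining the max $x$-coordinate of the whole swarm drop their $x$-coordinate by $\Omega(\lambda \cdot \globalDiameter)$ — because such a robot's local hull has diameter $\Omega(\globalDiameter)$ is false in general, so the more careful statement is that the \emph{width} of the swarm in the $x$-direction shrinks by $\Omega(\lambda^2\globalDiameter)$ or so, using that the extreme robot sees at least one neighbor and its \lambdaCentered/ segment must fit inside the local hull. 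Combining the shrink in the $x$-direction with the fact that the $y$-extent is at most $\globalDiameter$, the area drops by $\Omega(\lambda^2 \globalDiameter \cdot 1)$; but $\globalDiameter$ itself could be small, so I would instead run this argument in the round where I measure progress relative to the \emph{current} diameter and sum a telescoping/geometric-type bound — or, cleaner, bound the number of rounds in which $\globalDiameter$ lies in each dyadic range $[2^{-k-1}\Delta, 2^{-k}\Delta)$ and sum. This is where the \emph{collapsing} property enters: once the local diameters drop below the threshold $c=\nicefrac12$ and neighborhoods stabilize, all relevant robots compute the same target and $\globalDiameter$ hits $0$ in one step, so we do not need the geometric series to converge by itself.

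Assembling these pieces: I would (1) fix the configuration, reduce to a diametral direction, (2) prove the one-round ``width in the extreme direction shrinks by $\Omega(\lambda^2\robotDiameter{})$'' lemma using only \Cref{def:alpha-centered} and the fact that an extreme robot has a nonempty neighborhood inside the viewing range, (3) convert this into a per-round area decrease of $\Omega(\lambda^3\cdot(\text{current }\globalDiameter)^2 / \text{something})$ — or more robustly, show $\globalDiameter(t+1)^2 \le \globalDiameter(t)^2 - \Omega(\lambda^3)$ directly by a Pythagorean-type estimate (the new diameter realizer pair has bounded $x$-separation and at most $\globalDiameter$ total separation, so squared diameter loses a constant), (4) invoke connectivity preservation so the swarm never breaks apart and the ``extreme robot has a neighbor'' hypothesis stays valid, and (5) invoke collapsing to turn the last tiny configuration into an exact gathering. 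Summing over rounds gives $\frac{\pi\Delta^2/4}{\Omega(\lambda^3)}+1$, which with the explicit constants is $\frac{171\pi\Delta^2}{\lambda^3}+1$.

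The main obstacle I anticipate is step (2)/(3): getting a clean, dimension-robust lower bound on the per-round decrease of the swarm's extent purely from the \lambdaCentered/ condition. The subtlety is that an extreme robot's \emph{local} diameter $\robotDiameter{i}$ can be much smaller than the \emph{global} diameter $\globalDiameter$ (the robot may see only a small cluster), so a naive ``extreme robot moves $\Omega(\lambda\globalDiameter)$ inward'' is false; one must instead propagate progress along a chain of neighbors from the extreme robot toward the center, or argue about the robot whose local hull actually has large diameter. Reconciling this with the desire for an $n$- and $d$-independent bound — ruling out that progress degrades with swarm size — is exactly the point where I expect the argument (and presumably the paper's use of the auxiliary \alphaBetaContracting/ subclass and the half-sphere-cap geometry hinted at by the macros) to require real care.
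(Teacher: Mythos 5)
There is a genuine gap: the core one-round progress lemma, which you yourself identify as the crux, is never established. You correctly note that an extreme robot's local diameter $\robotDiameter{i}$ can be arbitrarily small compared to $\globalDiameter$, but your two suggested remedies (``propagate progress along a chain of neighbors'' or ``argue about the robot whose local hull actually has large diameter'') are left as gestures, and neither is straightforward. The paper's resolution is different and is the real content of the proof: fix a small circular segment \baseSegmentLambda/ of the \emph{global smallest enclosing circle} whose chord has length $\nicefrac{\lambda}{4}$ and whose height is $\Omega(\lambda^2/\Delta)$ (\Cref{lemma:height}, via Jung's theorem); robots with local diameter $>\nicefrac{1}{4}$ must target points outside the half-height segment because their $\lambda\cdot\robotDiameter{i}$-long segment cannot fit inside \baseSegmentLambda/ (\Cref{lemma:largeDiameter}); and the problematic small-diameter robots inside the segment are handled by showing they all see each other, that all but those sharing a common neighborhood have diameter $>1$, and that the common-neighborhood ones (being invariant and having the same view) move to \emph{one identical point}, which then vacates the segment in the \emph{next} round (\Cref{lemma:smallDiameters,lemma:globalRadiusDecrease}). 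This two-round, same-target-point mechanism is exactly what your outline is missing, and it also shows that your claim of a \emph{per-round} additive decrease is false in general: a tight cluster near the boundary may first collapse to a single point still inside the segment and only leave afterwards.

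Two further concrete problems with your potential choices. The hull-area potential cannot work: a near-collinear swarm of diameter $\Delta$ has hull area arbitrarily close to $0$ while being far from gathered, so ``area decreases by $\Omega(\lambda^3)$ per round'' is unprovable and area cannot certify gathering. The direct claim $\mathrm{diam}(t+1)^2 \le \mathrm{diam}(t)^2 - \Omega(\lambda^3)$ is also not supported by your argument, since progress in a single extreme direction (one diametral pair) does not control the diameter realized by other pairs; the paper instead uses the SEC radius $R(t)$, which is monotone because local hulls nest into the global hull, and applies the segment argument simultaneously at \emph{every} boundary point $b$ of the SEC, yielding $R(t+2)\le R(t) - \Omega(\lambda^3/\Delta)$ and hence the $\calO(\Delta^2/\lambda^3)$ bound after summation, with the collapsing property finishing the last step once $\globalDiameter \le \nicefrac{1}{2}$ (this endgame step you do have). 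So while your high-level accounting ($\Theta(\Delta^2/\lambda^3)$ rounds, connectivity preservation, collapsing at the end) matches the paper, the decisive geometric lemma and the correct choice of monotone potential are absent.
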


\medskip\noindent\textbf{\textsf{High-Level Description.\;}}
The proof aims to show that the radius of the global smallest enclosing circle (SEC), i.e., the SEC that encloses all robots' positions in a global coordinate system, decreases by $\Omega \left(\nicefrac{1}{\Delta}\right)$ every two rounds.
Since the initial radius is upper bounded by $\Delta$, the runtime of $\calO\left(\Delta^2\right)$ follows.
See \Cref{figure:intuition} for a visualization.
\begin{figure}[htb]
	\centering
	\includegraphics[width =0.95\textwidth]{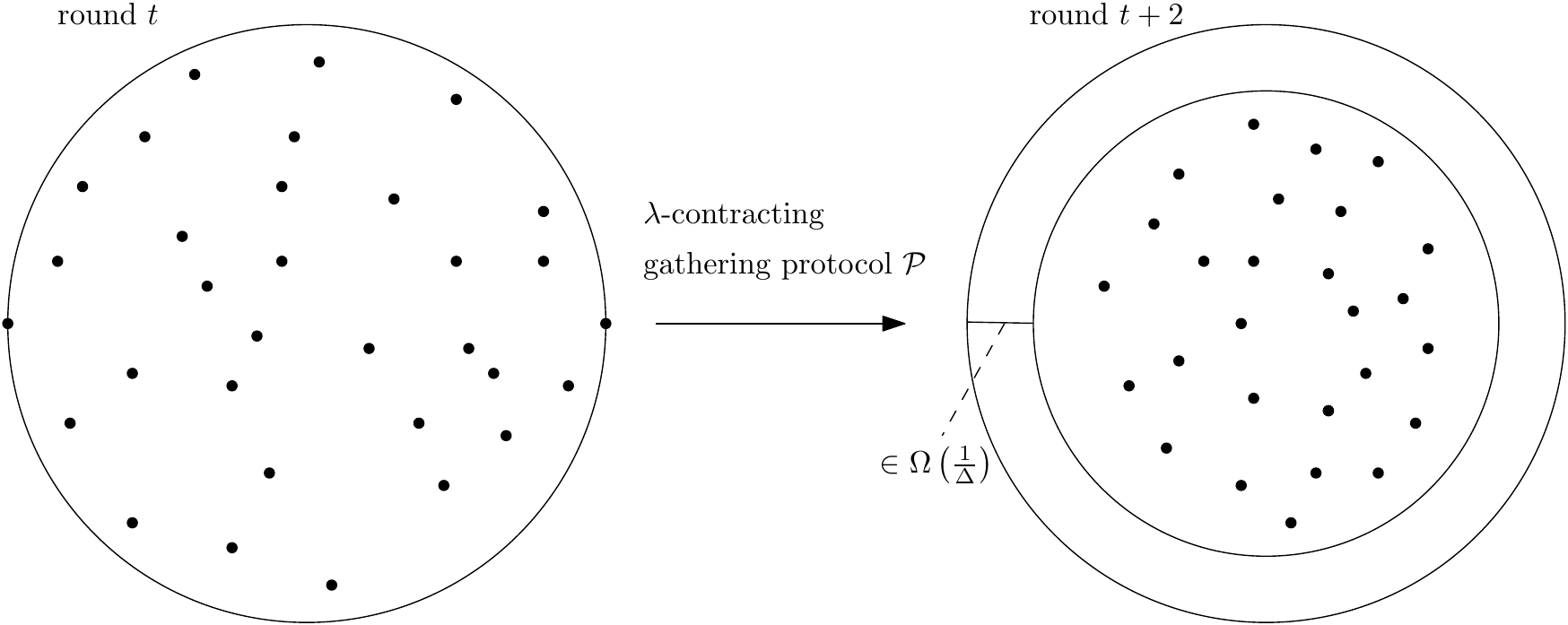}
	\caption{We show that the radius of the global SEC decreases by $\Omega \left(\nicefrac{1}{\Delta}\right)$ every two rounds.}
	\label{figure:intuition}
\end{figure}

We consider the fixed circular segment \baseSegmentLambda/ of the global SEC and analyze how the inside robots behave.
A circular segment is a region of a circle \enquote{cut off} by a chord.
The segment \baseSegmentLambda/ has a chord length of at most $\nicefrac{\lambda}{4}$ (for a formal definition, see below) and we can prove a height of \baseSegmentLambda/  in the order of $\Omega \left(\nicefrac{1}{\Delta}\right)$ (\Cref{lemma:height}).
Observe that in any circular segment, the chord's endpoints are the points that have a maximum distance within the segment, and hence, the maximum distance between any pair of points in \baseSegmentLambda/ is at most $\nicefrac{\lambda}{4}$.
Now, we split the robots inside of \baseSegmentLambda/ into two classes: the robots $r_i$ with $\robotDiameter{i} > \nicefrac{1}{4}$ and the others with $\robotDiameter{i} \leq \nicefrac{1}{4}$.
Recall that every robot $r_i$ moves to the \lambdaCentered/ point \fp{i}.
Moreover \fp{i} is the midpoint of a line segment $\ell $ of length $\lambda \cdot \robotDiameter{i}$ that is completely contained in the local convex hull of $r_i$.
For robots with $\robotDiameter{i} > \nicefrac{1}{4}$ we have that $\ell$ is larger than $\nicefrac{\lambda}{4}$ and thus, $\ell$ cannot be completely contained in \baseSegmentLambda/.
Hence, $\ell$ either connects two points outside of \baseSegmentLambda/ or one point inside and another outside.
In the former case, \fp{i} is outside of \baseSegmentLambda/, in the latter case \fp{i} is outside of  a segment with half the height $h$ of \baseSegmentLambda/.
See \Cref{lemma:largeDiameter} for a formal statement of the first case.

	It remains to argue about robots with $\robotDiameter{i} < \nicefrac{1}{4}$.
	Here, we consider a segment with even smaller height, namely $h \cdot \nicefrac{\lambda}{4}$.
	We will see that all robots which compute a target point inside this segment (which can only be robots with $\robotDiameter{i} < \nicefrac{\lambda}{4}$) will move exactly to the same position.
	Hence, in round $t+1$ there is only one position in the segment with height $h \cdot \nicefrac{\lambda}{4}$ occupied by robots.
	All other robots are located outside of the segment with height $\nicefrac{h}{2}$.
	As a consequence, for all robots $r_i$ in the segment with height $h \cdot \nicefrac{\lambda}{4}$, it must hold \fp{i} is outside of  the segment with height $h \cdot \nicefrac{\lambda}{4}$.
	See \Cref{lemma:smallDiameters} for a formal statement.
	Finally, \Cref{lemma:globalRadiusDecrease} combines the previous statements and gives a lower bound on how much the radius of the global SEC decreases.

	\medskip\noindent\textbf{\textsf{Detailed Analysis.\;}}
	First, we introduce some definitions.
	Let $N := N(t)$ be the (global) smallest enclosing circle of all robots in round $t$ and $R := R(t)$ its radius.
	Now, fix any point $b$ on the boundary of $N$.
	The two points in distance $\nicefrac{\lambda}{8}$ of $b$ on the boundary of $N$ determine the circular segment \baseSegmentLambda/ with height $h$.
	In the following, we determine by $S_{\lambda}(c)$ for $0 < c \leq 1$ the circular segment with height $c \cdot h$ that is contained in \baseSegmentLambda/.
	See \Cref{figure:globalSegments} for a depiction of the segment \baseSegmentLambda/ and the segment \betaHalfSegmentLambda/ (that is used in the proofs).
	In the following, all lemmata consider robots that move according to a \lambdaGathering/ $\mathcal{P}$.

	\begin{figure}[htb]
		\centering
		\includegraphics[width = 1.05\textwidth]{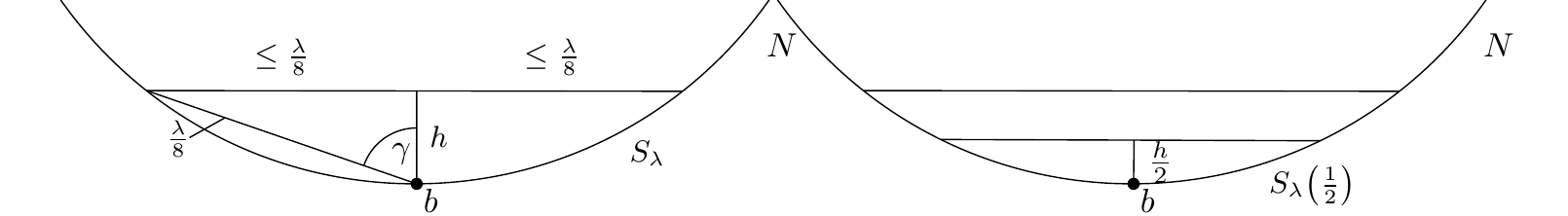}
		\caption{The segments \baseSegmentLambda/ (to the left) and \betaHalfSegmentLambda/ of the global SEC $N$ is depicted.}
		\label{figure:globalSegments}
	\end{figure}

	%	\begin{figure}[htb]
		%	\centering
		%	\includegraphics[width = 0.75\textwidth]{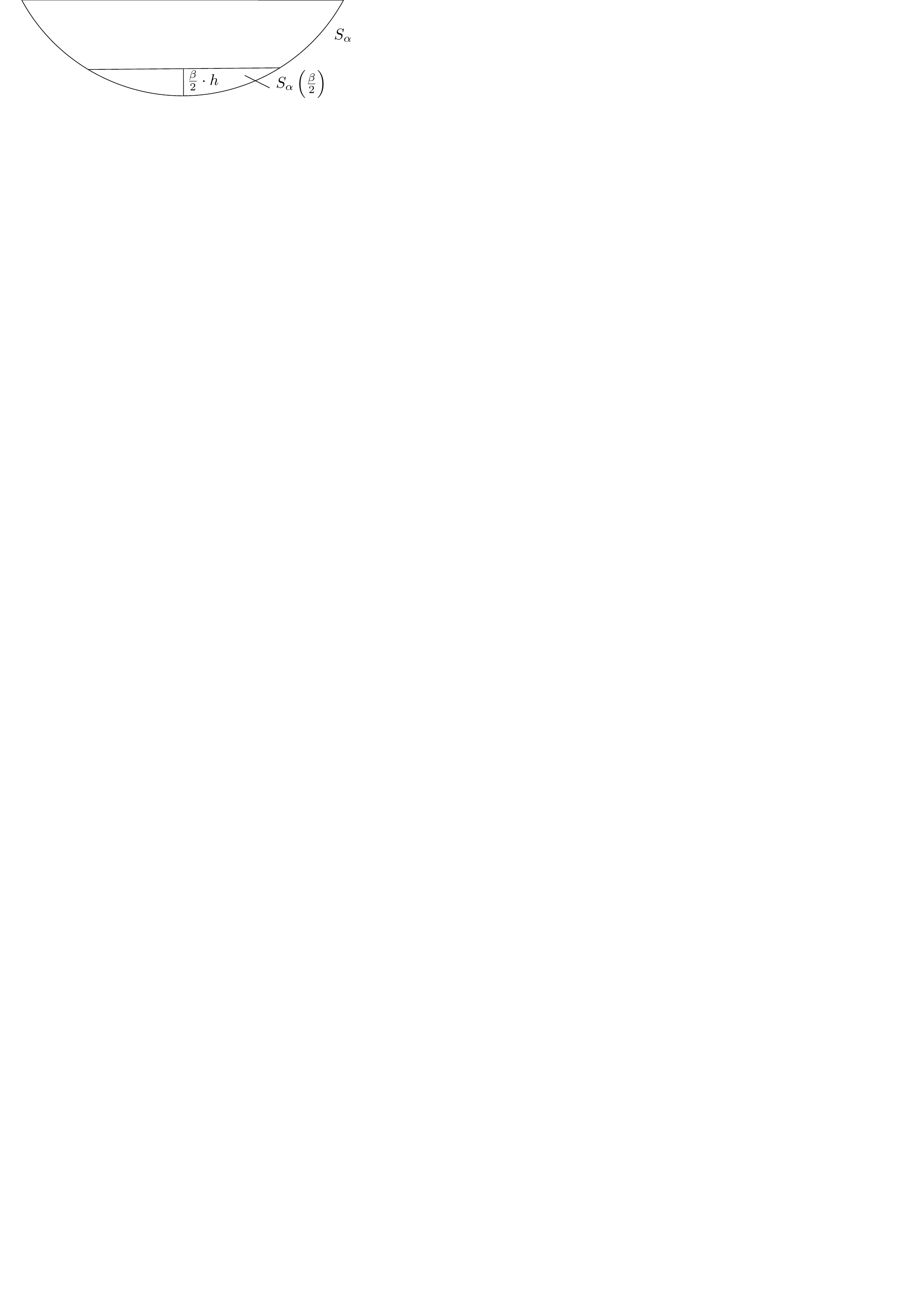}
		%	\caption{The segment \betaHalfSegmentLambda/ that is contained in \baseSegmentLambda/ is depicted.}
		%	\label{figure:segmentsSmaller}
		%\end{figure}

		In the following, we prove that all robots leave the segment \mainSegmentLambda/ every two rounds.
		As a consequence, the radius of $N$ decreases by at least $\nicefrac{\lambda}{4} \cdot h$.
		Initially, we give a bound on $h$.
		We use Jung's Theorem (\Cref{theorem:jungsTheorem}) to obtain a bound on $R$ and also on $h$.

		\begin{restatable}{lemma}{LemmaHeight}
			\label{lemma:height}
			$h \geq \frac{\sqrt{3} \cdot \lambda^2}{64  \pi  \Delta}$.
		\end{restatable}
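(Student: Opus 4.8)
The plan is to bound the height $h$ of the circular segment $S_\lambda$ from below by separately controlling (i) the radius $R$ of the global smallest enclosing circle $N$ from above, and (ii) the height of a circular segment of a circle of radius $R$ subtended by a chord of the prescribed length.

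First I would bound $R$ from above using Jung's Theorem (\Cref{theorem:jungsTheorem}), which relates the diameter of a point set in $\mathbb{R}^2$ to the radius of its smallest enclosing ball: since the diameter of the swarm in round $t$ is at most $\Delta$ (the swarm's diameter is monotonically non-increasing under a \lambdaGathering/, as all robots move inside their local convex hulls and hence inside the global convex hull), Jung's Theorem gives $R \leq \Delta / \sqrt{3}$. In particular $R \leq \Delta$, and more importantly $R$ is of order $\Theta(\Delta)$, which is what makes the segment "shallow."

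Next I would compute the height of a circular segment of radius $R$ whose chord has half-length $\nicefrac{\lambda}{8}$ (so the two defining points lie at distance $\nicefrac{\lambda}{8}$ from the boundary point $b$ along the circle — strictly speaking one should check whether "distance $\nicefrac{\lambda}{8}$" refers to arc length or chordal distance; in either case the resulting chord has length $\Theta(\lambda)$, and I will take the chord half-length to be at least, say, $\nicefrac{\lambda}{8} \cdot \frac{1}{2}$ to absorb the distinction). If the chord half-length is $w$, then by the standard relation $h = R - \sqrt{R^2 - w^2}$. Rationalizing gives $h = \frac{w^2}{R + \sqrt{R^2 - w^2}} \geq \frac{w^2}{2R}$. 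Plugging in $R \leq \Delta/\sqrt{3}$ and $w \geq \nicefrac{\lambda}{16}$ (a safe under-estimate of the chord half-length) yields $h \geq \frac{\lambda^2/256}{2\Delta/\sqrt{3}} = \frac{\sqrt{3}\,\lambda^2}{512\,\Delta}$, which is of the claimed form; tightening the constants (using the exact chord length and not over-weakening $w$) brings it down to the stated $\frac{\sqrt{3}\,\lambda^2}{64\pi\Delta}$, where the factor $\pi$ presumably enters because the "$\nicefrac{\lambda}{8}$" in the construction is measured as arc length, costing a factor related to $\pi$ when converting to chord length, or because a safe rounding is used. One must also verify that $w \leq R$ so that the segment is well-defined, i.e.\ that $\nicefrac{\lambda}{8} \leq R$; since $\lambda \leq 1$ and the interesting regime has $\Delta$ large (for small $\Delta$ the runtime bound is trivial), this holds, and the degenerate small-$\Delta$ case can be handled separately or absorbed into the "$+1$" of \Cref{theorem:upperBoundAlphaBetaContracting}.

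The main obstacle is not any single estimate but getting the constants to match exactly while being careful about two modelling choices: whether the two defining points of $S_\lambda$ are at arc-length or Euclidean distance $\nicefrac{\lambda}{8}$ from $b$, and ensuring the chord half-length $w$ is genuinely bounded below by a clean fraction of $\lambda$ after whatever conversion is used. Once those are pinned down, the inequality $h \geq w^2/(2R)$ together with Jung's bound $R \leq \Delta/\sqrt{3}$ finishes the proof immediately.
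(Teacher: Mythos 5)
Your proposal is correct and does reach the stated bound, but the geometric half of the argument is carried out differently from the paper. Both proofs share the same decomposition: relate the segment height $h$ to the radius $R$ of the global SEC, then apply Jung's Theorem (\Cref{theorem:jungsTheorem}) to get $R \leq \nicefrac{\Delta}{\sqrt{3}}$ (the monotonicity of the global convex hull that you invoke to justify using the \emph{initial} diameter is used implicitly by the paper as well). The paper, however, derives the $h$-versus-$R$ relation via an angle argument: it interprets $2\gamma$ as the internal angle of a regular polygon with side length $\nicefrac{\lambda}{8}$ inscribed in $N$, bounds $\gamma \leq \frac{\pi}{2} - \frac{\lambda}{16R}$, writes $h = \frac{\lambda}{8}\cos\gamma$ and linearizes with $\cos x \geq 1-\frac{2}{\pi}x$; that linearization is where the $\pi$ in the constant comes from, not from an arc-length conversion --- indeed the paper's own relation $\cos\gamma = \frac{8h}{\lambda}$ shows the two defining points are at \emph{chordal} distance $\nicefrac{\lambda}{8}$ from $b$, which settles your arc-versus-chord worry. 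Your route via the sagitta relation $h = R-\sqrt{R^2-w^2} \geq \frac{w^2}{2R}$ is more elementary and, carried out exactly, even sharper: since $b$ lies on the perpendicular bisector of the chord, the intersecting-chords theorem gives $w^2 = h(2R-h)$, hence $\bigl(\nicefrac{\lambda}{8}\bigr)^2 = h^2 + w^2 = 2Rh$, i.e.\ $h = \frac{\lambda^2}{128R} \geq \frac{\sqrt{3}\lambda^2}{128\Delta} \geq \frac{\sqrt{3}\lambda^2}{64\pi\Delta}$, a strictly better constant than the paper's. Do note that your ``safe under-estimate'' $w \geq \nicefrac{\lambda}{16}$ alone only yields $\frac{\sqrt{3}\lambda^2}{512\Delta}$, so the tightening step you allude to is genuinely needed, but it is exactly the computation above and poses no obstacle.
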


		We continue to prove that all robots leave \mainSegmentLambda/ every two rounds.
		First of all, we analyze robots for which $\robotDiameter{i} > \nicefrac{1}{4}$.
		These robots even leave the larger segment \betaHalfSegmentLambda/.

		\begin{restatable}{lemma}{LemmaLargeDiameter} \label{lemma:largeDiameter}
			For any robot $r_i$ with $\robotDiameter{i} > \nicefrac{1}{4}:$ $\fp{i} \in N \setminus \betaHalfSegmentLambda/$.
		\end{restatable}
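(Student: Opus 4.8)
The plan is to show that any robot $r_i$ located inside the segment \betaHalfSegmentLambda/ with local diameter $\robotDiameter{i} > \nicefrac{1}{4}$ must place its target point in $N \setminus \betaHalfSegmentLambda/$. The key geometric fact I would use is the one highlighted in the high-level description: in \emph{any} circular segment, the two endpoints of its chord realize the maximum pairwise distance of points in the segment. Since the chord of \baseSegmentLambda/ has length $\nicefrac{\lambda}{8} + \nicefrac{\lambda}{8} = \nicefrac{\lambda}{4}$ (the two boundary points are each at distance $\nicefrac{\lambda}{8}$ from $b$), the diameter of \baseSegmentLambda/ is at most $\nicefrac{\lambda}{4}$, and the same bound holds for the smaller nested segment \betaHalfSegmentLambda/. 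Consequently, any line segment entirely contained in \betaHalfSegmentLambda/ has length at most $\nicefrac{\lambda}{4}$.

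Next I would use the definition of \lambdaContracting/: the target point \fp{i} is the midpoint of a line segment $\ell$ of length $\lambda \cdot \robotDiameter{i}$ that is completely contained in $\chull{i} \subseteq N$. For a robot with $\robotDiameter{i} > \nicefrac{1}{4}$, we get $|\ell| = \lambda \cdot \robotDiameter{i} > \nicefrac{\lambda}{4}$, which strictly exceeds the diameter bound of \betaHalfSegmentLambda/. Hence $\ell$ cannot lie entirely inside \betaHalfSegmentLambda/, so at least one endpoint of $\ell$ lies outside \betaHalfSegmentLambda/. Now I split into the two cases mentioned in the overview. If both endpoints of $\ell$ lie outside \betaHalfSegmentLambda/, then — because a circular segment is convex and \betaHalfSegmentLambda/ is the portion of the disk cut off by a chord parallel to the chord of \baseSegmentLambda/ — the midpoint \fp{i} of $\ell$ lies in the complementary convex region, hence outside \betaHalfSegmentLambda/; this is exactly the situation of \Cref{lemma:largeDiameter}'s conclusion. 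If exactly one endpoint, say $q$, lies outside \betaHalfSegmentLambda/ while the other, $q'$, lies inside, I would argue via the linear structure of the "height" coordinate: parametrize points of $N$ by their signed distance from the chord line of \baseSegmentLambda/ (call it the height coordinate $\mathrm{ht}$); \betaHalfSegmentLambda/ is the set of points of $N$ with $\mathrm{ht} \geq h - \nicefrac{h}{2} = \nicefrac{h}{2}$ (measuring heights from $b$ downward), so $q' \in \betaHalfSegmentLambda/$ means $\mathrm{ht}(q') \geq \nicefrac{h}{2}$ while $q \notin \betaHalfSegmentLambda/$ means $\mathrm{ht}(q) < \nicefrac{h}{2}$. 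Then $\mathrm{ht}(\fp{i}) = \tfrac12(\mathrm{ht}(q) + \mathrm{ht}(q'))$, and one checks that since $|\ell| > \nicefrac{\lambda}{4}$ the drop in height from $q'$ to $q$ is large enough that $\mathrm{ht}(q)$ is well below $\nicefrac{h}{2} - (\text{something})$, forcing the average below $\nicefrac{h}{2}$; combined with $\fp{i} \in N$ this gives $\fp{i} \in N \setminus \betaHalfSegmentLambda/$. In both cases the claim follows.

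The main obstacle I anticipate is the second case — controlling the midpoint's height when only one endpoint of $\ell$ escapes \betaHalfSegmentLambda/. A clean way around a messy trigonometric estimate is to avoid the height bookkeeping entirely: observe that any chord-bounded subregion such as \betaHalfSegmentLambda/ is exactly the intersection of $N$ with a halfplane $H$, and note that the maximum distance, within $N$, from a point outside $\betaHalfSegmentLambda/ = N \cap H$ to a point inside it is bounded by the chord length of \baseSegmentLambda/ plus a controlled slack — but more directly, if $\fp{i}$ were \emph{inside} \betaHalfSegmentLambda/, then the endpoint $q'$ (also inside) and the midpoint $\fp{i}$ both lie in \betaHalfSegmentLambda/, so the half-segment from $\fp{i}$ to $q'$ has length $\tfrac12|\ell| > \nicefrac{\lambda}{8}$ and lies in \betaHalfSegmentLambda/; extending the reasoning, the full segment $\ell$ would have to stay within a slightly larger segment, and a direct diameter comparison (using \Cref{lemma:height}'s height bound only to ensure the nested segments are genuinely smaller) yields a contradiction. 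I would pick whichever of these two formulations keeps the constants transparent; the essential point in all variants is the single inequality $\lambda \cdot \robotDiameter{i} > \nicefrac{\lambda}{4}$ versus the diameter $\leq \nicefrac{\lambda}{4}$ of the segment, together with convexity of circular segments.
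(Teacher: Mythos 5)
Your overall strategy -- comparing the length $\lambda \cdot \robotDiameter{i} > \nicefrac{\lambda}{4}$ of the defining line segment $\ell$ with the diameter of a circular segment and then reasoning about the midpoint -- is the same as the paper's, but there is a genuine gap in your mixed case (one endpoint $q'$ inside \betaHalfSegmentLambda/, one endpoint $q$ outside). You claim that \enquote{since $|\ell| > \nicefrac{\lambda}{4}$ the drop in height from $q'$ to $q$ is large enough} to force the average height below $\nicefrac{h}{2}$. This is a non sequitur: the length of $\ell$ says nothing about its component perpendicular to the chord, so a priori a long, nearly chord-parallel $\ell$ with $q'$ near $b$ (height close to $h$) and $q$ at height just below $\nicefrac{h}{2}$ would place the midpoint at height close to $\nicefrac{3h}{4}$, i.e.\ \emph{inside} \betaHalfSegmentLambda/. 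What excludes this configuration is not the height bookkeeping but the diameter bound on the \emph{full} segment: since $q' \in \betaHalfSegmentLambda/ \subseteq \baseSegmentLambda/$ and $|\ell| > \nicefrac{\lambda}{4} \geq \mathrm{diam}(\baseSegmentLambda/)$, the endpoint $q$ must lie outside \baseSegmentLambda/ altogether, i.e.\ strictly on the far side of its chord, so $\mathrm{ht}(q) < 0$; only then does averaging give $\mathrm{ht}(\fp{i}) < \frac{1}{2}(0 + h) = \nicefrac{h}{2}$ and hence $\fp{i} \in N \setminus \betaHalfSegmentLambda/$. This is exactly how the paper proceeds: it runs the dichotomy with respect to \baseSegmentLambda/ (not \betaHalfSegmentLambda/), so that in the mixed case one endpoint is automatically outside the full segment; your choice to set up the case distinction relative to \betaHalfSegmentLambda/ is what opens the hole. (Incidentally, the chord of \baseSegmentLambda/ has length at most $\nicefrac{\lambda}{4}$ by the triangle inequality, not exactly $\nicefrac{\lambda}{4}$, but only the upper bound is needed.)

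Your fallback sketch does not close the gap as stated either: knowing that the half-segment from \fp{i} to $q'$ has length greater than $\nicefrac{\lambda}{8}$ and lies in \betaHalfSegmentLambda/ is not a contradiction, because for the shallow segments relevant here the chord of \betaHalfSegmentLambda/ has length on the order of $\nicefrac{\lambda}{4\sqrt{2}} > \nicefrac{\lambda}{8}$, so such a half-segment fits comfortably. The correct completion is a reflection argument: if both \fp{i} and $q'$ had height at least $\nicefrac{h}{2}$, then $\mathrm{ht}(q) = 2\,\mathrm{ht}(\fp{i}) - \mathrm{ht}(q') \geq h - h = 0$ (using $\mathrm{ht}(q') \leq h$), so all of $\ell$ would lie in \baseSegmentLambda/, contradicting $|\ell| > \nicefrac{\lambda}{4} \geq \mathrm{diam}(\baseSegmentLambda/)$. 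Either fix is a one-liner, but as written the decisive use of the full-segment diameter bound is missing from both of your formulations.
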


		Now, we consider the case of a single robot in \mainSegmentLambda/, and its neighbors are located outside of \betaHalfSegmentLambda/.
		We prove that this robot leaves \mainSegmentLambda/.
		Additionally, we prove that none of the robots outside of \betaHalfSegmentLambda/ that see the single robot in \mainSegmentLambda/ enters \mainSegmentLambda/.

		\begin{restatable}{lemma}{LemmaSmallDiameter} \label{lemma:smallDiameters}
			Consider a robot $r_i$ located in \mainSegmentLambda/.
			If all its neighbors are located outside of $\betaHalfSegmentLambda/$, $\fp{i} \in N \setminus \mainSegmentLambda/$.
			Similarly, for a robot $r_i$ that is located outside of \betaHalfSegmentLambda/ and that has only one neighbor located in \mainSegmentLambda/, $\fp{i} \in N \setminus \mainSegmentLambda/$.
		\end{restatable}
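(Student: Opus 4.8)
The plan is to argue in both cases that the line segment $\ell$ witnessing that $\fp{i}$ is \lambdaCentered/ must be long enough to escape the thin segment \mainSegmentLambda/. Recall $\mainSegmentLambda/ = S_\lambda(\nicefrac{\lambda}{4})$ has height $\nicefrac{\lambda}{4} \cdot h$, and that the maximum distance between two points inside a circular segment is realized by the endpoints of its chord. So I would first record a geometric fact: any line segment contained in $N$ that is not contained in \mainSegmentLambda/ but whose midpoint lies in \mainSegmentLambda/ must have length at least $2 \cdot (\text{distance from } \mainSegmentLambda/ \text{ to } \partial S_\lambda(\nicefrac12))$ in the direction perpendicular to the chord — more precisely, its midpoint being inside the small segment while an endpoint is outside \betaHalfSegmentLambda/ forces the segment to have a perpendicular extent of at least $(\nicefrac12 - \nicefrac{\lambda}{4}) \cdot h \geq \nicefrac{h}{4}$, hence length $\geq \nicefrac{h}{4}$; I will compare this with $\lambda \cdot \robotDiameter{i}$.

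\medskip
\noindent\textbf{First case.} Suppose $r_i \in \mainSegmentLambda/$ and all neighbors of $r_i$ lie outside \betaHalfSegmentLambda/. Then $r_i$ itself is a vertex of \chull{i} sitting inside \mainSegmentLambda/ while every other vertex of \chull{i} lies outside \betaHalfSegmentLambda/. The segment $\ell$ of length $\lambda\cdot\robotDiameter{i}$ with midpoint $\fp{i}$ is contained in \chull{i}. If $\fp{i}$ were in \mainSegmentLambda/, then since \chull{i} only touches \mainSegmentLambda/ at $r_i$'s position (all other vertices, and hence the part of the hull near them, lie outside \betaHalfSegmentLambda/), the segment $\ell$ would have to reach from inside \mainSegmentLambda/ out past $\partial S_\lambda(\nicefrac12)$, giving it perpendicular extent $\geq \nicefrac{h}{4}$ and thus $\lambda\cdot\robotDiameter{i} \geq \nicefrac{h}{4}$. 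Combined with $\robotDiameter{i} \leq \nicefrac{\lambda}{4}$ (which holds here: all of $N_i(t)$ except $r_i$ is outside \betaHalfSegmentLambda/, but actually I need $\robotDiameter{i}$ small — this comes from the complementary case of \Cref{lemma:largeDiameter}, so $\robotDiameter{i} \leq \nicefrac14$ is the case we are in, and one checks $\lambda \cdot \nicefrac14 < \nicefrac{h}{4}$ is contradicted only for large $\Delta$; here I instead use that $\robotDiameter{i} \le \nicefrac14$ together with the bound on $h$ from \Cref{lemma:height} to derive the contradiction, since $h = \Theta(\nicefrac1\Delta)$ is small). This contradiction shows $\fp{i} \in N \setminus \mainSegmentLambda/$.

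\medskip
\noindent\textbf{Second case.} Suppose $r_i$ lies outside \betaHalfSegmentLambda/ and has exactly one neighbor $r_j$ inside \mainSegmentLambda/. Then in \chull{i}, the vertex $r_j$ is the only one inside \mainSegmentLambda/, and $r_i$ (a vertex) is outside \betaHalfSegmentLambda/, so the same reasoning applies symmetrically: if $\fp{i} \in \mainSegmentLambda/$ then the witnessing segment $\ell \subseteq \chull{i}$ must span from inside \mainSegmentLambda/ out past $\partial S_\lambda(\nicefrac12)$ (since the hull near every vertex other than $r_j$ is outside \betaHalfSegmentLambda/), forcing $\lambda\cdot\robotDiameter{i} \geq \nicefrac{h}{4}$, which again contradicts $\robotDiameter{i} \leq \nicefrac14$ via \Cref{lemma:height}. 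Hence $\fp{i} \in N \setminus \mainSegmentLambda/$.

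\medskip
\noindent\textbf{Main obstacle.} The delicate point is the claim ``the hull \chull{i} touches \mainSegmentLambda/ only near the single inner vertex.'' A priori, an \emph{edge} of the convex hull connecting two vertices outside \betaHalfSegmentLambda/ could dip back into \mainSegmentLambda/. I need to rule this out using convexity together with the geometry of the circular segments: since \mainSegmentLambda/, \betaHalfSegmentLambda/ and $N$ are nested circular segments cut by parallel chords, a chord of $N$ with both endpoints outside \betaHalfSegmentLambda/ on the \emph{same side} can only re-enter \mainSegmentLambda/ if it crosses the chord of \betaHalfSegmentLambda/ twice — but then it would have to exit $N$, impossible. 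If the two endpoints are on opposite ``ends'' of the segment region this needs a short separate argument (the chord then stays above the chord of \betaHalfSegmentLambda/ throughout, hence outside \mainSegmentLambda/). Carefully formalizing this nesting/convexity argument, and then plugging in the explicit bound $h \geq \frac{\sqrt3\,\lambda^2}{64\pi\Delta}$ from \Cref{lemma:height} against $\robotDiameter{i} \le \nicefrac14$, is where the real work lies; the rest is bookkeeping.
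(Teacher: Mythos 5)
There is a genuine gap here, on two counts. First, the pivotal claim in both of your cases -- ``if $\fp{i}\in\mainSegmentLambda/$ then the witnessing segment must reach out past $\partial S_{\lambda}\bigl(\nicefrac12\bigr)$'' -- is not justified and is false in general. The hull \chull{i} does not meet \betaHalfSegmentLambda/ only ``near the single inner vertex'': it meets it in a two-dimensional, cone-shaped region with apex at the unique robot inside \mainSegmentLambda/, bounded above by the chord of \betaHalfSegmentLambda/ between the two points $p_1,p_2$ where the hull boundary crosses it. Since $|p_1-p_2|\leq\robotDiameter{i}$ while the witnessing segment has length only $\lambda\cdot\robotDiameter{i}\leq|p_1-p_2|$, that segment can lie entirely inside this region without ever leaving \betaHalfSegmentLambda/, so no lower bound of order $h$ on its length follows. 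The paper's proof works precisely with these crossing points: it uses $\robotDiameter{i}\geq|p_1-p_2|$ and an intercept-theorem-style proportionality (the cone's width at height $y$ scales like $y$) to argue directly that the midpoint of any hull-contained segment of length $\lambda\cdot\robotDiameter{i}$ lands above $S_{\lambda}\bigl(\nicefrac{\lambda}{4}\bigr)$; your proposal never compares the segment length with $|p_1-p_2|$, which is the quantity that makes the argument go through. (Incidentally, the ``main obstacle'' you single out is a non-issue: within $N$ the complement of \betaHalfSegmentLambda/ is $N$ intersected with a half-plane, hence convex, so a hull edge between two vertices outside \betaHalfSegmentLambda/ cannot dip back in.)

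Second, even granting the escape claim, your intended contradiction is logically inverted. You would obtain $\lambda\cdot\robotDiameter{i}\geq\nicefrac{h}{4}$ and then try to refute it via $\robotDiameter{i}\leq\nicefrac14$ and \Cref{lemma:height}. But \Cref{lemma:height} gives a \emph{lower} bound on $h$, and $h\in\Theta\bigl(\nicefrac1\Delta\bigr)$ is tiny, so $\lambda\cdot\robotDiameter{i}\geq\nicefrac h4$ is perfectly consistent with $\robotDiameter{i}\leq\nicefrac14$ once $\Delta$ is large -- there is no contradiction to be had, and there should not be one: the lemma does not assert that such configurations are impossible, it locates $\fp{i}$, so the proof has to be a direct statement about where the midpoint lies (as in the paper), not a refutation of the hypothesis. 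Note also that $\robotDiameter{i}\leq\nicefrac14$ is not among the hypotheses of this lemma; it only enters later, when the lemma is combined with \Cref{lemma:largeDiameter} in \Cref{lemma:globalRadiusDecrease}, so importing it here additionally weakens what you prove.
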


		Next, we derive with help of \Cref{lemma:largeDiameter,lemma:smallDiameters} that \mainSegmentLambda/ is empty after two rounds.
		Additionally, we analyze how much $R(t)$ decreases.

		\begin{restatable}{lemma}{LemmaGlobalRadius} \label{lemma:globalRadiusDecrease}
			For any round $t$ with $\globalDiameter{} \geq \nicefrac{1}{2}$, $R(t+2) \leq R(t) - \frac{\lambda^3  \cdot \sqrt{3}}{256 \cdot \pi \cdot \Delta}$.
		\end{restatable}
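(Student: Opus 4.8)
The plan is to track the global smallest enclosing circle (SEC). Fix a round $t$ with $\globalDiameter \geq \nicefrac{1}{2}$, let $N := N(t)$ be this SEC with radius $R := R(t)$ and centre $c$, and fix an arbitrary point $b$ on the boundary of $N$ together with the induced nested segments $\baseSegmentLambda/ \supseteq \betaHalfSegmentLambda/ \supseteq \mainSegmentLambda/$ of heights $h \geq \nicefrac{h}{2} \geq \nicefrac{\lambda h}{4}$; note that $h$ depends only on $R$ and $\lambda$, not on $b$, and that $\globalDiameter \geq \nicefrac{1}{2}$ keeps $R$ large enough for this construction to be non-degenerate and for \cref{lemma:height,lemma:largeDiameter,lemma:smallDiameters} to be invoked. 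Since $\calP$ is \lambdaContracting/, every target point lies in a local convex hull, which is contained in $N$; hence all robots remain inside $N$ in every round $\geq t$, so \cref{lemma:largeDiameter,lemma:smallDiameters} apply verbatim at rounds $t$ and $t+1$ with respect to the \emph{fixed} segments of $N$. I will show that $\mainSegmentLambda/$ contains no robot at round $t+2$. As $b$ ranges over the boundary and $\mainSegmentLambda/$ is exactly the cap of $N$ of height $\nicefrac{\lambda h}{4}$ at $b$, emptiness of all these caps at round $t+2$ places every robot inside the disc of radius $R - \nicefrac{\lambda h}{4}$ concentric with $N$, i.e., $R(t+2) \leq R - \nicefrac{\lambda h}{4}$; substituting $h \geq \frac{\sqrt{3}\,\lambda^{2}}{64\pi\Delta}$ from \cref{lemma:height} then yields $R(t+2) \leq R(t) - \frac{\sqrt{3}\,\lambda^{3}}{256\pi\Delta}$, as claimed.

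The first step is to prove that at round $t+1$ the segment $\betaHalfSegmentLambda/$ contains at most one occupied position. By \cref{lemma:largeDiameter}, every robot with $\robotDiameter{i} > \nicefrac{1}{4}$ moves out of $\betaHalfSegmentLambda/$, so any robot $r_i$ that lies in $\betaHalfSegmentLambda/$ at round $t+1$ satisfied $\robotDiameter{i} \leq \nicefrac{1}{4}$ at round $t$. For such a robot, $\fp{i}$ is the midpoint of a segment $\ell \subseteq \chull{i} \subseteq N$ of length $\lambda \cdot \robotDiameter{i} \leq \nicefrac{\lambda}{4}$; since the affine ``height-towards-$b$'' functional (which is at most $R$ on $N$) averages over the two endpoints of $\ell$ to its value at $\fp{i}$, which is $\geq R - \nicefrac{h}{2}$, each endpoint of $\ell$ has height $\geq R - h$ and thus lies in $\baseSegmentLambda/$. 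Hence $\ell \subseteq \baseSegmentLambda/$, a set of diameter at most $\nicefrac{\lambda}{4}$ containing $b$; combined with $\mathrm{diam}(\chull{i}) = \robotDiameter{i} \leq \nicefrac{1}{4}$, this confines the entire neighbourhood $N_i(t)$ to a ball around $b$ of radius at most $\nicefrac{1}{2}$ -- small enough relative to the viewing range that this ball is a clique -- so that $N_i(t)$ equals precisely the set of robots inside that ball, the \emph{same} set for every such $r_i$. As this common neighbourhood has diameter at most $\nicefrac{1}{4} < c$, the \emph{collapsing} property forces all these robots to compute the same target point. I expect this step to be the crux of the proof: it is where the nesting of the three segments and the scaling of the numerical thresholds against the viewing range must be balanced so that ``target deep inside $\betaHalfSegmentLambda/$ together with small local diameter'' upgrades to ``identical neighbourhood''.

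The second step goes from round $t+1$ to round $t+2$. Let $q$ denote the unique occupied position in $\betaHalfSegmentLambda/$ at round $t+1$, if one exists. If there is none, or if $q \in \betaHalfSegmentLambda/ \setminus \mainSegmentLambda/$, then no robot occupies $\mainSegmentLambda/$ at round $t+1$; since the height-towards-$b$ functional is affine, its maximum over any $\chull{i}$ is attained at a neighbour's position, and all of these lie strictly below the $\mainSegmentLambda/$-threshold, so every round-$(t+1)$ target avoids $\mainSegmentLambda/$ and the segment is empty at round $t+2$. If $q \in \mainSegmentLambda/$, then every robot not located at $q$ lies outside $\betaHalfSegmentLambda/$; the robots at $q$ -- which, lacking multiplicity detection, see a single robot at $q \in \mainSegmentLambda/$ with all remaining neighbours outside $\betaHalfSegmentLambda/$ -- leave $\mainSegmentLambda/$ by \cref{lemma:smallDiameters}, the robots outside $\betaHalfSegmentLambda/$ that see $q$ do not enter $\mainSegmentLambda/$ by the same lemma, and a robot outside $\betaHalfSegmentLambda/$ that does not see $q$ has all neighbours outside $\betaHalfSegmentLambda/ \supseteq \mainSegmentLambda/$ and hence, by the affine argument again, keeps its target out of $\mainSegmentLambda/$. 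In every case $\mainSegmentLambda/$ is empty at round $t+2$; combining this over all boundary points $b$ and applying \cref{lemma:height} as described in the first paragraph completes the proof. The only work left is the routine circular-segment geometry -- relating the height $h$ to chord lengths via Jung's theorem, and the identity that the union over all boundary points of the height-$\eta$ caps is exactly the outer annulus of width $\eta$ -- which I would leave to short computations.
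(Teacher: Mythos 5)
Your proof is correct, and its overall skeleton is the paper's: fix the segment \baseSegmentLambda/ of the round-$t$ SEC, use \Cref{lemma:largeDiameter,lemma:smallDiameters} to show \mainSegmentLambda/ is unoccupied two rounds later, sweep $b$ over the boundary, and plug in \Cref{lemma:height}. The one place where you genuinely diverge is the ``at most one occupied position'' step. The paper argues on the set $R_S$ of robots located in or targeting \mainSegmentLambda/: it shows $R_S$ is a clique, picks the robot with the fewest neighbors, shows every robot of $R_S$ with a different neighborhood has local diameter exceeding the viewing range (hence leaves \betaHalfSegmentLambda/ by \Cref{lemma:largeDiameter}), and lets the collapsing property merge the rest — yielding uniqueness only inside \mainSegmentLambda/ at round $t+1$. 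You instead show that \emph{every} robot whose target lands in the larger segment \betaHalfSegmentLambda/ has local diameter at most $\nicefrac{1}{4}$ and, via the averaging/endpoint argument, that its hull touches \baseSegmentLambda/, so its neighborhood is exactly the set of robots in the ball of radius $\nicefrac{1}{2}$ around $b$ — one common neighborhood of diameter at most $\nicefrac{1}{4}$, to which the collapsing property applies. This buys you uniqueness of the occupied position in all of \betaHalfSegmentLambda/ at round $t+1$, which is precisely the precondition needed to invoke \Cref{lemma:smallDiameters} in the second round (the paper's weaker claim leaves robots landing in $\betaHalfSegmentLambda/ \setminus \mainSegmentLambda/$ to be handled implicitly, whereas your case split over whether $q$ lies in \mainSegmentLambda/, together with the affine-height argument for robots not seeing $q$, treats this explicitly). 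So: same decomposition and same supporting lemmas, but a tighter identification of the common neighborhood that makes the two-round emptying argument self-contained.
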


		Finally, we can conclude with help of \Cref{lemma:globalRadiusDecrease} the main \Cref{theorem:upperBoundAlphaBetaContracting}.

		\medskip\noindent\textbf{\textsf{Upper Bound in $d$-dimensions.\;}}
		The upper bound we derived for two dimensions can also be generalized to every dimension $d$.
		Only the constants in the runtime increase slightly.

		\begin{restatable}{theorem}{ClassUpperBoundHighDim} \label{theorem:upperBoundAlphaBetaContractingHighDim}
			Consider a team of $n$ robots located in $\mathbb{R}^{d}$.
			Every \lambdaContracting/ gathering protocol gathers all robots in $\frac{256 \cdot \pi \cdot \Delta^2}{\lambda^3 } +1 \in \mathcal{O}\left(\Delta^2\right)$ rounds.
		\end{restatable}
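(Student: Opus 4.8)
The plan is to mimic the two-dimensional argument of \Cref{theorem:upperBoundAlphaBetaContracting} almost verbatim, replacing the global smallest enclosing circle by the global smallest enclosing ball (SEB) in $\mathbb{R}^d$ and the circular segment \baseSegmentLambda/ by the $d$-dimensional \emph{spherical cap} of the SEB cut off by a hyperplane at distance $h$ from the boundary point $b$. The key structural facts needed are: (i) the spherical cap of the SEB of radius $R$ whose chord (now a $(d-1)$-dimensional ball) has diameter at most $\nicefrac{\lambda}{4}$ still has a height $h \in \Omega(1/\Delta)$; (ii) Jung's Theorem in $\mathbb{R}^d$ gives $R \le \Delta\sqrt{d/(2(d+1))} < \Delta/\sqrt{2}$, a bound \emph{independent of $d$}, so the estimate on $h$ degrades only in a dimension-free way; and (iii) the diameter of the cap is realized by two points on the bounding hyperplane, so the maximum pairwise distance inside the cap is still bounded by its chord diameter, exactly as in the planar case. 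Granting these, \Cref{lemma:height,lemma:largeDiameter,lemma:smallDiameters,lemma:globalRadiusDecrease} all transfer with the same proofs, since none of them uses two-dimensionality beyond these three geometric properties of caps/segments.

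Concretely, the steps I would carry out are the following. \textbf{Step 1:} Redefine the relevant objects in $\mathbb{R}^d$: let $N := N(t)$ be the SEB of all robot positions in round $t$, $R := R(t)$ its radius, $b$ a boundary point, and let $\baseSegmentLambda/$ be the cap obtained by intersecting $N$ with the halfspace bounded by the hyperplane perpendicular to the radius through $b$ at the appropriate distance, chosen so that the chord ball has diameter $\nicefrac{\lambda}{4}$; define $S_{\lambda}(c)$ as the cap of height $c\cdot h$ inside $\baseSegmentLambda/$. \textbf{Step 2:} Re-derive the height bound. Using Jung's Theorem, $R < \Delta/\sqrt 2$; then the relation between chord diameter $w$, radius $R$, and cap height $h$, namely $h = R - \sqrt{R^2 - (w/2)^2}$, gives $h \ge (w/2)^2/(2R) \ge (\nicefrac{\lambda}{8})^2/(2R)$ which is $\Omega(\lambda^2/\Delta)$; I would track the constant to land on the claimed $\tfrac{256\pi\Delta^2}{\lambda^3}+1$ (the slight constant inflation versus the 2D bound $\tfrac{171\pi\Delta^2}{\lambda^3}$ comes from being less tight here). \textbf{Step 3:} Re-run \Cref{lemma:largeDiameter}: a robot with $\robotDiameter{i} > \nicefrac14$ has a \lambdaCentered/-defining segment of length $\lambda\robotDiameter{i} > \nicefrac{\lambda}{4}$ inside its local hull; since that segment is longer than the diameter of $\baseSegmentLambda/ \bigl(\tfrac12\bigr)$, it cannot be contained in that cap, forcing \fp{i} outside $\baseSegmentLambda/\bigl(\tfrac12\bigr)$ — the argument is purely about lengths and convexity, so dimension is irrelevant. \textbf{Step 4:} Re-run \Cref{lemma:smallDiameters}: for robots with $\robotDiameter{i} \le \nicefrac{\lambda}{4}$, use the collapsing property plus invariance to argue that all robots computing a target point inside the tiny cap $\mainSegmentLambda/$ have identical neighborhoods-to-the-relevant-precision and hence move to a single common point; combined with Step 3, after one more round the cap $\mainSegmentLambda/$ contains at most one occupied position and then empties. \textbf{Step 5:} Assemble the analogue of \Cref{lemma:globalRadiusDecrease}: whenever $\globalDiameter{} \ge \nicefrac12$, every two rounds the cap $\mainSegmentLambda/$ at the extremal point $b$ becomes empty, so $R$ drops by at least $\tfrac{\lambda}{4}h = \Omega(\lambda^3/\Delta)$; since $R(0) < \Delta/\sqrt2$ and $\globalDiameter{} \le 2R$, dividing gives the stated round bound, after which $\globalDiameter{} < \nicefrac12$ and collapsing finishes the gathering in one further round.

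The main obstacle I anticipate is \emph{not} the geometry — caps behave like segments — but making the argument for $\robotDiameter{i} \le \nicefrac{\lambda}{4}$ fully rigorous in $d$ dimensions, i.e. the step where many robots inside the tiny cap must move to exactly one common point. In two dimensions one can argue that such robots, being packed into a region of diameter $\le \nicefrac{\lambda}{4} \le \nicefrac14$, all have pairwise distance below the collapsing threshold $c=\nicefrac12$; but one must also ensure they share the \emph{same neighbor set} $N_i(t)=N_j(t)$, which requires controlling which robots \emph{outside} the cap are visible. The clean way is: all robots in $\mainSegmentLambda/$ lie within a ball of radius $\nicefrac14$, so any robot at distance $\le V$ from one of them is at distance $\le V + \nicefrac12 \le V + \tau$-irrelevant-here... rather, for \gathering/ the viewing and connectivity ranges coincide, so I must be slightly careful and instead argue via the \lambdaContracting/ (hence connectivity-preserving and into-the-hull) property that the only robots that can ever enter or witness this cap are already accounted for, exactly as the planar proof does. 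I would therefore lift the planar Lemma proofs essentially line by line, checking at each use of "segment" that replacing it with "cap" preserves the inequality, and flag that the only genuinely new ingredient is the dimension-independent form of Jung's Theorem, which is what keeps the final bound free of $d$.
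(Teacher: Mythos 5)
Your proposal follows essentially the same route as the paper's appendix proof: replace the circular segment by a hyperspherical cap at a boundary point of the smallest enclosing hypersphere, use the dimension-free form of Jung's Theorem ($R \leq \Delta/\sqrt{2}$) to get $h \in \Omega(\lambda^2/\Delta)$, transfer \Cref{lemma:largeDiameter,lemma:smallDiameters,lemma:globalRadiusDecrease} verbatim (the paper likewise states these as ``analogous''), and assemble the round bound exactly as in \Cref{theorem:upperBoundAlphaBetaContracting}. The only deviation is cosmetic: you bound the cap height via the sagitta identity $h = R - \sqrt{R^2-(w/2)^2}$ rather than the paper's inscribed-polygon/cosine estimate, which is fine and even yields a slightly better constant than needed.
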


		\subsection{Examples of \lambdaContracting/ Gathering Protocols} \label{section:exampleProtocols}

	    Next, we present examples of  \lambdaGathering/s.
		Before introducing the concrete protocols, we describe an important subclass of \lambdaContracting/ protocols, denoted as \alphaBetaContracting/ protocols, a powerful tool to decide whether a given protocol is \lambdaContracting/.
		Afterward, we introduce the known protocol  \gtcShort/ \cite{DBLP:journals/trob/AndoOSY99} and prove it to be \lambdaContracting/.
		Additionally, we introduce two further two-dimensional \lambdaGathering/s: \gtmdShort/ and \gtcdmbShort/.

		\medskip\noindent\textbf{\textsf{\alphaBetaContracting/ Protocols.\;}}
		While the definition of \lambdaGathering/s describes the core properties of efficient protocols to solve \gathering/, it might be practically challenging to determine whether a given protocol is \lambdaContracting/.
		Concrete protocols often are designed as follows: robots compute a \emph{desired} target point and move as close as possible towards it without losing connectivity \cite{DBLP:journals/trob/AndoOSY99,DBLP:conf/sirocco/BraunCH20,DBLP:journals/siamco/LinMA07}.
		The \gtcShort/ algorithm, for instance, uses this rule.
		Since the robots do not necessarily reach the desired target point, it is hard to determine whether the resulting point is \lambdaCentered/.
		Therefore, we introduce a two-stage definition: \alphaBetaContracting/ protocols .
		The parameter $\alpha$ represents an \alphaCentered/ point (\Cref{def:alpha-centered}) and $\beta$ describes how close the robots move towards the point.
		%As an intermediate step, we define the $\beta$-scaled convex hull around an arbitrary point.

		\begin{definition}
			\label{def:beta-scaled-polygon}
			Let $c_1, \dots, c_k$ with $c_i \in \mathbb{R}^d$ be the vertices of a convex polytope $Q$, $p \in Q$ and $ 0 < \beta \leq 1$ a constant.
			\underline{$Q\left(p, \beta\right)$} is the convex polytope with vertices $p + (1-\beta) \cdot \left(c_i-p\right)$.
		\end{definition}

		Now, we are ready to define the class of \alphaBetaContracting/ protocols.
		It uses a combination of \Cref{def:alpha-centered,def:beta-scaled-polygon}: the target points of the robots must be inside of the $\beta$-scaled local convex hull around an $\alpha$-centered point.
		See also \Cref{figure:scaledHulls} for a visualization of valid target points in \alphaBetaContracting/ protocols.
		Recall that \chull{i} defines the convex hull of all neighbors of $r_i$ including $r_i$ in round $t$ and \chull{i}$\left(p,\beta\right)$ is the scaled convex hull around $p$ (\Cref{def:beta-scaled-polygon}).

		\begin{definition}
			\label{def:alpha-beta-contracting}
			A connectivity preserving and invariant discrete robot formation protocol $\mathcal{P}$ is called to be  \underline{\alphaBetaContracting/}, if there exists an \alphaCentered/ point \alphaCenterP{i} s.t.\ $\fp{i} \in \chull{i}\bigl(\alphaCenterP{i},\beta\bigr)$ for every robot $r_i$ and every $t \in \mathbb{N}_0$.
			Moreover, $\mathcal{P}$ is called an \underline{\alphaBetaContracting/ gathering protocol} if $\mathcal{P}$ is \alphaBetaContracting/ and collapsing.
		\end{definition}

		\begin{figure}[htb]
			\centering
			\includegraphics[width =\textwidth]{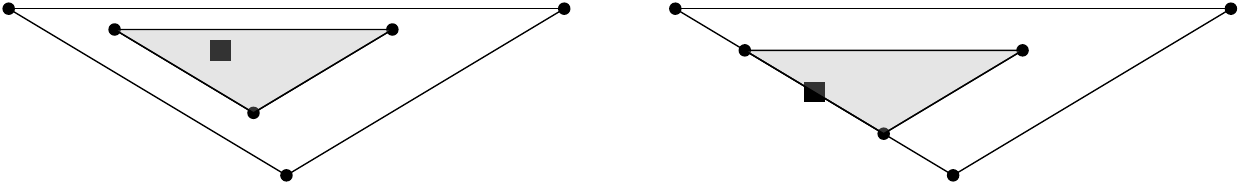}
			\caption{Two examples of valid target points of \alphaBetaContracting/ protocols. The small gray triangle represents the $\frac{1}{2}$-scaled convex hull around an $\frac{1}{4}$-centered point marked with a square.}
			\label{figure:scaledHulls}
		\end{figure}

		The following theorem describes the relation between \alphaBetaContracting/ and \lambdaContracting/ protocols.

		\begin{restatable}{theorem}{LemmaAlphaBetaIsLambda}
			\label{lem:ab-contracting-is-lambda-contracting}
			Every \alphaBetaContracting/ protocol $\calP$ is \lambdaContracting/ with $\lambda = \alpha \cdot \beta$.
		\end{restatable}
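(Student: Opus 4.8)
The plan is to verify \Cref{def:lambda-contracting} directly with $\lambda = \alpha\beta$. Two of the three requirements are immediate: being \emph{connectivity preserving} and \emph{invariant} is already part of the definition of an \alphaBetaContracting/ protocol, and $\lambda = \alpha\beta$ lies in $(0,1]$ since both $\alpha$ and $\beta$ do. So the whole work is to show that, for every robot $r_i$ and every round $t$, the point $\fp{i}$ is $(\alpha\beta)$-centered in $Q := \chull{i}$, i.e.\ that it is the midpoint of a segment of length $\alpha\beta \cdot diam$ contained in $Q$, where $diam := \robotDiameter{i}$ is the diameter of $Q$.

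First I would rewrite the $\beta$-scaled hull in closed form. Writing $c := \alphaCenterP{i}$, \Cref{def:beta-scaled-polygon} says $Q(c,\beta)$ is the convex hull of $\set{c + (1-\beta)(c_j - c)}$ over the vertices $c_j$ of $Q$; since affine maps commute with taking convex hulls, $Q(c,\beta) = c + (1-\beta)(Q - c)$. Hence, from $\fp{i} \in Q(c,\beta)$ we obtain a point $q \in Q$ with $\fp{i} = c + (1-\beta)(q - c)$.

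Next I would use that $c$ is \alphaCentered/ in $Q$: there is a unit vector $u$ such that the segment with endpoints $a_{\pm} := c \pm \tfrac{\alpha \cdot diam}{2}\, u$ is contained in $Q$ (its length is $\alpha \cdot diam$ and its midpoint is $c$). Consider the two points $\fp{i} \pm \tfrac{\alpha\beta \cdot diam}{2}\, u$. Substituting $\fp{i} = c + (1-\beta)(q-c)$ and $\pm\tfrac{\alpha \cdot diam}{2}\,u = a_{\pm} - c$, a short expansion of $c + (1-\beta)(q-c) + \beta(a_{\pm}-c)$ collapses to $(1-\beta)\,q + \beta\, a_{\pm}$, a convex combination of $q \in Q$ and $a_{\pm} \in Q$; by convexity of $Q$ both points lie in $Q$, and then so does the whole segment between them. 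That segment is symmetric about $\fp{i}$, so $\fp{i}$ is its midpoint, and its length is $2 \cdot \tfrac{\alpha\beta \cdot diam}{2} = \alpha\beta \cdot diam$. This is exactly the witness \Cref{def:alpha-centered} requires for $\fp{i}$ to be $(\alpha\beta)$-centered in $\chull{i}$, completing the verification of \Cref{def:lambda-contracting}.

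Since the core of the argument is a single convexity observation, there is no substantial obstacle. The only points that need care are the algebraic identity $c + (1-\beta)(q-c) + \beta(a_{\pm}-c) = (1-\beta)q + \beta a_{\pm}$ — which is precisely what makes the two shifted endpoints fall back inside $Q$ — and the (minor) bookkeeping that the $diam$ appearing in the definition of the \alphaCentered/ point is the diameter \robotDiameter{i} of the \emph{local convex hull} of $r_i$, so that the constructed segment indeed has the length $\lambda \cdot \robotDiameter{i}$ demanded by the \lambdaContracting/ definition.
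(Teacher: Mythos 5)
Your proof is correct, but it takes a more algebraic route than the paper. The paper's argument is a planar geometric construction: it draws the ray from the \alphaCentered/ point $p=\alphaCenterP{i}$ through $p'=\fp{i}$ to a boundary point $c$ of \chull{i}, forms the triangle over the witness segment $\overline{ab}$ of length $\alpha \cdot \robotDiameter{i}$ with apex $c$, and applies the intercept theorem to the parallel chord $\overline{a'b'}$ through $p'$, using $|\overline{cp'}| \geq \beta\,|\overline{cp}|$ to get $|\overline{a'b'}| \geq \alpha\beta \cdot \robotDiameter{i}$. You instead write the scaled hull in closed form, $\chull{i}(p,\beta) = p + (1-\beta)\bigl(\chull{i} - p\bigr)$, and observe that the two candidate endpoints $\fp{i} \pm \tfrac{\alpha\beta \cdot \robotDiameter{i}}{2}u$ collapse to the convex combinations $(1-\beta)q + \beta a_{\pm}$ of points of \chull{i}, so the whole segment lies in the hull by convexity. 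Your version buys several things: it is coordinate-free and dimension-independent without any figure or boundary-intersection argument, it produces a segment of exactly the required length rather than a lower bound, and it makes the midpoint property of $\fp{i}$ explicit --- a point the paper's parallel-chord argument leaves implicit (one must note that the cevian through $p$ and $p'$ meets $\overline{a'b'}$ in its midpoint because $p$ is the midpoint of $\overline{ab}$). The paper's proof, in turn, is more pictorial and matches its figure. Both rest on the same underlying idea of scaling the $\alpha$-witness segment by $\beta$ and recentering it at the target point, so this is a stylistic but genuinely cleaner alternative rather than a different theorem-level decomposition.
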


		\medskip\noindent\textbf{\textsf{\gtc/.\;}}
		As a first example, we study the two-dimensional \gtcShort/ algorithm \cite{DBLP:journals/trob/AndoOSY99}.
		It is already known that it gathers all robots in $\calO\left(n + \Delta^2\right)$ rounds \cite{DBLP:conf/spaa/DegenerKLHPW11}.
		We show that \gtcShort/ is \alphaBetaContracting/ (hence also \lambdaContracting/) and thus, obtain an improved upper runtime bound of $\calO\left(\Delta^2\right)$.
		\Cref{algorithm:gtc2D} contains the formal description of the \gtcShort/ algorithm.
		Robots always move towards the center of the smallest enclosing circle of their neighborhood.
		To maintain connectivity, \emph{limit circles} are used.
		Each robot $r_i$ always stays within the circle of radius $\nicefrac{1}{2}$ centered in the midpoint $m_j$ of every visible robot $r_j$.
		Since each robot $r_j$ does the same, it is ensured that two visible robots always stay within a circle of radius  $\nicefrac{1}{2}$ and thus, they remain connected.
		Consequently, robots move only that far towards the center of the smallest enclosing circle such that no limit circle is left.

		\begin{restatable}{theorem}{TheoremGtcAlphaBeta} \label{theorem:gtcAlphaBeta}
			\gtcShort/ is $\bigl(\nicefrac{\sqrt{3}}{8}, \nicefrac{1}{2}\bigr)$-contracting.
		\end{restatable}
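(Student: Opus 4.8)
The plan is to verify the two ingredients of \alphaBetaContracting/ separately for \gtcShort/, namely: (i) that the smallest-enclosing-circle center of a robot's neighborhood is an $\alpha$-centered point of its local convex hull for $\alpha = \nicefrac{\sqrt{3}}{8}$, and (ii) that the limit-circle truncation always keeps the actual target point inside the $\beta$-scaled hull around that center, for $\beta = \nicefrac{1}{2}$. Connectivity preservation and invariance of \gtcShort/ are well known (the limit circles enforce the former, the purely metric definition of the SEC and the midpoints enforce the latter), so the work is concentrated in (i) and (ii). For (i), let $c_i$ be the center and $\rho_i$ the radius of the SEC of $N_i(t)$, and let $diam := \robotDiameter{i}$. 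By Jung's theorem (\Cref{theorem:jungsTheorem}, available as stated) we have $\rho_i \le diam/\sqrt{3}$; conversely $\rho_i \ge diam/2$ since the two robots realizing the diameter both lie in the SEC. A classical property of the SEC is that $c_i$ lies in the convex hull of the contact points (the neighbors on the SEC boundary), and these contact points do not all lie in a halfplane through $c_i$ — equivalently, every open halfplane whose boundary passes through $c_i$ contains a contact point. This lets me run a line segment through $c_i$ in \emph{any} direction and hit the convex hull on both sides: more quantitatively, I will show there is a direction in which the chord of $\chull{i}$ through $c_i$ extends at least $\rho_i/2 \ge diam/4$ on \emph{each} side, giving a segment of length $\ge diam/2$ centered at $c_i$ inside $\chull{i}$. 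Actually, to land exactly at $\alpha = \nicefrac{\sqrt{3}}{8}$ I want the segment length $\alpha \cdot diam = \nicefrac{\sqrt{3}}{8}\, diam$, which is where the $\sqrt{3}$ enters: I will pick the direction toward the nearest contact point-pair and use $\rho_i \ge diam/2$ together with an angular argument (two contact points span an angle $\ge 60^\circ$ at $c_i$, again Jung-type reasoning in $\mathbb{R}^2$) to guarantee the chord through $c_i$ has half-length at least $\rho_i \cdot \sin(30^\circ)/\,(\text{something}) $; the bookkeeping yields the stated $\alpha$.

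For (ii), I need the limit-circle mechanism not to pull the realized target point out of $\chull{i}\bigl(c_i,\nicefrac12\bigr)$. Each limit circle has radius $\nicefrac12$ and is centered at a midpoint $m_j = (p_i(t)+p_j(t))/2$ with $p_j(t)$ a visible robot; since $|p_i(t)-p_j(t)| \le 1$, the point $p_i(t)$ is within distance $\nicefrac12$ of $m_j$, i.e.\ $p_i(t)$ itself lies in every limit circle, and so does the whole segment from $p_i(t)$ toward $c_i$ up to the point where it first exits some limit circle. The realized target is that exit point (or $c_i$ if no exit occurs). The key geometric claim is: the $\beta$-scaled hull $\chull{i}\bigl(c_i,\nicefrac12\bigr)$ is exactly the set of midpoints of $c_i$ with points of $\chull{i}$, and in particular it contains the midpoint of $c_i$ and $p_i(t)$; more strongly, I will argue every limit circle contains $\chull{i}\bigl(c_i,\nicefrac12\bigr)$, because a limit circle centered at $m_j$ of radius $\nicefrac12$ contains both $p_i(t)$ and $p_j(t)$ (hence, being convex, every point "halfway in" from the hull toward... ) — here I must be careful, because a single limit circle need not contain all of $N_i(t)$. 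The correct statement to prove is that the \emph{first exit point} along the ray from $p_i(t)$ to $c_i$ is at parameter $\ge \nicefrac12$, i.e.\ the robot gets at least halfway from $p_i(t)$ to $c_i$; since $p_i(t) \in \chull{i}$ and $c_i \in \chull{i}$, any point on that segment at parameter $\ge \nicefrac12$ is automatically in $\chull{i}\bigl(c_i,\nicefrac12\bigr)$ (scaling by $1-\beta = \nicefrac12$ about $c_i$ maps $\chull{i}$ onto $\chull{i}\bigl(c_i,\nicefrac12\bigr)$, and the segment-point at parameter $\tfrac12$ from $p_i(t)$ is the $c_i$-midpoint of $p_i(t)$). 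So (ii) reduces to: the ray from $p_i(t)$ toward $c_i$ stays inside every limit circle for at least half its length to $c_i$. This follows from an elementary chord computation in the circle of radius $\nicefrac12$ around $m_j$: $p_i(t)$ is at distance $\tfrac12|p_i(t)-p_j(t)| \le \tfrac12$ from $m_j$, $c_i$ is at distance $\le \rho_i \le \text{(diameter of }N_i\text{)}/\sqrt3 \le$ a bound I control, and the segment from a point on or inside the circle toward $c_i$ remains inside until it would reach distance $\nicefrac12$ from $m_j$; quantifying when that happens along the $p_i(t)$–$c_i$ segment gives the parameter-$\ge\nicefrac12$ bound, where once more the numbers conspire to produce $\beta = \nicefrac12$.

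I expect the main obstacle to be part (i): pinning down the \emph{quantitative} direction in which the local convex hull admits a segment of length $\alpha\cdot diam$ centered at the SEC center. The qualitative fact ("$c_i$ is in the interior-ish part of the hull") is standard, but converting "$c_i$ is a convex combination of boundary contact points that are not confined to a halfplane" into a guaranteed chord length $\nicefrac{\sqrt3}{8}\, diam$ on both sides requires the sharp two-dimensional Jung-type angular estimate (some pair of contact points subtends $\ge 60^\circ$ at $c_i$) combined with the radius bounds $diam/2 \le \rho_i \le diam/\sqrt3$; getting the constant to match $\nicefrac{\sqrt3}{8}$ rather than something weaker is the delicate step. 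Part (ii) is essentially a one-variable chord computation once it is correctly reduced to the "$\ge$ halfway to $c_i$" statement, and the structural facts (invariance, connectivity preservation, collapsing — the latter because once $N_i(t)=N_j(t)$ and the common diameter is $\le \nicefrac12$ no limit circle is active, so both robots jump to the common SEC center) are routine.
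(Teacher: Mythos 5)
Your overall decomposition matches the paper's: it proves exactly your (i) as \Cref{lemma:gtcAlphaEstimation2D} (the SEC center of a convex polygon is $\nicefrac{\sqrt{3}}{8}$-centered) and your (ii) as \Cref{lemma:gtcBetaEstimation2D} (a robot always covers at least half its distance to the SEC center despite the limit circles), and your reduction in (ii) -- ``reaching parameter $\geq \nicefrac{1}{2}$ on the segment from $p_i(t)$ to $c_i$ puts the target in $\chull{i}\bigl(c_i,\nicefrac{1}{2}\bigr)$'' -- is correct and is essentially how the paper combines the two lemmas. The genuine gap is part (i): you never actually produce the argument, and what you sketch is not on track to yield the constant. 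Your formula contains a literal placeholder (``half-length at least $\rho_i \cdot \sin(30^\circ)/(\text{something})$''), and the angular fact you propose (some pair of contact points subtends a large angle at $c_i$, plus $\nicefrac{diam}{2} \leq \rho_i \leq \nicefrac{diam}{\sqrt{3}}$) does not by itself control what matters for being $\alpha$-centered, namely the \emph{shorter} of the two half-chords through $c_i$: the chord through $c_i$ parallel to the chord $PQ$ of a contact pair degenerates at the apex of the triangle $P c_i Q$, so no centered-segment length follows without bringing in a third boundary point and its position. The paper's proof is precisely this missing work: by \Cref{theorem:secChrystal} it reduces to two boundary points spanning a diameter (trivially $1$-centered) or three boundary points forming an acute triangle containing $c_i$, places the longest edge $a$ on an axis, splits into the cases $-\nicefrac{a}{4} \leq y_a \leq 0$ and $y_a < -\nicefrac{a}{4}$, models the edges as linear functions, and uses the intercept theorem to exhibit a segment of length at least $\nicefrac{\sqrt{3}}{8} \cdot \robotDiameter{i}$ with midpoint $c_i$ parallel to $a$ or to $b$. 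Your first claim in (i) (a chord extending $\geq \rho_i/2$ on \emph{each} side of $c_i$, i.e.\ $\nicefrac{1}{2}$-centeredness) is stated without proof and then abandoned, so the $\alpha$-part of the theorem is simply not established in your proposal -- which you yourself flag as ``the delicate step.''

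A secondary, fixable issue in (ii): you bound the distance from $c_i$ via Jung's theorem, $\rho_i \leq \nicefrac{\robotDiameter{i}}{\sqrt{3}} \leq \nicefrac{2}{\sqrt{3}} > 1$, which is not good enough. What the halfway-point argument needs is $\rho_i \leq V = 1$, which does not follow from Jung but from the fact that some contact point lies on the far side of $c_i$ from $r_i$ (so that contact point is at distance at least $\rho_i$ from $r_i$, yet within viewing range); this is exactly how the paper's proof of \Cref{lemma:gtcBetaEstimation2D} bounds the radius before checking that the midpoint of $p_i(t)$ and $c_i$ lies inside every limit disk. With that bound, your chord computation closes; without it, the ``numbers conspire'' step does not go through as written.
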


		\gtcShort/ can be generalized to $d$-dimensions by moving robots towards the center of the smallest enclosing hypersphere of their neighborhood.
		We denote the resulting protocol by  $d$-\gtcShort/, a complete description is deferred to \Cref{section:appendixDGtc}.

		\begin{restatable}{theorem}{TheoremDGtc}\label{theorem:dGtC}
			$d$-\gtcShort/ is $\bigl(\nicefrac{\sqrt{2}}{8}, \nicefrac{1}{2}\bigr)$-contracting.
		\end{restatable}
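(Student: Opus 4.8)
The plan is to verify the conditions of \Cref{def:alpha-beta-contracting} directly, with $\alpha=\nicefrac{\sqrt 2}{8}$ and $\beta=\nicefrac12$, taking as candidate $\alpha$-centered point $\alphaCenterP{i}$ the center $c$ of the smallest enclosing hypersphere (SEH) of $N_i(t)$; together with \Cref{lem:ab-contracting-is-lambda-contracting} this also re-proves that $d$-\gtcShort/ is $\lambda$-contracting with $\lambda=\alpha\beta$. First I would dispatch the easy parts of the definition: $d$-\gtcShort/ is invariant because the SEH is invariant under rotations and reflections, and it is connectivity preserving because each pair of mutually visible robots $r_i,r_j$ both stay inside the common limit hypersphere $B\bigl(\frac{p_i(t)+p_j(t)}{2},\frac V2\bigr)$, so $|\fp{i}-\fp{j}|\le V$ and every edge of $\vubg{t}$ survives. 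It then remains to prove the two substantive claims: (C1) $c$ is $\nicefrac{\sqrt2}{8}$-centered in $\chull{i}$, and (C2) $\fp{i}\in\chull{i}\bigl(c,\nicefrac12\bigr)$.

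For (C2), recall that $d$-\gtcShort/ moves $r_i$ along the segment from $p_i(t)$ toward $c$, stopping only when it would leave some limit hypersphere, so $\fp{i}=(1-s)\,p_i(t)+s\,c$ for some $s\in[0,1]$. I would show $s\ge\nicefrac12$, i.e.\ the robot reaches at least the midpoint $m:=\frac12\bigl(p_i(t)+c\bigr)$. The key point is $N_i(t)\subseteq\bar B(p_i(t),V)$, so this ball already encloses $N_i(t)$ and hence the SEH radius satisfies $R\le V$; consequently $|c-p_j(t)|\le R\le V$ for every neighbour $r_j$, which gives $\bigl|m-\frac12(p_i(t)+p_j(t))\bigr|=\frac12|c-p_j(t)|\le\frac V2$, so $m$ and (by convexity) the whole segment $[p_i(t),m]$ lie in every limit hypersphere. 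For $s\in[\nicefrac12,1]$ one checks that $2\fp{i}-c=(2-2s)\,p_i(t)+(2s-1)\,c$ is a convex combination of $p_i(t)\in\chull{i}$ and $c\in\chull{i}$ (the SEH center lies in the convex hull of its at most $d+1$ contact points, all in $N_i(t)$), hence $2\fp{i}-c\in\chull{i}$, which is exactly $\fp{i}\in\chull{i}(c,\nicefrac12)$.

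Claim (C1) is the technical core, and I would prove the general geometric statement: for any finite point set $P$ with diameter $D$, SEH center $c$, and SEH radius $R$, the point $c$ is $\nicefrac{\sqrt2}{8}$-centered in $\mathrm{conv}(P)$. The ingredients are (i) $\frac D2\le R\le D\sqrt{\frac d{2(d+1)}}$, where the lower bound is trivial and the upper bound is Jung's theorem (\Cref{theorem:jungsTheorem}); and (ii) the optimality characterization of the SEH: $c=\sum_\ell\mu_\ell b_\ell$ for contact points $b_\ell\in P\cap\partial B(c,R)$ with $\mu_\ell>0$, $\sum_\ell\mu_\ell=1$, and $\sum_\ell\mu_\ell(b_\ell-c)=0$. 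If only two contact points are essential they are antipodal, so $[b_1,b_2]\subseteq\mathrm{conv}(P)$ has length $2R\ge D$ and midpoint $c$, and we are done. Otherwise, squaring $\sum_\ell\mu_\ell(b_\ell-c)=0$ (the summands have equal norm $R$) forces a pair of contact vectors $v_\mu=b_\mu-c$, $v_\nu=b_\nu-c$ with $\langle v_\mu,v_\nu\rangle<0$; using these two vectors together with the identity $\sum_\ell\mu_\ell v_\ell=0$ I would exhibit a segment through $c$ with midpoint $c$ whose length is at least a fixed fraction of $R$, hence of $D$ by (i), with constants tuned so the fraction equals $\nicefrac{\sqrt2}{8}$. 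Running the same argument with the planar Jung bound $R\le D/\sqrt3$ recovers the sharper constant $\nicefrac{\sqrt3}{8}$ of \Cref{theorem:gtcAlphaBeta}.

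The hard part will be making the lower bound on the chord length through $c$ in (C1) independent of the dimension $d$: the naive estimate ``the heaviest contact point has weight $\ge\frac1{d+1}$'' degrades with $d$, so the argument must genuinely exploit the near-antipodality of the negatively correlated contact pair and Jung's bound on $R/D$, rather than a single contact vector. A second, minor point is the degenerate case $p_i(t)=c$ (robot already at the SEH center, so $\fp{i}=p_i(t)$), which is handled directly by (C1) applied to $\chull{i}$; and the \emph{collapsing} property needed for the \emph{gathering} version is orthogonal to the contracting property proved here.
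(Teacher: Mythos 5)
Your overall architecture is the same as the paper's: establish that the SEH center $c$ is $\nicefrac{\sqrt{2}}{8}$-centered and that the robot gets at least halfway to $c$, then combine via \Cref{def:alpha-beta-contracting} and \Cref{lem:ab-contracting-is-lambda-contracting}. Your (C2) is correct and in fact slightly more careful than the paper's \Cref{lemma:gtcBetaEstimation}: the coordinate-free observation that $R \le V$ implies the midpoint $m=\frac{1}{2}(p_i(t)+c)$ lies in every limit hypersphere matches the paper's computation, and your explicit verification that $2\,\fp{i}-c$ is a convex combination of $p_i(t)$ and $c\in\chull{i}$ (using that $c$ is a convex combination of contact points, \Cref{lemma:hypersphereConvexCombination}) cleanly justifies membership in $\chull{i}\bigl(c,\nicefrac{1}{2}\bigr)$, which the paper leaves implicit. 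The easy parts (invariance, connectivity via the common limit hypersphere) are also fine.

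The problem is (C1), which is the technical core of the theorem and which you do not actually prove. You reduce to the multi-contact-point case and then say you would extract a negatively correlated pair of contact vectors $v_\mu,v_\nu$ from $\sum_\ell \mu_\ell v_\ell=0$ and ``exhibit a segment through $c$ with midpoint $c$ whose length is at least a fixed fraction of $R$, with constants tuned so the fraction equals $\nicefrac{\sqrt{2}}{8}$'' -- and you yourself flag that you do not see how to make this dimension-independent. That flagged step is exactly the content of the theorem, and the mechanism you propose does not obviously deliver it: $\langle v_\mu,v_\nu\rangle<0$ gives a long chord $[b_\mu,b_\nu]$ (length $\ge R\sqrt{2}$), but $c$ need not lie on or near that chord, and what you need is a segment \emph{centered at $c$} and \emph{contained in} $\chull{i}$; nothing in the sketch produces that. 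The paper's \Cref{lemma:gtcAlphaEstimation} closes this gap differently: it applies Jung's theorem (\Cref{theorem:jungsTheorem}) to the contact-point simplex $S$ (whose SEH is the same sphere), so the two \emph{farthest} contact points $B,C$ satisfy $a=|BC| \ge \nicefrac{r}{\sqrt{d/(2(d+1))}} \ge \nicefrac{\mathrm{diam}}{\sqrt{2}}$, and then runs a two-dimensional case analysis inside the triangle formed by $B$, $C$ and a third contact point (the same analysis as in \Cref{lemma:gtcAlphaEstimation2D}) to obtain a chord through the center, parallel to one of the triangle's edges, of length at least $\nicefrac{a}{4} \ge \nicefrac{\sqrt{2}\,\mathrm{diam}}{8}$. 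Without an argument of this kind -- locating $c$ relative to a concrete low-dimensional subconfiguration of contact points and exhibiting the centered chord inside it -- your proposal establishes only the $\beta=\nicefrac{1}{2}$ half of the statement, so the proof is incomplete.
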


		\medskip\noindent\textbf{\textsf{\gtmd/ (\gtmdShort/)\;}}
		Next, we describe a second two-dimensional protocol that is also \alphaBetaContracting/.
		The intuition is quite simple: a robot $r_i$ moves towards the midpoint of the two robots defining $\robotDiameter{i}$.
		Similar to the \gtcShort/ algorithm, connectivity is maintained with the help of limit circles.
		A robot only moves that far towards the midpoint of the diameter such that no limit circle (a circle with radius $\nicefrac{1}{2}$ around the midpoint of $r_i$ and each visible robot $r_j$) is left.
		Observe further that the midpoint of the diameter is not necessarily unique.
		To make \gtmdShort/ in cases where the midpoint of the diameter is not unique deterministic, robots move according to \gtcShort/.
		The formal description can be found in \Cref{algorithm:gtcmd}.
		We prove the following property about \gtmdShort/.
		\begin{restatable}{theorem}{TheoremGtmd} \label{theorem:gtmdContracting}
			In rounds, where the local diameter of all robots is unique, \gtmdShort/ is $\left(1,\nicefrac{1}{10}\right)$-contracting ($(\nicefrac{\sqrt{3}}{8}, \nicefrac{1}{2})$-contracting otherwise).
		\end{restatable}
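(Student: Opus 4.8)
The plan is to argue separately for a robot $r_i$ whose local diameter $\robotDiameter{i}$ is uniquely realized in round $t$ and for one whose is not. First I would note that \gtmdShort/ is connectivity preserving (it uses the same limit circles as \gtcShort/) and invariant (``midpoint of the diameter'' and the limit circles do not depend on a robot's local coordinate system, and the tie-break falls back on the invariant protocol \gtcShort/), so only the scaled-hull condition of \Cref{def:alpha-beta-contracting} has to be checked. If $\robotDiameter{i}$ is not unique, then by \Cref{algorithm:gtcmd} robot $r_i$ moves exactly as under \gtcShort/, so the $\bigl(\nicefrac{\sqrt{3}}{8},\nicefrac{1}{2}\bigr)$-centered witness of \Cref{theorem:gtcAlphaBeta} carries over verbatim; this settles the ``otherwise'' case (and also covers robots with non-unique diameter in a round where only some diameters are unique).

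For the main case, suppose $\robotDiameter{i}$ is unique. Put $p_i:=p_i(t)$, $Q:=\chull{i}$, and let $r_a,r_b\in N_i(t)$ be the unique pair realizing $D:=\robotDiameter{i}$ (which equals the diameter of $Q$); set $m:=\frac{1}{2}\bigl(p_a(t)+p_b(t)\bigr)$. Since $[p_a(t),p_b(t)]\subseteq Q$ has length $D$ and midpoint $m$, the point $m$ is $1$-centered (\Cref{def:alpha-centered}), and I would take $\alphaCenterP{i}:=m$, so that it remains to show $\fp{i}\in Q\bigl(m,\nicefrac{1}{10}\bigr)$. By definition of \gtmdShort/, $\fp{i}$ is the point of the segment $[p_i,m]$ farthest from $p_i$ that still lies in every limit disk of radius $\nicefrac{1}{2}$ around a midpoint $(p_i+p_j(t))/2$, $r_j\in N_i(t)$; write $\fp{i}=p_i+s\,(m-p_i)$ with $s\in[0,1]$. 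Unwinding \Cref{def:beta-scaled-polygon}, $q\in Q(m,\beta)$ is equivalent to $m+\frac{1}{1-\beta}(q-m)\in Q$, so for $\beta=\nicefrac{1}{10}$ it is enough to show that
\[
 q^{\ast}\;:=\;m+\frac{10}{9}\bigl(\fp{i}-m\bigr)\;=\;p_i+\frac{10s-1}{9}\,(m-p_i)
\]
lies in $Q$. If $s\ge\nicefrac{1}{10}$, then $\frac{10s-1}{9}\in[0,1]$ and $q^{\ast}\in[p_i,m]\subseteq Q$. Hence the only real work is the case $s<\nicefrac{1}{10}$: then $s<1$, so the move is stopped by some limit disk $C_j$, i.e.\ $\fp{i}\in\partial C_j$, and $q^{\ast}$ lies on the line through $p_i$ and $m$, on the far side of $p_i$, at distance $\frac{1-10s}{9}|m-p_i|\le\frac{1}{9}|m-p_i|$ from $p_i$.

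The hard part will be to show that this small overshoot past $p_i$ still lies in $Q$, and this is where I expect the main obstacle. The idea is to extract information from the stopping condition: since $[p_i,\fp{i}]$ stays in $C_j$ but leaves it immediately afterwards, writing $\fp{i}\in\partial C_j$ out in coordinates should tie down the position of $p_j(t)$ — an early block ($s$ small) forces $|p_i-p_j(t)|$ to be within an $O(s)$-margin of the connectivity range and forces $p_j(t)$ to lie roughly on the $p_i$-side of $m$ (the half-space leaving $C_j$ at $p_i$ faces towards $p_j(t)$). Using $p_i,p_j(t),r_a(t),r_b(t)\in Q$ together with the fact that $m$ is the midpoint of $[r_a(t),r_b(t)]$, one should then be able to certify that $q^{\ast}$ — which is within distance $\frac{1}{9}|m-p_i|$ of $p_i$ in the direction $p_i-m$ — is a convex combination of $p_i,p_j(t),r_a(t),r_b(t)$ (with a short additional case distinction for degenerate configurations), hence $q^{\ast}\in Q$. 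Pushing these quantitative estimates through is the delicate step, and presumably the value $\beta=\nicefrac{1}{10}$ is exactly what makes them close; everything else (the reduction to $q^{\ast}\in Q$, the sub-case $s\ge\nicefrac{1}{10}$, and the non-unique case) is routine.
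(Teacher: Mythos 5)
Your scaffolding is fine and matches the paper's skeleton: taking the (unique) diameter midpoint $m$ as the $1$-centered point, reducing $\fp{i}\in\chull{i}\bigl(m,\nicefrac{1}{10}\bigr)$ to $q^{\ast}\in\chull{i}$, disposing of the case $s\geq\nicefrac{1}{10}$, and falling back to the $(\nicefrac{\sqrt{3}}{8},\nicefrac{1}{2})$ witness of \gtcShort/ when the diameter is not unique. But the entire mathematical content of the theorem is the case you defer: showing that a robot that is \emph{not} already inside $\chull{i}\bigl(\mathrm{md}_i(t),\nicefrac{1}{10}\bigr)$ is never stopped by a limit disk before covering a $\nicefrac{1}{10}$ fraction of its way to $m$. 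Your proposal leaves exactly this step as a plan (``one should then be able to certify\dots''), so as it stands it is not a proof.

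Moreover, the route you sketch for that step is doubtful. If $p_i(t)$ lies strictly outside the scaled hull and $s$ is small, then $q^{\ast}$ is close to $m+\frac{10}{9}\bigl(p_i(t)-m\bigr)$, which by assumption is \emph{not} in $\chull{i}$; so no convex-combination certificate of $q^{\ast}\in\chull{i}$ can exist in that regime, and what must actually be proved is a lower bound on $s$ (equivalently, that early blocking is impossible for such robots). Your hope that the blocking robot $r_j$ necessarily lies ``beyond'' $p_i(t)$ in the direction $p_i(t)-m$ fails when the segment $[p_i(t),m]$ leaves the limit disk nearly tangentially: then $p_j(t)-p_i(t)$ is essentially perpendicular to the motion and contributes nothing in the needed direction. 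The paper instead proves the lower bound on the permitted movement directly, with a case distinction on $\robotDiameter{i}$ (normalizing $V=1$): for $\robotDiameter{i}\leq\nicefrac{2}{\sqrt{3}}$ a \gtcShort/-style computation shows every limit disk allows moving \emph{half} the distance to $m$; for $\nicefrac{2}{\sqrt{3}}<\robotDiameter{i}\leq 2$ it uses the \emph{uniqueness} of the diameter pair (every other neighbor satisfies the strict bound $|p_{d_1}(t)-p_j(t)|<\robotDiameter{i}$) together with $|p_i(t)-p_j(t)|\leq 1$ to bound the neighbors' coordinates and verify that the point $\frac{9}{10}\,p_i(t)$ (i.e., fraction $\nicefrac{1}{10}$ of the way to $m$) still respects all limit disks. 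These quantitative estimates — and the role uniqueness plays in them beyond merely making $m$ well defined — are missing from your argument, so there is a genuine gap at the decisive step.
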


		\medskip\noindent\textbf{\textsf{\gtcdmb/ (\gtcdmbShort/)\;}}
		Lastly, we derive a third algorithm for robots in $\mathbb{R}^2$ that is also \alphaBetaContracting/.
		It is based on the local \emph{diameter minbox} defined as follows.
		The local coordinate system is adjusted such that the two robots that define the diameter are located on the $y$-axis, and the midpoint of the diameter coincides with the origin.
		Afterwards, the maximal and minimal $x$-coordinates $x_{\mathrm{max}}$ and $x_{\mathrm{min}}$ of other visible robots are determined.
		Finally, the robot moves towards $\left(\nicefrac{1}{2} \cdot \left(x_{\mathrm{min}} + x_{\mathrm{max}}\right),0\right)$.
		The box boundaries with $x$-coordinates $x_{\mathrm{min}}, x_{\mathrm{max}}$ and $y$-coordinates $-\nicefrac{\robotDiameter{i}}{2}$ and $\nicefrac{\robotDiameter{i}}{2}$ is called the \emph{diameter minbox} of $r_i$.
		Note that, similar to \gtmdShort/, the diameter minbox of $r_i$ might not be unique.
		In this case, a fallback to \gtcShort/ is used.
		The complete description of \gtcdmbShort/ is contained in \Cref{algorithm:gtcdmb}.
		Also \gtcdmbShort/ is \alphaBetaContracting/.

		\begin{restatable}{theorem}{TheoremGtcdmb}\label{theorem:gtcdmbAlphaBeta}
			In rounds, where the local diameter of all robots is unique, \gtcdmbShort/ is $\bigl(\nicefrac{\sqrt{3}}{8}, \nicefrac{1}{10}\bigr)$-contracting ($(\nicefrac{\sqrt{3}}{8}, \nicefrac{1}{2})$-contracting otherwise).
		\end{restatable}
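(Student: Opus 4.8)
The plan is to show that the target point computed by \gtcdmbShort/ lies inside a $\beta$-scaled local convex hull around an $\alpha$-centered point of \chull{i}, for $\alpha = \nicefrac{\sqrt{3}}{8}$ and $\beta = \nicefrac{1}{10}$, whenever the local diameter is unique; the non-unique case reduces directly to Theorem~\ref{theorem:gtcAlphaBeta} since \gtcdmbShort/ then falls back to \gtcShort/. So assume the diameter $d_i := \robotDiameter{i}$ is attained by a unique pair of robots, which (in the adjusted local coordinate system) sit at $(0,\nicefrac{d_i}{2})$ and $(0,-\nicefrac{d_i}{2})$. All neighbors lie in the diameter minbox $B$ with $x$-range $[x_{\min},x_{\max}]$ and $y$-range $[-\nicefrac{d_i}{2},\nicefrac{d_i}{2}]$, and the computed target is $q := \bigl(\nicefrac{1}{2}(x_{\min}+x_{\max}),\,0\bigr)$, the center of $B$.

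First I would exhibit an $\alpha$-centered point of \chull{i}. Since the two diameter-defining robots are in $N_i(t)$, the vertical segment connecting them has length $d_i$ and is contained in \chull{i} by convexity; its midpoint is the origin $o=(0,0)$, which is therefore a $1$-centered — hence also $\alpha$-centered — point of \chull{i} for any $\alpha\le 1$. Actually, to get the claimed constant $\alpha = \nicefrac{\sqrt{3}}{8}$ I expect one wants a point in the \emph{interior} with slack in all directions: following the argument used for \gtcShort/ (Theorem~\ref{theorem:gtcAlphaBeta}), one shows that the center of the smallest enclosing circle, or a suitable point near $o$, admits a segment of length $\nicefrac{\sqrt{3}}{8}\cdot d_i$ in every direction inside \chull{i}; I would reuse exactly the geometric estimate from the \gtcShort/ proof here, since the diameter pair forces the local hull to contain a "fat" region around the origin. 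Let $\alphaCenterP{i}$ denote this point.

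Next, the crux: I must verify $q \in \chull{i}\bigl(\alphaCenterP{i},\beta\bigr)$ with $\beta=\nicefrac{1}{10}$. The target $q=(\bar x,0)$ with $\bar x := \nicefrac12(x_{\min}+x_{\max})$ has $|\bar x|\le \nicefrac12(x_{\max}-x_{\min}) \le \nicefrac12\cdot d_i$ (the box width is at most the diameter). I would bound the displacement $|q-\alphaCenterP{i}|$ from above by a constant times $d_i$, and show the scaled hull $\chull{i}(\alphaCenterP{i},\beta)$ — which contains the ball of radius $\beta\cdot(\text{inradius-type quantity})$ around $\alphaCenterP{i}$, and in particular a segment of length $(1-\beta)$ times the $\alpha$-segment in each direction — reaches out far enough to cover $q$. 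Concretely: the point $q$ lies on the line through the two outermost-in-$x$ neighbors (on the boundary of $B$), so $q$ is a convex combination $q = \tfrac12 v_{\min} + \tfrac12 v_{\max}$ of two vertices of a polytope containing \chull{i}; combined with the origin being well inside \chull{i}, a short convexity computation shows $q$ is recovered by scaling \chull{i} toward $\alphaCenterP{i}$ by a factor $1-\beta$ once $\beta\le \nicefrac{1}{10}$. The constant $\nicefrac{1}{10}$ (rather than $\nicefrac12$ as in \gtmdShort/) should emerge because here $q$ can be pushed all the way to the $x$-extreme of the box rather than staying near the $y$-axis, so less scaling slack is available.

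The main obstacle I anticipate is precisely pinning down the $\beta=\nicefrac{1}{10}$ estimate: I need a clean lower bound on how far the $\beta$-scaled hull extends in the horizontal direction, in terms of $\alphaCenterP{i}$ and $d_i$, that is robust to the local hull being a thin, skewed polytope. I would handle this by not arguing about the hull's inradius directly but instead writing $q$ explicitly as a convex combination of $\alphaCenterP{i}$ and a point of \chull{i}: find $w\in\chull{i}$ and $t\le \beta$ with $q = \alphaCenterP{i} + (1-t)(w-\alphaCenterP{i})$, using that $q$ and $\alphaCenterP{i}$ are both near the origin while the horizontal extent of \chull{i} is at least $x_{\max}-x_{\min}\ge 2|\bar x|$. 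That reduces the whole thing to a one-dimensional inequality in $\bar x$, $x_{\min}$, $x_{\max}$ and the (bounded) coordinates of $\alphaCenterP{i}$, which is routine. As with \gtmdShort/, I would also need to double-check that \gtcdmbShort/ is connectivity preserving and invariant (immediate from the limit-circle construction and the coordinate-free definition of the diameter minbox) and collapsing (when $N_i(t)=N_j(t)$ and the common local diameter is below $\nicefrac12$ the diameter minbox and hence the target coincide, with the \gtcShort/ fallback handling ties), so that it qualifies as an \alphaBetaContracting/ gathering protocol in the sense of Definition~\ref{def:alpha-beta-contracting}.
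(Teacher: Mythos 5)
Your proposal rests on a misreading of the protocol, and it breaks down exactly where the constant $\beta=\nicefrac{1}{10}$ has to come from. In \gtcdmbShort/ the computed target is \emph{not} the minbox center $p_{\mathrm{box}}(t)=\bigl(\nicefrac{1}{2}(x_{\min}+x_{\max}),0\bigr)$: the robot moves along the segment from $p_i(t)$ towards $p_{\mathrm{box}}(t)$ only as far as the limit disks of radius $\nicefrac{1}{2}$ around the midpoints $m_j$ allow (\Cref{algorithm:gtcdmb}). In the paper's two-stage definition, the minbox center itself plays the role of the $\alpha$-centered point, and $\beta$ quantifies the fraction of the way towards it that the limit circles are guaranteed to permit -- this is precisely why the \alphaBetaContracting/ formalism was introduced (\enquote{the robots do not necessarily reach the desired target point}). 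The paper's proof accordingly has two parts: (i) $p_{\mathrm{box}}(t)$ is $\nicefrac{\sqrt{3}}{8}$-centered, shown by repeating the argument of \Cref{lemma:gtcAlphaEstimation2D} with the longest triangle edge replaced by the diameter segment $\overline{p_{d_1}(t)\,p_{d_2}(t)}$; and (ii) every robot can cover at least a $\nicefrac{1}{10}$ fraction of its distance towards $p_{\mathrm{box}}(t)$ without leaving any limit disk, by the same computation as in the proof of \Cref{theorem:gtmdContracting} (the point $\bigl(\nicefrac{9}{10}\,x_i(t),\nicefrac{9}{10}\,y_i(t)\bigr)$ stays within distance $\nicefrac{1}{2}$ of all relevant neighbors). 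Your proposal never touches (ii): by assuming the robot reaches $q=p_{\mathrm{box}}(t)$ you discard the limit circles entirely, and the question you then study -- whether $q$ lies in a $\beta$-scaled hull around some \emph{other} $\alpha$-centered point such as the SEC center or a point near the origin -- is not the condition of \Cref{def:alpha-beta-contracting} that \gtcdmbShort/ needs, nor would establishing it say anything about the actual target point $\fp{i}$, which may stop well short of $q$.

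Two further slips: the contrast you draw (\enquote{$\nicefrac{1}{10}$ rather than $\nicefrac{1}{2}$ as in \gtmdShort/}) is wrong, since \gtmdShort/ is also $(1,\nicefrac{1}{10})$-contracting in the unique-diameter case (it is \gtcShort/ that has $\beta=\nicefrac{1}{2}$); and the one place where a concrete bound would have to be produced -- your claimed \enquote{short convexity computation} yielding $\nicefrac{1}{10}$ -- is left unproven and is exactly the obstacle you flag yourself. The fallback to \gtcShort/ for non-unique diameters and the remarks on invariance, connectivity and collapsing are fine, but the substance of the theorem (the $\nicefrac{\sqrt{3}}{8}$-centeredness of the minbox center and the $\nicefrac{1}{10}$ progress guarantee under the limit-circle restriction) is missing from your argument.
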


\section{Collisionless Near-Gathering Protocols} \label{section:collisionlessProtocols}

In this section, we study the \nearGathering/ problem for robots located in $\mathbb{R}^{d}$ under the  \ssync\ scheduler.
The main difference to \gathering/ is that robots never may collide (move to the same position).
We introduce a very general approach to \nearGathering/ that builds upon \lambdaGathering/s (\Cref{section:alphaBetaContractingStrategies}).
%During the execution of a \lambdaContracting/ protocol, however, two robots may collide.
%For \lambdaContracting/ gathering protocols, a collision is even required in the final configuration.
We show how to transform \emph{any} \lambdaGathering/ into a \emph{collisionless} \lambdaContracting/ protocol that solves \nearGathering/ in $\mathcal{O}\left(\Delta^2\right)$ epochs under the \ssync{} scheduler.
The only difference in the robot model (compared to \gathering/ in \Cref{section:alphaBetaContractingStrategies}) is that we need a slightly stronger assumption on the connectivity: the connectivity range must be by an additive constant smaller than the viewing range.
More formally, the connectivity range is $V$ while robots have a viewing range of $V + \tau$ for a constant $0 < \tau \leq \nicefrac{2}{3}V$.
Note that the upper bound on $\tau$ is only required because $\nicefrac{\tau}{2}$ also represents the maximum movement distance of a robot (see below).
A larger movement distance usually does not maintain the connectivity.
In general, the viewing range could also be chosen larger than $V+\tau$ without any drawbacks while keeping the maximum movement distance at $\nicefrac{\tau}{2}$.

%\begin{definition}[\clLambdaContracting/]
%	A discrete robot formation protocol $\calP$ is called \emph{\clLambdaContracting/} if it is \lambdaContracting/ (see \cref{def:alpha-beta-contracting}) and collisionless (see \cref{def:collision})
%\end{definition}

The main idea of our approach can be summarized as follows: first, robots compute a \emph{potential} target point based on a \lambdaGathering/ $\calP$ that considers only robots at a distance at most $V$.
Afterward, a robot $r_i$ uses the viewing range of $V+\tau$ to determine whether the own potential target point collides with any potential target point of a nearby neighbor.
If there might be a collision, $r_i$ does not move to its potential target point.
Instead, it only moves to a point between itself and the potential target point where no other robot moves to.
At the same time, it is also ensured that $r_i$ moves sufficiently far towards the potential target point to maintain the time bound of $\mathcal{O}\left(\Delta^2\right)$ epochs.
To realize the ideas with a viewing range of $V+\tau$, we restrict the maximum movement distance of any robot to $\nicefrac{\tau}{2}$.
More precisely, if the potential target point of any robot given by $\calP$ is at a distance of more than $\frac{\tau}{2}$, the robot moves at most $\frac{\tau}{2}$ towards it.
With this restriction, each robot could only collide with other robots at a distance of at most $\tau$.
The viewing range of $V+\tau$ allows computing the potential target point based on $\calP$ of all neighbors at a distance at most $\tau$.
By knowing all these potential target points, the own target point of the collision-free protocol can be chosen.
While this only summarizes the key ideas, we give a more technical intuition and a summary of the proof in \Cref{section:collisionlessIntuitionShort}.

\begin{theorem}
	\label{thm:collisionless-class}
	%	For every \alphaBetaGathering/ $\calP$ there exist a collisionless \alphaBetaContracting/ protocol $\pCL$ which solves \nearGathering/ in $\calO(\Delta^{2})$ rounds when the initial strong distance graph is connected.

	For every \lambdaGathering/ $\calP$, there exists a collisionless \lambdaContracting/ protocol $\pCL$ which solves \nearGathering/ in $\calO(\Delta^{2})$ epochs under the \ssync\ scheduler.
	Let $V$ be the viewing and connectivity range of $\calP$.
	$\pCL$ has a connectivity range $V$ and viewing range $V+\tau$ for a constant $0 < \tau \leq \nicefrac{2}{3}V$.

	%
	%	Let $2/3 \cdot V \geq \tau > 0$ and $0.5 \geq \varepsilon > 0$.
	%	For every \alphaBetaGathering/ $\calP$ there exist $\pCL$ with the following properties.
	%	\begin{itemize}
		%		\item $\pCL$ is a \clAlphaPrimeBetaPrimeContracting/ protocol with $\alpha' = \alphaPrimeValue$ and $\beta' = \betaPrimeValue$.
		%		\item Let $V$ be the viewing range of $\calP$. $\pCL$ has a viewing range of $V + \tau$.
		%		\item $\pCL$ results in a near-gathering with diameter $\leq \tau$ of all robots in $\runningTimeLemmaCLPrime \in \mathcal{O}(\Delta^2)$ rounds, if the unit disc graph with radius $V$ is connected in the initial configuration.\todo{introduce connectivity range in model section?}
		%	\end{itemize}

\end{theorem}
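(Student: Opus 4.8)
The plan is to construct the collisionless protocol $\pCL$ explicitly from a given \lambdaGathering/ $\calP$ and then verify three things: (i) $\pCL$ is collisionless, (ii) $\pCL$ is \lambdaPrimeContracting/ for some constant $\lambda' = \lambda'(\lambda,\tau,V)$ (so that \Cref{theorem:upperBoundAlphaBetaContractingHighDim} yields the $\calO(\Delta^2)$ bound), and (iii) $\pCL$ terminates simultaneously once the diameter drops below $\tau$. The construction follows the sketch given before the statement. Each active robot $r_i$ first computes the potential target $\fpTau{i}$ prescribed by $\calP$ restricted to the neighborhood within connectivity range $V$; call $\ellpTau{i}$ the (clamped) displacement vector, of length at most $\nicefrac{\tau}{2}$ — if the $\calP$-target is farther, $r_i$ only heads $\nicefrac{\tau}{2}$ towards it. Because displacements are at most $\nicefrac{\tau}{2}$, the only robots whose moves can bring them onto $r_i$'s path lie within distance $\tau$, and the viewing range $V+\tau$ lets $r_i$ see all of them and recompute \emph{their} potential targets. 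Robot $r_i$ then picks its actual target as $p_i(t) + d_i \cdot \ellpTau{i}$ for a carefully chosen scalar $d_i \in (0,1]$ that avoids every collision point on the segment, chosen uniformly so that no two robots collide; the quantity $\targetPointDist{i}$ appearing in the macros is exactly the slack budget here, parametrised by the constant $\varepsilon$.

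I would carry out the steps in this order. \textbf{Step 1 (well-definedness and invariance):} argue that $r_i$ can indeed compute the potential targets of all robots within distance $\tau$ from a snapshot of radius $V+\tau$, and that the whole rule is invariant under the local coordinate system, so $\pCL$ is a legal protocol in the model. \textbf{Step 2 (connectivity):} show that restricting the move to $\le \nicefrac{\tau}{2}$ towards a $\calP$-target keeps \vubg{t} connected w.r.t.\ range $V$ — here one uses that $\calP$ itself is connectivity preserving for range $V$ and that shortening the move only helps, plus the standard limit-circle-type argument that a move of at most $\nicefrac{\tau}{2}$ never separates two robots at distance $\le V$. \textbf{Step 3 (collision-freeness):} this is the combinatorial heart; on the segment from $p_i(t)$ to $p_i(t)+\ellpTau{i}$ there are finitely many "forbidden" points (images of nearby robots under their own moves), and one must show a common choice of the scaling factors $d_i$ — driven by the $\varepsilon$-budget and the $\nicefrac{2}{\tau}$ normalisation in $\targetPointDist{i}$ — simultaneously dodges all of them for every robot, even under the \ssync{} adversary that activates an arbitrary subset. \textbf{Step 4 (contraction is preserved):} show that moving a $d_i \in [\,\text{const},1\,]$ fraction towards a \lambdaCentered/ point of \chull{i} still lands in a \lambdaPrimeContracting/-admissible region of the convex hull; the relevant constants are $\lambdaPrimeValuePcl$, i.e.\ $\lambda$ is degraded by the factor $\nicefrac{\tau}{(4(V+\tau))}$ (from clamping long moves to $\nicefrac{\tau}{2}$) times $(1-\varepsilon)$ (from the collision-avoidance slack). \textbf{Step 5 (runtime and termination):} invoke \Cref{theorem:upperBoundAlphaBetaContractingHighDim} with $\lambda'$ to get $\calO(\Delta^2)$ epochs; for termination, observe that once some robot's local convex hull has diameter $\le \tau$, connectivity w.r.t.\ $V$ forces \globalDiameter{} $\le \tau \le c_{\mathrm{ng}}\cdot V$, and since that robot sees the \emph{entire} swarm (everything is within $V+\tau$ hence within its viewing range), every robot can detect this within one epoch and stop — giving simultaneous termination and \nearGathering/.

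\textbf{Main obstacle.} I expect Step 3 to be the crux. The subtlety is that collisions must be avoided not just pairwise but globally and not just in \fsync{} but under the \ssync{} scheduler, where the set of active robots — and hence the set of actually-realised target points — is adversarial; a robot must dodge collision points arising from robots that might or might not move this round. The fix is to have each robot reserve its own fraction of the $[0,1]$ parameter interval via a rule that depends only on locally computable data (positions and potential targets of robots within $\tau$), chosen so that the reserved sub-intervals of robots that could possibly collide are disjoint regardless of who is active; the $\varepsilon$ and the $\big|\ellpTau{i}\big|$-scaling in $\targetPointDist{i}$ are precisely what makes these reservations fit. Verifying that this local rule is consistent across all robots that can see each other, and that it never forces $d_i$ below the constant needed in Step 4, is the delicate part; everything else is bookkeeping with the constants already named in the preamble macros ($\alphaPrimeValue$, $\betaPrimeValue$, $\lambdaPrimeValuePcl$).
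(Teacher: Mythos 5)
There is a genuine gap, and it sits exactly where your Step~4 claims the constant $\lambda' = \lambdaPrimeValuePcl$: you derive the factor $\frac{\tau}{4(V+\tau)}$ ``from clamping long moves to $\nicefrac{\tau}{2}$'', but clamping is not where that factor comes from, and your construction --- computing the potential target by running $\calP$ on the range-$V$ neighborhood only, then clamping and scaling --- is \emph{not} \lambdaPrimeContracting/ with respect to the viewing range $V+\tau$ for any constant $\lambda'$. The paper's own counterexample (\Cref{fig:intuitionNotAlphaShort}): a robot whose only range-$V$ neighbor is at distance $\nicefrac{2}{n}$ while another robot sits at distance $V+\tau$ on the opposite side computes (under $\calP$) a target that is the midpoint of a segment of length only $\nicefrac{2}{n}$ inside \chull{i}, whereas $\robotDiameter{i} \approx V+\tau$; no constant $\lambda'$ works. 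This matters because the $\calO(\Delta^2)$ bound independent of $n$ hinges on the interplay of \Cref{lem:constant-diameter-by-larger-vr} (connectivity range $V$ strictly smaller than viewing range forces $\robotDiameter{i} > \tau$ as long as $\globalDiameter > \tau$) with the requirement that the target be $\lambda'$-centered in the \emph{large} hull, so that every boundary robot moves $\Omega(\nicefrac{\tau}{\Delta})$ inward per activation (\Cref{lemma:largeDiameter-cl}, \Cref{lem:gathering-with-increased-vr}). With your rule, a robot at the boundary of the global SEH whose range-$V$ neighborhood has tiny diameter moves only $\calO(\nicefrac{1}{n})$, and the runtime degenerates. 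The paper repairs this with the intermediate protocol $\pTau$: robots whose range-$V$ neighborhood has diameter at most $\nicefrac{\tau}{2}$ instead run $\pVTau$, i.e.\ $\calP$ scaled to viewing range $V+\nicefrac{\tau}{2}$; connectivity of \vubg{t} then guarantees the neighborhood actually used always has diameter at least $\nicefrac{\tau}{2}$, which is what yields $\lambda \cdot \frac{\tau}{4(V+\tau)}$ (\Cref{lem:pTau-alpha-beta-gathering}) before the $(1-\varepsilon)$ loss from collision avoidance. Your proposal contains no analogue of this case distinction, so Step~4 fails and Step~5 has nothing to stand on.

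Two smaller points. First, you invoke \Cref{theorem:upperBoundAlphaBetaContractingHighDim} for the \ssync{} runtime, but that theorem is an \fsync{} statement whose proof uses a two-round collapsing argument with collisions; the paper proves the separate epoch-based bound \Cref{lem:gathering-with-increased-vr}, which needs only that every robot has $\robotDiameter{i} > \tau$ while $\globalDiameter > \tau$ and leaves the boundary cap once per epoch. Second, your Step~2 justification ``a move of at most $\nicefrac{\tau}{2}$ never separates two robots at distance $\le V$'' is false as stated (two robots at distance $V$ moving $\nicefrac{\tau}{2}$ in opposite directions end at distance $V+\tau$); connectivity must instead be inherited from $\calP$ being connectivity preserving together with the fact that all realized positions lie on the collision vectors, whose endpoint pairs are within $V$ (\Cref{lem:connectivityrange}, \Cref{lem:connectivityrange-cl}). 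Your Step~3, which you flag as the crux, is indeed completed in the paper by the explicit nearest-collision-point rule and the case analysis of \Cref{lem:intersection-not-co-linear}, \Cref{lem:fpTau-neq}, \Cref{lem:co-linear-collision-vector} and \Cref{lem:collisionless-inactive}, and your interval-reservation sketch is compatible with it; but as written it is only a plan, not an argument.
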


\subsection{Collisionless Protocol} \label{section:protocolPCl}

The construction of the collisionless protocol $\pCLtauEpsilon$ depends on several parameters that we briefly define.
$\calP$ is a \lambdaGathering/ (designed for robots with a viewing range of $V$).
The constant $\tau$ has two purposes.
The robots have a viewing range of $V+\tau$ ($ 0 < \tau \leq \nicefrac{2}{3}$V) and $\nicefrac{\tau}{2}$ is the maximum movement distance of any robot.
Lastly, the constant $\varepsilon \in (0,\nicefrac{1}{2})$  determines how close each robot moves towards its target point based on $\calP$.
To simplify the notation, we usually write $\pCL$ instead of  $\pCLtauEpsilon$.
Subsequently, we formally define $\pCLtauEpsilon$.
The description is split into three parts that can be found in \Cref{algorithm:collisionlessGTC,algorithm:CollisonPointsOnLine,algorithm:targetPointCollisionlessGTC}.
The main routine is contained in \Cref{algorithm:collisionlessGTC}.
The other two \Cref{algorithm:CollisonPointsOnLine,algorithm:targetPointCollisionlessGTC} are used as subroutines.

The computation of $\fpCL{i}$ is based on the movement $r_i$ would do in slightly modified version of $\calP$, denoted as $\pTau$.
The protocol \pTau{} is defined in \cref{algorithm:targetPointCollisionlessGTC} and a detailed intuition can be found in \Cref{section:collisionlessIntuitionShort}.
The position of $\fpCL{i}$ lies on the \textit{collision vector} $\ellpTau{i}$, the vector from $p_i(t)$ to $\fpTau{i}$ (the potential target point).
On $\ellpTau{i}$, there may be several \emph{collision points}.
These are either current positions or potential target points ($\fpTau{k}$) of other robots $r_k$ or
single intersection points between $\ellpTau{i}$ and another collision vector $\ellpTau{k}$.
The computation of collision points is defined in \cref{algorithm:CollisonPointsOnLine}.
Moreover, $d_i > 0$ is the minimal distance  between a collision point and $\fpTau{i}$.
The final target point $\fpCL{i}$ is exactly at distance $\targetPointDist{i}$ from $\fpTau{i}$.
\cref{fig:collisionless-algorithm-and-collision-points} gives an example of collision points and target points of  $\pCL$.

\begin{figure}[htbp]
	\includegraphics[width=\linewidth]{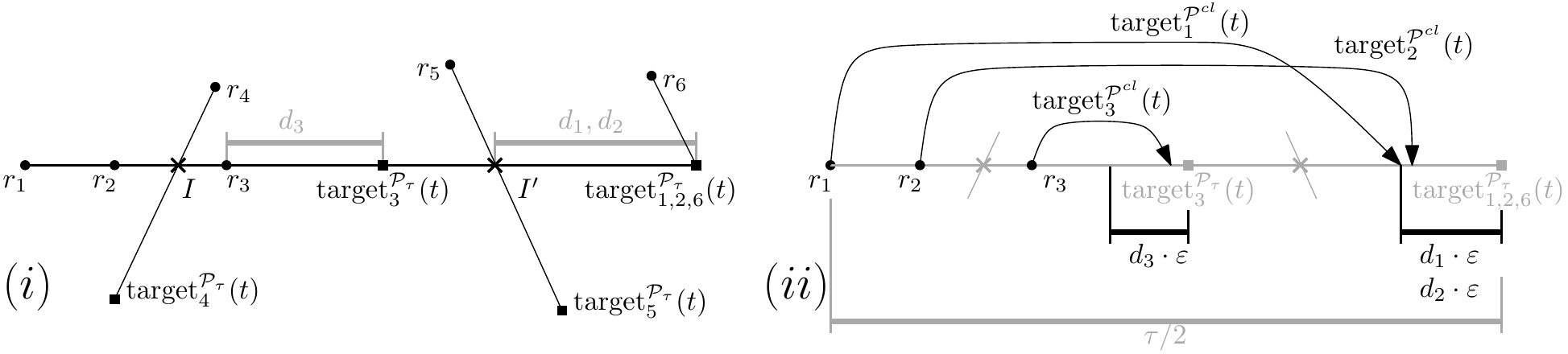}
	\caption{Example of $\fpCL{i}$ with $V = 1, \tau = 2/3$ and $\varepsilon = 0.49$. $(i)$ shows the collision points and computation of $d_1, d_2$ and $d_3$ (line \ref{line:d-i:clGTC} in \cref{algorithm:collisionlessGTC}). $(ii)$ shows the positions where $r_1, r_2$ and $r_3$ will move to in protocol $\pCL$ as returned by \cref{algorithm:collisionlessGTC}.}
	\label{fig:collisionless-algorithm-and-collision-points}
\end{figure}

%with five robots $r_1, \dots, r_5$.
%Subfigure $(i)$ shows distances $d_1, d_2$ and $d_3$ as well as all collision points on $\ellpTau{1}, \ellpTau{2}$ and $\ellpTau{3}$.
%The collision points on $\ellpTau{1}$ from left to right are $C_1 = \left\{p_1(t), p_2(t), I, p_3(t), \fpTau{3}, I', \fpTau{1} = \fpTau{2}\right\}$,\todo{umbruch} $I$ and $I'$ are the intersections between $\ellpTau{1}$ and $\ellpTau{4}$, respectively $\ellpTau{1}$ and $\ellpTau{5}$.
%$d_1$ is the distance between $\fpTau{1}$ and $I'$.
%$C_2 = \left\{p_2(t), I, p_3(t), \fpTau{3}, I', \fpTau{1} = \fpTau{2}\right\}$ and $d_2 = d_1$.
%$C_3 = \left\{p_3(t), \fpTau{3}\right\}$ and $d_3$ the distance between $p_3(t)$ and $\fpTau{3}$.
%Subfigure $(ii)$ shows the actual positions of $\fpCL{1}, \fpCL{2}$ and $\fpCL{3}$.
%They are each within a distance $d_i \cdot \epsilon$ of $\fpTau{i}$.

\begin{algorithm}[htb]
	\caption{$\fpCLtauEpsilon{i} $}
	\label{algorithm:collisionlessGTC}
	\begin{algorithmic}[1]
		%\State $P_i$ $\leftarrow$ $\fpTau{i}$ \Comment{see \cref{algorithm:targetPointCollisionlessGTC}}
		%\State $v_i$ $\leftarrow$ collision vector from $p_i(t)$ to $P_i$ \Comment{$p_i(t)$: position of $r_i$}
		\State $R_i$ $\leftarrow$ $\{r_k : |p_{k}(t) - p_i(t)| \leq \tau\}$ {\color{gray}\Comment{Robots in radius $\tau$ around $r_i$ (including $r_i$)}} \label{line:R-i:clGTC}
		\State $C_i$ $\leftarrow$ $\collisionPoints{\pTau}{i}$ {\color{gray}\Comment{Collision points on $\ellpTau{i}$, see \cref{algorithm:CollisonPointsOnLine}}} \label{line:C-i:clGTC}
		\State $d_i$ $\leftarrow$ $\min\left(\left\{\big|c - \fpTau{i}\big| : c \in C_i \setminus \{\fpTau{i}\} \right\}\right)$ {\color{gray}\Comment{min. dist. to collision point}} \label{line:d-i:clGTC}
		\State \Return point on $\ellpTau{i}$ with distance $\targetPointDist{i}$ to $\fpTau{i}$ \label{line:target-point-distance:clGTC}
	\end{algorithmic}
\end{algorithm}

\begin{algorithm}[htb]
	\caption{$\collisionPoints{\calP}{i}$}
	\label{algorithm:CollisonPointsOnLine}
	\begin{algorithmic}[1]
		\State $C_i$ $\leftarrow$ empty set
		\ForAll{$r_k \in R_i$}
		\State compute $\fp{k}$ and $\ellp{k}$ in local coordinate system of $r_i$ \label{line:all-v-in-ell:CP} %\Comment{collision vectors of $r_k$, see \cref{def:collision-vector}}
		\If{$p_k(t)  \in \ellp{k}$}
		\State add $p_{k}(t)$ to $C_i${\color{gray} \Comment{position of $r_k$}} \label{line:v-k-length-0:CP}
		\EndIf
		\If{$\fp{k} \in \ellp{i}$}
		\State add $\fp{k}$ to $C_i$  \label{line:add-fpk:CP}
		\EndIf
		\If{$\ellp{k}$ intersects $\ellp{i}$ and is not collinear to $\ellp{i}$}
		\State add intersection point between $\ellp{k}$ and $\ellp{i}$ to $C_i$
		\EndIf
		\EndFor
		\State\Return $C_i$
	\end{algorithmic}
\end{algorithm}

\begin{algorithm}[htb]
	\caption{$\fpTau{i}$}
	\label{algorithm:targetPointCollisionlessGTC}
	\begin{algorithmic}[1]
		%	\Statex \hspace*{-12pt}\textit{increase viewing range if diameter smaller $\nicefrac{\tau}{2}$}
		\State $V$ $\leftarrow$ the viewing range of protocol $\calP$
		\If{robots in range $V$ have pairwise distance $\leq \nicefrac{\tau}{2}$} \label{line:cond-1:tPclGTC}
		\State $\pVTau$ $\leftarrow$ protocol $\calP$ scaled to viewing range $V + \nicefrac{\tau}{2}$
		\State $P_i$ $\leftarrow$ $\fpVTau{i}$
		\Else
		\State $P_i$ $\leftarrow$ $\fp{i}$
		\EndIf
		%		\Statex \hspace*{-12pt}\textit{decrease movement distance to $\tau/2$ if exceeded}
		\If{distance $p_i(t)$ to $P_i > \tau/2$} \label{line:cond-2:tPclGTC}
		\State\Return point with distance $\tau/2$ to $p_i(t)$ between $p_i(t)$ and $P_i$
		\Else
		\State\Return $P_i$
		\EndIf
	\end{algorithmic}
\end{algorithm}

\subsection{Proof Summary and Intuition} \label{section:collisionlessIntuitionShort}

In the following, we give a brief overview of the proof of \cref{thm:collisionless-class}.
A more detailed proof outline where all lemmas are stated can be found in \cref{section:collisionlessIntuition} and the proofs in \Cref{section:collisionFreeAnalysis}.
For the correctness and the runtime analysis of the protocol $\pCL$, we would like to use the insights into \lambdaContracting/ protocols derived in \Cref{section:alphaBetaContractingStrategies}.
However, since the robots compute their potential target point based on a \lambdaGathering/ $\calP$ with viewing range $V$, this point is not necessarily \lambdaCentered/ concerning the viewing range of $V+\tau$.
We discuss this problem in more detail and motivate the \emph{intermediate} protocol $\pTau$ that is \lambdaContracting/ with respect to the viewing range of $V+\tau$.
Note that in $\pTau$, robots can still collide.
Afterward, we argue how to transform the intermediate protocol $\pTau$ into a collision-free protocol $\pCL$ that is still \lambdaContracting/.
Lastly, we derive a time bound for $\pCL$.
%It is \lambdaContracting/ and keeps \vubg{t} connected while having a viewing range of $V+\tau$.
%In fact, any \lambdaContracting/ protocol with this properties will contract a swarm to a diameter $\leq \tau$ in $\calO(\Delta^{2})$ epochs.

%\subsubsection{} \label{section:intuitionPTauShort}
\medskip\noindent\textbf{\textsf{The protocol $\mathbf{\pTau}$.\;}}
Recall that the main goal is to compute potential target points based on a \lambdaGathering/ $\calP$ with viewing range $V$.
%Let us ignore the collision avoidance in this section and only concentrate on the \lambdaContracting/ properties of such a protocol if applied to a scenario with a viewing range of $V+\tau$.
Unfortunately, a direct translation of the protocol loses the \lambdaContracting/ property in general.
Consider the following example which is also depicted in \Cref{fig:intuitionNotAlphaShort}.
Assume there are the robots $r_1, r_2, r_3$ and $r_4$ in one line with respective distances of $\nicefrac{1}{n}, V + \nicefrac{1}{n}$ and $V + \tau$ to $r_1$.
It can easily be seen, that the target point $\fp{1}$ (protocol $\calP$ has only a viewing range of $V$) is between $r_1$ and $r_2$.
Such a target point can never be \lambdaCentered/ with $\lambda > \nicefrac{2}{n}$ for $\pCL$ (with viewing range $V + \tau$).

Next, we argue how to transform the protocol $\calP$ with viewing range $V$ into a protocol $\pTau$ with viewing range $V+\tau$ such that $\pTau$ is \lambdaGathering/.
The example above already emphasizes the main problem: robots can have very small local diameters $\robotDiameter{i}$.
Instead of moving according to $\calP$, those robots compute a target point based on $\pVTau$, which is a  \lambdaGathering/ concerning the viewing range of $V+\nicefrac{\tau}{2}$.
Protocol $\pVTau$ is obtained by scaling $\calP$ to the larger viewing range of $V + \tau$.
More precisely, robots $r_i$ with $\robotDiameter{i} \leq \nicefrac{\tau}{2}$ compute their target points based on $\pVTau$ and all others according to $\calP$.
In addition, $\calP_{\tau}$ ensures that no robot moves more than a distance of $\nicefrac{\tau}{2}$ towards the target points computed in $\calP$ and $\pVTau$.
%This has two reasons.
The first reason is to maintain the connectivity of \vubg{t}.
While the protocol $\calP$ maintains connectivity by definition, the protocol $\pVTau$ could violate the connectivity of \vubg{t}.
Restricting the movement distance to $\nicefrac{\tau}{2}$ and upper bounding $\tau$ by $\nicefrac{2}{3} V$ resolves this issue since for all robots $r_i$ that move according to $\pVTau$, $\robotDiameter{i} \leq \nicefrac{\tau}{2}$.
%Hence, $\pTau$ has a connectivity range of $V$.
The second reason is that moving at most $\nicefrac{\tau}{2}$ ensures that collisions are only possible within a range of $\tau$.

While $\pTau$ has a viewing range of $V + \tau$, it never uses its full viewing range for computing a target point.
This is necessary for the collision avoidance such that $r_i$ can compute $\fpTau{k}$ for all robots $r_k$ in distance $\leq \tau$.
It is easy to see, that the configuration in \cref{fig:intuitionNotAlphaShort} does not violate the \lambdaContracting/ property of $\pTau$.
If $\nicefrac{1}{n} > \nicefrac{\tau}{2}$ it is trivial that $\fp{1}$ is \lambdaCentered/ in $\pTau$ with $\lambda \in \calO(\tau)$.
Else, $\fpTau{1} = \fpVTau{1}$.
$\pVTau$ also considers $r_3$ in distance $V + \nicefrac{1}{n} \leq V + \tau/2$ (note, there always exists a robot between $r_2$ and $r_4$ to ensure \vubg{t} is connected).
Hence, $\fpVTau{1}$ is $\lambda \cdot \frac{V + \nicefrac{\tau}{2}}{V + \tau}$-centered in $\pTau$.
This argument is generalized in \cref{lem:pTau-alpha-beta-gathering} to show that $\pTau$ is a \lambdaGathering/.

\medskip\noindent\textbf{\textsf{Collision Avoidance.\;}}
Next, we argue how to transform the protocol $\pTau$ into the collision-free protocol $\pCL$.
The viewing range of $V+\tau$ in $\pCL$ allows a robot $r_i$ to compute $\fpTau{k}$ (the target point in protocol $\pTau$) for all robots $r_k$ within distance at most $\tau$.
Since the maximum movement distance of a robot in $\pTau$ is $\nicefrac{\tau}{2}$, this enables $r_i$ to know movement directions of all robots $r_k$ which can collide with $r_i$.
It is easy to see, that collisions can be avoided with all robots where the collision vectors do only intersect at single points (e.g. $r_1$ and $r_4$ in \cref{fig:collisionless-algorithm-and-collision-points} can avoid a collision easily) or two robots have different target points in $\pTau$ (e.g. $r_1$ and $r_3$).
When two collision vectors overlap and the target points in $\pTau$ are the same (e.g. $r_1$ and $r_2$) our protocol $\pCL$ ensures unique positions as well.
It includes the distance to the potential target point, which naturally must be different in these cases, in the computation of the target point in $\pCL$ (line \ref{line:target-point-distance:clGTC} in \cref{algorithm:collisionlessGTC}).
Since we have designed $\pTau$ in a way that a robot $r_i$ can compute $\fpTau{k}$ for all robots $r_k$ in the distance at most $\tau$, we can execute this collision avoidance although $\pTau$ has the same viewing range as $\pCL$.

\medskip\noindent\textbf{\textsf{Time Bound.\;}}
%\subsubsection{Running Time}
%\label{section:intuitionInitialStrongDistanceShort}
Previously, we have addressed the intermediate protocol $\pTau$ that is \lambdaGathering/ with respect to the viewing range of $V+\tau$ and also keeps \vubg{t} always connected.
The same holds for $\pCL$.
Keeping \vubg{t} connected is important for the termination of a \nearGathering/ protocol.
Suppose that \vubg{t} is connected and the robots only have a viewing range of $V$.
Then, the robots can never decide if they can see all the other robots.
However, with a viewing range of $V+\tau$, it becomes possible if the swarm is brought close together ($\globalDiameter < \tau$).
For any configuration where the viewing range is $V + \tau$ and \vubg{t} is connected, we state an important observation.

\begin{restatable}{lemma}{LemmaConstantDiameterByLargerVR}
	\label{lem:constant-diameter-by-larger-vr}
	Let $\calP$ be a \lambdaContracting/ protocol with viewing range $V + \tau$ for a constant $\tau > 0$ and let \vubg{t} be connected.
	If $\globalDiameter > \tau$, then $\robotDiameter{i} > \tau$, for every robot $r_i$.
\end{restatable}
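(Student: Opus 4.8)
The plan is to prove the contrapositive: if some robot $r_i$ has $\robotDiameter{i} \leq \tau$, then $\globalDiameter \leq \tau$. Note that the \lambdaContracting/ hypothesis plays no role whatsoever in the argument; the statement is a purely geometric/graph-theoretic fact about a single fixed round $t$, and the only two ingredients are (i) the definition of $\robotDiameter{i}$ as the diameter of the point set $\{p_j(t) : r_j \in N_i(t)\}$, where $N_i(t)$ consists of all robots within the \emph{viewing} range $V+\tau$ of $r_i$ (itself included), and (ii) connectivity of \vubg{t}, the unit ball graph with the \emph{smaller} radius $V$. The whole proof hinges on the additive slack $\tau$ between these two radii.

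\textbf{Key step.} Fix $r_i$ with $\robotDiameter{i} \leq \tau$. Since $r_i \in N_i(t)$, every neighbor $r_j \in N_i(t)$ then satisfies $|p_j(t) - p_i(t)| \leq \robotDiameter{i} \leq \tau$. I would show $N_i(t) = R$ by a shortest-path argument in \vubg{t}: assuming some $r_k \notin N_i(t)$, connectivity of \vubg{t} yields a path $r_i = r_{j_0}, r_{j_1}, \dots, r_{j_\ell} = r_k$ with consecutive robots at distance at most $V$; let $m^\ast \geq 1$ be the first index with $r_{j_{m^\ast}} \notin N_i(t)$. Then $r_{j_{m^\ast-1}} \in N_i(t)$, so $|p_{j_{m^\ast-1}}(t) - p_i(t)| \leq \tau$, and the triangle inequality gives
\[
  |p_{j_{m^\ast}}(t) - p_i(t)| \;\leq\; |p_{j_{m^\ast}}(t) - p_{j_{m^\ast-1}}(t)| + |p_{j_{m^\ast-1}}(t) - p_i(t)| \;\leq\; V + \tau,
\]
contradicting $r_{j_{m^\ast}} \notin N_i(t)$. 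Hence $N_i(t) = R$, so $\robotDiameter{i}$ is exactly the diameter of the whole swarm, i.e.\ $\globalDiameter = \robotDiameter{i} \leq \tau$, which is the contrapositive of the claim.

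\textbf{Main obstacle.} There is no real obstacle here; the proof is short. The only points needing care are bookkeeping: keeping the two radii apart (connectivity with respect to $V$, neighborhoods with respect to $V+\tau$), ensuring that the chosen first ``bad'' vertex on the path has index at least $1$ so that its predecessor is genuinely in $N_i(t)$, and using the fact that $r_i \in N_i(t)$ to bound $|p_{j_{m^\ast-1}}(t) - p_i(t)|$ by $\robotDiameter{i}$ rather than by the viewing range.
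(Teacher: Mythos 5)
Your proof is correct and is essentially the same argument as the paper's: both exploit the additive slack $\tau$ via the triangle inequality to show that no edge of \vubg{t} (radius $V$) can join a robot of $N_i(t)$ to a robot outside $N_i(t)$, so connectivity forces $N_i(t) = R$ and hence $\globalDiameter \leq \tau$. The only cosmetic difference is that you phrase it as a contrapositive with a first-violation step along a connecting path, whereas the paper states it as a direct contradiction via the cut between $N_i(t)$ and its complement; neither proof uses the \lambdaContracting/ hypothesis.
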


%This leads directly to another helpful insight.
Due to the  \lambdaContracting/ property, robots close to the boundary of the global smallest enclosing hypersphere (SEH) move upon activation at least $\Omega\left(\frac{\robotDiameter{i}}{\Delta}\right)$ inwards.
With $\robotDiameter{i} > \tau$, it follows that the radius of the global SEH decreases by $\Omega(\nicefrac{\tau}{\Delta})$ after each robot was active at least once (see \cref{lemma:largeDiameter-cl}).
Consequently, $\globalDiameter \leq \tau$ after $\calO(\Delta^{2})$ epochs.

\begin{restatable}{lemma}{LemmagatheringWithIncreasedVR}
	\label{lem:gathering-with-increased-vr}
	Let $\calP$ be a \lambdaContracting/ protocol with a viewing range of $V+\tau$ while \vubg{t} is always connected.
	After at most $\runningTimeLemmaCL \in \calO(\Delta^2)$ epochs executing $\calP$, $\globalDiameter \leq \tau$.
\end{restatable}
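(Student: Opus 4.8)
The plan is to track the radius $R(t)$ of the global smallest enclosing hypersphere (SEH) and show it shrinks by a constant amount $\Omega(\tau/\Delta)$ each epoch, as long as $\globalDiameter > \tau$. First I would invoke \Cref{lem:constant-diameter-by-larger-vr}: since $\calP$ is \lambdaContracting/ with viewing range $V+\tau$ and \vubg{t} stays connected, the hypothesis $\globalDiameter > \tau$ forces $\robotDiameter{i} > \tau$ for \emph{every} robot $r_i$. This is the key leverage point — it converts a global assumption into a uniform lower bound on every robot's local diameter, which in turn means every robot's $\lambda$-centered target point is the midpoint of a segment of length $\lambda \cdot \robotDiameter{i} > \lambda\tau$ inside its local convex hull.

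Next I would adapt the circular-segment argument from the two-dimensional analysis (\Cref{lemma:height,lemma:largeDiameter,lemma:globalRadiusDecrease}) and its $d$-dimensional counterpart (\Cref{theorem:upperBoundAlphaBetaContractingHighDim}) to the \ssync{} setting. Fix a round $t$ and a boundary point $b$ of the SEH; consider the spherical cap of height $h$ cut off near $b$, where Jung's theorem gives $R(t) \le \Delta/\sqrt{2}$ (in $d$ dimensions) and hence a cap of diameter $\lambda\tau/\text{(const)}$ has height $h = \Omega(\tau^2/\Delta)$ — here I would essentially reuse \Cref{lemma:height} with $\robotDiameter{i}$ replaced by its lower bound $\tau$. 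For any robot in the innermost sub-cap of height $\Theta(\lambda h)$: because $\robotDiameter{i} > \tau$, the defining segment of length $\lambda\robotDiameter{i} > \lambda\tau$ cannot fit inside the outer cap, so its midpoint — the target point — lies outside the innermost sub-cap. This is exactly \Cref{lemma:largeDiameter-cl} (referenced in the proof summary). Crucially, in \ssync{} a robot need not move every round, but once every robot has been activated at least once — i.e., after one epoch — every robot that was in the innermost sub-cap at the start has left it, and no robot has entered it (the "entering" direction of the cap lemmas handles neighbors pulled toward the cap). Hence after one epoch the innermost sub-cap near $b$ is empty, so $R$ must have decreased by at least its height, i.e. $R(t+1_{\text{epoch}}) \le R(t) - \Omega(\lambda^2 \tau^2/\Delta)$. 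Plugging constants to match the claimed bound $\runningTimeLemmaCL$ and iterating: since $R(0) \le \Delta$, after $\calO(\Delta/(\lambda^2\tau^2/\Delta)) = \calO(\Delta^2/(\lambda^2\tau))$ epochs we reach $R \le \tau/2$, whence $\globalDiameter \le \tau$; I would then reconcile the exact constant $\frac{32\pi\Delta^2}{\lambda^2\tau}$ with the segment/cap geometry constants.

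The main obstacle I anticipate is handling the \ssync{} scheduler cleanly rather than \fsync{}. In \fsync{} the cap empties in one or two rounds as in \Cref{lemma:globalRadiusDecrease}; in \ssync{} a robot inside the sub-cap may remain inactive for a while, and meanwhile other robots move, potentially changing the SEH and even the location of the "bad" cap. The fix is to argue monotonically: the SEH radius is non-increasing under any \lambdaContracting/ protocol (target points lie in local convex hulls, which lie inside the global SEH, so no robot ever leaves it), and within one epoch \emph{every} currently-boundary-hugging robot gets activated and is pushed strictly inward by the cap argument applied at its activation round — one must be careful that the cap argument at robot $r_i$'s activation time uses the \emph{then-current} SEH, whose radius is $\le R(t)$, so the height bound only improves. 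A secondary subtlety is the "no robot enters the sub-cap" claim in \ssync{}: a robot outside must, when activated, have a target point outside the sub-cap, which again follows from $\robotDiameter{k} > \tau$ forcing its long defining segment out of the cap — this is the second half of \Cref{lemma:smallDiameters}/\Cref{lemma:largeDiameter-cl}. Once these are in place the runtime bound is a routine geometric-series computation.
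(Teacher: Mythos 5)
Your route is the same as the paper's: combine \cref{lem:constant-diameter-by-larger-vr} (so $\robotDiameter{i} > \tau$ for every robot while $\globalDiameter > \tau$) with the cap argument of \cref{lemma:largeDiameter-cl}, bound the cap height via the computation behind \cref{lemma:heightHighDim} together with Jung's theorem, and conclude that the radius $R(t)$ of the global SEH drops by the cap height once per epoch, which after division yields $\runningTimeLemmaCL \in \calO(\Delta^2)$ epochs. Only the bookkeeping differs: the paper scales the cap height linearly, $h = \tau \cdot h'$, which produces the stated constant, whereas you use the quadratic scaling $h = \Omega\bigl(\lambda^2\tau^2/\Delta\bigr)$, giving $\calO\bigl(\Delta^2/(\lambda^2\tau^2)\bigr)$ epochs (your simplification to $\calO\bigl(\Delta^2/(\lambda^2\tau)\bigr)$ is an algebra slip); since $\lambda$ and $\tau$ are constants, this does not affect the asymptotic claim.

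The one step that does not go through as written is your handling of \ssync{}: you apply the cap lemma at each robot's activation round to the \emph{then-current} SEH and argue that \enquote{the height bound only improves} because its radius is at most $R(t)$. Avoiding a cap of the SEH of round $t' > t$ does not keep a robot out of the corresponding cap of the original ball $N(t)$: the later SEH generally has a different center, so at the end of the epoch each robot is only known to lie in a shrunken ball around a \emph{different} center, and these are not jointly contained in one ball of radius $R(t)-h$, which is what the claimed decrease of $R(t)$ requires. The repair is exactly the containment fact you state yourself (\enquote{no robot ever leaves it}): all positions throughout the epoch, and hence all local convex hulls, remain inside the \emph{fixed} ball $N(t)$. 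Consequently, the proof of \cref{lemma:largeDiameter-cl} applies verbatim with $N(t)$ as the reference ball at every activation during the epoch -- the defining segment of length greater than $\lambda\tau$ lies in $N(t)$, so every computed target avoids every height-$h$ cap of $N(t)$, i.e., lies within distance $R(t)-h$ of $N(t)$'s center. Since every robot is activated at least once per epoch and inactive robots do not move, at the end of the epoch all robots lie in the concentric ball of radius $R(t)-h$, giving the per-epoch decrease; the remainder of your computation then goes through. (The paper's own proof is silent on this point and implicitly uses the fixed-ball version.)
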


Because $\pCL$ has, regarding \lambdaContracting/, connectivity and connectivity range, the same properties as $\pTau$, this lemma can directly be applied to show the running time of \cref{thm:collisionless-class}.

\section{Conclusion \& Future Work}\label{section:conclusion}

In this work, we introduced the class of \lambdaContracting/ protocols and their collisionless extensions that solve \gathering/ and \nearGathering/ of $n$ robots located in $\mathbb{R}^{d}$ in $\Theta \left(\Delta^2\right)$ epochs.
While these results already provide several improvements over previous work, there are open questions that could be addressed by future research.
First of all, we did not aim to optimize the constants in the runtime.
Thus, the upper runtime bound of $\frac{256 \cdot \pi \cdot \Delta^2}{\lambda^3 }$ seems to be improvable.

Moreover, one major open question remains unanswered: is it possible to solve \gathering/ or \nearGathering/ of oblivious and disoriented robots with limited visibility in $\mathcal{O}\left(\Delta\right)$ rounds?
In this work, we could get a little closer to the answer:
If there is a protocol that gathers in $\mathcal{O}\left(\Delta\right)$ rounds, it must compute target points regularly outside of the convex hulls of robots' neighborhoods.
All \lambdaContracting/ protocols are slow in the configuration where the positions of the robots form a regular polygon with side length equal to the viewing range.
In \cite{DBLP:conf/algosensors/CastenowHJKH21}, it has been shown that this configuration can be gathered in time $\mathcal{O}\left(\Delta\right)$ by a protocol where each robot moves as far as possible along the angle bisector between its neighbors (leaving the local convex hull).
However, this protocol cannot perform well in general.
See \Cref{figure:alternatingStar} for the \emph{alternating star}, a configuration where this protocol is always worse compared to any protocol that computes target points inside of local convex hulls.
\Cref{figure:alternatingStar} gives a hint that every protocol that performs well for the regular polygon cannot perform equally well in the alternating star.
Thus, we conjecture that $\Omega \left(\Delta^2\right)$ is a lower bound for every protocol that considers oblivious and disoriented robots with limited visibility.

\begin{figure}[htbp]
	\begin{minipage}[t]{0.48\textwidth}
		\centering
		\includegraphics{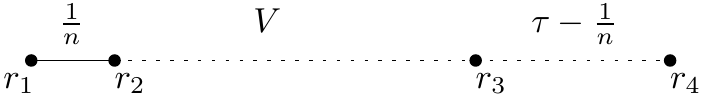}
		\caption{Example where $\fp{i}$ is not \lambdaCentered/ with respect to the viewing range $V+ \tau$.}
		\label{fig:intuitionNotAlphaShort}
	\end{minipage}
	\,
	\begin{minipage}[t]{0.48\textwidth}
		\centering
		\includegraphics[width =0.8\textwidth]{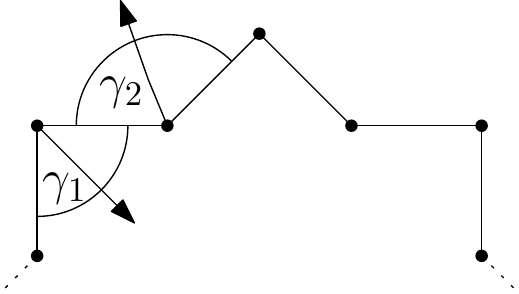}
		\caption{The robots at the angle $\gamma_1$ observe a regular square, the robots at $\gamma_2$ see a regular octagon. Given that each robot moves along the angle bisector between its neighbors and leaves its local convex hull, the radius of the global SEC decreases slower than in any \lambdaContracting/ protocol.}
		\label{figure:alternatingStar}
	\end{minipage}
\end{figure}

\newpage
\bibliography{unlimited,limited,gathering,faults}

\newpage
\appendix

\section{Proofs and Ommited Details of  \Cref{section:alphaBetaContractingStrategies} \enquote{A Class of Gathering Protocols}} \label{section:appendixSection3}

\subsection{Proof of \Cref{theorem:LowerBoundAlphaBetaContracting}}

\TheoremLowerBoundProtocols*

\begin{proof}
	In the following, we assume $n \geq 5$.
	Consider $n$ robots that are located on the vertices of a regular polygon with side length $1$.
	Observe first that due to the disorientation and because the protocols are deterministic, the local coordinate systems of the robots could be chosen such that the configuration remains a regular polygon forever (see \Cref{figure:regularPolygon} for an example).

	\begin{figure}[htb]
		\centering
		\includegraphics[width =0.35\textwidth]{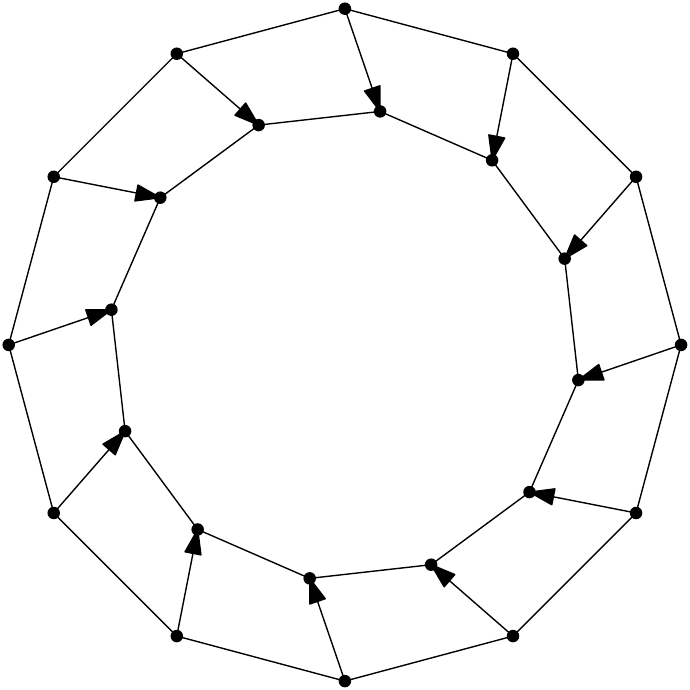}
		\caption{Initially, the robots are located on the surrounding regular polygon. The local coordinate systems of the robots can be chosen such that all robots execute the same movement in a rotated fashion such that the configuration remains a regular polygon (depicted by the inner regular polygon).}
		\label{figure:regularPolygon}
	\end{figure}

	Henceforth, we assume in the following that the robots remain on the vertices of a regular polygon.
	Let $C$ be the surrounding circle and $r_C$ its radius.
	For large $n$, the circumference $p_C$ of $C$ is $\approx n$ and $r_C \approx \frac{n}{2 \pi}$.
	Hence, $\Delta \approx \frac{n}{\pi}$.
	We show that any \lambdaContracting/ protocol (not only gathering protocols) requires $\Omega \left(\Delta^2\right)$ rounds until $p_C \leq \frac{2}{3}n$.
	As long as $p_C \geq \frac{2}{3}n$, each robot can observe exactly two neighbors at distance $\frac{2}{3} \leq s \leq 1$.

	The internal angles of a regular polygon have a size of $\gamma = \frac{\left(n-2\right) \cdot \pi }{n}$.
	Fix any robot $r_i$ and assume that $p_i(t) = \left(0,0\right)$ and the two neighbors are at $p_{i-1}(t) = \bigl(-s \cdot \sin \left(\frac{\gamma}{2}\right), s \cdot \cos \left(\frac{\gamma}{2}\right)\bigr)$ and $p_{i+1}(t) = p_{i-1}(t) = \bigl(s \cdot \sin \left(\frac{\gamma}{2}\right), s \cdot \cos \bigl(\frac{\gamma}{2}\bigr)\bigr)$.
	Now, consider the target point $\fp{i} = \bigl(x_{\fp{i}}, y_{\fp{i}} \bigr)$.
	Observe that the radius $r_C$ decreases by exactly $y_{\fp{i}}$.
	Next, we derive an upper bound on $y_{\fp{i}}$:
	$y_{\fp{i}} = s  \cdot \cos \left(\frac{\gamma}{2}\right) \leq \cos \left(\frac{\gamma}{2}\right) = \cos \left(\frac{\left(n-2\right) \cdot \pi}{2n}\right)$.

	Now, we use $\cos(x) \leq -x + \frac{\pi}{2}$ for $0 \leq x \leq \frac{\pi}{2}$.
	Hence, we obtain $\cos \left(\frac{\left(n-2\right) \cdot \pi}{2n}\right) \leq - \frac{\left(n-2\right) \cdot \pi}{2n} + \frac{\pi}{2} = - \frac{\pi}{2} + \frac{\pi}{n} + \frac{\pi}{2} = \frac{\pi}{n}$.
	Therefore, it takes at least $\frac{n^2}{3}$ rounds until $r_C$ has decreased by at least $\frac{n}{3}$.
	The same holds for the perimeter.
	All in all, it takes at least $\frac{n^2}{3} \in \Omega\left(\Delta^2\right)$ rounds until the $r_C$ decreases by at least $\frac{n}{3}$.
\end{proof}

\subsection{Proof of \Cref{theorem:upperBoundAlphaBetaContracting}}

\begin{theorem}[Jung's Theorem \cite{Jung1901,Jung1910}] \label{theorem:jungsTheorem}
	The smallest enclosing hypersphere of a point set $K \subset \mathbb{R}^{d} $ with diameter $\mathrm{diam}$ has a radius of at most $\mathrm{diam} \cdot \sqrt{\frac{d}{2 \cdot \left(d+1\right)}}$.
\end{theorem}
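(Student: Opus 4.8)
This is the classical theorem of Jung, so the plan is simply to recall its standard proof. First I would reduce to a finite configuration: the smallest enclosing hypersphere of $K$ coincides with the smallest enclosing hypersphere of some subset of at most $d+1$ points of $K$. This is the usual consequence of the optimality characterization of minimum enclosing balls --- the optimal center lies in the convex hull of the points of $K$ on the boundary sphere --- combined with Carath\'eodory's theorem. Hence it suffices to bound the circumradius $r$ of a point set $\{p_0,\dots,p_k\}$ with $k\le d$; after translating the optimal center to the origin we may assume $|p_i| = r$ for all $i$, and optimality gives $0 = \sum_{i=0}^{k}\lambda_i p_i$ for suitable $\lambda_i \ge 0$ with $\sum_i \lambda_i = 1$.

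The core of the argument is then a short computation. For each fixed $j$, expanding $0 = \langle p_j, \sum_i \lambda_i p_i\rangle = \lambda_j r^2 + \sum_{i\ne j}\lambda_i\langle p_i,p_j\rangle$ and using $\langle p_i, p_j\rangle = \frac{1}{2}\bigl(|p_i|^2 + |p_j|^2 - |p_i - p_j|^2\bigr) = r^2 - \frac{1}{2}|p_i - p_j|^2 \ge r^2 - \frac{1}{2}\,\mathrm{diam}^2$ yields $0 \ge r^2 - \frac{1}{2}\,\mathrm{diam}^2(1-\lambda_j)$, i.e. $r^2 \le \frac{1}{2}\,\mathrm{diam}^2(1-\lambda_j)$ for every $j$. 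Summing over $j \in \{0,\dots,k\}$ and using $\sum_j(1-\lambda_j) = (k+1)-1 = k$ gives $(k+1)r^2 \le \frac{k}{2}\,\mathrm{diam}^2$, hence $r^2 \le \frac{k}{2(k+1)}\,\mathrm{diam}^2 \le \frac{d}{2(d+1)}\,\mathrm{diam}^2$, since $t \mapsto \frac{t}{t+1}$ is increasing and $k \le d$. Taking square roots gives the stated bound.

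I expect the only delicate point to be the reduction step: justifying that the minimum enclosing sphere is supported by at most $d+1$ boundary points and that its center is a convex combination of them. This rests on the first-order optimality condition for the enclosing radius viewed as a (convex) function of the center, together with Carath\'eodory's theorem --- both standard, but they carry the geometric content of the proof. Once these are in place, everything else is the one-line averaging argument above; a minor additional check is the monotonicity of $\frac{k}{2(k+1)}$ in $k$, which is exactly what lets us replace $k$ by the ambient dimension $d$.
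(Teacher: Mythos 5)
Your proof is correct: it is the standard averaging proof of Jung's theorem (optimality characterization of the minimum enclosing ball plus Carath\'eodory to get at most $d+1$ boundary points, then the inner-product estimate $\langle p_i,p_j\rangle \ge r^2 - \tfrac{1}{2}\mathrm{diam}^2$ and summation over $j$), and the computation checks out. The paper itself gives no proof of \Cref{theorem:jungsTheorem} --- it is imported as a classical result with a citation to Jung --- so there is nothing to compare against; note only that the reduction facts you flag as the delicate point (the center being a convex combination of at most $d+1$ boundary points, and the circumsphere characterization) are exactly the statements the paper also uses elsewhere as \Cref{lemma:hypersphereConvexCombination} and \Cref{lemma:hypersphereSimplexCircumsphere}, likewise cited rather than proved.
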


\LemmaHeight*

\begin{proof}
	Initially, we give an upper bound on the angle $\gamma$, see \Cref{figure:globalSegments} for its definition.
	The circumference of $N$ is $2  \pi  R$.
	We can position at most $\frac{16}{\lambda} \pi R$ points on the boundary of N that are at distance $\frac{\lambda}{8}$ from the points closest to them and form a regular convex polygon.
	The internal angle of this regular polygon is $2\gamma$.
	Hence, the sum of all internal angles is $\left(\frac{16}{\lambda} \pi  R -2 \right) \cdot \pi$.
	Thus, each individual angle has a size of at most $\frac{\left(\frac{16}{\lambda}  \pi  R-2\right) \cdot \pi }{\frac{16}{\lambda}  \pi  R} =\pi - \frac{2\pi}{\frac{16}{\lambda}  \pi  R} = \pi - \frac{\lambda}{8 R}$.
	Hence, $\gamma \leq \frac{\pi}{2} - \frac{\lambda}{16  R}$.
	Now, we are able to bound $h$.
	First of all, we derive a relation between $h$ and $\gamma$:
	$\cos \left(\gamma\right) = \frac{h}{\frac{\lambda}{8}} = \frac{8h}{\lambda} \iff h = \frac{\lambda \cdot \cos \left(\gamma\right)}{8}$.
	In the following upper bound, we make use of the fact that $\cos \left(x\right) \geq -\frac{2}{\pi}x +1 $ for $x \in [0, \frac{\pi}{2}]$.

	\begin{align*}
		h = \frac{\lambda \cdot \cos \left(\gamma\right)}{8} \geq \frac{\lambda \cdot \cos \left(\frac{\pi}{2} - \frac{\lambda}{16 R}\right)}{8}
		\geq \frac{\lambda \cdot \left(-\frac{2}{\pi} \cdot \left(\frac{\pi}{2} - \frac{\lambda}{16  R}\right) +1 \right)}{8}	= \frac{\lambda \cdot \frac{\lambda}{8  \pi  R}}{8} = \frac{\lambda^2}{64 \pi R}
	\end{align*}

	Applying \Cref{theorem:jungsTheorem} with $d=2$ yields $h \geq \frac{\sqrt{3} \cdot \lambda^2}{64  \pi  \Delta}$.

\end{proof}

\LemmaLargeDiameter*

\begin{proof}
	Since $\robotDiameter{i} >
	\nicefrac{1}{4}$ and $\mathcal{P}$ is \lambdaContracting/, \lambdaCenterP{i} is the midpoint of a line segment $\ell^\mathcal{P}_i(t)$  of length at least $\lambda \cdot \robotDiameter{i} > \nicefrac{\lambda}{4}$.
	As the maximum distance between any pair of points inside of \baseSegmentLambda/ is $\frac{\lambda}{4}$, it follows that $\ell^{i}_\mathcal{P}(t)$ either connects two points outside of \baseSegmentLambda/ or one point inside and another point outside.
	In the first case, \lambdaCenterP{i} lies outside of \baseSegmentLambda/ (since the maximum distance between any pair of points inside of  \baseSegmentLambda/ is $\frac{\lambda}{4} \leq \nicefrac{1}{4} < \robotDiameter{i}$).
	In the second case,  \lambdaCenterP{i} lies outside of \betaHalfSegmentLambda/ since, in the worst case, one endpoint of $\ell_i^{\mathcal{P}}(t)$ is the point $b$ used in the definition of $N$ (see the beginning of \Cref{section:alphaBetaProtocolsUpperBound}) and the second point lies very close above of \betaHalfSegmentLambda/.
	Since \lambdaCenterP{i} is the midpoint of $\ell_i^{P}(t)$, it lies closely above of \betaHalfSegmentLambda/.
	Every other position of the two endpoints of $\ell_i^{P}(t)$ would result in a point \lambdaCenterP{i} that lies even farther above of \betaHalfSegmentLambda/.
\end{proof}

\LemmaSmallDiameter*

\begin{proof}
	First, consider a robot $r_i$ that is located in \mainSegmentLambda/ and all its neighbors are above of \betaHalfSegmentLambda/.
	Let $p_1$ and $p_2$ be the two points of \chull{i} closest to the intersection points of \chull{i} and the boundary of \betaHalfSegmentLambda/ ($p_1$ and $p_2$ are infinitesimally above of \betaHalfSegmentLambda/).
	In case \chull{i} consists of only two robots, define $p_1$ to be the intersection point of \chull{i} and
	\betaHalfSegmentLambda/ and $p_2 = p_i(t)$.
	Clearly, $\robotDiameter{i} \geq |p_1-p_2|$.
	Thus, the maximum distance between any pair of points in $\chull{i}\cap S_{\lambda} \bigl(\frac{\lambda}{2}\bigr)$ is $\lambda \cdot |p_1-p_2|$.
	We conclude that \lambdaCenterP{i} must be located above of \intermediateSegmentLambda/ since \lambdaCenterP{i} is the midpoint of a line segment  either connecting two robots above of $\intermediateSegmentLambda/$ or one robot inside of \mainSegmentLambda/ and one robot outside of \intermediateSegmentLambda/.
	The arguments for the opposite case -- $r_i$ is located in \betaHalfSegmentLambda/, one neighbor of $r_i$ is located in \mainSegmentLambda/ and all others are also outside of \betaHalfSegmentLambda/ -- are analogous.
\end{proof}

\LemmaGlobalRadius*
\begin{proof}
	Fix any segment \baseSegmentLambda/ and consider the set of robots $R_S$ that are located in \mainSegmentLambda/ or compute a target point in \mainSegmentLambda/.
	Via \Cref{lemma:largeDiameter}, we obtain that for every robot $r_i \in R_S$ that computes a target point in  \mainSegmentLambda/, $\robotDiameter{i} \leq \nicefrac{1}{4}$.
	Since the maximum distance between any pair of points in  \mainSegmentLambda/ is $\nicefrac{1}{4}$, we conclude that, a robot which is not located in \mainSegmentLambda/ but computes its target point inside, is at distance at most $\nicefrac{1}{4}$ from \baseSegmentLambda/.
	Hence, via the triangle inequality, it is located at distance at most $\nicefrac{1}{2}$ from any other robot in $R_S$.
	Thus, all robots in $R_S$ can see each other.
	Now consider the robot $r_{\mathrm{min}} \in R_S$ which is the robot of $R_S$ with the minimal number of visible neighbors.
	Furthermore, $A_{\mathrm{min}}$ is the set of robots that have exactly the same neighborhood as $r_{\mathrm{min}}$.
	For all robots $r_j \in R_S \setminus A_{min}$, we have that $r_j$ can see $r_{\mathrm{min}}$ and at least one robot that $r_{\mathrm{min}}$ cannot see.
	Thus, $\robotDiameter{j} > 1$.
	We can conclude with help of \Cref{lemma:largeDiameter} that all robots in $R_S \setminus A_{\mathrm{min}}$ compute a target point outside of \betaHalfSegmentLambda/.
	Since all robots $r_i \in A_{\mathrm{min}}$ have the same neighborhood and $\robotDiameter{i} < \nicefrac{1}{4}$, they also compute the same target point.
	Thus, at the beginning of round $t+1$, at most one position in  \mainSegmentLambda/ is occupied.
	In round $t+1$ we have the picture that one position in  \mainSegmentLambda/ is occupied and all neighbors are located above of \betaHalfSegmentLambda/.
	\Cref{lemma:smallDiameters} yields that the robots in  \mainSegmentLambda/ compute a target point outside.
	Moreover, \Cref{lemma:smallDiameters} yields as well that no robot outside of \mainSegmentLambda/ computes a target point inside and thus,  \mainSegmentLambda/ is empty in round $t+2$.
	Since the segment \baseSegmentLambda/ has been chosen arbitrarily, the arguments hold for the entire circle $N$ and thus, $R(t+2) \leq R(t) -\nicefrac{\lambda}{4} \cdot h \leq R(t) - \frac{\lambda^3  \sqrt{3}}{256 \cdot \pi \cdot \Delta}$.
\end{proof}

\ClassUpperBound*

\begin{proof}
	First, we bound the initial radius of $N$: $R(0) \leq \nicefrac{\Delta}{\sqrt{3}}$ (\Cref{theorem:jungsTheorem}).
	\Cref{lemma:globalRadiusDecrease} yields that $R(t)$ decreases every two rounds by at least $\frac{\lambda^3  \cdot \sqrt{3}}{256 \cdot \pi \cdot \Delta}$.
	Thus, it requires $2 \cdot \frac{256 \cdot \pi \cdot \Delta}{\lambda^3 }$ rounds until $R(t)$ decreases by at least $\sqrt{3}$.
	Next, we bound how often this can happen until $R(t) \leq \frac{1}{4}$ and thus $\globalDiameter \leq \frac{1}{2}$:
	$\frac{\Delta}{\sqrt{3}} - x \cdot \sqrt{3} \leq \frac{1}{4} \iff \frac{\Delta}{3} - \frac{1}{4\cdot \sqrt{3}} \leq x$.

	All in all, it requires $x \cdot \frac{512 \cdot \pi \cdot \Delta}{\lambda^3} = \left(\frac{\Delta}{3} - \frac{1}{4 \cdot \sqrt{3}}\right) \cdot \frac{512 \cdot \pi \cdot \Delta}{\lambda^3 } \leq \frac{171 \cdot \pi \cdot \Delta^2}{\lambda^3}$ rounds until $\globalDiameter{} \leq \frac{1}{2}$.
	As soon as $\globalDiameter \leq\frac{1}{2}$, all robots can see each other, compute the same target point and will reach it in the next round.

\end{proof}

\subsection{Proof of \Cref{theorem:upperBoundAlphaBetaContractingHighDim}}

The analysis is in most parts analogous to the analysis of \lambdaContracting/ protocols in two dimensions (\Cref{section:alphaBetaProtocolsUpperBound}).
Let $N:= N(t)$ be the smallest enclosing hypersphere (SEH) of all robots in round $t$ and $R := R(t)$ its radius.
Let $B$ be an arbitrary point on the surface and define the hyperspherical cap \baseCapLambda/ with apex $B$ as follows.
Choose the height $h$ of  \baseCapLambda/ such that the inscribed hypercone has a slant height of $\nicefrac{\lambda}{8}$.
Note that this implies that the radius $a$ of the base of the cap is upper bounded by $\nicefrac{\lambda}{8}$.
Hence, the maximal distance between any pair of points in \baseCapLambda/ is $\nicefrac{\lambda}{4}$.
In the following, we denote by $\mathrm{HSC}_{\lambda}\left(c\right)$ for $0 \leq c \leq 1$ the hyperspherical cap with apex $B$ of height $c \cdot h$.

\begin{lemma} \label{lemma:heightHighDim}
	$h \geq \frac{\sqrt{2} \cdot \lambda^2}{64 \cdot \pi \cdot \Delta}$.
\end{lemma}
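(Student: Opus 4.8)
The plan is to mimic the two-dimensional proof of \Cref{lemma:height}, replacing the circular-segment geometry with the geometry of a hyperspherical cap, and to substitute Jung's Theorem in dimension $d$ at the very end. First I would set up the relevant quantities: let $a$ denote the radius of the base disk of $\mathrm{HSC}_{\lambda}$, let $h$ be its height, and recall that $\mathrm{HSC}_{\lambda}$ was defined so that the inscribed hypercone (apex $B$, base the base disk of the cap) has slant height $\nicefrac{\lambda}{8}$. Thus $a^2 + h^2 = (\nicefrac{\lambda}{8})^2$, and in particular $a \leq \nicefrac{\lambda}{8}$. Writing $\rho$ for the distance from the center of $N$ to the base disk of the cap, we have $\rho = R - h$ and $a^2 + \rho^2 = R^2$, so $a^2 = R^2 - (R-h)^2 = 2Rh - h^2 \geq Rh$ (using $h \leq R$). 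This gives the clean relation $h \leq a^2/R$, but I actually want a lower bound on $h$.

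The key step is to lower-bound $h$ in terms of $\lambda$ and $R$. From $a^2 + h^2 = \lambda^2/64$ and $a^2 \le 2Rh$ I would argue: if $h \ge \lambda^2/(64 R)$ we are immediately done (since $R \le \Delta$ after applying Jung, we would even have a stronger bound), so assume $h < \lambda^2/(64R)$, i.e. $h$ is small. Then $a^2 = \lambda^2/64 - h^2 \ge \lambda^2/64 - (\lambda^2/(64R))^2 \ge \lambda^2/128$ for $R$ not too small (and when $R$ is a small constant the bound on $h$ is trivially true by a direct estimate). Combining $a^2 \le 2Rh$ with $a^2 \ge \lambda^2/128$ yields $h \ge \lambda^2/(256 R)$. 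To clean this up, the simplest route is to argue directly from $2Rh \ge a^2 = \lambda^2/64 - h^2$ and $h \le \lambda^2/(64R)$: then $h^2 \le h\cdot \lambda^2/(64R) \le$ (something small compared to $\lambda^2/64$), giving $2Rh \ge \lambda^2/64 - h^2 \ge \lambda^2/128$, hence $h \ge \lambda^2/(256R)$. I should be slightly careful with the constant; since the statement only asks for $h \ge \sqrt{2}\lambda^2/(64\pi\Delta)$, there is ample slack (a crude bound $h \ge \lambda^2/(256 R)$ already suffices once $R \le \Delta\sqrt{d/(2(d+1))} \le \Delta/\sqrt{2}$, because $256 \le 64\pi/\sqrt{2}$ is false, so I must be a bit more careful — $64\pi/\sqrt 2 \approx 142 < 256$, so I do need the better constant). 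Hence I would instead push for $h \ge \lambda^2/(64R)$ directly, arguing that $h \le a^2/R$ is the wrong direction and instead using the chord/arc comparison exactly as in the 2D proof: the cap is "flatter" than in lower dimensions only through the base radius, and the relation $\cos$-type estimate $h \ge$ (linear lower bound) still holds. Concretely, mirroring \Cref{lemma:height}: the half-angle $\gamma$ subtended satisfies $\gamma \le \pi/2 - \lambda/(16R)$ by the same packing argument on a great circle through $B$, and then $h = R(1-\sin\gamma) = R - R\sin\gamma$; using $\sin(\pi/2 - x) = \cos x \le 1 - \tfrac{2}{\pi}x$... wait, that is an upper bound on $\cos$; one needs $\cos x \le 1 - (2/\pi^2)x^2$ or the simpler $1 - \cos x \ge x^2/\pi$ on $[0,\pi/2]$. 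Using $1 - \cos(\lambda/(16R)) \ge (\lambda/(16R))^2/\pi$ gives $h \ge R \cdot \lambda^2/(256 R^2 \pi) = \lambda^2/(256\pi R)$ — still constant $256$. The cleanest fix matching the desired bound: follow the 2D computation verbatim. There $h = \frac{\lambda\cos\gamma}{8}$ via the inscribed cone (slant $\lambda/8$), and $\cos\gamma \ge \frac{\lambda}{8\pi R}$ from $\cos x \ge -\tfrac2\pi x + 1$ with $x = \pi/2 - \lambda/(16R)$, giving $h \ge \frac{\lambda^2}{64\pi R}$. The same derivation works in $d$ dimensions because everything happens in the $2$-plane spanned by $B$, the center of $N$, and one base-disk point.

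Finally I would invoke Jung's Theorem (\Cref{theorem:jungsTheorem}): the SEH of the swarm has radius $R \le \Delta\sqrt{d/(2(d+1))} \le \Delta/\sqrt{2}$ since $d/(2(d+1)) \le 1/2$ for all $d \ge 1$ (with equality only in the limit). Substituting $R \le \Delta/\sqrt{2}$ into $h \ge \lambda^2/(64\pi R)$ yields $h \ge \frac{\sqrt{2}\cdot\lambda^2}{64\pi\Delta}$, which is exactly the claim. The only real subtlety — the "main obstacle" — is making sure the packing/angle argument used to bound $\gamma$ is valid in $\mathbb{R}^d$: in 2D one places $\approx \frac{16}{\lambda}\pi R$ equally spaced points on the great circle and uses the regular-polygon interior-angle formula. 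In $d$ dimensions the cap $\mathrm{HSC}_\lambda$ is defined via the inscribed cone's slant height $\lambda/8$, so the relevant angle $\gamma$ at the center of $N$ between the axis $B$ and a base-disk point still satisfies $\sin\gamma = a/R$ with $a \le \lambda/8$, and one reproduces the estimate by intersecting with the $2$-plane through the axis and that point — reducing it to the $2$D situation. So the obstacle is mostly one of bookkeeping: confirming that $h = \frac{\lambda\cos\gamma}{8}$ (equivalently $a = \frac\lambda 8 \sin(\pi/2-\gamma)$ from the right triangle with hypotenuse $\lambda/8$, one leg $h$, other leg $a$) and that $\gamma \le \pi/2 - \lambda/(16R)$ both transfer, after which the computation is identical to the proof of \Cref{lemma:height}.
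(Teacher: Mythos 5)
Your final argument is exactly the paper's proof: reduce to the $2$-plane through $B$, the center of $N$, and a base-disk boundary point, bound $\gamma \leq \frac{\pi}{2} - \frac{\lambda}{16R}$ by the regular-polygon packing on a great circle through $B$, use $h = \frac{\lambda\cos\gamma}{8}$ together with $\cos x \geq 1 - \frac{2}{\pi}x$ to get $h \geq \frac{\lambda^2}{64\pi R}$, and finish with Jung's Theorem via $R \leq \frac{\Delta}{\sqrt{2}}$. The exploratory detour yielding the weaker constant $\frac{\lambda^2}{256\pi R}$ is correctly discarded, and the minor slip $a = \frac{\lambda}{8}\sin\left(\frac{\pi}{2}-\gamma\right)$ (it should be $a = \frac{\lambda}{8}\sin\gamma$) plays no role in the final chain of estimates.
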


\begin{proof}
	Initially, we give a bound on the angle $\gamma$ which is the angle between the height $h$ and the slant height of $\nicefrac{\lambda}{8}$ of the inscribed hypercone.
	Consider a circle $K$ with radius $R$ that has the same center as $N$ and contains $B$.
	The angle $\gamma$ can now be seen as the internal angle of a regular polygon with side length $\nicefrac{\lambda}{8}$ whose vertices lie on $K$.
	The circumference of $K$ is $2 \cdot \pi \cdot R$.
	Thus, we can position at most $\frac{16}{\lambda} \cdot \pi \cdot R$ points on the boundary of $K$ that are in distance $\nicefrac{\lambda}{8}$ from the points closest to them and form a regular convex polygon.
	The internal angle of this regular polygon is $2\cdot \gamma$.
	Hence, the sum of all internal angles is $\left(\frac{16}{\lambda} \cdot \pi \cdot R -2 \right) \cdot \pi$.
	Thus, each individual angle has a size of at most $\frac{\left(\frac{16}{\lambda} \cdot \pi \cdot R-2\right) \cdot \pi }{\frac{16}{\lambda} \cdot \pi \cdot R} =\pi - \frac{2\cdot \pi}{\frac{16}{\lambda} \cdot \pi \cdot R} = \pi - \frac{\lambda}{8 \cdot R}$.
	Hence, $\gamma \leq \frac{\pi}{2} - \frac{\lambda}{16 \cdot R}$.
	Now, we are able to bound $h$.
	First of all, we derive a relation between $h$ and $\gamma$:
	$\cos \left(\gamma\right) = \frac{h}{\frac{\lambda}{8}} = \frac{8h}{\lambda} \iff h = \frac{\lambda \cdot \cos \left(\gamma\right)}{8}$.
	In the following upper bound, we make use of the fact that $\cos \left(x\right) \geq -\frac{2}{\pi}x +1 $ for $x \in [0, \frac{\pi}{2}]$.

	\begin{align*}
		h = \frac{\lambda \cdot \cos \left(\gamma\right)}{8} \geq \frac{\lambda \cdot \cos \left(\frac{\pi}{2} - \frac{\lambda}{16 \cdot R}\right)}{8}
		\geq \frac{\lambda \cdot \left(-\frac{2}{\pi} \cdot \left(\frac{\pi}{2} - \frac{\lambda}{16 \cdot R}\right) +1 \right)}{8}	= \frac{\lambda \cdot \frac{\lambda}{8 \cdot \pi \cdot R}}{8} = \frac{\lambda^2}{64 \pi R}
	\end{align*}

	Lastly observe that $R \leq \Delta \cdot \sqrt{\frac{d}{2 \cdot \left(d+1\right)}}$ (\Cref{theorem:jungsTheorem}).
	For any $d \geq 1$, it holds $\sqrt{\frac{d}{2 \cdot \left(d+1\right)}} \leq \frac{1}{\sqrt{2}}$ and thus $R \leq \frac{\Delta}{\sqrt{2}}$.
	We obtain a final lower bound on $h$:
	$ h\geq \frac{\lambda^2}{64 \pi R} \geq \frac{\sqrt{2} \cdot \lambda^2}{64\cdot \pi \cdot \Delta}$.

\end{proof}

\begin{lemma} \label{lemma:largeDiameterHighDim}
	For a robot $r_i$ with $\robotDiameter{i} > \nicefrac{1}{4}$ it holds $\fp{i} \in N \setminus \betaHalfCapLambda/$.
\end{lemma}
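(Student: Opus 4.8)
The plan is to mirror the two-dimensional argument of \Cref{lemma:largeDiameter} almost verbatim, replacing the circular segment \baseSegmentLambda/ by the hyperspherical cap \baseCapLambda/ and invoking the fact (established in the paragraph preceding \Cref{lemma:heightHighDim}) that the maximal distance between any pair of points in \baseCapLambda/ is at most $\nicefrac{\lambda}{4}$. First I would recall that, since $\calP$ is \lambdaContracting/, the target point $\fp{i} = \lambdaCenterP{i}$ is the midpoint of a line segment $\ell_i^{\calP}(t) \subseteq \chull{i}$ of length exactly $\lambda \cdot \robotDiameter{i} > \nicefrac{\lambda}{4}$, the strict inequality coming from the hypothesis $\robotDiameter{i} > \nicefrac{1}{4}$.

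The key geometric step is then a case distinction on how the segment $\ell_i^{\calP}(t)$ meets \baseCapLambda/. Because $\ell_i^{\calP}(t)$ is longer than the diameter $\nicefrac{\lambda}{4}$ of \baseCapLambda/, the segment cannot be fully contained in \baseCapLambda/; hence either both endpoints lie outside \baseCapLambda/, or exactly one endpoint lies inside and one outside. In the first case the midpoint $\lambdaCenterP{i}$ is itself outside \baseCapLambda/ (any point of a segment with both endpoints outside a convex set can only be inside if the whole sub-segment between the two ``exit'' crossings has length at least the cap's diameter $\nicefrac{\lambda}{4}$, but then the segment joining the two endpoints through the cap would already be longer than $\robotDiameter{i} \geq \nicefrac{\lambda}{4}$ — more carefully, since $\robotDiameter{i} > \nicefrac14 \geq \nicefrac{\lambda}{4}$ and the cap has diameter $\nicefrac{\lambda}{4}$, the midpoint cannot sit inside it), so in particular $\fp{i} \in N \setminus \betaHalfCapLambda/$. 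In the second case, one endpoint lies in \baseCapLambda/ and the other outside; the worst case for pushing the midpoint as low as possible (closest to the apex $B$) is when the outside endpoint is barely above \betaHalfCapLambda/ and the inside endpoint is the apex $B$ itself, and even then the midpoint lies strictly above \betaHalfCapLambda/ by the same argument as in the planar proof (the height of the cap $\mathrm{HSC}_\lambda(\nicefrac{\beta}{2})$ is half that of \baseCapLambda/, and a segment of length $> \nicefrac{\lambda}{4} \geq$ the full cap height has its midpoint above the half-height level whenever one endpoint is at the apex); any other configuration of the endpoints only moves the midpoint farther from $B$. Hence $\fp{i} \in N \setminus \betaHalfCapLambda/$ in both cases.

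The main obstacle, such as it is, is purely one of carefully bookkeeping the ``worst-case'' placement of the two endpoints of $\ell_i^{\calP}(t)$ in a $d$-dimensional cap rather than a planar segment — one must make sure that the intuitive extremal configuration (inside endpoint at the apex, outside endpoint just above the half-height chord) genuinely minimizes the height of the midpoint, which follows because along any chord of $N$ the height function (distance from the base hyperplane of the cap) behaves monotonically, exactly as in the 2D case. Since \betaHalfCapLambda/ is defined via its height $\nicefrac{h}{2}$ and the argument only ever uses one-dimensional heights measured along the axis through $B$ and the center of $N$, the dimension $d$ plays no role beyond the already-recorded fact $\mathrm{diam}(\baseCapLambda/) \leq \nicefrac{\lambda}{4}$, and the planar proof carries over with \baseSegmentLambda/, \betaHalfSegmentLambda/ replaced by \baseCapLambda/, \betaHalfCapLambda/.
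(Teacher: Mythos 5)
Your proof takes essentially the same route as the paper, whose own proof of \Cref{lemma:largeDiameterHighDim} is simply \enquote{analogous to \Cref{lemma:largeDiameter}}: the identical case distinction on the endpoints of the segment of length $\lambda \cdot \robotDiameter{i} > \nicefrac{\lambda}{4}$ versus the cap \baseCapLambda/ of diameter at most $\nicefrac{\lambda}{4}$, settled by heights measured along the axis through the apex $B$. One small caution: in the first case the diameter bound alone does not exclude the midpoint of a segment from a convex set (a long chord through a small disk has its midpoint inside); the step really rests on the observation you add at the end, namely that the height function is affine along $\ell_i^{\calP}(t) \subseteq N$, so two endpoints lying beyond the cap's base hyperplane force the midpoint beyond it as well -- which is also what the paper's two-dimensional argument implicitly uses.
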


\begin{proof}
	Analogous to the proof of \Cref{lemma:largeDiameter}.
\end{proof}

\begin{lemma} \label{lemma:smallDiametersHighDim}
	Consider a robot $r_i$ located in \mainCapLambda/.
	If all its neighbors are located outside of $\betaHalfCapLambda/$, it holds $\fp{i} \in N \setminus$ \mainCapLambda/.
	Similarly, for a robot $r_i$ that is located outside of \betaHalfCapLambda/ and that has only one neighbor located in \mainCapLambda/, it holds $\fp{i} \in N \setminus $ \mainCapLambda/.
\end{lemma}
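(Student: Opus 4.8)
The statement is the $d$-dimensional counterpart of \Cref{lemma:smallDiameters}, with circular segments replaced by hyperspherical caps, and the plan is to reprove it by an estimate that does not depend on $d$. Let $f\colon\mathbb{R}^d\to\mathbb{R}$ be the affine \emph{height function} attached to the apex $B$ of \baseCapLambda/: it vanishes on the base hyperplane of \baseCapLambda/ and equals $h$ at $B$. Then $\mathrm{HSC}_{\lambda}(c)=\{x\in N : f(x)\ge(1-c)\,h\}$ for every $0\le c\le 1$, every robot lies in $N$ and hence satisfies $f\le h$, and a robot is located outside \betaHalfCapLambda/ exactly when its height is strictly below $\nicefrac{h}{2}$. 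I would prove both assertions at once by fixing a single \emph{distinguished vertex} $v$ of \chull{i}: in the first assertion $v:=r_i\in\mainCapLambda/$, in the second $v:=r_j$ is the unique neighbour lying in \mainCapLambda/, where -- exactly as in the proof of \Cref{lemma:smallDiameters} -- one additionally uses that every other neighbour of $r_i$ lies outside \betaHalfCapLambda/, a side condition that is automatically met in the intended application (at most one position inside \mainCapLambda/ is ever occupied). In either case each element of $N_i(t)\setminus\{v\}$ has height $<\nicefrac{h}{2}$; since $f$ is affine this passes to $f(y)<\nicefrac{h}{2}$ for all $y\in Y:=\mathrm{conv}\bigl(N_i(t)\setminus\{v\}\bigr)$, and because $\chull{i}=\mathrm{conv}(\{v\}\cup Y)$ every point of \chull{i} has the form $\mu v+(1-\mu)y$ with $\mu\in[0,1]$ and $y\in Y$, so $f\bigl(\mu v+(1-\mu)y\bigr)\le\mu h+(1-\mu)\nicefrac{h}{2}=\frac{1+\mu}{2}\,h$, with equality forcing $\mu=1$.

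Now suppose, towards a contradiction, that $\fp{i}\in\mainCapLambda/$. As $\calP$ is \lambdaContracting/, $\fp{i}$ is the midpoint of a segment $\ell\subseteq\chull{i}$ of length $\lambda\cdot\robotDiameter{i}$; write its endpoints $a=\mu_a v+(1-\mu_a)y_a$ and $b=\mu_b v+(1-\mu_b)y_b$ with $y_a,y_b\in Y$. On the one hand, $f(\fp{i})=\frac12\bigl(f(a)+f(b)\bigr)\le\frac{2+\mu_a+\mu_b}{4}\,h$, and $\fp{i}\in\mainCapLambda/$ forces $f(\fp{i})\ge(1-\nicefrac{\lambda}{4})\,h=\frac{4-\lambda}{4}\,h$, hence $\mu_a+\mu_b\ge 2-\lambda$. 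On the other hand, using $|y_a-v|,|y_b-v|\le\robotDiameter{i}$,
\[
 \lambda\cdot\robotDiameter{i}\ =\ |a-b|\ \le\ |a-v|+|v-b|\ =\ (1-\mu_a)|y_a-v|+(1-\mu_b)|y_b-v|\ \le\ (2-\mu_a-\mu_b)\cdot\robotDiameter{i},
\]
so $\mu_a+\mu_b\le 2-\lambda$. Thus $\mu_a+\mu_b=2-\lambda$ and every estimate above is an equality; in particular $f(a)=\frac{1+\mu_a}{2}h$ and $f(b)=\frac{1+\mu_b}{2}h$, which forces $\mu_a=\mu_b=1$, contradicting $\mu_a+\mu_b=2-\lambda<2$. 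Hence $\fp{i}\notin\mainCapLambda/$; since $\fp{i}\in\chull{i}\subseteq N$, this gives $\fp{i}\in N\setminus\mainCapLambda/$, which is exactly the claim.

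I do not anticipate a genuine obstacle. The only mildly delicate points are the identification of the nested caps $\mathrm{HSC}_{\lambda}(c)$ with the height-slices $\{f\ge(1-c)h\}$ of the global smallest enclosing hypersphere $N$, the translation \enquote{outside \betaHalfCapLambda/}~$\Leftrightarrow$~\enquote{$f<\nicefrac{h}{2}$} for robot positions, and the equality bookkeeping at the end; once these are in place the estimate is the verbatim $d$-independent form of the planar argument, and it moreover avoids the separate two-robot sub-case that the proof of \Cref{lemma:smallDiameters} requires. Alternatively, since that proof nowhere uses $d=2$ beyond these elementary facts, one may simply declare this proof \enquote{analogous to that of \Cref{lemma:smallDiameters}}, in the same spirit in which \Cref{lemma:largeDiameterHighDim} is dispatched.
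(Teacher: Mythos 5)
Your proof is correct, but it is not the route the paper takes: for this lemma the paper only says ``analogous to the proof of \Cref{lemma:smallDiameters}'', and that planar proof works by picking the two hull points $p_1,p_2$ closest to where \chull{i} crosses the boundary of the half-height segment, using $\robotDiameter{i}\geq |p_1-p_2|$, and arguing geometrically (with a separate sub-case when \chull{i} consists of only two robots, and with ``infinitesimally above'' phrasing) that the midpoint of the length-$\lambda\cdot\robotDiameter{i}$ segment cannot lie in the small segment. You instead linearize the picture: the nested caps become superlevel sets of an affine height function, \chull{i} is decomposed as the convex hull of the unique high vertex $v$ and the low part $Y$, and the two bounds $\mu_a+\mu_b\geq 2-\lambda$ (from $f(\fp{i})\geq(1-\nicefrac{\lambda}{4})h$) and $\mu_a+\mu_b\leq 2-\lambda$ (from $|a-b|=\lambda\cdot\robotDiameter{i}$ and $|y_a-v|,|y_b-v|\leq\robotDiameter{i}$) clash in the equality analysis. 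Your version buys a genuinely dimension-independent argument (replacing the paper's unproven ``analogous'' step), treats both assertions and the two-robot degeneracy uniformly, and makes the strictness bookkeeping explicit; the paper's sketch buys brevity and picture-level intuition. One shared caveat, which you rightly flag: for the second assertion both your argument and the paper's 2D proof actually use the extra hypothesis that the remaining neighbors also lie outside \betaHalfCapLambda/, which is slightly more than the lemma's literal wording but is exactly the situation in which it is applied in \Cref{lemma:globalRadiusDecreaseHighDim}, so nothing downstream is affected.
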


\begin{proof}
	Analogous to the proof of \Cref{lemma:smallDiameters}.
\end{proof}

\begin{lemma} \label{lemma:globalRadiusDecreaseHighDim}
	For any round $t$ with $\globalDiameter \geq \nicefrac{1}{2}$, it holds $R(t+2) \leq R(t) - \frac{\lambda^3 \cdot \sqrt{2}}{256 \cdot \pi \cdot \Delta}$.
\end{lemma}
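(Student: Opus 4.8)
The plan is to transcribe the proof of \Cref{lemma:globalRadiusDecrease} almost line by line, replacing every circular segment of the global smallest enclosing hypersphere $N = N(t)$ by the corresponding hyperspherical cap and swapping in the $d$-dimensional auxiliary results \Cref{lemma:heightHighDim,lemma:largeDiameterHighDim,lemma:smallDiametersHighDim}; the only genuinely new ingredient is the elementary geometry of caps. Concretely, I would fix an arbitrary apex $B$ on the surface of $N$ (the standing hypothesis $\globalDiameter \geq \nicefrac{1}{2}$ keeps $N$, hence all caps, well defined) and let $R_S$ be the set of robots that in round $t$ lie in \mainCapLambda/ or compute a target point inside \mainCapLambda/. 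The first step shows that all robots of $R_S$ see one another: by \Cref{lemma:largeDiameterHighDim} every $r_i \in R_S$ whose target point lies in \mainCapLambda/ satisfies $\robotDiameter{i} \leq \nicefrac{1}{4}$, and since the maximal distance between two points of \baseCapLambda/ — a fortiori of \mainCapLambda/ — is at most $\nicefrac{\lambda}{4} \leq \nicefrac{1}{4}$, each robot of $R_S$ is within distance $\nicefrac{1}{4}$ of \baseCapLambda/, so by the triangle inequality any two of them are within distance $\nicefrac{1}{2} \leq V$.

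Next I would run the minimal-neighborhood argument: pick $r_{\min} \in R_S$ with the fewest visible neighbors and let $A_{\min}$ collect all robots sharing the neighborhood of $r_{\min}$. Any $r_j \in R_S \setminus A_{\min}$ sees $r_{\min}$ together with a robot invisible to $r_{\min}$, so $\robotDiameter{j} > 1 > \nicefrac{1}{4}$ and \Cref{lemma:largeDiameterHighDim} gives $\fp{j} \in N \setminus \betaHalfCapLambda/$, in particular $\fp{j} \notin \mainCapLambda/$. Every $r_i \in A_{\min}$ has $\robotDiameter{i} < \nicefrac{1}{4} < \nicefrac{1}{2}$ and all of $A_{\min}$ share one neighborhood, so the collapsing property forces all of $A_{\min}$ to a common target point. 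Hence at the start of round $t+1$ at most one position of \mainCapLambda/ is occupied, and the lone occupant has all of its neighbors outside \betaHalfCapLambda/; both cases of \Cref{lemma:smallDiametersHighDim} then apply and show \mainCapLambda/ is empty at round $t+2$. Since $B$ was arbitrary, after round $t+2$ no robot lies in any height-$(\nicefrac{\lambda}{4} \cdot h)$ cap of $N(t)$; as every point of such a cap is at distance at least $R(t) - \nicefrac{\lambda}{4} \cdot h$ from the center of $N(t)$, all robots lie in the concentric ball of that radius, whence $R(t+2) \leq R(t) - \nicefrac{\lambda}{4} \cdot h$. Plugging in $h \geq \frac{\sqrt{2} \cdot \lambda^2}{64 \pi \Delta}$ from \Cref{lemma:heightHighDim} yields $R(t+2) \leq R(t) - \frac{\lambda^3 \sqrt{2}}{256 \pi \Delta}$.

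Most of this is a faithful copy of the two-dimensional proof, so the bulk of the work is routine. The step I expect to require real care is the bookkeeping around round $t+1$: verifying that at most one position of \mainCapLambda/ is occupied \emph{and} that the lone occupant's remaining neighbors all sit outside \betaHalfCapLambda/, so that \Cref{lemma:smallDiametersHighDim} is genuinely applicable. Here one must check that $d$-dimensional caps behave like planar segments, namely that a small-diameter robot cannot, between two consecutive rounds, acquire a neighbor hidden in $\betaHalfCapLambda/ \setminus \mainCapLambda/$, and that no robot lying outside \betaHalfCapLambda/ can slip into \mainCapLambda/ without being covered by one of the two cases of \Cref{lemma:smallDiametersHighDim} (using that a robot seeing the lone occupant is, by symmetry of visibility, itself a neighbor of the lone occupant). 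Once the two elementary facts "$\mathrm{diam}(\baseCapLambda/) \leq \nicefrac{\lambda}{4}$" and "a height-$ch$ cap of $N(t)$ lies at distance $\geq R(t) - ch$ from the center" are in hand — both immediate from the cap construction preceding \Cref{lemma:heightHighDim} — the argument goes through verbatim.
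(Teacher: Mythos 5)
Your proposal is correct and follows exactly the route the paper intends: the paper's own proof of this lemma is just the remark that one repeats the proof of \Cref{lemma:globalRadiusDecrease} with segments replaced by hyperspherical caps and the height bound from \Cref{lemma:heightHighDim}, which is precisely what you have written out (including the correct final computation $\nicefrac{\lambda}{4}\cdot h \geq \frac{\lambda^3\sqrt{2}}{256\pi\Delta}$). The only difference is that you make the "analogous" steps explicit, which is a faithful expansion rather than a different argument.
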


\begin{proof}
	Analogous to the proof of \Cref{lemma:globalRadiusDecrease} by replacing the lower bound of $h$ by $ \frac{\sqrt{2} \cdot \lambda^2}{64\cdot \pi \cdot \Delta}$ (\Cref{lemma:heightHighDim}).
\end{proof}

\ClassUpperBoundHighDim*

\begin{proof}
	First, we bound the initial radius of $N$: $R(0) \leq \frac{\Delta}{\sqrt{2}}$ (\Cref{theorem:jungsTheorem}).
	\Cref{lemma:globalRadiusDecreaseHighDim} yields that $R(t)$ decreases every two rounds by at least $\frac{\lambda^3  \cdot \sqrt{2}}{256 \cdot \pi \cdot \Delta}$.
	Thus, it requires $2 \cdot \frac{256 \cdot \pi \cdot \Delta}{\lambda^3}$ rounds until $R(t)$ decreases by at least $\sqrt{2}$.
	Next, we bound how often this can happen until $R(t) \leq \frac{1}{4}$ and thus $\globalDiameter{} \leq \frac{1}{2}$ holds:
	$\frac{\Delta}{\sqrt{2}} - x \cdot \sqrt{2} \leq \frac{1}{4} \iff \frac{\Delta}{2} - \frac{1}{4 \cdot \sqrt{2}} \leq x$.

	All in all, it requires $x \cdot \frac{512 \cdot \pi \cdot \Delta}{\lambda^3} = \left(\frac{\Delta}{2} - \frac{1}{4 \cdot \sqrt{2}}\right) \cdot \frac{512 \cdot \pi \cdot \Delta}{\lambda^3 } \leq \frac{256 \cdot \pi \cdot \Delta^2}{\lambda^3 }$ rounds until $\globalDiameter{} \leq \frac{1}{2}$.
	As soon as $\globalDiameter{} \leq \frac{1}{2}$, all robots can see each other, compute the same target point and will reach it in the next round.

\end{proof}

\subsection{Proof of \Cref{lem:ab-contracting-is-lambda-contracting}}
\LemmaAlphaBetaIsLambda*

\begin{proof}
	From the definition of \alphaBetaContracting/, we know that for a target point $\fp{i}$, there exists a $\alphaCenterP{i}$ such that $\fp{i} \in \chull{i}(\alphaCenterP{i},\beta)$.
	We do the following geometric construction in \cref{fig:alpha-beta-contracting-to-lambda-contracting}.
	Let $p = \alphaCenterP{i}$ and $p' = \fp{i}$.
	We draw a line segment from \alphaCenterP{i} through $\fp{i}$ to the boundary of $\chull{i}$.
	Let $c$ be the endpoint of this line segment.
	Because $p$ is \alphaCentered/, there exists a line segment with length $\robotDiameter{i} \cdot \alpha$ through $p$, let this be the line segment $\overline{ab}$.
	The line segment $\overline{a'b'}$ is a parallel to $\overline{ab}$ inside the triangle $\triangle_{abc}$.
	We know that $p' \in  \chull{i}(p,\beta)$, therefore $|\overline{cp'}| \geq \beta |\overline{cp}|$.
	By the intercept theorem, it follows that $|\overline{a'b'}| \geq \beta |\overline{ab}| = \beta \cdot \alpha \cdot \robotDiameter{i}$.
	Because the points $a, b$ and $c$ are all inside $\chull{i}$, the entire triangle $\triangle_{abc}$ and $\overline{a'b'}$ are inside $\chull{i}$ as well.
	Therefore, $\fp{i}$ is a $\lambda$-centered point with $\lambda = \alpha \cdot \beta$.

	\begin{figure}[htbp]
		\centering
		\includegraphics[width = 0.65\textwidth]{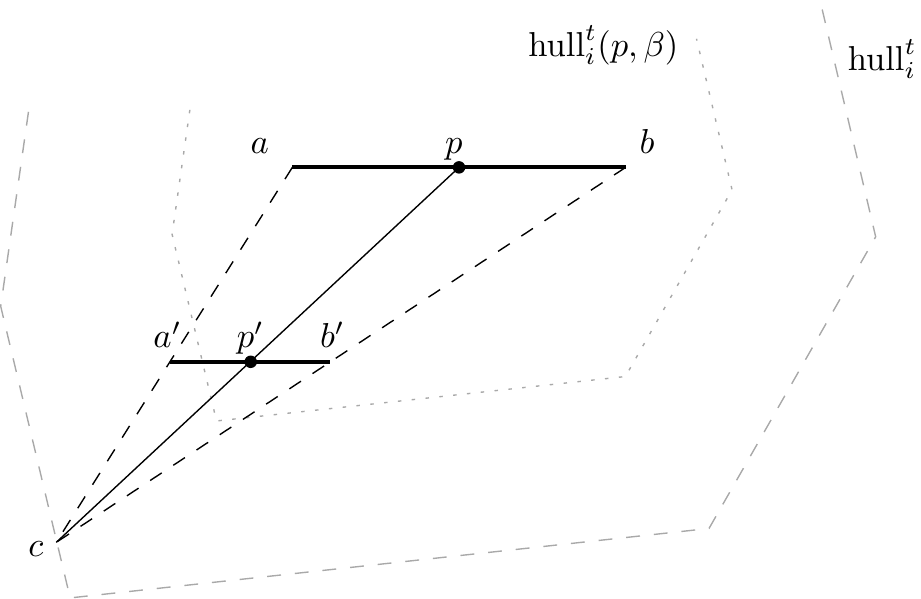}
		\caption{The construction used in the proof of \Cref{lem:ab-contracting-is-lambda-contracting}.}
		\label{fig:alpha-beta-contracting-to-lambda-contracting}
	\end{figure}
\end{proof}

\newpage
\subsection{Pseudocode of \gtcShort/}

\begin{algorithm}[htb]
	\caption{\gtc/ (view of robot $r_i$)}
	\label{algorithm:gtc2D}
	\begin{algorithmic}[1]
		\State $C_i(t) :=$ smallest enclosing circle of $N_i(t)$
		\State $c_i(t) := $ center of $C_i(t)$
		\State $\forall r_j \in N_i(t) : m_j :=$ midpoint between $r_i$ and $r_j$
		\State $D_j: $ disk with radius $\frac{1}{2}$ centered at $m_j$
		\State $\mathrm{seg} := $ line segment $\overline{p_i(t), c_i(t)}$
		\State $A := \bigcap_{r_j \in N_i(t)} D_j \cap \mathrm{seg}$
		\State $x :=$ point in $A$ closest to $c_i(t)$
		\State $\mathrm{target}_i^{\gtcShort/}(t) := x$
	\end{algorithmic}
\end{algorithm}

\subsection{Proof of \Cref{theorem:gtcAlphaBeta}}

In the following, we prove that \gtcShort/ is \alphaBetaContracting/.
First, we derive a bound on $\alpha$.

\begin{restatable}{lemma}{LemmaGtcAlpha}\label{lemma:gtcAlphaEstimation2D}
	The center of the SEC of a convex polygon $Q$ is $\frac{\sqrt{3}}{8}$-centered.
\end{restatable}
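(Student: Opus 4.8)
The plan is to show that the center $c$ of the smallest enclosing circle (SEC) of a convex polygon $Q$ is the midpoint of a line segment of length $\tfrac{\sqrt{3}}{8}\cdot\mathrm{diam}(Q)$ that lies entirely inside $Q$. The key structural fact about the SEC is that the points of $Q$ touching the boundary of the SEC cannot all lie in an open halfplane through $c$; otherwise one could shift the circle and shrink it. Hence there exist (at least) two or three touching points whose convex hull contains $c$. I would split into the two cases: either two antipodal touching points $x,y$ with $c$ on segment $\overline{xy}$, or three touching points $x,y,z$ forming a triangle containing $c$.

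First, in the two-point case the SEC has $\overline{xy}$ as a diameter, so $|xy|$ equals $2\rho$ where $\rho$ is the SEC radius, and $|xy|\le\mathrm{diam}(Q)$. Since $x,y\in Q$ and $Q$ is convex, the whole segment $\overline{xy}$ is in $Q$; it is centered at $c$ and has length $2\rho$. One then invokes the lower bound on the radius of the SEC relative to the diameter — by Jung's Theorem (\Cref{theorem:jungsTheorem}) in the plane, $\rho\ge\mathrm{diam}(Q)/\sqrt{3}$ is \emph{not} what we want; rather I need a lower bound ensuring $2\rho\ge\tfrac{\sqrt{3}}{8}\mathrm{diam}(Q)$, which is immediate since $2\rho\ge\mathrm{diam}(Q)/\sqrt{3}\ge\tfrac{\sqrt{3}}{8}\mathrm{diam}(Q)$ (indeed $1/\sqrt3\approx0.577$ while $\sqrt3/8\approx0.217$, so the two-point case is very comfortable).

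Second, the three-point case is the crux. Here $c$ lies inside triangle $\triangle xyz$ inscribed in a circle of radius $\rho$, and we know $\rho\ge\mathrm{diam}(Q)/\sqrt{3}$ only as an upper bound on $\rho$ in terms of the diameter goes the wrong way; what we actually know is $|xy|,|yz|,|zx|\le\mathrm{diam}(Q)$ and that $\triangle xyz$ is inscribed in the circle of radius $\rho$ centered at $c$. Since $c$ is the circumcenter lying strictly inside, the triangle is acute, and I want to find, through $c$, a long chord of the triangle. The natural choice: take the longest side, say $|xy|\le\mathrm{diam}(Q)$; since the triangle is acute and $c$ is inside, drop from $c$ the line parallel to $\overline{xy}$ (or better, pick a direction through $c$ and extend to the two sides of the triangle it meets). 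A clean approach is to use that for a triangle inscribed in a radius-$\rho$ circle, any line through the circumcenter cuts a chord of the triangle of length at least... — here one must be careful. The main obstacle is precisely to lower-bound the length of the longest segment through $c$ contained in $\triangle xyz$ in terms of $\rho$ (equivalently in terms of $\mathrm{diam}(Q)$, using $\mathrm{diam}(Q)\le 2\rho$ and Jung $\mathrm{diam}(Q)\ge\sqrt{3}\rho$). The worst case for an acute triangle with given circumradius is the degenerate-looking near–right-isoceles configuration; I expect the extremal ratio to come out exactly to $\tfrac{\sqrt{3}}{8}$ after optimizing over acute triangles, which matches the constant in the statement and explains the appearance of $\sqrt 3$. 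Concretely, I would parametrize the acute triangle by its three arc angles, express the length of the chord through $c$ in a fixed convenient direction (e.g. perpendicular to the longest side, or along the line from $c$ to the midpoint of the longest side), minimize, and then combine with $\mathrm{diam}(Q)\ge\sqrt{3}\rho$ from Jung's Theorem to convert the bound from $\rho$ to $\mathrm{diam}(Q)$. Finally, convexity of $Q$ guarantees this chord of $\triangle xyz$ lies in $Q$, and its midpoint — after possibly recentering, noting the midpoint of the chosen chord need not be $c$ itself — I would instead choose the chord to be symmetric about $c$ by taking the shorter of the two half-chords on either side of $c$ and doubling; this loses at most a factor of $2$, and I would absorb that into the constant, re-checking that $\tfrac12\cdot(\text{min chord through }c)\ge\tfrac{\sqrt3}{16}\mathrm{diam}(Q)$ is still enough, or sharpen the chord choice to avoid the factor-$2$ loss. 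Getting the constant to be exactly $\tfrac{\sqrt3}{8}$ rather than something weaker is the delicate bookkeeping step.
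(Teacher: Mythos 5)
Your case split (two antipodal boundary points, or three boundary points forming an acute triangle containing the center $c$) is the right frame and matches the paper's structure, and the two-point case is fine (there $c$ is even $1$-centered, since the diametral segment lies in $Q$ and $\mathrm{diam}(Q)=2\rho$). But the three-point case — which you yourself identify as the crux — is not actually proved. You only outline a plan ("parametrize by arc angles, minimize, I expect the extremal ratio to come out to $\tfrac{\sqrt3}{8}$") without carrying out the optimization, and the one concrete fallback you do offer (take the shorter half-chord on either side of $c$ and double it) would, by your own accounting, only yield a constant like $\tfrac{\sqrt3}{16}$, which does not establish the stated $\tfrac{\sqrt3}{8}$. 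Since the entire content of the lemma is this quantitative bound, leaving "the delicate bookkeeping step" open is a genuine gap, not a presentational one. There is also a small directional slip in how you invoke Jung's theorem: to convert a bound of the form "chord $\ge c_0\cdot(\text{longest triangle side }a)$" into one in terms of $\mathrm{diam}(Q)$ you need $a\ge\sqrt3\,\rho$ (Jung, \Cref{theorem:jungsTheorem}, applied to the acute triangle whose circumcircle is the SEC) together with the trivial bound $\rho\ge\mathrm{diam}(Q)/2$; the inequality you quote, $\mathrm{diam}(Q)\ge\sqrt3\,\rho$, is true but is not the one doing the work.

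For comparison, the paper's proof closes exactly the step you left open, by an explicit construction rather than an extremal-triangle optimization: place the longest side $a$ of the acute triangle on the $y$-axis with the circumcenter at $p=(x_p,0)$, and split on the height $y_a$ of the apex. If the apex is not too low ($-\tfrac a4\le y_a\le 0$), a segment parallel to $a$, centered at $p$, of length $\tfrac a4$ fits inside the triangle (shown by bounding the linear function describing side $b$ at $x_p$ below by $\tfrac a8$); otherwise one takes a segment parallel to side $b$ through $p$ and uses the intercept theorem twice to show a sub-segment of length $\tfrac23\ell\ge\tfrac a4$ can be centered at $p$. Combining $\tfrac a4$ with $a\ge\sqrt3\,\rho\ge\tfrac{\sqrt3}{2}\mathrm{diam}(Q)$ gives the claimed $\tfrac{\sqrt3}{8}\cdot\mathrm{diam}(Q)$. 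In particular, $\tfrac{\sqrt3}{8}$ is just the constant this construction produces; the paper makes no claim that it is the extremal ratio over acute triangles, so your expectation that an optimization would land exactly on $\tfrac{\sqrt3}{8}$ is speculative and not needed. To repair your proposal, you would need to supply an argument of this explicit kind (or actually perform and verify the minimization, including the centering-at-$c$ requirement, without the factor-$2$ loss).
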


\begin{proof}
	Let $C$ denote the smallest enclosing circle (SEC) of $Q$.
	We need to distinguish two cases: either two points are located on the boundary of the SEC or (at least) $3$ that form an acute triangle.
	In the first case, the two points are located on the diameter of $C$.
	Hence, the center of the smallest enclosing circle is $1$-centered since it equals to the midpoint of the two robots that define the diameter.

	In the second case, we focus on the three points that form the acute triangle.
	We denote the three points by $ABC$ and the triangle by $\triangle_{ABC}$.
	Moreover, $a,b,c$ denote the edges of $\triangle_{ABC}$.
	Since $\triangle_{ABC}$ is acute, we have that its circumcircle equals to its SEC.
	Consequently, $C$ equals to the SEC of $\triangle_{ABC}$.
	Let $p$ denote the center of $C$ and $r$ its radius.
	The aim of the proof is to show that there exists a line segment $\ell$ with midpoint $p$ that is parallel to $a,b$ or $c
	$ and has a length of $\alpha \cdot \mathrm{diam}$, where $\mathrm{diam}$ denotes the diameter of $Q$ (we will determine the concrete value for $\alpha$ shortly).
	W.l.o.g.\, we assume that $a$ is the longest edge of $\triangle_{ABC}$.
	Since $C$ is also the SEC of $\triangle_{ABC}$, $r \leq \frac{a}{\sqrt{3}}$ and, thus, also $r \cdot \sqrt{3} \leq a$ (\Cref{theorem:jungsTheorem}).
	Additionally, we have $r \geq \frac{\mathrm{diam}}{2}$.
	Hence, $\frac{\mathrm{diam} \cdot \sqrt{3}}{2} \leq a$.
	Now, we rotate the coordinate system, such that $B = \left(0,-\frac{a}{2}\right), C = \left(0, \frac{a}{2}\right)$ and $A = \left(x_a, y_a\right)$.
	See \Cref{figure:gtcAlphaCenteredTriangle} for a visualization of the setting.
	We now consider $y_a \leq 0$.
	The arguments for $y_a > 0$ can be derived analogously with swapped roles of $B$ and $C$.

	\begin{figure}[htb]
		\begin{minipage}[t]{0.49\textwidth}
			\centering
			\includegraphics[width = \textwidth]{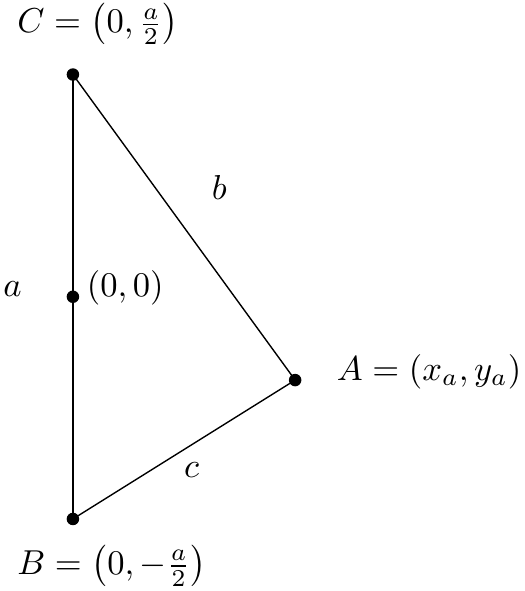}
			\caption{A visualization of $\triangle_{ABC}$, where $a$ is parallel to the $y$-axis. }
			\label{figure:gtcAlphaCenteredTriangle}
		\end{minipage}
		\hfill
		\begin{minipage}[t]{0.49\textwidth}
			\centering
			\includegraphics[width = \textwidth]{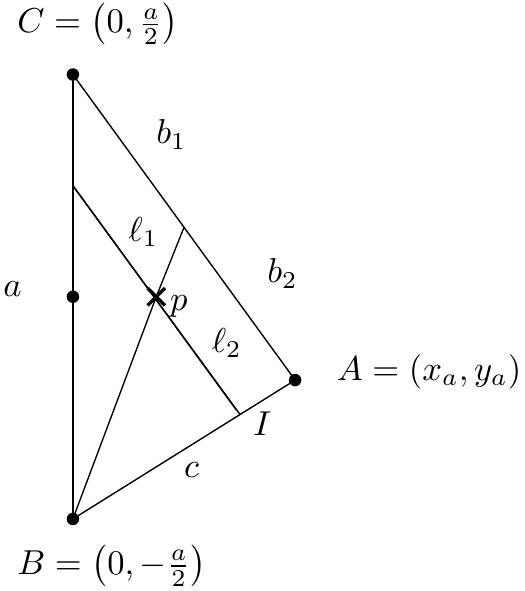}
			\caption{A visualization of the second case, where the line $\ell$ is parallel to the edge $b$. }
			\label{figure:gtcAlphaCenteredTriangleCase2}
		\end{minipage}
	\end{figure}

	Observe first, that $p = \left(x_p, 0\right)$ since $p$ is located on the intersection of the perpendicular bisectors of $a,b$ and $c$.
	Additionally, we have $x_p \leq \frac{x_a}{2}$ since both the starting points of the perpendicular bisectors of $b$ and $c$ have the $x$-coordinate $\frac{x_p}{2}$ and the bisectors have monotonically decreasing $x$-coordinates.

	Now, we distinguish two cases: $-\frac{a}{4} \leq y_a \leq 0$ and $y_a < -\frac{a}{4}$.
	In the first case, we prove that there exists a line segment parallel to $a$ with its midpoint in $p$ that as a length of at least $\frac{a}{4}$.
	We model the edge $b$ as a linear function $f(x) = - \frac{\frac{a}{2} - y_a}{x_a} \cdot x + \frac{a}{2}$.
	The value $f(x_p)$ is minimized for $x_p = \frac{x_a}{2}$ and $y_a = -\frac{a}{4}$.
	Hence,
	\begin{align*}
		f(x_p) \geq f\bigl(\frac{x_a}{2}\bigr) &= -\frac{\frac{a}{2}-y_a}{x_a} \cdot \frac{x_a}{2} + \frac{a}{2} = -\frac{a}{4} + \frac{y_a}{2} + \frac{a}{2} = \frac{a}{4} + \frac{y_a}{2} \\
		&\geq \frac{a}{4} - \frac{a}{8} = \frac{a}{8}.
	\end{align*}

	Hence, we can center a line segment of length $\frac{a}{4}$ on $p$ that is parallel to $a$ and completely contained in $\triangle_{ABC}$.

	Next, we consider $y_a < - \frac{a}{4}$.
	As long as $f(x_p) \geq \frac{a}{8}$, we can use the same arguments as for the first case.
	Thus, we assume that $f(x_p) < \frac{a}{8}$.
	Now, we show that there is a line parallel to $b$ with midpoint $p$ and with a length of at least $\frac{d \cdot \sqrt{3}}{8}$.
	Since $y_a < - \frac{a}{4}$, we conclude $b \geq \frac{3}{4}a$.
	Let $\ell$ be the line segment parallel to $y$ that is completely contained in $\triangle_{ABC}$.
	Note that $p$ does not need to be the midpoint of $\ell$, see also \Cref{figure:gtcAlphaCenteredTriangleCase2} for a depiction.
	Let $I = \left(x_i,y_i\right)$ be the intersection of $\ell$ and $c$.
	We conclude $x_i \geq \frac{x_a}{2}$ since $p$ lies on the perpendicular bisector centered in the midpoint of $c$ with $x$-coordinate $\frac{x_a}{2}$ and $\ell$ is parallel to $b$.
	Hence, $|B\,I| \geq \frac{c}{2}$.
	Applying the intercept theorem yields $\frac{|B\,I|}{c} = \frac{\ell}{b}$ and $\frac{\ell}{b} \geq \frac{1}{2} \iff \ell \geq \frac{b}{2}$.
	Since $b \geq \frac{3}{4}a$, we conclude $\ell \geq \frac{3}{8}a$.

	It remains to estimate the position of $p$ on $c$ to give a final bound for $\alpha$.
	We use the line segment starting in $B$, leading through $p$ and intersecting $b$ to split $b$ and $\ell$ into $b_1$ and $b_2$ as well as $\ell_1$ and $\ell_2$.
	See also \Cref{figure:gtcAlphaCenteredTriangleCase2} for a visualization.
	As $f(x_p) < \frac{a}{8}$, we obtain $b_1 \geq \frac{3}{8}a$ and $b_2 \leq \frac{5}{8}a$.
	Hence, also $b_1 \geq \frac{3}{8}b$.
	The intercept theorem yields $\ell_1 \geq \frac{3}{8}\ell$ and hence, we can center a line segment of length $\frac{2}{3}\ell$ in $p$ that is parallel to $b$.
	Finally, we conclude $\frac{2}{3} \ell \geq \frac{2}{3} \cdot \frac{3}{8}a = \frac{a}{4} \geq \frac{\mathrm{diam} \cdot \sqrt{3}}{8}$.
	All in all, we obtain that we can always center a line segment of length at least $\frac{\mathrm{diam} \cdot \sqrt{3}}{8}$ in $p$ that is completely contained in $Q$.
\end{proof}

Subsequently, we state general properties of SECs to derive a lower bound on the constant $\beta$ afterwards.

\begin{theorem} [\cite{chrystal1885problem}]\label{theorem:secChrystal}
	Let $C$ be the SEC if a point set S. Then, either there are two points $P,Q \in S$ on the circumference of $C$ such that the line segment $\overline{PQ}$ is a diameter of $C$ or there are three point $P,Q,R \in S$ on the circumference of $S$ such that the center $c$ of $C$ is inside the acute angled $\triangle_{PQR}$.
	Furthermore, $C$ is always unique.
\end{theorem}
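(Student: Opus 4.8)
The statement is Chrystal's classical theorem on the smallest enclosing circle, so ``my'' proof is really a reconstruction of the standard one; I treat $S$ as a finite point set, which is the only case used in this paper (the argument extends verbatim to any compact $S$, replacing each maximum over $S$ by a supremum, which is then attained). The plan is to establish, in this order: that a minimum-radius enclosing disk exists; that its center is unique; and then the structural dichotomy, obtained by locating the center $c$ relative to the convex hull of the set of contact points $B := S \cap \partial C$.

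Existence is routine: the map $c \mapsto \max_{s \in S}|s-c|$ is continuous and tends to $\infty$ as $|c| \to \infty$, hence attains its minimum $r$, giving a smallest enclosing circle $C$. For uniqueness, I would suppose $C_1$ and $C_2$ are both smallest enclosing circles (hence both of radius $r$) with distinct centers $c_1 \neq c_2$, and set $m := (c_1 + c_2)/2$ and $\delta := |c_1 - c_2|/2 > 0$. The median (parallelogram) identity gives, for every $s \in S$, $|s - m|^2 = \frac{1}{2}\bigl(|s-c_1|^2 + |s-c_2|^2\bigr) - \delta^2 \leq r^2 - \delta^2$, so all of $S$ lies in the disk of radius $\sqrt{r^2 - \delta^2} < r$ about $m$ — contradicting minimality of $r$. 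Hence the center, and therefore $C$ itself, is unique.

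For the dichotomy, I would first note $B \neq \emptyset$ (otherwise $\max_s |s - c| < r$, so $C$ could be shrunk), and then claim $c \in \mathrm{conv}(B)$. If not, since $\mathrm{conv}(B)$ is compact, convex, and misses $c$, there is a unit vector $u$ and a uniform $\varepsilon > 0$ with $\langle b - c, u\rangle \geq \varepsilon$ for all $b \in B$ (uniformity uses finiteness of $B$). Shifting the center to $c' := c + tu$ gives $|b - c'|^2 = r^2 - 2t\langle b - c, u\rangle + t^2 < r^2$ for every $b \in B$ once $0 < t < 2\varepsilon$, while the finitely many $s \in S \setminus B$ already satisfy $|s - c| < r$ and stay strictly inside for $t$ small enough; thus $S$ fits in a disk strictly smaller than $C$, a contradiction. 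So $c \in \mathrm{conv}(B)$, and by Carathéodory's theorem in the plane $c$ is a convex combination of at most three points of $B$. If two points $P, Q \in B$ already suffice, then $c$ lies on the open segment $\overline{PQ}$ with $|cP| = |cQ| = r$, which forces $c$ to be the midpoint of $\overline{PQ}$, i.e.\ $\overline{PQ}$ is a diameter of $C$. Otherwise three points $P, Q, R \in B$ are genuinely needed; they are non-collinear (three distinct concyclic points cannot be collinear), so they span a real triangle whose interior contains $c$, and from $|cP| = |cQ| = |cR| = r$ the point $c$ is the circumcenter of $\triangle_{PQR}$. Since a circumcenter lies in the open triangle exactly when all three interior angles are acute, $\triangle_{PQR}$ is acute — the second alternative. (The borderline right-angled case needs no separate treatment: a right triangle inscribed in $C$ has its hypotenuse as a diameter of $C$, so it is already captured by the first alternative.)

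The step I expect to be the crux is making the perturbation argument watertight: one must simultaneously (i) strictly decrease the distance from the shifted center to \emph{every} contact point in $B$, and (ii) keep the remaining, already-interior points of $S$ inside a disk of the same radius — which is possible only because $S$ is finite (or compact), yielding a single admissible step size $t > 0$ that works for all points at once — and then phrase the contradiction against the \emph{global} minimality of $r$ rather than against mere optimality at the current center. The only other point requiring care is the clean equivalence ``center in the open triangle $\iff$ triangle acute'', plus the observation that its right-angled boundary collapses into the diameter alternative, which is precisely why the second case of the theorem can be stated with an acute triangle.
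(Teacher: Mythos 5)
The paper does not actually prove \Cref{theorem:secChrystal}; it is invoked as a classical result with a citation to Chrystal, so there is no in-paper argument to compare against. Your reconstruction is correct and is the standard proof: existence by continuity/coercivity, uniqueness via the parallelogram (median) identity, the strict-separation-plus-perturbation argument showing the center lies in $\mathrm{conv}(S \cap \partial C)$, and Carath\'eodory in the plane yielding the dichotomy, with the two-point case forcing a diameter and the genuinely three-point case forcing the circumcenter into the open triangle, hence an acute triangle, the right-angled boundary case collapsing into the diameter alternative as you note. The only omission is the degenerate situation $r=0$ (a single-point $S$, where $c$ would be a convex combination of one contact point), which is immaterial to how the theorem is used in the paper.
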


\begin{restatable}{lemma}{LemmaGtcBeta} \label{lemma:gtcBetaEstimation2D}
	A robot $r_i$ that is at distance $d_{\mathrm{target}}$ from the center $c$ of its SEC moves at least a distance of $\frac{d_{\mathrm{target}}}{2}$ towards $c$.
\end{restatable}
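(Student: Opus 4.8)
The plan is to trace through the \gtcShort/ rule (\Cref{algorithm:gtc2D}): robot $r_i$ moves along the segment $\mathrm{seg}=\overline{p_i(t)\,c}$, where $c$ is the center of the SEC of $N_i(t)$, as far towards $c$ as possible while staying inside every limit disk $D_j=\bar{B}\bigl(m_j,\frac12\bigr)$ with $m_j=\frac12(p_i(t)+p_j(t))$, $r_j\in N_i(t)$. Writing $D:=d_{\mathrm{target}}=|p_i(t)-c|$, it suffices to show that the whole sub-segment from $p_i(t)$ to the midpoint $q:=\frac12(p_i(t)+c)$ lies in $\bigcap_j D_j$: then the point of $A=\mathrm{seg}\cap\bigcap_j D_j$ closest to $c$ is at distance at least $|p_i(t)-q|=\frac{D}{2}$ from $p_i(t)$, i.e.\ $r_i$ moves at least $\frac{D}{2}$ towards $c$.

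The one global ingredient is the radius bound $R\le 1$ for the SEC of $N_i(t)$: the disk $\bar{B}(p_i(t),1)$ already encloses $N_i(t)$ (every neighbour lies within the viewing range $1$ of $r_i$), hence the \emph{smallest} enclosing circle has radius $R\le 1$, and in particular $|c-p_j(t)|\le R\le 1$ for every $r_j\in N_i(t)$.

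With this in hand I would fix a neighbour $r_j$ and use coordinates with $p_i(t)$ at the origin and $c=(D,0)$; write $p_j(t)=(x,y)$, so $x^2+y^2\le 1$ and $m_j=(\frac{x}{2},\frac{y}{2})$. A point $(s,0)$ on $\mathrm{seg}$ lies in $D_j$ iff $(s-\frac{x}{2})^2\le\frac{1-y^2}{4}$; the origin satisfies this, and the right endpoint of the resulting $s$-interval is $s_j=\frac12\bigl(x+\sqrt{1-y^2}\bigr)$. If $D-x\le 0$ then $s_j\ge\frac{D}{2}$ is trivial; otherwise $s_j\ge\frac{D}{2}$ is equivalent, after squaring, to $(D-x)^2+y^2\le 1$, i.e.\ to $|c-p_j(t)|\le 1$, which holds by the previous paragraph. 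Since this is true for every $j$, the interval $[0,\frac{D}{2}]$ is contained in $\bigcap_j\{s:(s,0)\in D_j\}$, so $q\in\bigcap_j D_j$ and the claim follows.

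I do not expect a real obstacle here. The only point that genuinely matters is the bound $R\le 1$: without it the binding limit disk could belong to a neighbour with $|c-p_j(t)|>1$, and $r_i$ would be stopped before the midpoint. One should also note that $p_i(t)$ lies inside every $D_j$ (since $|p_i(t)-m_j|=\frac12|p_i(t)-p_j(t)|\le\frac12$), so that the feasible part of $\mathrm{seg}$ is the connected interval $[0,\min(D,\min_j s_j)]$ and the \enquote{closest to $c$} point sits at parameter $\min(D,\min_j s_j)\ge\frac{D}{2}$. Combined with $\alpha=\frac{\sqrt3}{8}$ from \Cref{lemma:gtcAlphaEstimation2D}, this $\beta=\frac12$ estimate establishes \Cref{theorem:gtcAlphaBeta}.
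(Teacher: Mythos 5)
Your proposal is correct and follows essentially the same route as the paper's proof: you show that the midpoint of $\overline{p_i(t)\,c}$ (indeed the whole subsegment up to it) lies in every limit disk $D_j$, which reduces to $|c-p_j(t)|\le 1$, exactly the inequality the paper verifies in coordinates centered at $c$. The only (harmless) difference is how the SEC radius bound $R\le 1$ is obtained: you note directly that $\bar{B}(p_i(t),1)$ encloses $N_i(t)$, whereas the paper invokes the boundary-point structure of the SEC (\Cref{theorem:secChrystal}); your derivation is a slight simplification of the same step.
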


\begin{proof}
	Let $c$ be the center of $r_i$'s SEC $C$.
	We rotate and translate the coordinate system such that $c = \left(0,0\right)$ and $r_i$ is located at $(x_i,0)$, i.e., $r_i$ is at distance $x_i$ from $c$.
	Additionally, we define $a$ to be the radius of $C$.
	Observe first that $a \leq 1$ since there must be at least one robot $r_j = (x_j,y_j)$ with $x_j \leq 0$ on the boundary of $C$ (see \Cref{theorem:secChrystal}) and $r_j$ is at distance at most $1$ from $r_i$.

	Now, let $r_k = (x_k,y_k)$ be a robot in $r_i$'s neighborhood and $m_k$ be the midpoint between $r_i$ and $r_k$.
	We will prove that $m_k$ is at distance  at most $\frac{1}{2}$ from the point $\left(\frac{x_i}{2},0\right)$.
	First of all, we calculate the coordinates of $m_k$:
	$m_k = \left(\frac{1}{2} \cdot \left(x_i+x_k\right), \frac{1}{2} \cdot y_k\right)$.
	The distance between $\left(\frac{x_i}{2}, 0\right)$ and $m_k$ is $\sqrt{\frac{1}{4} \cdot x_k^2 + \frac{1}{4} y_k^2}$.
	Basic calculus yields $\sqrt{\frac{1}{4} \cdot x_k^2 + \frac{1}{4} y_k^2} \leq \frac{1}{2} \iff -1 \leq x_k \leq 1$ and $-\sqrt{1-x_k^2} \leq y_k \leq \sqrt{1-x_k^2}$.
	Since $a \leq 1$, the inequalities for $x_k$ and $y_k$ are fulfilled.
	Hence, $r_i$ can move at least half its distance towards $c$.
\end{proof}

The combination of \Cref{lemma:gtcAlphaEstimation2D,lemma:gtcBetaEstimation2D} yields the correctness of \Cref{theorem:gtcAlphaBeta}.

\TheoremGtcAlphaBeta*

\subsection{Pseudocode of $d$-\textsc{GtC}} \label{section:appendixDGtc}

Next, we generalize the \gtcShort/ algorithm to any dimension $d$.
Instead of moving towards the smallest enclosing circle of their neighborhood, robots move towards the center of the smallest enclosing hypersphere.
The description can be found in \Cref{algorithm:gtcD}.

\begin{algorithm}[H]
	\caption{$d$-\gtc/ (view of robot $r_i$)}
	\label{algorithm:gtcD}
	\begin{algorithmic}[1]
		\State $C_i(t) :=$ smallest enclosing hypersphere of $N_i(t)$
		\State $c_i(t) := $ center of $C_i(t)$
		\State $\forall r_j \in N_i(t) : m_j :=$ midpoint between $r_i$ and $r_j$
		\State $D_j: $ hypersphere with radius $\frac{1}{2}$ centered at $m_j$
		\State $\mathrm{seg} := $ line segment $\overline{p_i(t), c_i(t)}$
		\State $A := \bigcap_{r_j \in N_i(t)} D_j \cap \mathrm{seg}$
		\State $x :=$ point in $A$ closest to $c_i(t)$
		\State $\mathrm{target}_i^{\gtcShort/}(t) := x$
	\end{algorithmic}
\end{algorithm}

\subsection{Proof of \Cref{theorem:dGtC}}

For the analysis, we first state two general properties of smallest enclosing hyperspheres.

\begin{lemma}[\cite{10.2307/2629114}]\label{lemma:hypersphereConvexCombination}
	Let $S$ be the smallest enclosing hypersphere of a set of points $P \subset \mathbb{R}^{m}$.
	The center $c$ of $S$ is a convex combination of at most $m+1$ points on the surface of $S$.
\end{lemma}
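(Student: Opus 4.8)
The plan is to characterize the center $c$ through the \emph{contact points} of the smallest enclosing hypersphere and then apply Carath\'eodory's theorem. Write $S = B(c,r)$ for the smallest enclosing hypersphere, whose existence follows by minimizing the (continuous) enclosing-radius function, and set $Q := \{p \in P : |p - c| = r\}$, the subset of $P$ lying on the surface of $S$. Since every point of $Q$ lies on the surface, it suffices to show that $c$ is a convex combination of at most $m+1$ points of $Q$. The argument splits into two steps: first, establish the geometric fact $c \in \operatorname{conv}(Q)$; second, reduce the number of points via Carath\'eodory.

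The key step is showing $c \in \operatorname{conv}(Q)$, which I would prove by contradiction using a perturbation of the center. Assume $c \notin \operatorname{conv}(Q)$. As $Q$ is compact, $\operatorname{conv}(Q)$ is a compact convex set not containing $c$, so the separating hyperplane theorem supplies a unit vector $v$ and a gap $\delta > 0$ with $\langle q - c, v \rangle \ge \delta$ for all $q \in Q$. Now shift the center to $c' := c + \varepsilon v$ for a small $\varepsilon > 0$. For each contact point $q \in Q$ one computes
\begin{align*}
	|q - c'|^2 = |q - c|^2 - 2\varepsilon \langle q - c, v\rangle + \varepsilon^2 \le r^2 - 2\varepsilon\delta + \varepsilon^2,
\end{align*}
which is strictly less than $r^2$ whenever $0 < \varepsilon < 2\delta$. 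For the remaining points $p \in P \setminus Q$ we have $|p - c| < r$ with a uniform positive gap (again by compactness), so $|p - c'| \le |p - c| + \varepsilon < r$ once $\varepsilon$ is chosen small enough. Hence $B(c', r'')$ encloses $P$ for some $r'' < r$, contradicting the minimality of $S$. Therefore $c \in \operatorname{conv}(Q)$.

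The second step is immediate: since $c$ lies in the convex hull of $Q \subseteq \mathbb{R}^m$, Carath\'eodory's theorem writes $c$ as a convex combination of at most $m+1$ points of $Q$, each of which lies on the surface of $S$ by construction. This yields exactly the claimed representation.

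I expect the perturbation argument to be the main obstacle, as it requires handling the contact and non-contact points \emph{simultaneously}: one must extract a uniform separation gap $\delta$ from the (compact) contact set to guarantee that \emph{all} boundary distances strictly decrease, while simultaneously ensuring the strictly-interior points stay inside for a common choice of $\varepsilon$. The remaining ingredients --- existence of the smallest enclosing hypersphere, the separating hyperplane theorem, and Carath\'eodory's theorem --- are standard and can be invoked directly. Uniqueness of $S$ is not needed for this statement; only existence and minimality enter the proof.
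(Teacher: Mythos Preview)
The paper does not actually prove this lemma; it is stated with a citation to \cite{10.2307/2629114} and used as a black box in the proof of \Cref{lemma:gtcAlphaEstimation}. Your proposal supplies a correct and standard proof: the perturbation argument showing $c \in \operatorname{conv}(Q)$ is exactly the classical optimality characterization of the smallest enclosing ball, and the reduction to $m+1$ points via Carath\'eodory is the right finishing step. One minor remark: your compactness appeal for the uniform gap on $P \setminus Q$ implicitly assumes $P$ is closed (or finite), which is harmless here since the paper only ever applies the lemma to finite sets of robot positions, but is worth stating explicitly if you want the argument to stand on its own.
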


\begin{lemma}[\cite{DBLP:conf/esa/FischerGK03}]\label{lemma:hypersphereSimplexCircumsphere}
	Let $T$ be a set of points on the boundary of some hypersphere $H$ with center $c$.
	$H$ is the smallest enclosing hypersphere of $T$ if and only if $c$ is a convex combination of the points in $T$.
\end{lemma}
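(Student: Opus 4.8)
The plan is to establish the two implications separately, relying throughout on the standing hypothesis that every point of $T = \{p_1,\dots,p_k\}$ lies at the common distance $r$ from $c$, where $r$ is the radius of $H$. Since all of $T$ already sits on the boundary of $H$, the sphere $H$ trivially \emph{encloses} $T$; the entire content of the lemma is whether $r$ is the \emph{minimal} enclosing radius, and the claim is that this holds precisely when $c$ lies in the convex hull $\mathrm{conv}(T)$.

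For the direction ``$H$ is the smallest enclosing hypersphere $\Rightarrow c \in \mathrm{conv}(T)$'', I would simply invoke \Cref{lemma:hypersphereConvexCombination}: if $H$ is the smallest enclosing hypersphere of $T$, then $c$ is a convex combination of at most $m+1$ support points of $T$ on the surface of $H$. As those support points all belong to $T$, we immediately obtain $c \in \mathrm{conv}(T)$. A self-contained alternative is the contrapositive: if $c \notin \mathrm{conv}(T)$, a separating-hyperplane (Gordan-type) argument produces a direction $w$ with $\langle p_i - c,\, w\rangle > 0$ for all $i$, and shifting the center infinitesimally along $w$ strictly decreases \emph{every} distance $|p_i - c|$, contradicting the minimality of $r$.

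For the converse ``$c \in \mathrm{conv}(T) \Rightarrow H$ is the smallest enclosing hypersphere'', write $c = \sum_i \lambda_i p_i$ with $\lambda_i \ge 0$ and $\sum_i \lambda_i = 1$. The key is the identity, valid for every $x \in \mathbb{R}^m$,
\begin{align*}
	\sum_i \lambda_i \, |p_i - x|^2
	&= \sum_i \lambda_i \, |p_i - c|^2 + 2\Big\langle \sum_i \lambda_i (p_i - c),\; c - x \Big\rangle + |c - x|^2 \\
	&= r^2 + |c - x|^2,
\end{align*}
where the cross term vanishes because $\sum_i \lambda_i (p_i - c) = c - c = 0$, and $\sum_i \lambda_i |p_i - c|^2 = r^2$ since each $|p_i - c| = r$. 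Thus for any candidate center $x$ the weighted mean of the squared distances equals $r^2 + |c-x|^2 \ge r^2$, so $\max_i |p_i - x| \ge r$, with strict inequality whenever $x \ne c$. Hence no center admits an enclosing radius below $r$, and $r$ is attained only at $x = c$; therefore $H$ is the (unique) smallest enclosing hypersphere of $T$.

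The computation is short, so the only genuine point of care lies in the first implication: one must ensure the ``surface points'' supplied by \Cref{lemma:hypersphereConvexCombination} are members of $T$ rather than arbitrary points of the boundary sphere. This is guaranteed because the hypersphere is the smallest enclosing one \emph{with respect to} $T$, so its center is supported by points of $T$ — exactly the standard support structure of the minimum enclosing ball. The converse direction is where the hypothesis ``all of $T$ on the boundary'' is used essentially: it forces every $|p_i - c|$ to equal $r$, which is precisely what collapses the cross term and upgrades the convex-combination assumption into the sharp lower bound $\max_i |p_i - x| \ge r$; as a bonus it also yields uniqueness of the smallest enclosing hypersphere, consistent with the classical characterization.
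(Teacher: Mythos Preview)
Your proof is correct. Both implications are handled cleanly: the forward direction via \Cref{lemma:hypersphereConvexCombination} (or the separating-hyperplane contrapositive), and the converse via the variance-type identity $\sum_i \lambda_i |p_i - x|^2 = r^2 + |c-x|^2$, which indeed forces $\max_i |p_i - x| \ge r$ with equality only at $x=c$.

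Note, however, that the paper does not give its own proof of this lemma at all: it is stated with a citation to Fischer, G\"artner, and Kutz and used as a black box in the proof of \Cref{lemma:gtcAlphaEstimation}. So there is nothing to compare against; you have supplied a self-contained argument where the paper simply imports the result. Your care about whether the support points furnished by \Cref{lemma:hypersphereConvexCombination} actually belong to $T$ is well placed---the paper's phrasing of that lemma (``points on the surface of $S$'') does not make this explicit, though the underlying classical result does guarantee it---and your separating-hyperplane alternative sidesteps the issue entirely.
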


Next, we state that the center of the smallest enclosing hypersphere is in general $\frac{\sqrt{2}}{8}$-centered, in contrast to $\frac{\sqrt{3}}{8}$ for $d=2$.

\begin{lemma}\label{lemma:gtcAlphaEstimation}
	The center of the smallest enclosing hypersphere of a convex polytope $Q \subset R^{d}$ is $\frac{\sqrt{2}}{8}$-centered.
\end{lemma}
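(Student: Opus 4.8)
}
The plan is to reduce the $d$-dimensional statement to a question about a single simplex and then run the case analysis of the planar proof (\Cref{lemma:gtcAlphaEstimation2D}) inside it. Let $S$ be the smallest enclosing hypersphere of $Q$, let $c$ be its center and $r$ its radius. By \Cref{lemma:hypersphereConvexCombination}, $c$ is a convex combination of at most $d+1$ points on the surface of $S$; discarding points that receive weight $0$ and invoking Carathéodory's theorem, we may assume the remaining points $v_0,\dots,v_m$ (with $m\le d$) are affinely independent and all carry positive weight, so $\sigma:=\mathrm{conv}\{v_0,\dots,v_m\}$ is an $m$-simplex and $c$ lies in the (relative) interior of $\sigma$. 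Since all $v_i$ lie on the surface of $S$ and $c$ is a convex combination of them, \Cref{lemma:hypersphereSimplexCircumsphere} shows that $S$ is also the smallest enclosing hypersphere of $\{v_0,\dots,v_m\}$, i.e.\ $c$ is the circumcenter of $\sigma$ with circumradius $r$. As $Q$ is convex and contains all $v_i$, we have $\sigma\subseteq Q$, and hence it suffices to find a line segment of length at least $\frac{\sqrt2}{8}\,\mathrm{diam}(Q)$ contained in $\sigma$ with $c$ as its midpoint.

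Next I would pin down the relevant lengths. Because $S$ encloses $Q$, we have $r\ge\frac12\,\mathrm{diam}(Q)$. Applying Jung's Theorem (\Cref{theorem:jungsTheorem}) to $\{v_0,\dots,v_m\}$, which spans an $m$-dimensional affine space, gives $r\le\mathrm{diam}(\{v_0,\dots,v_m\})\cdot\sqrt{\frac{m}{2(m+1)}}\le\frac1{\sqrt2}\,\mathrm{diam}(\{v_0,\dots,v_m\})$ for every $m\ge1$. Writing $a:=\mathrm{diam}(\{v_0,\dots,v_m\})$ for the length of a longest edge of $\sigma$, these two bounds combine to $a\ge\sqrt2\,r\ge\frac{\sqrt2}{2}\,\mathrm{diam}(Q)$. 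Consequently a segment of length $\frac a4$ with midpoint $c$ contained in $\sigma$ already has length $\frac a4\ge\frac{\sqrt2}{8}\,\mathrm{diam}(Q)$, so the lemma reduces to the purely geometric claim: \emph{if $c$ is the circumcenter of a simplex $\sigma$ whose longest edge has length $a$ and $c$ lies in the interior of $\sigma$, then $\sigma$ contains a segment of length at least $\frac a4$ with midpoint $c$.}

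To prove the claim I would mimic the two-dimensional argument. Fix a longest edge $\overline{v_0v_1}$ of $\sigma$; since $c$ is the circumcenter, $c$ lies on the perpendicular bisector hyperplane of $\overline{v_0v_1}$, and $a\ge\sqrt2\,r$ together with $|v_0-v_1|^2=2r^2-2\langle v_0-c,v_1-c\rangle$ yields $\langle v_0-c,v_1-c\rangle=r^2-\frac{a^2}{2}\le0$. The first candidate is the chord of $\sigma$ through $c$ parallel to $\overline{v_0v_1}$; a short computation in barycentric coordinates shows that, centered at $c$, it has length $2a\min\{\lambda_0,\lambda_1\}$, where $\lambda_0,\lambda_1$ are the barycentric weights of $c$ at $v_0,v_1$. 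If $\min\{\lambda_0,\lambda_1\}\ge\frac18$ we are done. Otherwise $c$ is barycentrically close to the facet opposite $v_0$ or opposite $v_1$; as in the plane (the case $y_a<-\frac a4$ there), this forces an edge incident to that facet to have length $\Omega(a)$ — heuristically because the orthogonal projection of $c$ onto that facet is an approximate circumcenter of it with circumradius $\approx r$, so Jung applied to the facet produces a long edge — and I would instead take the chord of $\sigma$ through $c$ parallel to that edge and estimate its centered length via the intercept theorem. The main obstacle is making this descent clean for arbitrary $d$: a facet of a $d$-simplex is itself a $(d-1)$-simplex, so one may in principle have to pass through several nested facets before hitting a sufficiently balanced long edge, and one must argue that the length lower bound that emerges is an absolute constant multiple of $a$ and does not deteriorate with $d$ (the acuteness-type estimates $\langle v_i-c,v_j-c\rangle\le0$ for the relevant pairs should prevent the recursion from going deep, exactly as the constant $\frac14$ is robust in the plane). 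Combining the claim with the length bounds above then shows that $c$ is $\frac{\sqrt2}{8}$-centered in $Q$ in the sense of \Cref{def:alpha-centered}.
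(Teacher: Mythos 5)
Your setup matches the paper's: the reduction via \Cref{lemma:hypersphereConvexCombination} and \Cref{lemma:hypersphereSimplexCircumsphere} to the circumcenter $c$ of a simplex $\sigma$ inscribed in the SEH, the bounds $r \geq \frac{1}{2}\mathrm{diam}(Q)$ and (via \Cref{theorem:jungsTheorem}) $a \geq \sqrt{2}\,r \geq \frac{\sqrt{2}}{2}\mathrm{diam}(Q)$, and the resulting target of a segment of length $\frac{a}{4}$ centered at $c$ are exactly the paper's skeleton. The problem is the core geometric claim. Your balanced case (centered chord of length $2a\min\{\lambda_0,\lambda_1\}$ parallel to the longest edge) is fine, but the unbalanced case is only a heuristic, and you acknowledge this yourself. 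Two concrete holes: (i) the projection of $c$ onto the affine hull of a facet is equidistant from that facet's vertices, i.e.\ it is a circumcenter, but it need not be the center of the facet's \emph{smallest enclosing} sphere, so Jung's Theorem does not directly convert \enquote{circumradius $\approx r$} into a long facet edge; (ii) even granting a long edge in the facet, the chord through $c$ parallel to it can again be barycentrically unbalanced, and you give no argument that this descent terminates after boundedly many steps with a dimension-independent constant, let alone the specific constant $\frac{a}{4}$ needed to land on $\frac{\sqrt{2}}{8}$. As written, the claim is therefore not proven for general $d$.

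The paper avoids the descent entirely: after reducing to the simplex it fixes the \emph{two vertices realizing the diameter} of the vertex set as $B,C$ (so the edge $a=\overline{BC}$ is the longest pairwise distance) together with an arbitrary third vertex $A$, and reruns the planar two-case computation of \Cref{lemma:gtcAlphaEstimation2D} inside the triangle $\triangle_{ABC}$ — chord parallel to $a$ when the center's offset toward $A$ is small, otherwise chord parallel to $b$ (which satisfies $b \geq \frac{3}{4}a$ by the planar geometry) combined with the intercept theorem — yielding a centered segment of length $\frac{a}{4}$ in one step, with no recursion and no dimension dependence. To repair your write-up, either complete the facet-descent with explicit dimension-independent constants (including fixing the Jung-on-a-facet step), or switch to the paper's route of reducing to a single well-chosen triangle; note that in the latter case you still need to justify how the ambient circumcenter $c$, which in general does not lie in the plane of $\triangle_{ABC}$, is related to that plane (e.g.\ via its orthogonal projection and the fact that a centered segment for the projection transfers to one for $c$), a point worth making explicit.
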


\begin{proof}
	Let $C$ denote the smallest enclosing hypersphere (SEH) of $Q$ and $c_i$ its center.
	We need to distinguish two cases: either two points are located on the boundary of SEH or (at least) $3$.
	It is well known, that $c_i$ is a convex combination of at most $d+1$ points on the boundary of $C$ (\Cref{lemma:hypersphereConvexCombination}).
	Those points form a simplex $S$.
	From \Cref{lemma:hypersphereSimplexCircumsphere}, it follows that $C$ is also the SEH of $S$ since $C$ is a circumsphere of $S$ and $c_i$ is inside of $S$.

	In case, there are only two points on the boundary of $C$, they must be the endpoints of a diameter of $C$.
	Hence, the center of the SEH is $1$-centered since it equals to the midpoint of the two points that define the diameter.

	Otherwise, $S$ consists of at least $3$ points.
	We take two points  of $S$ that have the maximal distance of all points in $S$ and denote those points as $B$ and $C$.
	Additionally, we take an arbitrary third point of $S$ and call it $A$.
	The points $A,B$ and $C$ form a triangle $\triangle_{ABC}$.
	Moreover, $a,b$ and $c$ denote the edges of $\triangle_{ABC}$.

	Let $r$ denote the radius of $C$.
	The aim of the proof is to show that there exists a line segment $\ell$ with midpoint $c_i$ that is parallel to $a,b$ or $c
	$ and has a length of $\alpha \cdot d$, where $d$ denotes the diameter of $Q$ (we will determine the concrete value for $\alpha$ shortly).
	Recall that $a$ is the longest edge of $\triangle_{ABC}$.
	Since $C$ is also the SEH of $S$, it holds $r \leq a \cdot \sqrt{\frac{d}{2 \cdot \left(d+1\right)}}$  and thus also $\frac{r}{\sqrt{\frac{d}{2 \cdot \left(d+1\right)}}} \leq a$ (\Cref{theorem:jungsTheorem}).
	Additionally, it holds $r \geq \frac{d}{2}$.
	Hence, $\frac{d}{2\cdot \sqrt{\frac{d}{2 \cdot \left(d+1\right)}}}  \leq a$.
	Now, we rotate the coordinate system, such that $B = \left(0,-\frac{a}{2}\right), C = \left(0, \frac{a}{2}\right)$ and $A = \left(x_a, y_a\right)$.
	See \Cref{figure:gtcAlphaCenteredTriangle} for a visualization of the setting.
	We now consider $y_a \leq 0$, the arguments for $y_a > 0$ can be derived analogously with swapped roles of $B$ and $C$.

	Observe first, that $c_i = \left(x_{c_i}, 0\right)$ since $x_i$ is located on the intersection of the perpendicular bisector hyplerplanes of $a,b$ and $c$.
	Additionally, it holds $x_{c_i} \leq \frac{x_a}{2}$ since both the midpoints of $b$ and $c$ have the $x$-coordinate $\frac{x_p}{2}$ and the parts of bisector hyperplanes inside of $S$ have monotonically decreasing $x$-coordinates.

	Now, we distinguish two cases: $-\frac{a}{4} \leq y_a \leq 0$ and $y_a < -\frac{a}{4}$.
	In the first case, we prove that there exists a line segment parallel to $a$ with its midpoint in $c_i$ that as a length of at least $\frac{a}{4}$.
	We model the edge $b$ as a linear function $f(x) = - \frac{\frac{a}{2} - y_a}{x_a} \cdot x + \frac{a}{2}$.
	The value $f(x_{c_i})$ is minimized for $x_p = \frac{x_a}{2}$ and $y_a = -\frac{a}{4}$.
	Hence,
	\begin{align*}
		f(x_{c_i}) \geq f\left(\frac{x_a}{2}\right) &= -\frac{\frac{a}{2}-y_a}{x_a} \cdot \frac{x_a}{2} + \frac{a}{2} = -\frac{a}{4} + \frac{y_a}{2} + \frac{a}{2} = \frac{a}{4} + \frac{y_a}{2} \\
		&\geq \frac{a}{4} - \frac{a}{8} \\
		&= \frac{a}{8}.
	\end{align*}

	Hence, we can center a line segment of length $\frac{a}{4}$ on $c_i$ that is parallel to $a$ and completely contained in $\triangle_{ABC}$.

	Next, we consider $y_a < - \frac{a}{4}$.
	As long as $f(x_{c_i}) \geq \frac{a}{8}$ holds, we can use the same arguments as for the first case.
	Thus, we assume that $f(x_{c_i}) < \frac{a}{8}$.
	Now, we show that there is a line parallel to $b$ with midpoint $c_i$ and a length of at least $\frac{d \cdot \sqrt{2}}{8}$.
	Since $y_a < - \frac{a}{4}$, it holds $b \geq \frac{3}{4}a$.
	Let $\ell$ be the line segment parallel to $y$ that is completely contained in $\triangle_{ABC}$.
	Note that $c_i$ does not need to be the midpoint of $\ell$, see also \Cref{figure:gtcAlphaCenteredTriangleCase2} for a depiction.
	Let $I = \left(x_i,y_i\right)$ be the intersection of $\ell$ and $c$.
	We conclude $x_i \geq \frac{x_a}{2}$ since $p$ lies on the perpendicular bisector centered in the midpoint of $c$ with $x$-coordinate $\frac{x_a}{2}$ and $\ell$ is parallel to $b$.
	Hence, $|B\,I| \geq \frac{c}{2}$.
	Applying the intercept theorem yields $\frac{|B\,I|}{c} = \frac{\ell}{b}$ and $\frac{\ell}{b} \geq \frac{1}{2} \iff \ell \geq \frac{b}{2}$.
	Since $b \geq \frac{3}{4}a$, we conclude $\ell \geq \frac{3}{8}a$.

	It remains to estimate the position of $c_i$ on $c$ to give a final bound for $\alpha$.
	We use the line segment starting in $B$, leading through $c_i$ and intersecting $b$ to split $b$ and $\ell$ into $b_1$ and $b_2$ as well as $\ell_1$ and $\ell_2$.
	See also \Cref{figure:gtcAlphaCenteredTriangleCase2} for a visualization.
	As $f(x_{c_i}) < \frac{a}{8}$, we obtain $b_1 \geq \frac{3}{8}a$ and $b_2 \leq \frac{5}{8}a$.
	Hence, also $b_1 \geq \frac{3}{8}b$.
	The intercept theorem yields $\ell_1 \geq \frac{3}{8}\ell$ and hence, we can center a line segment of length $\frac{2}{3}\ell$ in $p$ that is parallel to $b$.
	Finally, we conclude $\frac{2}{3} \ell \geq \frac{2}{3} \cdot \frac{3}{8}a = \frac{a}{4} \geq \frac{d}{\sqrt{\frac{d}{2 \cdot \left(d+1\right)}} \cdot 8} \geq \frac{d \cdot \sqrt{2}}{8}$ since $\lim\limits_{d \rightarrow \infty} \sqrt{\frac{d}{2 \cdot \left(d+1\right)}} = \frac{1}{\sqrt{2}}$.
	All in all, we obtain that we can always center a line segment of length at least $\frac{d \cdot \sqrt{2}}{8}$ in $c_i$ that is completely contained in $Q$.
\end{proof}

\begin{lemma} \label{lemma:gtcBetaEstimation}
	A robot $r_i$ that is at distance $d_{\mathrm{target}}$ from the center $c$ of its SEC moves at least a distance of $\frac{d_{\mathrm{target}}}{2}$ towards $c$.
\end{lemma}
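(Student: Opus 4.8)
The plan is to mirror the two–dimensional argument of \Cref{lemma:gtcBetaEstimation2D} almost verbatim, the only genuinely new ingredient being the bound on the radius of the smallest enclosing hypersphere (SEH). First I would fix coordinates so that the center $c$ of $r_i$'s SEH $C$ sits at the origin and $r_i = (d_{\mathrm{target}},0,\dots,0)$; write $a$ for the radius of $C$. We may assume $d_{\mathrm{target}}>0$, otherwise there is nothing to prove.

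The first step is to show $a \le 1$. In the planar proof this followed from Chrystal's theorem (\Cref{theorem:secChrystal}); here I would instead invoke \Cref{lemma:hypersphereConvexCombination}, which says that $c$ is a convex combination $\sum_j \lambda_j s_j$ (with $\lambda_j > 0$) of at most $d+1$ robots $s_j \in N_i(t)$ lying on the surface of $C$. Taking the inner product of $0 = \sum_j \lambda_j (s_j - c)$ with $u := r_i - c$ shows $\sum_j \lambda_j \langle s_j - c,\, u\rangle = 0$, so at least one surface robot $r_j$ satisfies $\langle r_j - c,\, r_i - c\rangle \le 0$, i.e.\ the angle $\angle r_i c r_j$ is at least $\nicefrac{\pi}{2}$. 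By the Pythagorean inequality $|r_i - r_j|^2 \ge |r_i - c|^2 + |c - r_j|^2 = d_{\mathrm{target}}^2 + a^2 \ge a^2$, and since $r_j \in N_i(t)$ we have $|r_i - r_j| \le 1$, hence $a \le 1$ (and also $d_{\mathrm{target}} = |r_i - c| \le a \le 1$).

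The second step is to exhibit a feasible target point close to $c$. Let $q := \nicefrac{1}{2}\,r_i = (\nicefrac{d_{\mathrm{target}}}{2},0,\dots,0)$, the point at distance $\nicefrac{d_{\mathrm{target}}}{2}$ from $r_i$ along the segment toward $c$. For any robot $r_k \in N_i(t)$ its limit ball $D_k$ (in the notation of \Cref{algorithm:gtcD}) is the ball of radius $\nicefrac{1}{2}$ about $m_k = \nicefrac{1}{2}(r_i + r_k)$, and a direct computation gives $|q - m_k| = \nicefrac{1}{2}\,|r_k - c| \le \nicefrac{a}{2} \le \nicefrac{1}{2}$, because $r_k$ lies inside $C$. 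Thus $q \in \bigcap_{r_k \in N_i(t)} D_k$, and since $q$ also lies on the segment $\overline{p_i(t), c}$ we get $q \in A$. As \Cref{algorithm:gtcD} moves $r_i$ to the point of $A$ closest to $c$, the new position of $r_i$ is at distance at most $|q - c| = \nicefrac{d_{\mathrm{target}}}{2}$ from $c$ and lies on $\overline{p_i(t),c}$; hence $r_i$ moves at least $\nicefrac{d_{\mathrm{target}}}{2}$ toward $c$, as claimed.

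The only point requiring care beyond the planar proof is the $a \le 1$ bound, and the mild obstacle there is phrasing "some surface robot lies on the far side of $c$" rigorously; the separating‑hyperplane / convex‑combination argument above does this cleanly, and it is the sole place where dimension‑specific facts (\Cref{lemma:hypersphereConvexCombination}, and, if one wants to additionally argue the SEH is determined by these surface points, \Cref{lemma:hypersphereSimplexCircumsphere}) enter. Everything else — the limit‑ball distance computation and the conclusion — is independent of $d$ and carries over unchanged.
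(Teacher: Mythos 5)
Your proposal is correct and follows essentially the same route as the paper's proof: bound the radius of the smallest enclosing hypersphere by $1$ via a boundary robot lying on the far side of $c$ (so at distance at least the radius from $r_i$, yet at most $1$), and then check that the halfway point $\tfrac{1}{2}\bigl(p_i(t)+c\bigr)$ lies in every limit ball of radius $\nicefrac{1}{2}$, so the feasible set contains a point at distance $\nicefrac{d_{\mathrm{target}}}{2}$ from $c$. Your convex-combination/inner-product justification of the far-side boundary robot is a somewhat more careful rendering of a step the paper simply asserts in the $d$-dimensional case, but the overall argument is the same.
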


\begin{proof}
	Let $c$ be the center of $r_i$'s SEC $C$.
	We rotate and translate the coordinate system such that $c = \left(0,0\right)$ and $r_i$ is located at $(x_i,0)$, i.e., $r_i$ is in distance $x_i$ of $c$.
	Additionally, we define $a$ to be the radius of $C$.
	Observe first that $a \leq 1$ since there must be at least one robot $r_j = (x_j,y_j)$ with $x_j \leq 0$ on the boundary of $C$ and $r_j$ is in distance at most $1$ of $r_i$.

	Now, let $r_k = (x_k,y_k)$ be a robot in $r_i$'s neighborhood and $m_k$ be the midpoint between $r_i$ and $r_k$.
	We will prove that $m_k$ is in distance of at most $\frac{1}{2}$ of the point $\left(\frac{x_i}{2},0\right)$.
	First of all, we calculate the coordinates of $m_k$:
	$m_k = \left(\frac{1}{2} \cdot \left(x_i+x_k\right), \frac{1}{2} \cdot y_k\right)$.
	The distance between $\left(\frac{x_i}{2}, 0\right)$ and $m_k$ is $\sqrt{\frac{1}{4} \cdot x_k^2 + \frac{1}{4} y_k^2}$.
	Basic calculus yields $\sqrt{\frac{1}{4} \cdot x_k^2 + \frac{1}{4} y_k^2} \leq \frac{1}{2} \iff -1 \leq x_k \leq 1$ and $-\sqrt{1-x_k^2} \leq y_k \leq \sqrt{1-x_k^2}$.
	Since $a \leq 1$ holds, the inequalities for $x_k$ and $y_k$ are fulfilled.
	Hence, $r_i$ can move at least half its distance towards $c$.
\end{proof}

\TheoremDGtc*

\begin{proof}
	A conclusion of \Cref{lemma:gtcAlphaEstimation,lemma:gtcBetaEstimation}.
\end{proof}

\subsection{Pseudocode of \gtmdShort/}

\begin{algorithm}[htb]
	\caption{\gtmd/ (view of robot $r_i$)}
	\label{algorithm:gtcmd}
	\begin{algorithmic}[1]
		\State $p_{d_1}(t), p_{d_2}(t) :=$ positions of robots in $N_i(t)$ that have the maximal distance
		\If{$p_{d_1}(t)$ and $p_{d_2}(t)$ are unique}
		\State $m_d(t) := \frac{1}{2} \cdot \left(p_{d_1}(t) + p_{d_2}(t)\right)$
		\State $\forall r_j \in N_i(t) : m_j :=$ midpoint between $r_i$ and $r_j$
		\State $D_j: $ disk with radius $\frac{1}{2}$ centered at $m_j$
		\State $\mathrm{seg} := $ line segment $\overline{p_i(t), m_d(t)}$
		\State $A := \bigcap_{r_j \in N_i(t)} D_j \cap \mathrm{seg}$
		\State $x :=$ point in $A$ closest to $m_d(t)$
		\State $\mathrm{target}_i^{\gtmdShort/}(t) := x$
		\Else
		\State Compute $\mathrm{target}^{\gtmdShort/}_i(t)$ with \gtcShort/
		\EndIf
	\end{algorithmic}
\end{algorithm}

\subsection{Proof of \Cref{theorem:gtmdContracting}}

\TheoremGtmd*

\begin{proof}
	Obviously, the midpoint of the diameter is $1$-centered and thus, we obtain $\alpha = 1$.
	In the following, $\mathrm{md}_i(t)$ denotes the midpoint of the diameter of $r_i$ in round $t$.
	The value for $\beta$ is more difficult to determine.
	The main difference between \gtcShort/ and \gtmdShort/ is that each robot that is not located on the SEC of its neighborhood moves towards in \gtcShort/ whereas some robots are not allowed to move towards the midpoint of the diameter in \gtmdShort/.
	However, we will show that those robots lie already inside of $\chull{i}\left(\mathrm{md}_i(t), \frac{1}{10}\right)$ such that \gtmdShort/ is still $\left(1,\frac{1}{10}\right)$-contracting.

	W.l.o.g., we assume that $V=1$.
	Analogously, results for arbitrary $V$ can be derived.
	Consider any robot $r_i$.
	We first show that $\robotDiameter{i} \leq \frac{2}{\sqrt{3}}$ implies that $r_i$ is allowed to move half its distance towards $\mathrm{md}_i(t)$.
	Let $r_{d_1}$ and $r_{d_2}$ be the two robots that define the local diameter of $r_i$.
	We rotate and translate the global coordinate system such that $\mathrm{md}_i(t) = \left(0,0\right), p_{d_1} = (0, \frac{\robotDiameter{i}}{2})$, $p_{d_2} = \left(0,-\frac{\robotDiameter{i}}{2}\right)$ and $p_i(t) = \left(x_i(t), y_i(t)\right)$ with $x_i(t) \geq 0$ and $y_i(t) \geq 0$.
	Next, we assume that $\robotDiameter{i} \leq \frac{2}{\sqrt{3}}$.
	Consider any robot $r_j $ located at $p_j(t) = \left(x_j(t), y_j(t)\right)$.
	By definition, $m_j(t) = \left(\frac{1}{2} \cdot \left(x_i(t) + x_j(t)\right), \frac{1}{2} \cdot \left(y_i(t) + y_j(t)\right) \right)$.
	The distance between $m_j(t)$ and $\left(\frac{1}{2}x_i(t), \frac{1}{2}y_i(t)\right)$ is $\sqrt{\frac{1}{4}x_j(t)^2 + \frac{1}{4}y_j(t)^2}$.
	Next, observe that the maximum $x_j(t)$ is reached when $r_j$ is located on the $x$-axis as far as possible to the right.
	More formally, we can upper bound the maximal $x_j$ as follows: $\sqrt{x_j(t)^2 + \frac{\robotDiameter{i}^2}{4}} = \robotDiameter{i} \iff x_j(t) = \frac{\sqrt{3}}{2} \cdot \robotDiameter{i}$.
	For $\robotDiameter{i} \leq \frac{2}{\sqrt{3}}$, we obtain $-1 \leq x_j(t) \leq 1$.
	Hence, all robots $r_j$ that are at a distance of at most $1$ of $r_i$ fulfill this criterion and hence, $r_i$ is able to move half its distance towards $\mathrm{md}_i(t)$.

	Next, we assume that $\frac{2}{\sqrt{3}} < \robotDiameter{i} \leq 2$.
	As mentioned earlier, we only consider robots that are not already located in $\chull{i}\left(\mathrm{md}_i(t), \frac{1}{10}\right)$.
	For simplicity, we assume that $r_i$ is located on the boundary of $\chull{i}$, the arguments for the case where $r_i$ lies inside $\chull{i} \setminus \chull{i}\left(\mathrm{md}_i(t), \frac{1}{10}\right)$ are analogous.
	To determine the edges of \chull{i}, we only can take the positions of three robots for granted: the positions of $r_{\mathrm{diam,1}}, r_{\mathrm{diam,2}}$ and of $r_i$ (which even might be either $r_{d_1}$ or $r_{d_2}$).
	We model the edges of \chull{i} as linear function.
	The first function is $f(x) = \left(\frac{\left(y_i(t) - \frac{\robotDiameter{i}}{2}\right)}{x_i(t)}\right) \cdot x+ \frac{\robotDiameter{i}}{2} =   \left(\frac{\left(2y_i(t) - \robotDiameter{i}\right)}{2x_i(t)}\right) \cdot x+ \frac{\robotDiameter{i}}{2}$ and describes the edge through $r_{d_1}$.
	The second function $g(x) = \left(\frac{\left(2y_i(t) +2\robotDiameter{i}\right)}{2x_i(t)}\right) \cdot x+ \frac{\robotDiameter{i}}{2}$ describes the edge through $r_{d_2}$.
	The setup is depicted in \Cref{figure:gtmdProofSetup}.

	\begin{figure}[htbp]
		\centering
		\includegraphics[width = 0.5\textwidth]{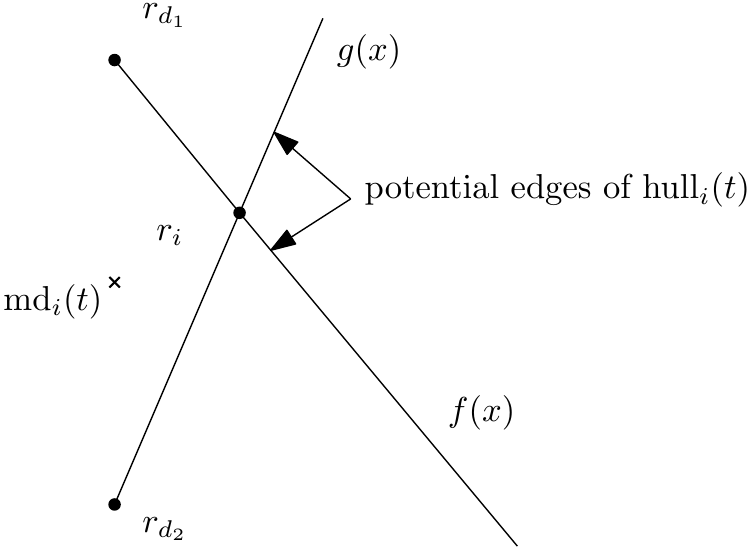}
		\caption{The robot $r_i$ wants to move towards $\mathrm{md}_i(t)$. Since we assume that $r_i$ is an edge of $\chull{i}$, there cannot be any robot beyond the lines described by the functions $f(x)$ and $g(x)$. }
		\label{figure:gtmdProofSetup}
	\end{figure}
	We focus on the function $f(x)$, the arguments for $g(x)$ are analogous.
	Next, we determine the largest $x$-coordinate a robot $r_j$ might have.
	Note that it must hold $|p_{d_1}(t), p_j(t)| < \robotDiameter{i}$, otherwise, the two robots that define $\robotDiameter{i}$ are not be unique and $r_i$ would move according to \gtcShort/.
	Hence,

	\begin{align*}
		\sqrt{x_j(t)^2 + \left(\frac{\left(2y_i(t) - \robotDiameter{i}\right)}{2x_i(t)}\right)^2 \cdot x_j(t)^2} < \robotDiameter{i} \iff x_j(t) < \frac{\robotDiameter{i}}{\sqrt{1+\left(\frac{\left(2y_i(t) - \robotDiameter{i}\right)}{2x_i(t)}\right)^2}}
	\end{align*}

	Next, we determine the maximal value of $y_i(t)$ given a fixed $x_i(t)$.
	\begin{align*}
		\sqrt{x_i(t)^2 + \left(y_i(t) + \frac{\robotDiameter{i}}{2}\right)^2} \leq 1 \iff y_i(t) \leq \frac{1}{2} \cdot \left(2\cdot \sqrt{1-x_i(t)^2} - \robotDiameter{i}\right)
	\end{align*}

	Thus, we obtain a more precise upper bound on $x_j(t)$:

	\begin{align*}
		x_j(t) < \frac{\robotDiameter{i}}{\sqrt{1+\left(\frac{\left( 2 \cdot \sqrt{1-x_i(t)^2}- 2\robotDiameter{i}\right)}{2x_i(t)}\right)^2}}
	\end{align*}

	One can verify $\frac{\robotDiameter{i}}{\sqrt{1+\left(\frac{\left( 2 \cdot \sqrt{1-x_i(t)^2}- 2\robotDiameter{i}\right)}{2x_i(t)}\right)^2}} \leq x_i(t) + \frac{2}{3} \robotDiameter{i}$ for $\frac{2}{\sqrt{3}} \leq \robotDiameter{i} \leq 2$.
	Hence, the maximal $x_j(t)$ is less than $x_i(t) + \frac{2}{3}\robotDiameter{i}$.
	Now, we consider the point $p_{\mathrm{dest}} = \left(\frac{9}{10}x_i(t), \frac{9}{10}y_i(t)\right)$ and bound the distance between $p_{\mathrm{dest}}$ and $p_j(t)$.
	It is easy to verify that $p_{\mathrm{dest}}$ is in distance of at most $\frac{1}{2}$ of any such robot $r_j$.
	The same arguments apply to the $r_j$ with minimal $x$-coordinate and thus, also for all in between.
	Hence, we obtain $\beta = \frac{1}{10}$.
\end{proof}
\newpage
\subsection{Pseudocode of \gtcdmbShort/}

\begin{algorithm}[htb]
	\caption{\gtcdmb/ (view of robot $r_i$)}
	\label{algorithm:gtcdmb}
	\begin{algorithmic}[1]
		\State $p_{d_1}(t), p_{d_2}(t) :=$ positions of robots in $N_i(t)$ that have the maximal distance

		\If{$p_{d_1}(t)$ and $p_{d_2}(t)$ are unique}
		\State rotate local coordinate system such that $p_{d_1}(t) = \left(0, \frac{\robotDiameter{i}}{2}\right)$ and  $p_{d_2}(t) = - p_{d_1}(t)$
		\State $x_{max} :=$ maximal $x$-coordinate of a robot $r_j \in N_i(t)$
		\State $x_{min} :=$ minimal $x$-coordinate of a robot $r_j \in N_i(t)$
		\State $p_{\mathrm{box}}(t) := \left(\frac{1}{2} \cdot \left(x_{min} + x_{max}\right),0\right) $
		\State $\forall r_j \in N_i(t) : m_j :=$ midpoint between $r_i$ and $r_j$
		\State $D_j: $ disk with radius $\frac{1}{2}$ centered at $m_j$
		\State $\mathrm{seg} := $ line segment $\overline{p_i(t), p_{\mathrm{box}}(t)}$
		\State $A := \bigcap_{r_j \in N_i(t)} D_j \cap \mathrm{seg}$
		\State $x :=$ point in $A$ closest to $p_{\mathrm{box}}(t)$
		\State $\mathrm{target}_i^{\gtcdmbShort/}(t) := x$
		\Else
		\State Compute $\mathrm{target}^{\gtcdmbShort/}_i(t)$ with \gtcShort/
		\EndIf
	\end{algorithmic}
\end{algorithm}

\subsection{Proof of \Cref{theorem:gtcdmbAlphaBeta}}

\TheoremGtcdmb*

\begin{proof}
	The constants can be proven analogously to the previous algorithm.
	The same proof as in \Cref{lemma:gtcAlphaEstimation2D} can be used to derive that the center of the minbox is $\frac{\sqrt{3}}{8}$-centered.
	The longest of the triangle simply needs to be replaced by the diameter, i.e., by the line segment connecting $p_{d_1}(t)$ and $p_{d_2}(t)$.
	Moreover, the constant $\beta = \frac{1}{10}$ can be proven with the same setup as used in the proof of  \Cref{theorem:gtmdContracting}.
\end{proof}

\newpage

\section{Details and Proofs of \Cref{section:collisionlessProtocols} \enquote{Collisionless Near-Gathering Protocols}} \label{section:appendixSection4}

\subsection{Detailed Intuition and Proof Outline} \label{section:collisionlessIntuition}

In the following, we describe the technical intuitions behind the protocol $\pCL$.
Since the intuition is closely interconnected with the formal analysis, we also give a proof outline here.
The proofs of all stated lemmas and theorems can be found in \Cref{section:collisionFreeAnalysis}.
The entire protocol $\pCL$ is described in \Cref{section:protocolPCl}.
As mentioned in the introduction of this section, the main intuition of the protocol $\pCL$ is the following:
robots compute a potential target point based on a \lambdaGathering/ $\calP$ (that uses a viewing range of $V$), restrict the maximum movement distance to $\nicefrac{\tau}{2}$ and use the viewing range of $V+\tau$ to avoid collisions with robots in the distance at most $\tau$.
However, there are several technical details we want to emphasize in this section.

For the correctness and the runtime analysis of the protocol $\pCL,$ we would like to use the insights into \lambdaContracting/ protocols derived in \Cref{section:alphaBetaContractingStrategies}.
However, since the robots compute their potential target point based on a \lambdaGathering/ $\calP$ with viewing range $V$, this point must not necessarily be \lambdaCentered/concerning the viewing range of $V+\tau$.
We discuss this problem in more detail in \Cref{section:intuitionPTau} and motivate the \emph{intermediate} protocol $\pTau$ that is \lambdaContracting/ with respect to the viewing range of $V+\tau$.
$\pTau$ is only an intermediate protocol since robots still may collide.
Afterward, we emphasize the importance of keeping \vubg{t} always connected.
We derive some general properties of \emph{not}-collisionfree \lambdaContracting/ protocols with a viewing range of $V+\tau$ while $\vubg{t}$ is always connected in \Cref{section:intuitionInitialStrongDistance}.
These properties especially hold for the intermediate protocol $\calP_{\tau}$.
Lastly, we argue how to transform the intermediate protocol $\calP_{\tau}$ into a collisionfree protocol $\pCL$ that is still \lambdaContracting/ in \Cref{section:intuitionCollisionAvoidance}.

\subsubsection{The protocol $\calP_{\tau}$} \label{section:intuitionPTau}

The main goal is to compute potential target points based on a \lambdaGathering/ protocol $\calP$ with viewing range $V$.
Let us ignore the collision avoidance in this section and only concentrate on the \lambdaContracting/ properties of such a protocol applied to a scenario with a viewing range of $V+\tau$.
Unfortunately, a direct translation of the protocol loses the \lambdaContracting/ property in general.
Consider the following example, which is also depicted in \Cref{fig:intuitionNotAlpha}.
Assume there is a robot $r_i$ that can only observe one other robot $r_j$ in distance $\nicefrac{2}{n}$.
Hence, $\robotDiameter{i} = \nicefrac{2}{n}$.
Now assume that under the protocol $\calP$ with viewing range $V$, it moves to the midpoint between itself and the other robot (every other valid \lambdaCentered/ point results in the same argumentation).
By definition, the midpoint is $1$-centered (\cref{def:alpha-centered}).
By increasing $r_i$'s viewing range to $V+\tau$, $r_i$ might observe a robot it could not see before.
In our example, it observes the robot $r_k$ which is exactly at a distance of $V+\tau$ from $r_i$ in opposite direction to $r_j$.
Thus, \robotDiameter{i} increases significantly to $\nicefrac{1}{n} + V + \tau$.
Now assume $r_i$ still moves to the midpoint between $r_i$ and $r_j$.
The maximal line segment with the target point as the center has length $\nicefrac{2}{n}$.
However, there is no \emph{constant} $\lambda$ such that $\nicefrac{2}{n} \geq \lambda \cdot \left(\nicefrac{1}{n} + V + \tau\right)$ for any $n$ ($\lambda$ depends on $n$).
Hence, in the example, the protocol $\calP$ is not \lambdaContracting/ (\cref{def:alpha-beta-contracting}) anymore because of the viewing range of $V+\tau$.

Next, we argue how to transform the protocol $\calP$ with viewing range $V$ into a protocol $\calP_{\tau}$ with viewing range $V+\tau$ such that $\calP_{\tau}$ is \lambdaGathering/.
The example above already emphasizes the main problem: robots can have very small local diameters $\robotDiameter{i}$.
Instead of moving according to the protocol $P$, those robots compute a target point based on a protocol $\pVTau$ that is \lambdaGathering/concerning the viewing range of $V+\nicefrac{\tau}{2}$ (the difference to $\pTau$'s viewing range is intended).
Usually, the protocol $\pVTau$ simulates $\calP$ for the larger instance by scaling the local coordinate system by factor $\frac{V}{V + \nicefrac{\tau}{2}}$.
For instance, $\calP$ can be \gtcShort/ that moves robots towards the smallest enclosing hypersphere (SEH) of all robots at a distance of at most $V$ and $\pVTau$ does the same with all robots at a distance of at most $V+\nicefrac{\tau}{2}$.
In general, however, $\pVTau$ could also be a different protocol (known to the robots to ensure collision avoidance later).
More precisely, robots $r_i$ with $\robotDiameter{i} \leq \nicefrac{\tau}{2}$ compute their target points based on $\pVTau$ and all others according to $\calP$.
In addition, $\calP_{\tau}$ ensures that no robot moves more than a distance of $\nicefrac{\tau}{2}$ towards the target points computed in $\calP$ and $\pVTau$.
This has two reasons.
The first reason is to maintain the connectivity of \vubg{t}.
While the protocol $\calP$ maintains connectivity by definition, the protocol $\pVTau$ could violate the connectivity of \vubg{t}.
However, restricting the movement distance to $\nicefrac{\tau}{2}$ and upper bounding $\tau$ by $\nicefrac{2}{3}$V resolves this issue since for all robots $r_i$ that move according to $\pVTau$, $\robotDiameter{i} \leq \nicefrac{\tau}{2}$.
Hence, after moving according to $\pVTau$, the distance to any neighbor is at most $\nicefrac{3}{2}\cdot \tau$.
Since $\tau$ is upper bounded by $\nicefrac{2}{3}V$, the distance is at most $V$ afterward.

\begin{restatable}{lemma}{LemmaConnectivityRange}
	\label{lem:connectivityrange}
	Let $\calP$ be a \lambdaGathering/ with viewing range of $V$.
	\vubg{t} stays connected while executing $\pTau$.
\end{restatable}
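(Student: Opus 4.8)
The plan is to prove the slightly stronger statement that $\pTau$ is \emph{edge preserving} — whenever $|p_i(t)-p_j(t)|\le V$, then $|p_i(t+1)-p_j(t+1)|\le V$ — and then conclude by induction on $t$: since $\vubg{0}$ is connected and the vertex set $R$ is fixed, $E^V(t)\subseteq E^V(t+1)$ keeps $\vubg{t+1}$ connected. So I fix a round $t$ and two robots $r_i,r_j$ with $|p_i(t)-p_j(t)|\le V$ and distinguish cases according to which branch of \cref{algorithm:targetPointCollisionlessGTC} (line~\ref{line:cond-1:tPclGTC}) each robot takes. The single structural fact used in both cases is that under $\pTau$ \emph{every} robot moves a distance of at most $\nicefrac{\tau}{2}$ (line~\ref{line:cond-2:tPclGTC}), irrespective of whether its point $P_i$ comes from $\calP$ or from $\pVTau$.

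First suppose at least one of the two, say $r_i$, takes the $\pVTau$ branch. Then the guard in line~\ref{line:cond-1:tPclGTC} states that all robots within distance $V$ of $r_i$ have pairwise distance at most $\nicefrac{\tau}{2}$; applying this to the pair $r_i,r_j$ gives $|p_i(t)-p_j(t)|\le\nicefrac{\tau}{2}$. Since both robots move at most $\nicefrac{\tau}{2}$, the triangle inequality yields $|p_i(t+1)-p_j(t+1)|\le \nicefrac{\tau}{2}+\nicefrac{\tau}{2}+\nicefrac{\tau}{2}=\nicefrac{3\tau}{2}\le \nicefrac{3}{2}\cdot\nicefrac{2}{3}\cdot V = V$, using $\tau\le\nicefrac{2}{3}V$. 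This disposes of every edge incident to a robot running $\pVTau$ — precisely the edges for which connectivity is not already granted by $\calP$ itself.

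Otherwise both $r_i$ and $r_j$ set $P_i=\fp{i}$ and $P_j=\fp{j}$ via $\calP$ (on their respective neighbors within distance $V$) and move toward these points, truncated to distance at most $\nicefrac{\tau}{2}$ along the segments $[p_i(t),\fp{i}]$ and $[p_j(t),\fp{j}]$. Here I use that $\calP$ is connectivity preserving, in the \emph{local} form in which this is realized by every concrete protocol (e.g.\ \gtcShort/, cf.\ \cref{algorithm:gtc2D}): for every $V$-neighbor $r_j$ of $r_i$ the target $\fp{i}$ lies in the limit disk $D_{ij}:=B\bigl(\nicefrac{1}{2}(p_i(t)+p_j(t)),\,\nicefrac{V}{2}\bigr)$. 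Since $p_i(t)$ has distance $\nicefrac{1}{2}|p_i(t)-p_j(t)|\le\nicefrac{V}{2}$ from the center of $D_{ij}$ and $D_{ij}$ is convex, the whole segment $[p_i(t),\fp{i}]$, and hence the truncated point $p_i(t+1)$, lies in $D_{ij}$; by symmetry ($D_{ij}=D_{ji}$) also $p_j(t+1)\in D_{ij}$, so $|p_i(t+1)-p_j(t+1)|\le 2\cdot\nicefrac{V}{2}=V$. Truncating $\calP$'s step at $\nicefrac{\tau}{2}$ therefore cannot break this edge, and the induction goes through.

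The only delicate point is this last case: one has to know that shortening a robot's $\calP$-move cannot separate two $V$-neighbors, which does \emph{not} follow from the bare statement \enquote{$\calP$ keeps $\vubg{t}$ connected}, but from the stronger local property that $\calP$ confines each robot to the limit disks of its $V$-neighbors along its entire movement. All \lambdaGathering/s considered in this paper have this property by construction; the argument above isolates exactly why it is the property that is needed, and making this precise (or building the limit-disk behavior explicitly into $\pTau$) is where the care lies.
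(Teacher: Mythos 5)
Your proof follows the paper's route on the decisive case and is more scrupulous on the other. The case in which at least one endpoint of a $V$-edge takes the \pVTau{} branch is argued exactly as in the paper's proof: the guard in line~\ref{line:cond-1:tPclGTC} of \cref{algorithm:targetPointCollisionlessGTC} forces $|p_i(t)-p_j(t)|\le\nicefrac{\tau}{2}$, every robot moves at most $\nicefrac{\tau}{2}$, so the new distance is at most $\nicefrac{3}{2}\,\tau\le V$. The difference lies in the case where both robots follow the $\calP$ branch: the paper settles it with the single sentence that $\calP$ preserves connectivity with respect to $V$, while you point out, correctly, that this bare graph-level property does not by itself survive the $\nicefrac{\tau}{2}$-truncation, and you repair this by assuming that each target lies in the limit disks of all $V$-neighbors (as in \gtcShort/, \gtmdShort/, \gtcdmbShort/). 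Two remarks. First, that repair is not available at the lemma's level of generality: a \lambdaGathering/ is only required to maintain connectivity of \vubg{t}, so your argument establishes the statement only for protocols with the additional limit-disk structure — a restriction you acknowledge, and one that exposes an imprecision the paper's own one-sentence treatment shares. Second, there is a repair internal to the paper's framework, and it is essentially the argument the paper itself uses one lemma later for $\pCL$ (proof of \cref{lem:connectivityrange-cl}): reading connectivity preservation edge-wise in the semi-synchronous sense gives $|\fp{i}-\fp{j}|\le V$, $|\fp{i}-p_j(t)|\le V$ and $|p_i(t)-\fp{j}|\le V$ for every $V$-edge, and since the distance to a fixed point is convex along a segment, any pair of (possibly truncated) points on $[p_i(t),\fp{i}]$ and $[p_j(t),\fp{j}]$ then also stays within $V$; this disposes of the truncation without any protocol-specific limit disks. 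So: same argument as the paper where the two proofs overlap, a legitimate extra concern where they do not, but your resolution imports a hypothesis the lemma does not grant, where a definition-level convexity argument would have sufficed.
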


The second reason is that moving at most $\nicefrac{\tau}{2}$ makes sure that collisions are only possible within a range of $\tau$.
This is crucial for our collision avoidance which is addressed in the following section.

While $\pTau$ has a viewing range of $V + \tau$, it never uses its full viewing range for computing a target point.
Either, it simulates $\calP$ with a viewing range of $V$, or $\pVTau$ with one of $V +\nicefrac{\tau}{2}$.
Technically, the same problem as described above can still happen:
The robots in range of $V$ or $V + \nicefrac{\tau}{2}$ have a relatively small diameter while $\robotDiameter{i} > V +\nicefrac{\tau}{2}$.
Nevertheless, contrary to the example above, the robots in the smaller range cannot have an arbitrary small diameter in such a configuration.
$\pVTau$ is simulated if the robots in $\calP$ have a diameter $\leq \nicefrac{\tau}{2}$.
The above discussed \vubg{t} is connected.
It is observable that, if $\robotDiameter{i} \geq \nicefrac{\tau}{2}$, the $V +\nicefrac{\tau}{2}$ surrounding must have a diameter $\geq \nicefrac{\tau}{2}$.
The diameter of robots used for the simulation of $\calP$ or $\pVTau$ cannot be less than $\robotDiameter{i} \cdot \Omega(\nicefrac{\tau}{V})$.
The constant $\lambda$ mentioned above can be chosen accordingly.

\begin{restatable}{lemma}{LemmaPTauABGathering}
	\label{lem:pTau-alpha-beta-gathering}
	Let $\calP$ be a \lambdaGathering/.
	$\pTau$ is a \lambdaPrimeGathering/ with $\lambda' = \lambdaPrimeValuePTau$.
\end{restatable}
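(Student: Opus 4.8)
The plan is to verify the four defining properties of a \lambdaPrimeGathering/ (\cref{def:lambda-contracting,def:lambda-gathering}) for $\pTau$ in turn: connectivity preservation, invariance, the collapsing property, and $\lambda'$-centeredness of every target point with respect to the viewing range $V+\tau$. Connectivity preservation is precisely \cref{lem:connectivityrange}. Invariance is inherited from $\calP$: the rescaled protocol $\pVTau$ is invariant, and the branching test in \cref{algorithm:targetPointCollisionlessGTC} (whether all robots within distance $V$ have pairwise distance at most $\nicefrac{\tau}{2}$) and the ``move at most $\nicefrac{\tau}{2}$'' truncation are coordinate-free. For the collapsing property I would fix a small constant $c_{\pTau}\le\nicefrac{\tau}{2}$: if $N_i(t)=N_j(t)$ and $\robotDiameter{i}=\robotDiameter{j}\le c_{\pTau}$, then every neighbour lies within distance $\nicefrac{\tau}{2}$ of $r_i$ (resp.\ $r_j$), so both robots take the $\pVTau$-branch, their $(V+\nicefrac{\tau}{2})$-neighbourhoods coincide with $N_i(t)=N_j(t)$, and the collapsing property of $\calP$ (which $\pVTau$ inherits, scaling only rescaling the threshold) makes them compute the same point $P_i=P_j$; since $|P_i-p_i(t)|\le\robotDiameter{i}\le\nicefrac{\tau}{2}$ the truncation does not fire and $\fpTau{i}=P_i=P_j=\fpTau{j}$.

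The substance is the $\lambda'$-centeredness, for which I would isolate two facts. (i) \emph{The local diameter does not collapse when the viewing range shrinks from $V+\tau$ to $V+\nicefrac{\tau}{2}$:} writing $D_i$ for the diameter of the robots within distance $V+\nicefrac{\tau}{2}$ of $r_i$, I claim $D_i\ge\min(\robotDiameter{i}/2,\nicefrac{\tau}{2})\ge\frac{\tau}{4(V+\tau)}\cdot\robotDiameter{i}$, the last step using $\robotDiameter{i}\le 2(V+\tau)$ and $\tau\le 2(V+\tau)$. To prove it, pick $r_a,r_b\in N_i(t)$ realising $\robotDiameter{i}$; one of them, say $r_a$, satisfies $|p_a(t)-p_i(t)|\ge\robotDiameter{i}/2$. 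If $|p_a(t)-p_i(t)|\le V+\nicefrac{\tau}{2}$ we are done; otherwise we walk a path from $r_i$ to $r_a$ in $\vubg{t}$ (connected by \cref{lem:connectivityrange}) and let $r_c$ be the last vertex still within distance $V+\nicefrac{\tau}{2}$ of $r_i$ — its successor lies at distance $>V+\nicefrac{\tau}{2}$ and within $V$ of $r_c$, forcing $|p_c(t)-p_i(t)|>\nicefrac{\tau}{2}$. (ii) \emph{A truncation-handling step:} if $P$ is the midpoint of a segment of length $L$ inside a convex set $Q$ with $p_i(t)\in Q$, and $q$ lies on $[p_i(t),P]$ at distance $\nicefrac{\tau}{2}<|P-p_i(t)|=:D$ from $p_i(t)$, then the homothety centred at $p_i(t)$ with ratio $\frac{\tau/2}{D}$ keeps the segment inside $Q$ and sends $P$ to $q$, so $q$ is the midpoint of a segment of length $\frac{\tau/2}{D}L$ in $Q$; as $p_i(t),P\in Q$ we have $D\le\mathrm{diam}(Q)$, whence this length is at least $\frac{\tau}{2\,\mathrm{diam}(Q)}L$.

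Given (i) and (ii) I would finish with a four-way case split according to the branch of \cref{algorithm:targetPointCollisionlessGTC} and whether the $\nicefrac{\tau}{2}$-truncation fires. In every case the relevant neighbourhood is a subset of $N_i(t)$, so the convex set $Q$ in which the potential target $P_i$ is $\lambda$-centered (that of $\calP$ with range $V$, or of $\pVTau$ with range $V+\nicefrac{\tau}{2}$) satisfies $Q\subseteq\chull{i}$, and $P_i$ is the midpoint of a segment of length $L=\lambda\,\mathrm{diam}(Q)$ in $Q$. If the truncation does not fire: in the $\pVTau$-branch, $\mathrm{diam}(Q)=D_i\ge\frac{\tau}{4(V+\tau)}\robotDiameter{i}$ by (i), so $\fpTau{i}=P_i$ is $\lambda'$-centered; in the $\calP$-branch, the branch condition gives $\mathrm{diam}(Q)=\mathrm{diam}_i^V>\nicefrac{\tau}{2}$, so $L>\nicefrac{\lambda\tau}{2}\ge\frac{\lambda\tau}{4(V+\tau)}\robotDiameter{i}$ using $\robotDiameter{i}\le 2(V+\tau)$. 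If the truncation fires, (ii) yields a segment of length $\ge\frac{\tau}{2\,\mathrm{diam}(Q)}\cdot\lambda\,\mathrm{diam}(Q)=\nicefrac{\lambda\tau}{2}\ge\frac{\lambda\tau}{4(V+\tau)}\robotDiameter{i}$ through $\fpTau{i}$ as midpoint, in both branches. In all cases $\fpTau{i}$ is $\lambda'$-centered in $\chull{i}$ with $\lambda'=\lambdaPrimeValuePTau$ (shrinking the exhibited segment to length exactly $\lambda'\robotDiameter{i}$ about its midpoint stays in the convex $\chull{i}$). I expect fact (i) to be the main obstacle: a naive translation of $\calP$ only controls $P_i$ relative to the \emph{small} hull, and without using connectivity of $\vubg{t}$ the witnessing segment could be an arbitrarily small fraction of $\robotDiameter{i}$ — exactly the failure mode of \cref{fig:intuitionNotAlphaShort} — so making the $\nicefrac{\tau}{2}$ truncation and the diameter lower bound cooperate is the delicate part.
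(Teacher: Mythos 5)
Your proof is correct and rests on the same quantitative core as the paper's own argument: connectivity of \vubg{t} forces the diameter of the sub-neighbourhood that $\pTau$ actually uses (range $V$ or $V+\nicefrac{\tau}{2}$) to be at least $\nicefrac{\tau}{2}$ (or already comparable to $\robotDiameter{i}$), so the $\lambda$-centered segment supplied by $\calP$ resp.\ $\pVTau$ has length at least $\lambda\cdot\nicefrac{\tau}{2}$, which together with $\robotDiameter{i}\le 2(V+\tau)$ yields $\lambda'=\lambdaPrimeValuePTau$. You are, however, more complete in two respects. First, you handle the $\nicefrac{\tau}{2}$-cap (line \ref{line:cond-2:tPclGTC} of \cref{algorithm:targetPointCollisionlessGTC}) explicitly via the homothety/intercept argument; the paper's proof simply says $\fpTau{i}$ equals $\fp{i}$ or $\fpVTau{i}$ and never addresses the truncated case here (the analogous scaling argument appears only later, for $\pCL$, in \cref{lem:movement-1-over-epsilon} via \cref{lem:ab-contracting-is-lambda-contracting}), so your fact (ii) closes a genuine, if small, gap. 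Second, your per-robot bound $D_i\ge\min\bigl(\robotDiameter{i}/2,\nicefrac{\tau}{2}\bigr)$, proved by the path argument in the $V$-ball graph, replaces the paper's citation of \cref{lem:constant-diameter-by-larger-vr}, which carries the hypothesis $\globalDiameter>\nicefrac{\tau}{2}$ and thus leaves the small-global-diameter rounds implicit; your min-form covers that corner case uniformly, and you additionally verify invariance and the collapsing property, which the paper asserts without comment. Both routes give the same constant, so the difference is one of rigor rather than method.
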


To conclude, the protocol $\calP_{\tau}$ has two main properties: it restricts the movement distance of any robot to at most $\nicefrac{\tau}{2}$ and robots $r_i$ with $\robotDiameter{i} \leq \nicefrac{\tau}{2}$ compute their target points based on protocol $\pVTau$ with viewing range $V+\nicefrac{\tau}{2}$.

\begin{figure}[htbp]
	\includegraphics[width =\textwidth]{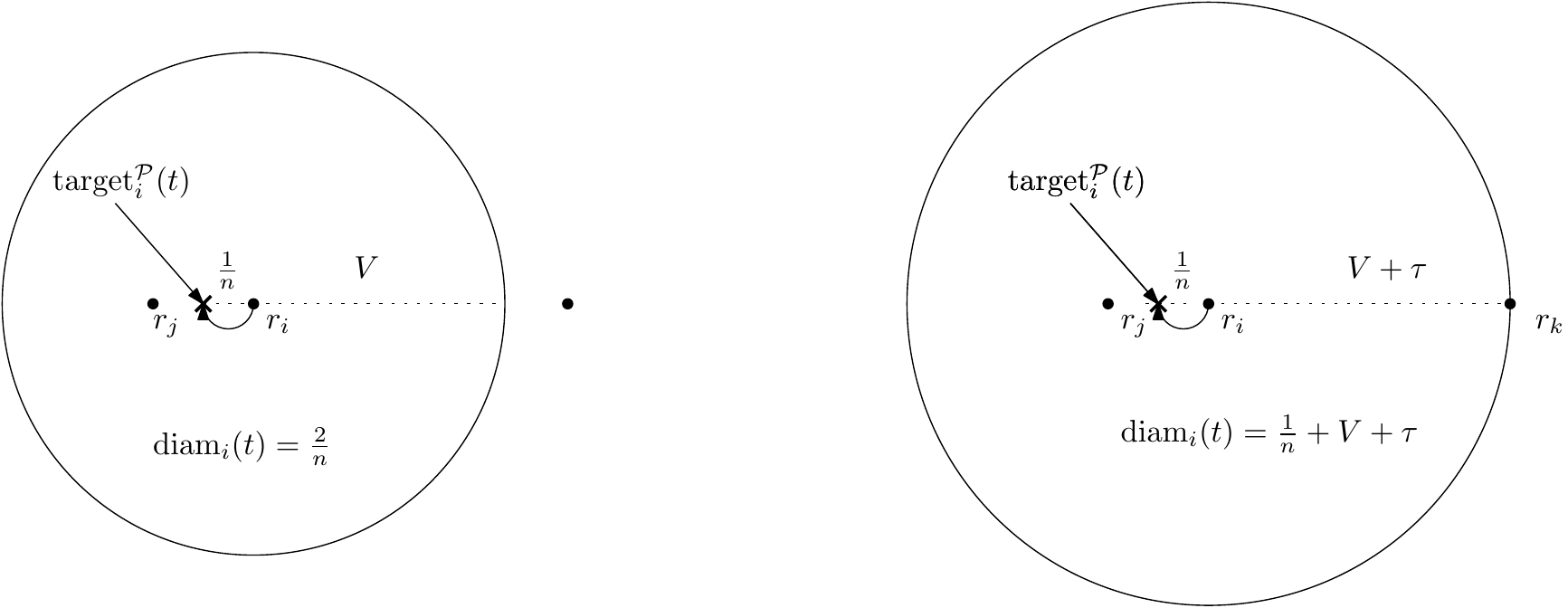}
	\caption{To the left, the local diameter of $r_i$ is $\nicefrac{2}{n}$ and it has a viewing range of $1$. The robot moves to \fp{i} which is the midpoint of its local diameter marked by a cross. Per definition, this is a $1$-centered point. The robot $r_k$ (outside the circle of radius $V$) is not visible to $r_i$. To the right, the same setup with a viewing range of $V+\tau$ is depicted, Now, $r_i$ can see also $r_k$. Hence, $\robotDiameter{i}$ increases to $\nicefrac{1}{n} + V+\tau$. However, if the target point remains unchanged, the point is not $\lambda$-centered anymore since the longest line segment that can be centered in \fp{i} still has a length of $\nicefrac{2}{n}$ while $\robotDiameter{i}$ has increased to $\nicefrac{1}{n} + V + \tau$. Thus there exists no \emph{constant} $\lambda$ anymore such that $\nicefrac{2}{n} >  \lambda \cdot \left(\nicefrac{1}{n} + V + \tau \right)$ since $\lambda$ would depend on $n$.}
	\label{fig:intuitionNotAlpha}
\end{figure}

\subsubsection{Implications of Smaller Connectivity Range} \label{section:intuitionInitialStrongDistance}

In the previous \Cref{section:intuitionPTau}, we have addressed the intermediate protocol $\pTau$ that is \lambdaGathering/concerning the viewing range of $V+\tau$ and also keeps \vubg{t} always connected.
Keeping \vubg{t} connected is important for the termination of a \nearGathering/ protocol.
Suppose that \vubg{t} is connected and the robots only have a viewing range of $V$.
Then, the robots can never decide if they can see all the other robots.
With a viewing range of $V+\tau$, however, it becomes possible.
The robots must be brought as close together such that $\globalDiameter < \tau$.
Now, each robot can see that all other robots are at a distance of at most $\tau$ and no other robot is visible, especially no robot at a distance $\mathrm{dist} \in (V, V+\tau]$ is visible.
Since \vubg{t} is connected, the robots can now decide that they can see all other robots, and \nearGathering/ is solved.
For any configuration where the viewing range is $V + \tau$ and \vubg{t} is connected, we can make an important observation.

\LemmaConstantDiameterByLargerVR*
%\begin{restatable}{lemma}{LemmaConstantDiameterByLargerVR}
%	\label{lem:constant-diameter-by-larger-vr}
%	Let $\calP$ be a \lambdaContracting/ protocol with viewing range $V + \tau$ for any constant $\tau > 0$ and let \vubg{t} be connected.
%	If $\globalDiameter > \tau$ (maximal distance of any pair of robots in round $t$), then also $\robotDiameter{i} > \tau$  (maximal distance of any pair of robots in $r_i$'s neighborhood), for every robot $r_i$.
%\end{restatable}

This leads directly to another helpful insight.
The \lambdaContracting/ property is defined in a way, that robots close to the boundary of the global SEH always move at least $\Omega\left(\frac{\robotDiameter{i}}{\Delta}\right)$ inside the SEH when they are active.
With $\robotDiameter{i} > \tau$ it follows that the radius of the global SEH decreases by $\Omega(\nicefrac{\tau}{\Delta})$ after each robot was active at least once (see \cref{lemma:largeDiameter-cl}).
Consequently, $\globalDiameter \leq \tau$ after $\calO(\Delta^{2})$ epochs.

\LemmagatheringWithIncreasedVR*
%\begin{restatable}{lemma}{LemmagatheringWithIncreasedVR}
%	\label{lem:gathering-with-increased-vr}
%	Let $\calP$ be a \lambdaContracting/ protocol with a viewing range of $V+\tau$ while \vubg{t} is always connected.
%	After at most $\runningTimeLemmaCL \in \calO(\Delta^2)$ epochs executing $\calP$, $\globalDiameter \leq \tau$.
%\end{restatable}

\subsubsection{Collision Avoidance} \label{section:intuitionCollisionAvoidance}

Next, we argue how to transform the protocol $\pTau$ (based on the protocol $\calP$) into the collisionfree protocol $\pCL$.
Recall that $\calP$ uses a viewing range of $V$, and $\pCL$ has a viewing range of $V+\tau$.

%For a moment, let us assume $\pTau$ has a viewing range of $V$.
The larger viewing range in $\pCL$ allows a robot $r_i$ to compute $\fpTau{k}$ (the target point in protocol $\pTau$) for all robots $r_k$ within distance at most $\tau$.
Since the maximum movement distance of a robot in $\calP$ is $\nicefrac{\tau}{2}$, this enables $r_i$ to know the movement directions of all robots $r_k$ which can collide with $r_i$.
%\begin{definition}[Collision vector $\ellp{i}$]
%	\label{def:collision-vector}
%	Let $\calP$ be a \lambdaGathering/.
%	The vector $\ellp{i}$ pointing from $p_i(t)$ to $\fp{i}$ is the \emph{collision vector} of $r_i$.
%	%$\fp{i}$ is deterministic but not necessary invariant under rotation and reflection (see Paragraph \ref{par:target-function-not-invariant}).
%	%The set of collision vectors is defined as follows.\\
%	%$ :=  \left\{ \textrm{vector from } p_i(t) \textrm{ to } x : \textrm{ there exist rotation and reflection such that } \fp{i} = x  \right\} $.
%\end{definition}
We will ensure that each robot $r_i$ moves to some position on $\ellpTau{i}$ and avoids positions of all other $\ellpTau{k}$.
Henceforth, no collision can happen.
While this is the basic idea of our collision avoidance, there are some details to add.

First of all, $\pTau$ has the same viewing range as $\pCL$ of $V + \tau$.
However, it never uses the full viewing range to compute the target position $\fpTau{i}$.
We consider two robots $r_i$ and $r_k$ with distance $\leq \tau$.
If $r_k$ simulates $\calP$ to compute $\fpTau{k}$, $r_i$ can compute $\fpTau{k}$ as well since $r_i$ is able to observe all robots in distance $V$ around $r_k$.
If $r_k$ simulates $\pVTau$, the condition in $\pTau$ makes sure that $r_i$ and $r_k$ have a distance of $\leq \nicefrac{\tau}{2}$.
Similarly, $r_i$ is able to observe all robot in distance $V + \nicefrac{\tau}{2}$ around $r_k$ and can compute $\fpTau{k}$ as well.

\begin{restatable}{lemma}{LemmaViewingRange}
	\label{lem:viewingrange}
	Let $\calP$ be a \lambdaGathering/ with a viewing range of $V$.
	A viewing range of $V + \tau$ is sufficient to compute $\fpTau{k}$ for all robots $r_k$ within a radius of $\tau$.
\end{restatable}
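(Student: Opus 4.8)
The plan is to unfold the definition of $\pTau$ in \cref{algorithm:targetPointCollisionlessGTC} and verify, line by line, that every quantity a robot $r_i$ needs in order to reconstruct $\fpTau{k}$ is determined by what $r_i$ observes inside its own viewing range $V+\tau$. Fix a round $t$ and two robots $r_i, r_k$ with $|p_k(t)-p_i(t)| \le \tau$. Since $\tau \le \nicefrac{2}{3}V < V$, the robot $r_k$ lies inside $r_i$'s viewing range, so $r_i$ knows $p_k(t)$ in its own (disoriented) coordinate system. The first step is to observe that $r_i$ can reconstruct the entire set $N_k^{V}(t)$ of robots within distance $V$ of $r_k$: by the triangle inequality any such robot is within distance $V+\tau$ of $r_i$ and hence visible, and $r_i$ can sieve the robots it sees by their distance to $p_k(t)$. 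Consequently $r_i$ can evaluate the branching condition in line~\ref{line:cond-1:tPclGTC} of \cref{algorithm:targetPointCollisionlessGTC}, i.e.\ whether all robots in $N_k^{V}(t)$ have pairwise distance at most $\nicefrac{\tau}{2}$.

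Next I would handle the two branches separately. If the condition is \emph{false}, then $r_k$ uses $\calP$ with viewing range $V$, so $P_k = \fp{k}$ depends only on the positions of $N_k^{V}(t)$, which $r_i$ has already reconstructed; since every \lambdaGathering/ is invariant, $r_i$ can compute the \emph{geometric} point $\fp{k}$ despite not sharing $r_k$'s coordinate system. If the condition is \emph{true}, then in particular $r_i \in N_k^{V}(t)$, so $|p_k(t)-p_i(t)| \le \nicefrac{\tau}{2}$; now $P_k = \fpVTau{k}$ is obtained from $\calP$ scaled to viewing range $V+\nicefrac{\tau}{2}$, hence depends only on robots within distance $V+\nicefrac{\tau}{2}$ of $r_k$, and by the triangle inequality these lie within distance $(V+\nicefrac{\tau}{2})+\nicefrac{\tau}{2} = V+\tau$ of $r_i$ and are therefore all visible. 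Invariance of $\calP$ (and thus of the scaled protocol $\pVTau$) again lets $r_i$ reproduce $\fpVTau{k}$. In both branches $r_i$ now knows $p_k(t)$ and $P_k$, so it can carry out the final clamping step of line~\ref{line:cond-2:tPclGTC} onward (move at most $\nicefrac{\tau}{2}$ from $p_k(t)$ towards $P_k$) and output $\fpTau{k}$ — and, knowing $p_k(t)$ as well, it also obtains the collision vector $\ellpTau{k}$ used by \cref{algorithm:CollisonPointsOnLine}.

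The only spot that requires a little care — and the nearest thing to an obstacle — is the $\pVTau$ branch: there the relevant viewing range jumps from $V$ to $V+\nicefrac{\tau}{2}$, so a naive bound would force $r_i$ to see up to distance $(V+\nicefrac{\tau}{2})+\tau > V+\tau$, which exceeds its own range. The resolution is that the condition \emph{triggering} this branch forces $r_k$ (and every robot it uses) to be within $\nicefrac{\tau}{2}$ of $r_i$ rather than $\tau$, which exactly compensates; this is precisely why $\pTau$ is designed with the asymmetric choice of $V+\nicefrac{\tau}{2}$ (not $V+\tau$) for the scaled protocol. Everything else is straightforward triangle-inequality bookkeeping together with the invariance assumption built into the definition of a \lambdaGathering/.
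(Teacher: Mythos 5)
Your proposal is correct and follows essentially the same argument as the paper's proof: handle the two branches of \cref{algorithm:targetPointCollisionlessGTC} separately, noting that the $\calP$-branch needs only radius $V$ around $r_k$ (hence $V+\tau$ around $r_i$), while the condition triggering the $\pVTau$-branch forces $|p_i(t)-p_k(t)| \leq \nicefrac{\tau}{2}$, so that $V+\nicefrac{\tau}{2}$ around $r_k$ again fits inside $r_i$'s range of $V+\tau$. Your additional remarks — that $r_i$ can also evaluate the branching condition itself and that invariance of $\calP$ lets $r_i$ reproduce the target point despite disorientation — only make explicit what the paper leaves implicit.
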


Secondly, $r_i$ can't avoid positions on all other $\ellpTau{k}$ in some cases.
For instance, $\ellpTau{i}$ may be completely contained in $\ellpTau{k}$ (e.g., $\ellpTau{2} \in \ellpTau{1}$ in the example depicted in \cref{fig:collisionless-algorithm-and-collision-points}).
In case $\ellpTau{i}$ and $\ellpTau{k}$ are not collinear and intersect in a single point, both robots simply avoid the intersection point (e.g. $r_1$ and $r_4$ in the example).

\begin{restatable}{lemma}{LemmaIntersectionNotCollinear}
	\label{lem:intersection-not-co-linear}
	%Let $v_i$ be the collision vector between $p_i(t)$ and $\fpTau{i}$.
	No robot moves to a point that is the intersection of two collision vectors that are not collinear.
\end{restatable}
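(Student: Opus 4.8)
The plan is to argue by contradiction: suppose some robot $r_m$ moves exactly to a point $x$ that is the unique common point of two non-collinear collision vectors (two non-collinear segments meet in at most one point, so $x$ is well defined), call them $\ellpTau{i}$ and $\ellpTau{k}$, and derive a contradiction by showing that $r_m$ itself registers $x$ among its own collision points and therefore must stop strictly before reaching it. If $\fpCL{m} = x$, then by \cref{algorithm:collisionlessGTC} the target of $r_m$ lies on $\ellpTau{m}$, so $x \in \ellpTau{m}$. Since $\ellpTau{i}$ and $\ellpTau{k}$ are not collinear with each other, at least one of them is not collinear with $\ellpTau{m}$ — otherwise both would be collinear with $\ellpTau{m}$ and hence with one another; after relabelling assume $\ellpTau{i}$ is transversal to $\ellpTau{m}$, which in particular forces $\ellpTau{m}$ to be a non-degenerate segment.

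The next step is a proximity argument that yields $x \in C_m$. By \cref{algorithm:targetPointCollisionlessGTC} no robot moves more than $\nicefrac{\tau}{2}$ in $\pTau$, so $|x - p_m(t)| \le |\ellpTau{m}| \le \nicefrac{\tau}{2}$ and likewise $|x - p_i(t)| \le \nicefrac{\tau}{2}$; the triangle inequality gives $|p_i(t) - p_m(t)| \le \tau$, hence $r_i \in R_m$. By \cref{lem:viewingrange}, $r_m$ can therefore compute $\ellpTau{i}$, and since $\ellpTau{i}$ intersects $\ellpTau{m}$ at the single point $x$ and is not collinear with it, \cref{algorithm:CollisonPointsOnLine} adds $x$ to $C_m$. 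Now I finish by comparing distances along the segment $\ellpTau{m}$, which are determined by the distance to the endpoint $\fpTau{m}$. If $x \neq \fpTau{m}$, then $x \in C_m \setminus \{\fpTau{m}\}$, so line~\ref{line:d-i:clGTC} of \cref{algorithm:collisionlessGTC} gives $0 < d_m \le |x - \fpTau{m}|$, while line~\ref{line:target-point-distance:clGTC} places $\fpCL{m}$ on $\ellpTau{m}$ at distance $\targetPointDist{m}$ from $\fpTau{m}$; using $|\ellpTau{m}| \le \nicefrac{\tau}{2}$ and $\varepsilon < \nicefrac{1}{2}$ this distance is strictly less than $d_m$, so $\fpCL{m} \neq x$, a contradiction. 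If instead $x = \fpTau{m}$, then $p_m(t) \in C_m$ (it is added when $r_m$ processes itself in \cref{algorithm:CollisonPointsOnLine}) with $p_m(t) \neq \fpTau{m}$ because $\ellpTau{m}$ is non-degenerate, so $d_m > 0$, hence $\targetPointDist{m} > 0$ and $\fpCL{m} \neq \fpTau{m} = x$, again a contradiction.

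The step I expect to be the real obstacle is making the reduction in the first paragraph fully airtight — ruling out that $r_m$'s own collision vector is collinear with \emph{both} of the two given vectors and so never registers $x$ as a transversal intersection. This is exactly why the statement quantifies over two collision vectors: one always exists among $\ellpTau{i}, \ellpTau{k}$ that is genuinely transversal to $\ellpTau{m}$ (by the collinearity-transitivity argument above), and transversality also disposes of the degenerate collision vectors for which moving to $x$ would be vacuous. Everything else is elementary geometry of a point on a segment together with two facts already established in the paper: that $\pTau$ caps each robot's movement at $\nicefrac{\tau}{2}$ (\cref{algorithm:targetPointCollisionlessGTC}), and that viewing range $V+\tau$ suffices for a robot to recompute the collision vectors of all robots within distance $\tau$ (\cref{lem:viewingrange}).
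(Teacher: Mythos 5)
Your proposal is correct and follows essentially the same route as the paper's own proof: the two vectors meeting in a point forces the robots' distance below $\tau$, so the other robot lies in the set $R$ of \cref{algorithm:collisionlessGTC}, \cref{lem:viewingrange} lets its collision vector be recomputed, the intersection point lands in the collision-point set, and the scaled distance $\targetPointDist{i}$ is strictly smaller than $d_i$, so the target avoids it. Your version is in fact slightly more careful than the paper's: you reduce the claim for an arbitrary robot $r_m$ to the case where one of the two vectors is $r_m$'s own (handling degenerate/collinear configurations via the vacuity of ``moving to'' a point), and you treat the edge case $x = \fpTau{m}$ explicitly, which the paper's one-line conclusion glosses over.
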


If $\ellpTau{i}$ and $\ellpTau{k}$ are collinear, both robots move to a point closer to their own target point than to the other one (e.g., $r_1$ and $r_3$ in the example).

\begin{restatable}{lemma}{LemmaFPTauNeq}
	\label{lem:fpTau-neq}
	If the target points of robots are different in $\pTau$ they are different in $\pCL$.
\end{restatable}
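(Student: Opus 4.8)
The plan is to assume for contradiction that $\fpCL{i} = \fpCL{j} =: c$ for two robots $r_i \neq r_j$ with $\fpTau{i} \neq \fpTau{j}$, and to derive a contradiction from \cref{algorithm:collisionlessGTC}. The first ingredient is a ``separation observation'': since $\pTau$ caps the movement distance at $\nicefrac{\tau}{2}$ (line \ref{line:cond-2:tPclGTC} of \cref{algorithm:targetPointCollisionlessGTC}) we have $\bigl|\ellpTau{k}\bigr| \le \nicefrac{\tau}{2}$, so the offset $\targetPointDist{k}$ of $\fpCL{k}$ from $\fpTau{k}$ on the collision vector (line \ref{line:target-point-distance:clGTC}) is at most $\varepsilon\, d_k$. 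As $\varepsilon < \nicefrac12$, for every collision point $q \in C_k \setminus \{\fpTau{k}\}$, which satisfies $|\fpTau{k}-q| \ge d_k$ by line \ref{line:d-i:clGTC}, we get
\[
 |\fpCL{k}-q| \;\ge\; |\fpTau{k}-q| - \targetPointDist{k} \;\ge\; (1-\varepsilon)\,d_k \;>\; \varepsilon\,d_k \;\ge\; |\fpCL{k}-\fpTau{k}| ,
\]
i.e.\ $\fpCL{k}$ is strictly closer to $\fpTau{k}$ than to any collision point of $r_k$; in particular $\fpCL{k} \notin C_k \setminus \{\fpTau{k}\}$. Next, since $c$ lies on both $\ellpTau{i}$ and $\ellpTau{j}$, each of length $\le \nicefrac{\tau}{2}$, we have $|p_i(t)-p_j(t)| \le \tau$, so $r_j \in R_i$ and $r_i \in R_j$ (line \ref{line:R-i:clGTC}); by \cref{lem:viewingrange} the sets $C_i, C_j$ of \cref{algorithm:CollisonPointsOnLine} are then computed correctly, and in particular $p_i(t) \in C_j$ and $p_j(t) \in C_i$ (line \ref{line:v-k-length-0:CP}).

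I would then split on the relative position of $\ellpTau{i}$ and $\ellpTau{j}$. If the two collision vectors are disjoint, then $\fpCL{i} \in \ellpTau{i}$ and $\fpCL{j} \in \ellpTau{j}$ cannot coincide, a contradiction. If they meet in a single point and are not collinear, that point is $c$, and \cref{lem:intersection-not-co-linear} forbids any robot from moving there, a contradiction. This leaves the main case, in which $\ellpTau{i}$ and $\ellpTau{j}$ are collinear, so all relevant points lie on one line. If $\fpTau{j} \in \ellpTau{i}$, then $\fpTau{j}$ is added to $C_i$ in line \ref{line:add-fpk:CP} and $\fpTau{j} \neq \fpTau{i}$, so the separation observation for $r_i$ gives $|c-\fpTau{i}| < |c-\fpTau{j}|$; symmetrically, if in addition $\fpTau{i} \in \ellpTau{j}$ then $|c-\fpTau{j}| < |c-\fpTau{i}|$, a contradiction. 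Hence we may assume, without loss of generality, that $\fpTau{j} \notin \ellpTau{i}$.

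In this remaining situation I would consider the portion of $\ellpTau{j}$ from $c$ to $\fpTau{j}$; it is contained in $\ellpTau{j}$ (as $c \in \ellpTau{j}$ and $\fpTau{j}$ is an endpoint), starts inside $\ellpTau{i}$ and ends outside it, hence it leaves $\ellpTau{i}$ at an endpoint $q \in \{p_i(t), \fpTau{i}\}$ that lies (weakly) between $c$ and $\fpTau{j}$ and has $q \neq \fpTau{j}$ (otherwise $\fpTau{j}$ would be an endpoint of $\ellpTau{i}$, forcing $\fpTau{j}=\fpTau{i}$, impossible, or $\fpTau{j}=p_i(t)\in\ellpTau{i}$, contradicting the assumption). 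Moreover $q \in C_j$: if $q = p_i(t)$ it is the position of $r_i \in R_j$ (line \ref{line:v-k-length-0:CP}), and if $q = \fpTau{i}$ then $q$ lies on the above portion of $\ellpTau{j}$, i.e.\ $\fpTau{i} \in \ellpTau{j}$, so it is added in line \ref{line:add-fpk:CP}. Thus $q \in C_j \setminus \{\fpTau{j}\}$ with $|c-q| \le |c-\fpTau{j}|$, and the separation observation for $r_j$ yields $|c-\fpTau{j}| = |\fpCL{j}-\fpTau{j}| < |\fpCL{j}-q| = |c-q| \le |c-\fpTau{j}|$, a contradiction (this also absorbs the degenerate sub-case $q=c$, which forces $c=p_i(t)$). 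I expect the collinear case to be the main obstacle: one has to keep careful track of which of $p_i(t), \fpTau{i}, \fpTau{j}$ actually land in $C_i$ and $C_j$ and to dispatch the degenerate configurations (zero-length collision vectors, $\fpCL{}$ coinciding with a robot position) without gaps, and it is precisely the normalization factor $\nicefrac{2}{\tau} \cdot \bigl|\ellpTau{k}\bigr|$ together with $\varepsilon < \nicefrac12$ that makes the separation observation strong enough to close all of them.
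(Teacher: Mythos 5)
Your proof is correct and takes essentially the same route as the paper: the non-collinear case is dispatched via \cref{lem:intersection-not-co-linear}, and the collinear case rests on the same quantitative fact that $\varepsilon < \nicefrac{1}{2}$ together with $\bigl|\ellpTau{k}\bigr| \leq \nicefrac{\tau}{2}$ forces each robot to end strictly closer to its own $\pTau$-target than to the relevant collision points. The paper organizes the collinear case as a direct three-way analysis of how $p_i(t)$, $\fpTau{i}$ and $\fpTau{k}$ are arranged on the line, whereas you argue by contradiction via your separation observation and the exit endpoint $q$; this is only a difference in bookkeeping, and your version is, if anything, more explicit about the degenerate sub-cases.
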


But there are cases, in which robots have the same target point in $\pTau$ (e.g. $r_1, r_2$ and $r_6$ in the example).
Because robots stay in the same direction towards the target point, collisions can only happen if one robot is currently on the collision vector of another one (e.g., $r_2$ is on $\ellpTau{1}$).
Their movement is scaled by the distance to the target point, which must be different.
Therefore, their target points in $\pCL$ must be different as well.
\begin{restatable}{lemma}{LemmaCollinearCollisionVector}
	\label{lem:co-linear-collision-vector}
	If the target points of robots are the same in $\pTau$ they are different in $\pCL$.
\end{restatable}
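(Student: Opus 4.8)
The plan is to fix two robots $r_i \ne r_k$ with $\fpTau{i} = \fpTau{k} =: q$ and show $\fpCL{i} \ne \fpCL{k}$; applied to all pairs this gives the statement. This lemma is one of the ingredients of the collision-freeness induction behind \cref{thm:collisionless-class}, so I may assume the positions in round $t$ are pairwise distinct, i.e.\ $p_i(t) \ne p_k(t)$ (base case: the initial positions are distinct; inductive step: \cref{lem:intersection-not-co-linear}, \cref{lem:fpTau-neq} and this lemma). Recall from \cref{algorithm:collisionlessGTC} that $\fpCL{i}$ lies on the collision vector $\ellpTau{i}$ — the segment from $p_i(t)$ to $q$ — at distance $\targetPointDist{i}$ from $q$, and likewise for $r_k$; in particular both collision vectors end in $q$. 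I would first record one structural fact used repeatedly: $r_i$'s own position always enters $C_i$ (the iteration for $r_k = r_i$ in \cref{algorithm:CollisonPointsOnLine}), so $d_i \le |\ellpTau{i}| \le \nicefrac{\tau}{2}$; together with $\varepsilon < \nicefrac{1}{2}$ this gives $\targetPointDist{i} < |\ellpTau{i}|$ whenever $|\ellpTau{i}| > 0$, i.e.\ then $\fpCL{i}$ is an \emph{interior} point of $\ellpTau{i}$, in particular $\fpCL{i} \ne q$.

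Next I would dispose of the degenerate subcase $|\ellpTau{i}| = 0$ or $|\ellpTau{k}| = 0$: both cannot hold, since otherwise $p_i(t) = q = p_k(t)$; if only $|\ellpTau{i}| = 0$, then $\fpCL{i} = q$ while $\fpCL{k}$ is interior to $\ellpTau{k}$ by the remark above and hence $\ne q = \fpCL{i}$. So from now on both collision vectors are non-degenerate and $\fpCL{i}, \fpCL{k} \ne q$. Then I would handle the case that $\ellpTau{i}$ and $\ellpTau{k}$ are \emph{not} collinear, or are collinear but $p_i(t)$ and $p_k(t)$ lie on opposite sides of $q$ on their common supporting line: in both situations the two segments meet only in their shared endpoint, $\ellpTau{i} \cap \ellpTau{k} = \{q\}$, so $\fpCL{i} = \fpCL{k}$ would force $\fpCL{i} \in \{q\}$, contradicting $\fpCL{i} \ne q$. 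This is essentially \cref{lem:intersection-not-co-linear} specialised to the shared endpoint.

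The remaining and genuinely delicate case is that $\ellpTau{i}, \ellpTau{k}$ are collinear with $p_i(t), p_k(t)$ on the same side of $q$ on their common line $L$, so one collision vector is a proper sub-segment of the other; WLOG $|\ellpTau{i}| < |\ellpTau{k}|$ (strict, since $p_i(t) \ne p_k(t)$). A collision between $r_i$ and $r_k$ forces $|p_i(t) - p_k(t)| \le \tau$ (each robot moves at most $\nicefrac{\tau}{2}$), hence $r_i \in R_k$ and $r_k \in R_i$, and \cref{lem:viewingrange} guarantees each robot actually computes the other's collision vector. Since $p_i(t)$ lies strictly between $q$ and $p_k(t)$, it lies on $\ellpTau{k}$, so it is a collision point of $r_k$ and $d_k \le |p_i(t) - q| = |\ellpTau{i}|$; symmetrically $d_i \le |\ellpTau{i}|$. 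Now suppose $\fpCL{i} = \fpCL{k}$ and let $c \in C_i \setminus \{q\}$ realise $d_i = |c - q|$. Because $d_i \le |\ellpTau{i}|$, the point $c$ lies on $\ellpTau{i}$ and therefore on $\ellpTau{k}$, and $c$ stays within $\nicefrac{\tau}{2} + \nicefrac{\tau}{2}$ of $p_k(t)$ (through $q$); one then checks — according to whether $c$ is a current robot position, a target point in $\pTau$, or a non-collinear intersection — that $c$ is a collision point of $r_k$ as well, so $d_k \le d_i$, and the symmetric argument gives $d_i \le d_k$, hence $d_i = d_k$. But then $\fpCL{i} = \fpCL{k}$ reads $\targetPointDist{i} = \targetPointDist{k}$, which forces $|\ellpTau{i}| = |\ellpTau{k}|$, contradicting $|\ellpTau{i}| < |\ellpTau{k}|$.

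I expect the main obstacle to be exactly this last transfer step: showing that the point $c$ achieving $d_i$ for $r_i$ is also counted as a collision point by $r_k$. When $c$ is the intersection of $\ellpTau{i}$ with the collision vector of a third robot $r_m$, that robot may be up to $2\tau$ away from $r_k$ and hence invisible to it; the resolution is to argue at the level of the point $c$ — which stays within $\tau$ of both $p_i(t)$ and $p_k(t)$, using $d_i \le |\ellpTau{i}| \le \nicefrac{\tau}{2}$ and $|\ellpTau{k}| \le \nicefrac{\tau}{2}$ — rather than of $r_m$, and to match the three clauses of \cref{algorithm:CollisonPointsOnLine} case by case (for the intersection clause, one uses that collinearity with $\ellpTau{i}$ is equivalent to collinearity with $\ellpTau{k}$). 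A minor accompanying point is the bookkeeping, invoked several times above, that a robot's own position always lies in its collision-point set, which is what makes all the ``interior point'' arguments valid.
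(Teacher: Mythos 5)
Your proposal is correct and follows essentially the same route as the paper's proof: reduce to the case of collinear collision vectors on the same side of the common target point, show $d_i = d_k$ by transferring the minimizing collision points between $C_i$ and $C_k$ (bounding the distance to the third robot through the collision point itself, exactly as the paper does), and then conclude from $\big|\ellpTau{i}\big| \neq \big|\ellpTau{k}\big|$ that the scaled distances $\targetPointDist{i}$ and $\targetPointDist{k}$ differ. Your extra bookkeeping (own position as a collision point, degenerate zero-length vectors, the interior-point observation) only makes explicit what the paper leaves implicit.
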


In \ssync\, robots may be inactive in one round.
Nevertheless, in the same way, single intersection points between collision vectors and the positions of other robots are avoided as well.

\begin{restatable}{lemma}{LemCollisionlessInactive}
	\label{lem:collisionless-inactive}
	No robot moves to the position of an inactive robot.
\end{restatable}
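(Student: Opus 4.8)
Fix a round $t$, an active robot $r_i$, and an inactive robot $r_k$; since $r_k$ does not move we have $p_k(t+1)=p_k(t)$, so it suffices to show $\fpCL{i}\neq p_k(t)$. The same computation will in fact work for \emph{any} robot $r_k$ whose position $r_i$ might run into; the reason to single out inactive robots is that their position in round $t+1$ is pinned to $p_k(t)$ regardless of the scheduler, so ruling out $\fpCL{i}=p_k(t)$ is all that is needed. I would distinguish two cases according to $|p_k(t)-p_i(t)|$.

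If $|p_k(t)-p_i(t)|>\tau$, I use that $\pTau$ -- and hence $\pCL$ -- caps every robot's movement distance at $\nicefrac{\tau}{2}$ (line~\ref{line:cond-2:tPclGTC} of \cref{algorithm:targetPointCollisionlessGTC}), so $|\ellpTau{i}|\leq\nicefrac{\tau}{2}$ and the point $\fpCL{i}\in\ellpTau{i}$ lies within distance $\nicefrac{\tau}{2}<|p_k(t)-p_i(t)|$ of $p_i(t)$; hence $\fpCL{i}\neq p_k(t)$.

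If $|p_k(t)-p_i(t)|\leq\tau$ then $r_k\in R_i$. If $p_k(t)\notin\ellpTau{i}$ we are done immediately since $\fpCL{i}\in\ellpTau{i}$, so assume $p_k(t)\in\ellpTau{i}$; then \cref{algorithm:CollisonPointsOnLine} (line~\ref{line:v-k-length-0:CP}) inserts $p_k(t)$ into the collision-point set $C_i$. The quantitative fact I would extract first is that the point $\fpCL{i}$ returned in line~\ref{line:target-point-distance:clGTC} of \cref{algorithm:collisionlessGTC} sits at distance $\targetPointDist{i}=d_i\cdot\varepsilon\cdot\nicefrac{2}{\tau}\cdot|\ellpTau{i}|$ from $\fpTau{i}$ along $\ellpTau{i}$, and that by $|\ellpTau{i}|\leq\nicefrac{\tau}{2}$ and $\varepsilon<\nicefrac{1}{2}$ one has $\targetPointDist{i}\leq\varepsilon\cdot|\ellpTau{i}|<|\ellpTau{i}|$; thus the target point is well defined, lies strictly interior to $\ellpTau{i}$, and satisfies $|\fpCL{i}-\fpTau{i}|<d_i$ whenever $d_i>0$. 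Now either $p_k(t)\neq\fpTau{i}$, so that $p_k(t)\in C_i\setminus\{\fpTau{i}\}$ and hence $|p_k(t)-\fpTau{i}|\geq d_i>|\fpCL{i}-\fpTau{i}|$, giving $\fpCL{i}\neq p_k(t)$; or $p_k(t)=\fpTau{i}$, in which case I conclude by showing $\targetPointDist{i}>0$: if $|\ellpTau{i}|=0$ then $\fpCL{i}=p_i(t)\neq p_k(t)$ by the inductive hypothesis that all robot positions are distinct in round $t$, and otherwise $p_i(t)\in C_i\setminus\{\fpTau{i}\}$ forces $d_i>0$, hence $\targetPointDist{i}>0$ and $\fpCL{i}\neq\fpTau{i}=p_k(t)$.

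The step needing the most care is the bookkeeping around degenerate configurations: that $d_i$ is strictly positive (so $\fpCL{i}$ genuinely leaves $\fpTau{i}$), that $\targetPointDist{i}\leq|\ellpTau{i}|$ (so "the point on $\ellpTau{i}$ at distance $\targetPointDist{i}$ from $\fpTau{i}$" is unambiguous), and the separate handling of $|\ellpTau{i}|=0$. Each of these reduces to the movement cap $|\ellpTau{i}|\leq\nicefrac{\tau}{2}$, the bound $\varepsilon<\nicefrac{1}{2}$, and the inductively maintained distinctness of positions, so no idea beyond those already used for \cref{lem:fpTau-neq} and \cref{lem:co-linear-collision-vector} is required.
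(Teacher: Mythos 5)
Your proof is correct and follows essentially the same route as the paper: the position of the nearby robot $r_k$ on the collision vector is added to $C_i$ (line~\ref{line:v-k-length-0:CP} of \cref{algorithm:CollisonPointsOnLine}), and since $\fpCL{i}$ stops strictly closer to $\fpTau{i}$ than the nearest collision point, it cannot coincide with $p_k(t)$ -- exactly the analogy to \cref{lem:intersection-not-co-linear} that the paper invokes. Your additional bookkeeping (the cases $|p_k(t)-p_i(t)|>\tau$, $p_k(t)\notin\ellpTau{i}$, $p_k(t)=\fpTau{i}$, and $|\ellpTau{i}|=0$) just makes explicit the degenerate situations the paper leaves implicit.
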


The following lemma follows immediately from \cref{lem:fpTau-neq}, \ref{lem:co-linear-collision-vector} and  \ref{lem:collisionless-inactive}.
\begin{restatable}{lemma}{LemCollisionless}
	\label{lem:no-early-collison}
	The protocol $\pCL$ is collisionless.
\end{restatable}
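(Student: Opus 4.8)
The plan is to prove that $\pCL$ is collisionless by induction on the round number $t$: assuming that all robots occupy pairwise distinct positions in round $t$, I show the same holds in round $t+1$. The base case $t=0$ is immediate since the initial configuration consists of pairwise distinct positions. For the inductive step, fix two robots $r_i \neq r_j$ and aim to show $p_i(t+1) \neq p_j(t+1)$. The starting observation is that every robot moves a distance of at most $\nicefrac{\tau}{2}$ in $\pTau$, and hence in $\pCL$; so a collision in round $t+1$ can occur only if $|p_i(t) - p_j(t)| \leq \tau$. In that regime \cref{lem:viewingrange} guarantees that each of the two robots actually sees everything needed to compute the other's collision vector $\ellpTau{\cdot}$ and potential target point $\fpTau{\cdot}$, so the sets of collision points that \cref{algorithm:CollisonPointsOnLine} assembles are mutually consistent and the collision-avoidance reasoning is well founded.

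What remains is a short, exhaustive case distinction on the two robots. If at least one of $r_i, r_j$ is inactive in round $t$ (possible only under \ssync), then \cref{lem:collisionless-inactive} directly gives $p_i(t+1) \neq p_j(t+1)$, since $\pCL$ never places a robot on the position of an inactive one; the same reasoning covers an active robot that does not move ($p_i(t) = \fpTau{i}$), because its position is itself registered as a collision point for every neighbour and $\fpCL{\cdot}$ is chosen at positive distance from every collision point. If both robots move, I split on their potential target points: when $\fpTau{i} \neq \fpTau{j}$, \cref{lem:fpTau-neq} yields $\fpCL{i} \neq \fpCL{j}$; when $\fpTau{i} = \fpTau{j}$, \cref{lem:co-linear-collision-vector} yields the same. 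Since these cases are complete, the induction closes and $\pCL$ is collisionless.

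Since the statement follows immediately from the three cited lemmas, the real difficulty is hidden in proving them, and I expect \cref{lem:co-linear-collision-vector} to be the main obstacle. The quantitative fact driving all three is that $\fpCL{i}$ sits on $\ellpTau{i}$ at distance exactly $\targetPointDist{i}$ from $\fpTau{i}$, and because $|\ellpTau{i}| \leq \nicefrac{\tau}{2}$ and $\varepsilon < \nicefrac{1}{2}$ this distance is strictly less than $d_i$, the minimal distance from $\fpTau{i}$ to any collision point of $r_i$; hence $\fpCL{i}$ avoids all of $r_i$'s collision points, in particular $p_j(t)$ and $\fpTau{j}$ whenever these lie on $\ellpTau{i}$. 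When $\fpTau{i} = \fpTau{j}$ with collinear collision vectors, $r_i$ and $r_j$ lie on a common ray from the shared target point; since $p_i(t) \neq p_j(t)$ their distances to that point differ, so $|\ellpTau{i}| \neq |\ellpTau{j}|$, and one checks that in this overlapping situation $d_i = d_j$ (each robot's position and target point become collision points of the other), so the scalings $d_i \cdot \varepsilon \cdot \nicefrac{2}{\tau}$ genuinely separate the two images. When the collision vectors are not collinear, the shared target point is their only common point and neither robot goes there by \cref{lem:intersection-not-co-linear}. The delicate part is this collinear bookkeeping -- verifying that the collision-point sets and the distance-based scaling interact exactly so as to keep the two target points apart rather than accidentally collapsing them.
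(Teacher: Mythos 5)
Your proposal is correct and follows essentially the same route as the paper: the paper's proof of this lemma is precisely the case split you give, citing \cref{lem:fpTau-neq}, \cref{lem:co-linear-collision-vector} and \cref{lem:collisionless-inactive} (with \cref{lem:viewingrange} and \cref{lem:intersection-not-co-linear} doing the work inside those lemmas), and your quantitative sketch of why those lemmas hold — $\targetPointDist{i} \leq d_i\cdot\varepsilon < \nicefrac{d_i}{2}$, and $d_i = d_j$ with $\big|\ellpTau{i}\big| \neq \big|\ellpTau{j}\big|$ in the collinear same-target case — matches the paper's arguments. The induction over rounds is an additional (harmless) wrapper the paper leaves implicit.
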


%\begin{figure}
%
%	\includegraphics[width=\linewidth]{figures/collisionless-algorithm-and-collision-points}
%	\caption{Example of $\fpCL{i}$ with $V = 1, \tau = 2/3$ and $\varepsilon = 0.99$. $(i)$ shows the collision points and computation of $d_1, d_2$ and $d_3$ (line \ref{line:d-i:clGTC} in \cref{algorithm:collisionlessGTC}). $(ii)$ shows the positions where $r_1, r_2$ and $r_3$ will move to in protocol $\pCL$ as returned by \cref{algorithm:collisionlessGTC}.}
%	\label{fig:collisionless-algorithm-and-collision-points}
%\end{figure}

Because $\pTau$ is \lambdaContracting/ and preserves the connectivity of \vubg{t}, also $\pCL$ has these properties. (\cref{lem:movement-1-over-epsilon,lem:connectivityrange-cl}).
This allows us to follow from \cref{lem:gathering-with-increased-vr} that \nearGathering/ is solved after $\calO(\Delta^{2})$ epochs.

\subsection{Analysis} \label{section:collisionFreeAnalysis}

\subsubsection{Time Bound}

\LemmaConstantDiameterByLargerVR*

\begin{proof}
	We prove the claim by contradiction.
	Let the initial UDG be connected with a radius $V$ and  $\globalDiameter > \tau$.
	To derive the contradition, we assume that there is a robot $r_i$ with $\robotDiameter{i} \leq \tau$.
	By definition, $|p_i(t) - p_k(t)| \leq \tau$ for all $r_k \in N_i(t)$ (the neighborhood of $r_i$).
	Consequently, there exists at least one robot $r_j \notin N_i(t)$.
	For all robots $r_j \notin N_i(t)$, $|p_i(t) - p_j(t)| > V + \tau$.
	For all $r_k \in N_i(t)$ and $r_j \notin N_i(t)$, we have $|p_k(t) - p_j(t)| > V$.
	Hence, none of $r_i$'s neighbors is at a distance of at most $V$ from a robot that $r_i$ cannot see.
	This is a contradiction since we have assumed that the UDG with radius $V$ is connected.
	Hence, $\robotDiameter{i} > \tau$.
\end{proof}

To derive the runtime of $\calO\left(\Delta^2\right)$, we use the course of the analysis of \lambdaContracting/ protocols presented in \Cref{section:alphaBetaContractingStrategies}.
The analysis can be found in \Cref{section:alphaBetaProtocolsUpperBound}.
Most of the notation is identical. However, the main circular segment of the analysis is defined slightly differently.
Let $N := N(t)$ (we omit the time parameter for readability) the (global) SEH of all robots in round $t$ and $R := R(t)$ its radius.
Now, fix any point $b$ on the boundary of $N$.
Let $\tau > 0$ be any constant.
The two points in distance $\tau \cdot \nicefrac{\lambda}{8}$ of $b$ on the boundary of $N$ determine the circular segment $S_{\lambda \cdot \tau}$ with height $h$.
In the following, we determine by $S_{\lambda \cdot \tau}(c)$ for $0 < c\leq 1$ the circular segment with height $c \cdot h$ that is contained in $S_{\lambda \cdot \tau}$.

\begin{lemma} \label{lemma:largeDiameter-cl}
	Let $\calP$ be a \lambdaContracting/ Protocol.
	For a robot $r_i$ with $\robotDiameter{i} > \tau$, $\fp{i} \in N \setminus \baseSegmentC/$.
\end{lemma}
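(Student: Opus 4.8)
The plan is to replay the argument of \Cref{lemma:largeDiameter}, exploiting that the hypothesis $\robotDiameter{i} > \tau$ is, relative to the size of $\baseSegmentC/$, much more generous than the corresponding hypothesis in \Cref{lemma:largeDiameter}. Since $\calP$ is \lambdaContracting/, the point $\fp{i} = \lambdaCenterP{i}$ is the midpoint of a line segment $\ell$ of length $\lambda \cdot \robotDiameter{i} > \lambda\tau$ that lies completely inside $\chull{i}$. Because every robot position lies inside the global SEH $N$, convexity gives $\chull{i} \subseteq N$, hence $\ell \subseteq N$ and in particular $\fp{i} \in N$. On the other hand, the two chord endpoints of $\baseSegmentC/$ are at distance at most $\tau \cdot \nicefrac{\lambda}{8}$ from the apex $b$, so they are at distance at most $\tau\lambda/4$ from each other; exactly as for the segment $\baseSegmentLambda/$ used in \Cref{section:alphaBetaProtocolsUpperBound}, the diameter of this thin circular segment is this chord length, so $\mathrm{diam}(\baseSegmentC/) \le \tau\lambda/4$. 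The operative observation is that $|\ell| > \lambda\tau$ is more than twice this bound.

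Next I would argue by contradiction: suppose $\fp{i} \in \baseSegmentC/$. Choose coordinates so that $b$ is the topmost point of $N$; then $\baseSegmentC/ = \{x \in N : x_y \ge c\}$, where $\{x_y = c\}$ is the line containing the defining chord. Let $a$ be the endpoint of $\ell$ with the larger $y$-coordinate (either one, if they coincide). Since $\fp{i}$ is the midpoint of $\ell$, we have $a_y \ge (\fp{i})_y \ge c$; combined with $a \in \ell \subseteq N$ this yields $a \in \baseSegmentC/$. But then $a$ and $\fp{i}$ are two points of $\baseSegmentC/$ at distance $|a - \fp{i}| = |\ell|/2 > \lambda\tau/2 \ge \tau\lambda/4 \ge \mathrm{diam}(\baseSegmentC/)$, a contradiction. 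Hence $\fp{i} \notin \baseSegmentC/$, and since $\fp{i} \in N$ we conclude $\fp{i} \in N \setminus \baseSegmentC/$.

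I do not expect a genuine obstacle. The one place that needs attention is that the conclusion concerns the \emph{full} segment $\baseSegmentC/$, whereas the analogous case in \Cref{lemma:largeDiameter} (a segment $\ell$ joining one point inside and one point outside) only gave a statement about the half-height segment $\betaHalfSegmentLambda/$. What rescues us is precisely the quantitative slack recorded above: since $|\ell| > \lambda\tau \ge 2\,\mathrm{diam}(\baseSegmentC/)$, any endpoint of $\ell$ weakly above the midpoint is already too far from the midpoint to lie in $\baseSegmentC/$ together with it, so no case distinction on the endpoints' positions is needed. The remaining items — the coordinate normalization and the elementary fact that one endpoint of $\ell$ lies weakly above its midpoint — are routine.
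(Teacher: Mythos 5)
Your proof is correct and follows essentially the same route as the paper's: use the \lambdaContracting/ property to obtain a segment of length $> \lambda\tau$ centered at $\fp{i}$, and play it against the fact that any two points of \baseSegmentC/ are at distance at most $\tau \cdot \nicefrac{\lambda}{4}$. The only difference is cosmetic -- by picking the higher endpoint of $\ell$ you avoid the paper's two-case distinction (both endpoints outside vs.\ one inside, one outside), and your observation about the extra slack compared to \Cref{lemma:largeDiameter} matches exactly what the paper's argument exploits.
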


\begin{proof}
	Since $\robotDiameter{i} > c$ and $\mathcal{P}$ is \lambdaContracting/, $\fp{i}$ is the midpoint of a line segment $\ell^\mathcal{P}_i(t)$ of length at least $\lambda \cdot \robotDiameter{i} > \lambda \cdot \tau$.
	Observe that the maximum distance between any pair of points in \baseSegmentC/ is $\tau \cdot \nicefrac{\lambda}{4}$.
	It follows that $\ell^\mathcal{P}_i(t)$ either connects two points outside of \baseSegmentC/ or one point inside and another point outside.
	In the first case, \fp{i} (the midpoint of $\ell^{i}_\mathcal{P}(t)$) lies outside of \baseSegmentC/.
	In the second case,  \fp{i} lies outside of \baseSegmentC/ as well, because it is the midpoint of $\ell_i^{\mathcal{P}}(t)$ and one half of this line segment is longer than $\frac{\lambda \cdot \robotDiameter{i}}{2} > \lambda \cdot \nicefrac{\tau}{2}$ and the maximum distance between any pair of points in \baseSegmentC/ is $\tau \cdot \nicefrac{\lambda}{4}$.
	%
	%	The rest of the proof is analogous to the proof of \cref{lemma:largeDiameter} but stated here for the sake of completeness.
	%	Now, consider any vertex $v$ of \chull{i}.
	%	Rotate the global coordinate system, such that $h$ is on the $y$-axis.
	%	Let $h_v$ be the vertical distance of $v$ and \alphaCenterP{i}.
	%	Furthermore, define $v_{\beta}$ to be the scaled point $v$ in $\chull{i}\bigl(\alphaCenterP{i},\beta\bigr)$.
	%	By definition, it follows that the vertical distance between \alphaCenterP{i} and $v_{\beta}$ is $(1-\beta) \cdot h_v$.
	%	Hence, every vertex of \chull{i} located in  \betaCSegment/, is scaled outside of   \betaCSegment/.
	%	Therefore, $\chull{i}\bigl(\alphaCenterP{_i},\beta\bigr)$ lies outside of  \betaCSegment/ and, thus, also \fp{i}.
\end{proof}

\LemmagatheringWithIncreasedVR*
\begin{proof}
	As long $\globalDiameter > \tau$, we have by the preliminaries that $\robotDiameter{i} > \tau$.
	From \Cref{lemma:largeDiameter-cl}, it follows that a robot leaves $\baseSegmentC/$ when in becomes active.
	This happens for all robots at most once per epoch.
	Hence, $R(t)$, the radius of the (global) SEH of all robots in round $t$, decreases by $h$ in one epoch, where $h$ denotes the height of $\baseSegmentC/$.
	In \Cref{lemma:heightHighDim}, we have analyzed the height $h'$ of the \baseCapLambda/.
	We proved $h' \geq \frac{\sqrt{2} \cdot \lambda^2}{64 \cdot \pi \cdot \Delta}$.
	Here, we analyze the $HSC$ \baseSegmentC/ whose slant height of the inscribed hypercone is reduced by a factor of $\tau$.
	Hence, also $h = \tau \cdot h'$.
	%As a consequence, the height of \betaCSegment/ is $\frac{\beta}{2} \cdot \tau \cdot h'$.
	Thus, $R(t)$ decreases by at least $\frac{\tau \cdot \sqrt{2} \cdot \lambda^2}{64 \cdot \pi \cdot \Delta}$ in one epoch.
	By \cref{theorem:jungsTheorem} we know that the initial global SEH has a radius of at most $\nicefrac{\Delta}{\sqrt{2}}$.
	After $\frac{\nicefrac{\Delta}{\sqrt{2}}}{h} = \frac{\Delta \cdot 64 \cdot \pi \cdot \Delta}{\sqrt{2} \cdot \tau \cdot \sqrt{2} \cdot \lambda^2} = \runningTimeLemmaCL$ epochs the global SEH has a radius $\leq \nicefrac{\tau}{2}$ and $\globalDiameter \leq \tau$.

\end{proof}

\subsubsection{Analysis of $\pTau$}

\LemmaConnectivityRange*
\begin{proof}
	Any protocol $\calP$ must hold the connectivity concerning $V$.
	However, with the larger viewing range of $V+\nicefrac{\tau}{2}$, the protocol $\pVTau$ only guarantees a connectivity of \tauUbg{t}.
	Now suppose that a robot $r_i$ moves according to $\pVTau$ in $\pTau$.
	This only happens, if there is no robot in distance $\mathrm{dist} \in (\nicefrac{\tau}{2}, V]$ around $r_i$, all connected robots $r_k$ have a distance $\leq \nicefrac{\tau}{2}$ before the movement.
	$r_i$ and $r_k$ can both move at most a distance of $\nicefrac{\tau}{2}$ in one round.
	It follows that in the next round their distance is $\leq 3\cdot \nicefrac{\tau}{2} \leq V$ because $\tau \leq \nicefrac{2}{3} V$ by definition.

\end{proof}

\LemmaPTauABGathering*
\begin{proof}

	By \cref{lem:connectivityrange,lem:constant-diameter-by-larger-vr}, it follows that $\robotDiameter{i} \geq \nicefrac{\tau}{2}$ for all $i$.
	The protocol $\pTau$ has a viewing range of $V + \tau$.
	$\fpTau{i}$ either equals to $\fp{i}$ ($\calP$ with viewing range $V$) or $\fpVTau{i}$ ($\pVTau$ with viewing range $V+\nicefrac{\tau}{2}$), dependent on the local diameter $\robotDiameter{i}$ of a robot.
	If  $\fpTau{i} = \fp{i}$, we know that a line segment with length $\lambda \cdot \nicefrac{\tau}{2}$ exists with $\fp{i}$ as midpoint, because the robots used to simulate have at least a diameter of $\nicefrac{\tau}{2}$ (see condition in line \ref{line:cond-1:tPclGTC} of \cref{algorithm:targetPointCollisionlessGTC}) and $\calP$ is \lambdaContracting/.

	By \cref{lem:connectivityrange} we know that \vubg{t} stays connected.
	$\pVTau$ has a viewing range of $V + \nicefrac{\tau}{2}$
	By \cref{lem:constant-diameter-by-larger-vr}, it follows that the diameter of robots used for simulating $\pVTau$ is $\geq \nicefrac{\tau}{2}$ if $\globalDiameter \geq \nicefrac{\tau}{2}$.
	Because $\pVTau$ is \lambdaContracting/, there exists a line segment with length $\lambda \cdot \nicefrac{\tau}{2}$ trough $\fpVTau{i}$.

	The local diameter is naturally bounded by $\robotDiameter{i} \leq 2(V + \tau)$.
	The length of the above described line segment with $\pTau$ as midpoint is  $\lambda \cdot \nicefrac{\tau}{2} = \lambda \cdot \frac{\nicefrac{\tau}{2}}{2(V + \tau)} \cdot 2(V + \tau) \geq \lambda \cdot \frac{\tau}{4(V + \tau)} \cdot \robotDiameter{i}$.
	Therefore, $\pTau$ is \lambdaPrimeGathering/ with $\lambda' = \lambdaPrimeValuePTau$.

\end{proof}

\subsubsection{Analysis of $\pCL$}

\LemCollisionless*

The lemma follows directly from \cref{lem:fpTau-neq,lem:co-linear-collision-vector,lem:collisionless-inactive} which are introduced in the following.

\LemmaViewingRange*

\begin{proof}

	%Robot $r_i$ uses function $\collisionPoints{\pTau}{i}$ with $R_i = \{r_k : |p_i^{k}(t) - p_i^{i}(t)| \leq \tau\}$ the robots in radius $\tau$ around $r_i$ (line \ref{line:R-i:clGTC} \cref{algorithm:collisionlessGTC}).
	%For all $r_k \in R_i$ is $\ellpTau{k}$ computed in $r_i$'s local coordinate system (line \ref{line:all-v-in-ell:CP} \cref{algorithm:CollisonPointsOnLine}).
	%$\ellpTau{k}$ computes all possible collision vectors of $r_k$ in round $t$.
	%This are the vectors $\ellpTau{k} = \left\{ \textrm{vector from } p_k(t) \textrm{ to } x : \textrm{ there exist rotation and reflection such that } \fpTau{k} = x  \right\}$
	The computation requires that the entire neighborhood of $r_k$ relevant to compute $\fpTau{k}$ is also in $r_i$'s neighborhood.
	Depending on whether there exists a robot in distance $\mathrm{dist} \in (\nicefrac{\tau}{2}, V]$ around $r_k$, is this the neighborhood relevant for $\fp{k}$ or $\fpVTau{k}$ (first if/else block in \cref{algorithm:targetPointCollisionlessGTC}).
	%The distance between robots is invariant under reflection and rotation; therefore is it sufficient to compute $\ellpTau{k}$ when there exist no robot in distance $d \in (\nicefrac{\tau}{2}, V]$ and $\ellp{k}$ else.

	$\fp{k}$ needs a viewing range of $V$ around $r_k$.
	The distance between $r_k$ and $r_i$ is at most $\tau$ and $r_i$ has a viewing range of $V + \tau$, therefore all robots relevant for computing $\fp{k}$ are in $r_i$'s neighborhood.
	$\fpVTau{k}$ needs a viewing range of $V+\nicefrac{\tau}{2}$ around $r_k$.
	The condition that the pairwise distance between robots in range $V$ around $r_k$ is $\leq \nicefrac{\tau}{2}$ makes sure that $|p_i(t) - p_k(t)| \leq \nicefrac{\tau}{2}$.
	$r_i$ has a viewing range of $V + \tau$, therefore are all robots relevant for computing $\fpVTau{k}$ are in $r_i$'s neighborhood.
\end{proof}

\LemmaIntersectionNotCollinear*

\begin{proof}
	More formally, we prove the following statement:
	Let $\ellpTau{i}$ and $\ellpTau{k}$ be collision vectors which intersect in a single point $I$.
	$\fpCL{i} \neq I$.
	$r_i$ and $r_k$ have a distance of at most $\tau$, because the movement distance of $\nicefrac{\tau}{2}$ is an upper bound for the length of $\ellpTau{i}$, respectively $\ellpTau{k}$.
	Hence, $r_k$ is in $R_i$ as computed in line \ref{line:R-i:clGTC} of \cref{algorithm:collisionlessGTC}.
	In line \ref{line:all-v-in-ell:CP} of \cref{algorithm:targetPointCollisionlessGTC}, the collision vector $\ellpTau{k}$ is checked for intersections with $\ellpTau{i}$.
	By \cref{lem:viewingrange}, we know, that $\ellpTau{k}$ is computable by $r_i$ with the available viewing range of $V + \tau$.
	It follows that $I$ is in $C_i$ (l. \ref{line:C-i:clGTC} Algorithm \ref{algorithm:collisionlessGTC}).
	$\fpCL{i}$ is some point in between the nearest points in $C_i \setminus \{p_i\}$ and $\fpTau{i}$.
	This can never be $I$.
\end{proof}

\LemmaFPTauNeq*

\begin{proof}
	More formally, we prove the following statement:
	Let $r_i$ and $r_k$ be two robots with $\fpTau{i} \neq \fpTau{k}$.
	It follows that $\fpCL{i} \neq \fpCL{k}$.
	If $\ellpTau{i}$ and $\ellpTau{k}$ are not collinear, the statement follows directly from \cref{lem:intersection-not-co-linear}.
	We consider both robots with collinear $\ellpTau{i}$ and $\ellpTau{k}$.
	Let $P_i = \fpTau{i}$, $P_k = \fpTau{k}$.
	We distinguish all three cases how $r_i, P_i$ and $P_k$ can be arranged: $P_k$ is between $r_i$ and $P_i$; $r_i$ is between $P_i$ and $P_k$; $P_i$ is between $r_i$ and $P_k$.
	\begin{itemize}
		\item Case $P_k$ is between $r_i$ and $P_i$:
		Analogous to the arguments in \cref{lem:intersection-not-co-linear}, $r_k \in R_i$ (line \ref{line:R-i:clGTC} \cref{algorithm:collisionlessGTC}) and $P_k$ is added to $C_i$ (line \ref{line:add-fpk:CP} \cref{algorithm:CollisonPointsOnLine}).
		$d_i$ (line \ref{line:d-i:clGTC} \cref{algorithm:collisionlessGTC}) is at most the distance between $P_i$ and $P_k$.
		$r_i$ stops a distance of $\targetPointDist{i}$ away from $P_i$.
		By definition is $\varepsilon < 0.5$ and $\big|\ellpTau{i}\big| \leq \nicefrac{\tau}{2}$.
		It follows $\targetPointDist{i} \leq d_i \cdot \varepsilon < \nicefrac{d_i}{2}$.
		$r_i$ will move onto a point closer to $P_i$ than to $P_k$.
		\item Case $r_i$ is between $P_i$ and $P_k$:
		$r_i$ will move onto a point closer to $P_i$ than to $P_k$ because $d_i \leq \big|\ellpTau{i}\big|$ which is in this case less than the distance between $P_i$ and $P_k$.
		\item Case $P_i$ is between $R_i$ and $P_k$:
		$r_i$ will move to a point between its current position and $P_i$, this is naturally closer to $P_i$ than to $P_k$.
	\end{itemize}
	In all cases, $\fpCL{i}$ is closer to $P_i$ than to $P_k$ and analogously, $\fpCL{k}$ is closer to $P_k$ than to $P_i$.
	Hence, $\fpCL{i} \neq \fpCL{k}$.

	%	A third case is where one collision vector, w.l.o.g. $\ellpTau{k}$, has a length of 0 because the target point is equivalent to the current position.
	%	The robot does not move at all in such a round.
	%	Robots start a round with unique positions when no collision happened earlier.
	%	A collision with $r_k$ can be avoided by not moving to current positions of other robots.
	%	When $r_k$ is on $\ellpTau{i}$, analog to the arguments in \cref{lem:intersection-not-co-linear} is $r_k \in R_i$ (line \ref{line:R-i:clGTC} in \cref{algorithm:collisionlessGTC}), $p_k(t)$ is added to $C_i$ (line \ref{line:v-k-length-0:CP} in \cref{algorithm:CollisonPointsOnLine}) and $r_i$ will not move to it.
\end{proof}

\LemmaCollinearCollisionVector*

\begin{proof}
	More formally, we prove the following statement:
	Let $r_i$ and $r_k$ be two robots with $\fpTau{i} = \fpTau{k}$.
	It follows that $\fpCL{i} \neq \fpCL{k}$.

	Let $P = \fpTau{i} = \fpTau{k}$.
	A robot moving towards $P$ will stay on the same side of $P$, and none will reach $P$.
	So collisions can solely happen if $\ellpTau{i}$ and $\ellpTau{k}$ are collinear pointing from the same side to $P$.
	We consider this case.
	W.l.o.g., let $r_i$ be closer to $P$ than to $r_k$.
	$d_i$, respectively $d_k$, is computed by the point in $C_i \setminus \{P\}$, respectively $C_k \setminus \{P\}$, with minimal distance to $P$ (line \ref{line:d-i:clGTC} in \cref{algorithm:collisionlessGTC}).
	Let this be $c_i \in C_i$, respectively $c_k \in C_k$.
	We assume $c_i \neq c_k$.
	From $c_i \neq c_k$, it follows directly $c_i \notin C_k$ or $c_k \notin C_i$.
	\begin{itemize}
		\item Case $c_i \notin C_k$:
		$r_i$ and $r_k$ are chosen in a way that $\ellpTau{i} \subset \ellpTau{k}$, it follows $c_i$ is also on $\ellpTau{k}$.
		$c_i$ is a point on $\ellpTau{j}$ the collision vector of some robot $r_j$ (see \cref{algorithm:CollisonPointsOnLine}).
		$\big|\ellpTau{j}\big| + \big|\ellpTau{k}\big| \leq \tau$ is an upper bound for the distance between $r_k$ and $r_j$.
		$r_j$ must be in $R_k$ as computed in line \ref{line:R-i:clGTC} of \cref{algorithm:collisionlessGTC}.
		Hence, $\ellpTau{j}$ is checked for collisions and $c_i$ must be in $C_k$.

		\item Case $c_k \notin C_i$:
		Similar to the arguments above, $p_i(t)$ is the position of robot $r_i$, in $C_k$.
		The distance of $c_k$ to $P$ is therefore is not larger than the distance $\big|\ellpTau{i}\big|$ (otherwise would $p_i(t)$ be nearer to $P$ than the chosen collision point $c_k$ with minimal distance to $P$).
		It follows that $c_k$ is also on $\ellpTau{i}$.
		Analogous to the case above, $c_k \in C_i$.
	\end{itemize}
	$c_i = c_k$ and $d_i = d_k$, accordingly.
	$\big|\ellpTau{i}\big| \neq \big|\ellpTau{k}\big|$, otherwise would $r_i$ and $r_k$ be at the same position and a collision has happened earlier.
	It follows that $\targetPointDist{i} \neq \targetPointDist{k}$  in every case such that $r_i$ and $r_j$ move to different positions.
\end{proof}

\LemCollisionlessInactive*

\begin{proof}
	A robot cannot know which robots are active or inactive.
	However, the algorithm is designed so that no robot moves to the current position of any robot.
	This can be proven analogously to \cref{lem:intersection-not-co-linear} because in line \ref{line:v-k-length-0:CP} of \cref{algorithm:CollisonPointsOnLine} the positions of robots on the collision vector are added to the set of collision points.
\end{proof}

%\LemCollisionless*
%
%\begin{proof}
%	Let $r_1, \cdots, r_n$ be the robots.
%	Robot $r_i$ moves on a straight line towards $\fpTau{i}$ in one round $t$.
%	Due to collision avoidance it may not reach $\fpTau{i}$, but its new position will always be on the collision vector $\ellpTau{i}$.
%	$r_i$ can only collide with $r_k$ on the intersection between $\ellpTau{i}$ and $\ellpTau{k}$.
%	When $\ellpTau{i}$ and $\ellpTau{k}$ are not co-linear, this is only one point.
%	$r_i$ knows $\ellpTau{i}$ and can compute $\fpTau{k}$ and $\ellpTau{k}$ because it is independent of $r_k$'s local coordinate system.
%	%But it can compute the set $\ellpTau{k}$ which contains $v_k$.
%	In \cref{lem:intersection-not-co-linear} it is stated, that $r_i$ does not move to the intersection point between $\ellpTau{i}$ and any $\ellpTau{k}$ which is not co-linear to $\ellpTau{i}$.
%
%	When $\ellpTau{i}$ and $\ellpTau{k}$ are collinear, it is not sufficient to avoid certain points.
%	Each robot moving on the same line $l$ must find a unique position on $l$.
%	In \cref{lem:co-linear-collision-vector} it is stated, that this happens for $\pCL$.
%
%
%\end{proof}

\begin{restatable}{lemma}{LemmaPclABContracting}
	\label{lem:movement-1-over-epsilon}
	If $\calP$ is a \lambdaGathering/, $\pCL$ is \lambdaPrimeContracting/ with $\lambda' = \lambdaPrimeValuePcl$.
\end{restatable}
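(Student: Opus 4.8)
By \cref{def:lambda-contracting} there are three things to check: $\pCL$ is invariant, $\pCL$ is connectivity preserving, and $\fpCL{i}$ is a $\lambda'$-centered point of $\chull{i}$ for every robot $r_i$ and every round, where $\lambda' = \lambdaPrimeValuePcl$. Invariance is immediate because \cref{algorithm:collisionlessGTC,algorithm:CollisonPointsOnLine,algorithm:targetPointCollisionlessGTC} only use coordinate-free operations (distances, midpoints, segment intersections), and connectivity preservation is \cref{lem:connectivityrange-cl}. So the real work is the $\lambda'$-centered property, and throughout I would take $\chull{i}$, $N_i(t)$ and $\robotDiameter{i}$ with respect to the viewing range $V+\tau$ (the same one $\pTau$ uses). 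By \cref{lem:pTau-alpha-beta-gathering}, $\pTau$ is a \lambdaPrimeGathering/ with $\lambda^{\ast}:=\lambdaPrimeValuePTau$; hence $\fpTau{i}$ is the midpoint of a line segment $\ell_i$ of length $\lambda^{\ast}\cdot\robotDiameter{i}$ entirely contained in $\chull{i}$, and I would fix $\ell_i$ with endpoints $u,w$.

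\textbf{Bounding the retreat distance.} Recall $\fpCL{i}$ lies on the collision vector $\ellpTau{i}=\overline{p_i(t)\,\fpTau{i}}$ at distance $\targetPointDist{i}$ from $\fpTau{i}$ (line~\ref{line:target-point-distance:clGTC} of \cref{algorithm:collisionlessGTC}). If $\fpTau{i}=p_i(t)$, then $\ellpTau{i}$ is a single point, so $\fpCL{i}=\fpTau{i}$, which is $\lambda^{\ast}$-centered and hence also $\lambda'$-centered since $\lambda'=(1-\varepsilon)\lambda^{\ast}<\lambda^{\ast}$; this degenerate case (which also covers $\robotDiameter{i}=0$) is done. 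Otherwise set $L:=\big|\ellpTau{i}\big|=|p_i(t)-\fpTau{i}|>0$. Two bounds matter. First, $L\le\nicefrac{\tau}{2}$, because $\fpTau{i}$ is the target point of $\pTau$, which never moves a robot by more than $\nicefrac{\tau}{2}$ (line~\ref{line:cond-2:tPclGTC} of \cref{algorithm:targetPointCollisionlessGTC}). Second, $d_i\le L$: the robot $r_i$ lies in $R_i$ (line~\ref{line:R-i:clGTC}) and $p_i(t)$ is an endpoint of $\ellpTau{i}$, so $p_i(t)$ is added to $C_i$ (line~\ref{line:v-k-length-0:CP} of \cref{algorithm:CollisonPointsOnLine}), and $d_i$ (line~\ref{line:d-i:clGTC}) is the minimum distance from $\fpTau{i}$ over $C_i\setminus\{\fpTau{i}\}$. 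Combining these with $\varepsilon<\nicefrac12$ gives
\[
\targetPointDist{i} \;=\; d_i\cdot\varepsilon\cdot\tfrac{2}{\tau}\cdot L \;\le\; \tfrac{\tau}{2}\cdot\varepsilon\cdot\tfrac{2}{\tau}\cdot L \;=\; \varepsilon\cdot L \;<\; \tfrac{L}{2},
\]
so $\fpCL{i}$ lies on $\overline{p_i(t)\,\fpTau{i}}$ at distance at least $(1-\varepsilon)L$ from $p_i(t)$.

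\textbf{Intercept-theorem argument.} Now I would mirror the construction from the proof of \cref{lem:ab-contracting-is-lambda-contracting}. The points $u,w$ lie in $\chull{i}$, and $p_i(t)$ is a vertex of its own local convex hull, so by convexity the triangle $\triangle(p_i(t),u,w)$ is contained in $\chull{i}$. Since $\fpTau{i}$ is the midpoint of $\overline{uw}$, the segment $\overline{p_i(t)\,\fpTau{i}}$ is the median of $\triangle(p_i(t),u,w)$ to that side; therefore the chord of the triangle through $\fpCL{i}$ parallel to $\overline{uw}$ has $\fpCL{i}$ as its midpoint, and by the intercept theorem its length equals $\tfrac{|p_i(t)-\fpCL{i}|}{|p_i(t)-\fpTau{i}|}\cdot|\ell_i|\ge(1-\varepsilon)\cdot\lambda^{\ast}\cdot\robotDiameter{i}$. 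This chord lies inside $\chull{i}$, so $\fpCL{i}$ is the midpoint of a segment of length $(1-\varepsilon)\lambda^{\ast}\robotDiameter{i}$ contained in $\chull{i}$, i.e.\ a $\lambda'$-centered point with $\lambda'=(1-\varepsilon)\lambda^{\ast}=\lambdaPrimeValuePcl$, which together with invariance and \cref{lem:connectivityrange-cl} proves the claim.

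\textbf{Main obstacle.} The delicate step is the middle one: showing $\targetPointDist{i}\le\varepsilon\,\big|\ellpTau{i}\big|$ so that $\fpCL{i}$ truly stays on the collision vector and does not overshoot $p_i(t)$. This hinges on two easily overlooked facts — the $\nicefrac{\tau}{2}$ cap on the movement distance in $\pTau$, and that $p_i(t)$ itself is always among $r_i$'s collision points — plus $\varepsilon<\nicefrac12$. The degenerate case $\fpTau{i}=p_i(t)$ also needs separate (trivial) handling. Everything else is exactly the same affine geometry already used for \cref{lem:ab-contracting-is-lambda-contracting}.
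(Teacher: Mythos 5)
Your proof is correct and takes essentially the same route as the paper's own argument: invoke \cref{lem:pTau-alpha-beta-gathering}, observe that $\fpCL{i}$ lies on $\ellpTau{i}$ at least a $(1-\varepsilon)$-fraction of the way from $p_i(t)$ to $\fpTau{i}$, and transfer the centered property via the intercept theorem exactly as in \cref{lem:ab-contracting-is-lambda-contracting}. The only difference is that you make explicit the bound $\targetPointDist{i} \leq \varepsilon \cdot \big|\ellpTau{i}\big|$ (via $d_i \leq \big|\ellpTau{i}\big| \leq \nicefrac{\tau}{2}$ and $p_i(t)$ being a collision point), which the paper asserts without detail.
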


\begin{proof}

	From \cref{lem:pTau-alpha-beta-gathering}, we obtain that $\pTau$ is \lambdaPrimePrimeContracting/ with $\lambda'' = \lambdaPrimeValuePTau$.
	Because $\pTau$ and $\pCL$ have the same viewing range, \fpTau{i} is always \lambdaCentered/ for $\pCL$.

	$\fpCL{i}$ is chosen such that a robot moves in direction $\fpTau{i}$ and $\frac{\fpCL{i}}{\fpTau{i}} \geq (1-\varepsilon)$.
	Analogous to the arguments in \cref{lem:ab-contracting-is-lambda-contracting}, we can follow with the intercept theorem, that $\lambda' = \lambda'' \cdot (1-\varepsilon) = \lambdaPrimeValuePcl$.

\end{proof}

\begin{restatable}{lemma}{LemmaConnectivityRangeCL}
	\label{lem:connectivityrange-cl}
	Let $\calP$ be a \lambdaGathering/ with viewing range of $V$.
	\vubg{t} stays connected while executing $\pCL$.
\end{restatable}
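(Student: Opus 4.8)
The plan is to show the stronger statement that $\pCL$ preserves \emph{every} edge of $\vubg{t}$; connectivity of $\vubg{t}$ for all $t$ then follows by induction, since $\pCL$ only ever adds edges to $\vubg{t+1}$ (and, in $\ssync$, an inactive robot simply does not move and is covered below). The starting observation is that $\fpCL{i}$ is a ``prefix'' of the $\pTau$-move: it lies on the segment $[p_i(t),\fpTau{i}]$, at distance at most $\big|\ellpTau{i}\big|\le\nicefrac{\tau}{2}$ from $p_i(t)$. The movement cap $\big|\ellpTau{i}\big|\le\nicefrac{\tau}{2}$ is built into $\pTau$ (\cref{algorithm:targetPointCollisionlessGTC}), and the containment on the segment follows from $\targetPointDist{i}\le\big|\ellpTau{i}\big|$: the position $p_i(t)$ is itself inserted into $C_i$ in \cref{algorithm:CollisonPointsOnLine} (the iteration with $r_k=r_i$, since $p_i(t)$ is an endpoint of $\ellpTau{i}$), hence $d_i\le\big|\ellpTau{i}\big|$, and together with $\tfrac{2}{\tau}\big|\ellpTau{i}\big|\le 1$ and $\varepsilon<\nicefrac{1}{2}<1$ this gives $\targetPointDist{i}=d_i\cdot\varepsilon\cdot\tfrac{2}{\tau}\cdot\big|\ellpTau{i}\big|\le\big|\ellpTau{i}\big|$. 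So $\fpCL{i}=p_i(t)+s_i\bigl(\fpTau{i}-p_i(t)\bigr)$ for some $s_i\in[0,1]$, and $|\fpCL{i}-p_i(t)|\le\nicefrac{\tau}{2}$.

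Next I would fix two robots $r_i,r_j$ with $|p_i(t)-p_j(t)|\le V$ and bound $|\fpCL{i}-\fpCL{j}|$ by a case distinction on how $\pTau$ computes the two targets. If at least one of them, say $r_i$, computes $\fpTau{i}$ via $\pVTau$, then the guard in \cref{algorithm:targetPointCollisionlessGTC} forces every robot within range $V$ of $r_i$ — in particular $r_j$ — to be at distance at most $\nicefrac{\tau}{2}$ from $r_i$; combining this with the $\nicefrac{\tau}{2}$-movement bound for both robots and $\tau\le\nicefrac{2}{3}V$ yields $|\fpCL{i}-\fpCL{j}|\le\tfrac{3}{2}\tau\le V$. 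Otherwise both robots compute their targets via $\calP$ with range $V$. Here I would appeal to the limit-circle mechanism that underlies connectivity preservation of $\calP$ (exactly the mechanism that the concrete protocols such as \gtcShort/ implement and that the proof of \cref{lem:connectivityrange} uses in the $\calP$-regime): writing $m=\tfrac12\bigl(p_i(t)+p_j(t)\bigr)$ and letting $B$ be the ball of radius $\nicefrac{V}{2}$ around $m$, both $p_i(t)$ and $\fp{i}$ lie in $B$, and symmetrically for $r_j$ (note $r_i$ and $r_j$ are mutual range-$V$ neighbours). Since $\fpTau{i}\in[p_i(t),\fp{i}]$ (the $\nicefrac{\tau}{2}$-truncation only shortens the move) and $\fpCL{i}\in[p_i(t),\fpTau{i}]$ by the first paragraph, convexity of $B$ gives $\fpCL{i}\in B$, and likewise $\fpCL{j}\in B$; as $B$ has diameter $V$, this yields $|\fpCL{i}-\fpCL{j}|\le V$.

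The delicate point — and the one I would treat most carefully — is this second case. One cannot use ``$\pTau$ is connectivity preserving'' (\cref{lem:connectivityrange}) merely as a black box there, because global connectivity of $\vubg{}$ does not by itself rule out that a particular edge breaks while the swarm stays connected via a detour. The argument therefore really has to open up the limit-circle construction by which $\calP$ (and hence $\pTau$) keeps each robot inside the ball $B$ around the midpoint of every range-$V$ neighbour, and observe that retreating from $\fpTau{i}$ to the intermediate point $\fpCL{i}$ along the collision vector cannot leave that ball. Everything else — the movement cap, the fact that $\fpCL{i}$ stays on $\ellpTau{i}$, and the $\ssync$ case (an inactive robot corresponds to $s_i=0$) — is routine. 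Finally, since $\pCL$ and $\pTau$ share the same connectivity range $V$ and viewing range $V+\tau$, the induction over all rounds goes through unchanged.
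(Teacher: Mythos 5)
Your reduction to per-edge preservation, the containment $\fpCL{i}\in\ellpTau{i}$ with $\big|\ellpTau{i}\big|\le\nicefrac{\tau}{2}$, and the $\pVTau$-case ($3\cdot\nicefrac{\tau}{2}\le V$) are fine and match the paper's reasoning. The gap is in your second case: you resolve it by ``opening up the limit-circle construction'' of $\calP$, i.e.\ by assuming that $p_i(t)$ and $\fp{i}$ both lie in the ball of radius $\nicefrac{V}{2}$ around the midpoint of $r_i$ and every range-$V$ neighbour $r_j$. But the lemma is stated for an \emph{arbitrary} \lambdaGathering/ $\calP$, whose definition only requires it to be connectivity preserving, invariant, \lambdaContracting/ and collapsing. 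The limit-circle mechanism is a feature of the concrete example protocols (\gtcShort/, \gtmdShort/, \gtcdmbShort/), not of the class: a connectivity-preserving (even edge-preserving) $\calP$ may send $r_i$ to a \lambdaCentered/ point of its local hull that lies outside that ball (for instance towards a third neighbour far off the line through $r_i$ and $r_j$, at distance more than $\nicefrac{V}{2}$ from the midpoint while still within $V$ of $r_j$'s new position). Then there is no ball $B$ to argue with and your convexity step has nothing to rest on. So you correctly identified that black-box connectivity of $\vubg{t}$ is not enough, but the structural assumption you substituted for it is not available in the required generality.

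The paper closes this point differently: from \cref{lem:connectivityrange} and the \ssync{} setting (where each of $r_i$, $r_k$ may be active or inactive) it takes that for a $V$-edge all four endpoint pairs among $\{p_i(t),\fpTau{i}\}$ and $\{p_k(t),\fpTau{k}\}$ are at distance at most $V$; then, since the distance between a point of $\ellpTau{i}$ and a point of $\ellpTau{k}$ is a convex function of the two segment parameters, its maximum over the two collision vectors is attained at a pair of endpoints, so \emph{every} point of one vector is within $V$ of every point of the other. As $\fpCL{i}\in\ellpTau{i}$ (and an inactive robot sits at an endpoint), the edge survives. This endpoint-plus-convexity step is exactly what replaces your ball argument and is the piece your proof is missing; with it, no assumption on $\calP$ beyond the properties already established for $\pTau$ is needed.
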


\begin{proof}
	If the \vubg{t} is connected in the initial configuration, it will stay connected while executing $\pTau$ (\cref{lem:connectivityrange}).
	Because of the semi-synchronous environment, we know that $r_i, r_k$ with $|p_i(t)-p_k(t)| \leq V$ implies $\big|\fpTau{i}-\fpTau{k}\big| \leq V$ and $\big|\fpTau{i}-p_k(t)\big| \leq V$, respectively $\big|p_i(t)-\fpTau{k}\big| \leq V$ (in case $r_k$, respectively $r_i$, is inactive in round $t$ and $p_k(t) = p_k(t+1)$).
	If both endpoints of both collision vectors $\ellpTau{i}$ and $\ellpTau{k}$ are pairwise at a distance $\leq V$, all points on both vectors are pairwise at a distance $\leq V$.
	For all robots $\fpCL{i} \in \ellp{i}$ and therefore, \vubg{t} stays connected while executing $\pCL$.
\end{proof}

\subsubsection{Proof of \cref{thm:collisionless-class}}
\label{subsec:proof-collisionless-thm}
The theorem is stated with a more precise formulation but the same meaning as this section's beginning.
\begin{customthm}{\ref{thm:collisionless-class}}

	Let $\nicefrac{2}{3} \cdot V \geq \tau > 0$ and $0.5 > \varepsilon > 0$.
	For every \lambdaGathering/ $\calP$ there exists the protocol $\pCLtauEpsilon$ with the following properties.
	\begin{itemize}
		\item $\pCL$ is a \clLambdaPrimeContracting/ protocol with $\lambda' = \lambdaPrimeValuePcl$.
		\item Let $V$ be the viewing range of $\calP$. $\pCL$ has a viewing range of $V + \tau$.
		\item $\pCL$ results in a near-gathering with diameter $\leq \tau$ of all robots in $\runningTimeLemmaCLPrime \in \mathcal{O}(\Delta^2)$ epochs, if \vubg{0} is connected.
	\end{itemize}
\end{customthm}

\begin{proof}
	\cref{lem:no-early-collison} states that $\pCL$ is collisionless and \cref{lem:movement-1-over-epsilon} that $\pCL$ as \lambdaPrimeContracting/ with $\lambda' = \lambdaPrimeValuePcl$.
	%The viewing range is shown in \cref{lem:viewingrange}
	If the \vubg{t} is connected in the initial configuration, it will stay connected while executing $\pCL$ (\cref{lem:connectivityrange-cl}).
	By \cref{lem:constant-diameter-by-larger-vr}, it follows that the $\robotDiameter{i} > \tau$ if $\globalDiameter > \tau$.
	This is the preliminary for \cref{lem:gathering-with-increased-vr}, which can therefore be applied to $\pCL$ to show that after $\runningTimeLemmaCLPrime$ epochs a near-gathering happened.

\end{proof}

\end{document}